\documentclass[11pt]{article}
\usepackage{amsthm,amsgen,amsmath,amstext,amsbsy,amsopn,amssymb,latexsym,mathrsfs}
\usepackage{array}
\usepackage{cite}
\usepackage{multirow}
\usepackage{graphicx}
\usepackage{amssymb,bm,bbm}
\usepackage{titling}
\usepackage{url,bm}
\usepackage{algpseudocode,float,algorithm}
\usepackage[round]{natbib}
\usepackage{bbm}
\usepackage{verbatim}
\usepackage[normalem]{ulem}
\usepackage{enumerate}
\usepackage{subcaption}
\usepackage{esdiff}
\usepackage[dvipsnames,table,dvipsnames*, svgnames*]{xcolor}

\usepackage{hyperref}
\hypersetup{
    colorlinks=true,
    linkcolor=blue,
    filecolor=blue,
    urlcolor=blue,
    citecolor=blue,
}
\usepackage{cleveref}

\allowdisplaybreaks

\newcommand{\argmin}{\mathop{\rm arg\min}}

\theoremstyle{plain}

\newtheorem{proposition}{Proposition}
\newtheorem{lemma}{Lemma}
\newtheorem{definition}{Definition}
\newtheorem{theorem}{Theorem}
\newtheorem{example}{Example}
\newtheorem{remark}{Remark}

\newtheorem{claim}{Claim}
\newtheorem{assumption}{Assumption}

\numberwithin{intassumption}{assumption}

\newcommand{\R}{{\mathbb{R}}}
\newcommand{\E}{{\mathbb{E}}}

\newcommand{\Prob}{{\mathbb{P}}}
\newcommand{\1}{{\mathbbm{1}}}

\newcommand{\Acal}{{\mathcal{A}}}

\newcommand{\Ccal}{{\mathcal{C}}}
\newcommand{\Dcal}{{\mathcal{D}}}

\newcommand{\Fcal}{{\mathcal{F}}}
\newcommand{\Gcal}{{\mathcal{G}}}

\newcommand{\Mcal}{{\mathcal{M}}}

\newcommand{\Ecal}{\mathcal{E}}

\newcommand{\Vcal}{\mathcal{V}}

\newcommand{\Var}{{\rm Var}}
\newcommand{\Pcal}{{\mathcal{P}}}

\newcommand{\indep}{\mathrel{\!\perp\!\!\!\perp}}
\newcommand{\parent}{{\rm Pa}}

\newcommand{\ancestor}{{\rm An}}
\newcommand{\anterior}{{\rm Ant}}

\newcommand{\IMEC}{\Mcal_{\bm I}}
\newcommand{\Aug}{{\rm Aug}}
\newcommand{\MAG}{{\rm MAG}}

\newcommand{\CM}{{\rm CM}}
\newcommand{\TCM}{{\rm TCM}}

\graphicspath{{Figures/}}

\title{Statistical Inference for Causal Discovery under Selection and Latent Variables via Single-Target Interventions}
\author{Xiaotian Hou \and Kwangmoon Park \and Hongzhe Li
\\[1ex]
Department of Biostatistics, Epidemiology and Informatics\\
University of Pennsylvania
}

\date{}

\begin{document}

\maketitle

\begin{abstract}
    Causal discovery from observational and interventional data becomes substantially more challenging in the presence of latent confounding and selection bias, where causal structure is no longer adequately represented by directed acyclic graphs over observed variables. Existing model-free methods often rely on an exponential number of conditional independence tests and provide limited uncertainty quantification in high-dimensional settings. We develop a model-free and constraint-query optimal statistical inference framework for causal discovery under latent variables and selection using single-target interventions. We introduce the system-induced subgraph (SIS) to capture the causal relations among system variables while accounting for context variables. We establish the identifiability of SISs through maximal ancestral graphs (MAGs), and show that interventions on each observed system variable are necessary and sufficient for unique identification of the MAG. Building on these results, we develop statistically valid graph inference procedures with family-wise error control and establish constraint-query optimality up to multiplicative constants, requiring at most $\frac{5}{2}d_X^2$ parallel statistical tests for $d_X$ observable variables. The framework accommodates soft interventions and avoids parametric structural equation assumptions. We illustrate the methods through analysis of Perturb-seq data from interferon-$\beta$-stimulated A549 lung cancer cell lines.
\end{abstract}

\tableofcontents

\section{Introduction}
Single-cell CRISPR screens, such as perturb-seq \citep{dixit2016perturb, replogle2022mapping, jiang2025systematic}, measure gene expression under targeted interventions and offer a promising setting for causal network reconstruction. Combining observational and interventional variation can resolve causal ambiguities that remain in observational data. However, in practice, only a subset of relevant genes and regulatory factors is measured, and the observed cells are often subject to selection mechanisms arising from experimental design, cell survival, and preprocessing. As a result, the data are more appropriately generated from a latent-variable directed acyclic graph (DAG) under selection, for which the induced structure over observed variables need not be a DAG. Ignoring such selections or latent factors can produce spurious adjacencies and misleading orientations.

Three issues arise in this setting: \textit{what causal structure is identifiable, how efficiently it can be recovered, and how uncertainty can be quantified.} Existing work characterizes the interventional Markov equivalence classes (IMECs) under latent variables and, more recently, post-treatment selection \citep{kocaoglu2019characterization,luo2025characterization}. Nevertheless, unique identification and the interventions needed to achieve such an identification require further study. Moreover, constraint-based procedures may require exponentially many sequential conditional independence (CI) tests in the worst case \citep{spirtes2001anytime,kocaoglu2019characterization,luo2025characterization,yang2018characterizing}, limiting their applicability to high-dimensional settings. In these algorithms, because later tests depend on earlier decisions, error can propagate through the procedure, making uncertainty quantification challenging. 

\noindent\textbf{Contributions.} We address these challenges by developing a model-free and constraint-query optimal statistical inference framework for causal discovery under latent variables and selection bias using single-target interventions. Specifically, to capture the causal relations among system variables while accounting for context variables, we introduce a system-induced subgraph (SIS) formulation and characterize its identifiability under single-target interventions through maximal ancestral graphs (MAGs) \citep{richardson2002ancestral,spirtes1997polynomial}. In particular, we show that interventions on every observed system variable are sufficient for identifying the trimmed Combined MAG and necessary in the worst case. To our knowledge, this is the first sufficient and worst-case necessary condition on single-target intervention for unique identification of MAGs under selection and latent variables. 

Building on these results, we propose a two-stage constraint-based inference procedure for the MAG with asymptotic family-wise error rate (FWER) control under sufficient first-stage power. For $d_X$ observed system variables, our algorithm requires at most $\frac{5}{2}d_X^2$ tests that are parallelizable within each stage. A matching lower bound of $\frac{1}{2}d_X(d_X-1)$ demonstrates the constraint-query optimality. To our knowledge, our method is the first model-free algorithms that recover the MAG in $O(d_X^2)$ constraint queries, and we establish the first constraint-query optimality for MAGs under interventions. Furthermore, our method provides the first model-free statistical inference framework for MAGs with FWER control. 

\noindent\textbf{Related Work.} Without selection or latent variables, observational data identify a DAG only up to its Markov equivalence class (MEC), characterized by its skeleton and $v$-structures \citep{verma1991equivalence}. \cite{hauser2012characterization} shows that hard interventions refine the MEC, and \cite{yang2018characterizing} extends the characterization to soft interventions using augmented graphs. \cite{eberhardt2006n} shows that $d_X-1$ single-target interventions are sufficient and necessary to uniquely identify the DAG. 
In the presence of selection and latent variables, the framework of MAGs \citep{spirtes1997polynomial,richardson2002ancestral} provides a principled representation of the MEC and CI relations among observable variables. 
Their interventional extensions characterize the IMEC through MAGs of augmented graphs \citep{kocaoglu2019characterization,luo2025characterization}. However, it remains unclear whether the causal structure is uniquely identifiable and which interventions suffice.

Many algorithms have been proposed for learning the causal structure from data. Constraint-based procedures include the Peter-Clark (PC) algorithm \citep{spirtes2000causation} for DAGs, fast causal inference (FCI) \citep{spirtes1995causal,spirtes2001anytime} under latent variables and selection, and their interventional extensions \citep{kocaoglu2019characterization,luo2025characterization}. Their worst-case searches can require exponentially many sequential CI tests. Under observational data, \cite{chickering2004large} and \cite{mones2026number} show the necessity of exponential computational complexity for learning MEC. With sufficient interventions and no selection or latent variables, DAGs admit an $O(d_X^5)$ algorithm \citep{shiragur2024causal}. However, the optimal computational complexity for IMEC remains unknown.

Score-based algorithms, including greedy equivalence search (GES) \citep{chickering2002optimal} and greedy interventional equivalence search (GIES) \citep{hauser2012characterization}, and structural restrictions such as non-Gaussian linear \citep{shimizu2006linear} or nonlinear additive-noise models \citep{hoyer2008nonlinear}, offer alternative identification and learning strategies. Although these assumptions enable stronger theoretical guarantees, the resulting methods may be sensitive to model misspecification.

A closely related work is \cite{sadeghi2025axiomatization}, which axiomatizes causal relations from per-node single-target hard interventional distributions using the bowless directed mixed graphs (BDMGs) that allow latent variables and cycles. Our work differs in several important aspects. \cite{sadeghi2025axiomatization} focused on axiomatizing the causal structure from hard interventional distributions, whereas our objective is statistical inference from finite observational and interventional samples. We additionally allow selection mechanisms and general soft interventions, which arise naturally in Perturb-seq. Finally, whereas their primary graphical representation is bowless, our use of intervention indicators retains additional information about latent confounding that may coexist with causal relations between observed variables.

Another closely related work is \cite{luo2025characterization}, which characterizes the IMEC using the MAG of a DAG augmented with all intervention indicators and develops the $\Fcal$-FCI algorithm. We instead use a collection of MAGs of single-indicator augmented DAGs, avoiding additional inter-indicator constraints. Whereas $\Fcal$-FCI accommodates partial intervention coverage, our main algorithm exploits per-node single-target interventions to learn anterior sets that determine both conditioning sets and edge orientations, thereby avoiding a combinatorial search over conditioning sets.

\noindent\textbf{Organization.} The remainder of the paper is organized as follows. Section \ref{sec:preliminaries} introduces the data setting and MAG preliminaries. Section \ref{sec:SIS} introduces the system-induced subgraph and studies its identification and interpretation. Sections \ref{sec:latent_selection} and \ref{sec:selection} present inference procedures with and without latent variables, respectively. Section \ref{sec:numerical} reports simulation studies and a real data application. The supplement \citep{hou2026supplement} contains omitted sections about examples, interpretations of the learned graph, CTS tests, an algorithm in the absence of selection, and simulation setup, together with all proofs.

\section{Preliminaries}\label{sec:preliminaries}

\subsection{Notations}

We use uppercase letters (e.g., $X_j$) for random variables and boldface symbols (e.g., $\bm X=(X_1,\ldots,X_d)$) for random vectors. Let $\bm X_{-j}$ denote all variables excluding $X_j$, i.e., $\bm X_{-j}=(X_1,\ldots,X_{j-1},X_{j+1},\ldots,X_d)$, and denote $\bm X_{[K]}=(X_1,\ldots,X_K)$.

For a MAG $\Gcal=(\Vcal,\Ecal)$, write $\Vcal(\Gcal)$ and $\Ecal(\Gcal)$ for its vertices and edges. We say two vertices $X_j$ and $X_k$ are adjacent if there is an edge between them. Let $\parent_{\Gcal}(X),\ancestor_{\Gcal}(X)$, and $\anterior_{\Gcal}(X)$ denote the parent, ancestor, and anterior sets of $X$, respectively, where $X'$ is anterior to $X$ if they are connected by a path consisting of undirected edges and directed edges oriented toward $X$. For a set of vertices $\bm X'$, define $\ancestor_{\Gcal}(\bm X')=\bigcup_{X\in\bm X'}\ancestor_{\Gcal}(X)$ and $\anterior_{\Gcal}(\bm X')=\bigcup_{X\in\bm X'}\anterior_{\Gcal}(X)$. We follow the standard definition of paths, colliders, $d$-separation, and $m$-separation \citep{richardson2002ancestral}. We write $(X_j\indep X_k\mid\bm X')_{\Gcal}$ for $d$-separation in a DAG and $m$-separation in a MAG, and use $X_j\indep X_k\mid \bm X'$ for conditional independence of random variables. Two MAGs $\Gcal$ and $\Gcal'$ are $m$-separation equivalent if they share the same nodes $\Vcal(\Gcal)=\bm X=\Vcal(\Gcal')$ and for any disjoint sets $\bm X_1,\bm X_2,\bm X_3\subseteq\bm X$, $(\bm X_1\indep \bm X_2\mid\bm X_3)_{\Gcal}$ if and only if $(\bm X_1\indep \bm X_2\mid\bm X_3)_{\Gcal'}$. 

The family-wise type I error rate (${\rm FWER}_1$) refers to the probability of making at least one false discovery among a collection of hypothesis tests. Analogously, the family-wise type II error rate (${\rm FWER}_2$) is the probability of making at least one false nondiscovery.

\subsection{Problem Setup}

Let $\bm Z=(\bm C, \bm X, \bm L, \bm S)\in\R^{d_Z}$, where $\bm C\in\R^{d_C}$ denotes context variables, $\bm X\in\R^{d_X}$ contains the system variables of primary interest, $\bm L\in\R^{d_L}$ denotes latent variables, and $\bm S\in\{0,1\}^{d_S}$ denotes $d_S$ selection criteria. Here, only $(\bm C,\bm X)$ are observed for the retained samples. In single-cell gene expression data, $\bm C$ may include batch, library size, and measured cell states, $\bm X$ are gene expression levels, and $\bm L$ may include unmeasured genes, cell-cycle states, or chromatin states.

Under the control regime, such as the control cells in perturb-seq experiments, $\bm Z$ follows a distribution $P_{\bm Z}^{(0)}$ generated according to a DAG $\Gcal=(\bm Z,\Ecal)$,
\[p_{\bm Z}^{(0)}=\prod_{j\in [d_Z]}p_{Z_j\mid\parent_\Gcal(Z_j)}^{(0)},\quad p_{\bm Z}^{(0)}=\frac{dP_{\bm Z}^{(0)}}{d\mu},\]
where $\parent_\Gcal(Z_j)$ denotes the parents of $Z_j$ in $\Gcal$ and $\mu$ is some product dominating measure. 

We model selection as the final stage of data generation and assume $\bm S$ has no children in $\Gcal$. A sample $(\bm C,\bm X)$ is retained if and only if $\bm S=\bm 1$. A selection criterion $S_j$ may depend on both observed variables $\bm X$ and latent variables $\bm L$ in $\parent_\Gcal(\bm S)$, following $P_{S_j|\parent_\Gcal(S_j)}^{(0)}$. For example, filtering cells based on mitochondrial transcript fraction or viability may involve gene expression variables and unmeasured determinants of cell viability. Therefore, the observed control sample $(\bm C,\bm X)$ follows $P_{\bm C,\bm X\mid\bm S=\bm 1}^{(0)}$. Although $\bm S=1$ for every retained sample, the selection mechanisms and their parents $\parent_\Gcal(\bm S)$ are unknown.

In addition to the control regime, we consider $K\le d_X$ single-target soft-intervention regimes, indexed so that regime $k$ has target $X_k$. We allow intervention assignment to depend on context variables $\bm C_1\subseteq\bm C$ and the intervention to modify the mechanisms of other context variables $\bm C_2\subseteq\bm C\setminus\bm C_1$. Here, ``single-target'' refers only to the system variables, and effects on $\bm C_2$ are treated as contextual spillover effects rather than additional intervention targets. For example, sgRNA detection in Perturb-seq can depend on batch and library size \citep{barry2024robust}, whereas perturbations can alter cell morphology \citep{chandrasekaran2024three}. We assume that the assignment and intervention mechanisms, together with $\Gcal$, admit an acyclic generative ordering.

Following the intervention-indicator formulation in \cite{yang2018characterizing,kocaoglu2019characterization,luo2025characterization}, let $\bm I=(I_1,\ldots,I_K)\in\{0,1\}^K$ with $\sum_{k\in[K]}I_k=1$, where $\bm I=e_k$ denotes assignment to interventional regime $k$, and $e_k$ is the $k$th standard basis vector. Its conditional distribution $P_{\bm I\mid\bm C_1}$ accounts for context-dependent assignment. We assume the joint distribution $P_{\bm Z,\bm I}$ of $(\bm Z,\bm I)$ can be factorized as
\[p_{\bm Z,\bm I}=p_{\bm I\mid\bm C_1}\prod_{C\in\bm C_2}p_{C\mid\parent_{\Gcal}(C),\bm I}\prod_{k\in[K]}p_{X_k\mid\parent_{\Gcal}(X_k),I_k}\prod_{Z\in\bm Z\setminus(\bm C_2,\bm X_{[K]})}p_{Z\mid\parent_{\Gcal}(Z)}^{(0)}.\]
The local mechanism of a non-target variable agrees with its control counterpart, 
\[p_{X_j\mid\parent_{\Gcal}(X_j),I_j=0}=p_{X_j\mid\parent_{\Gcal}(X_j)}^{(0)}.\]
Thus, in regime $k$, only the mechanisms of $X_k$ and $\bm C_2$ may differ from their control counterparts. In particular, the selection mechanisms remain unchanged. Consequently, the $k$th interventional distribution $P_{\bm Z}^{(k)}$ equals
\begin{equation}\label{eq:interventional_distribution}
    p_{\bm Z}^{(k)}=p_{\bm Z\mid\bm I=e_k}=\frac{\Prob(\bm I=e_k\mid\bm C_1)}{\Prob(\bm I=e_k)}\prod_{C\in\bm C_2}p_{C\mid\parent_{\Gcal}(C)}^{(k)}p_{X_k\mid\parent_{\Gcal}(X_k)}^{(k)}\prod_{Z\in\bm Z\setminus(\bm C_2, X_k)}p_{Z\mid\parent_{\Gcal}(Z)}^{(0)},
\end{equation}
where $p_{C\mid\parent_{\Gcal}(C)}^{(k)}=p_{C\mid\parent_{\Gcal}(C),\bm I=e_k}$ for $C\in\bm C_2$ and $p_{X_k\mid\parent_{\Gcal}(X_k)}^{(k)}=p_{X_k\mid\parent_{\Gcal}(X_k),I_k=1}$. The leading probability ratio accounts for context-dependent assignment and equals one when assignment is independent of $\bm C_1$. 

For each $k\in\{0,\ldots,K\}$, we observe a dataset
\[\Dcal_k=\{(\bm C_i^{(k)},\bm X_i^{(k)}):i\in[n_k]\}\overset{\rm i.i.d.}{\sim} P_{\bm C,\bm X\mid\bm S=\bm 1}^{(k)}.\]
Thus, $\Dcal_0$ is the control dataset, and $\Dcal_1,\ldots,\Dcal_K$ are the interventional datasets. Our objective is to recover causal relations among $\bm X$, treating the causal structure involving $\bm C$ as nuisance structure.

\subsection{Maximal Ancestral Graph}\label{sec:MAG}

In this section, we review the maximal ancestral graph \citep{richardson2002ancestral}. Throughout the section, for simplicity, we assume there is no context variable $\bm C$ and the observable variables are $\bm X$. All the definitions in this section are from \cite{spirtes1997polynomial} and \cite{zhang2008completeness}.

Although the existence of latent and selection variables $(\bm L, \bm S)$ may be identified from the data, $(\bm L, \bm S)$ themselves are not identifiable due to their unobservability. The MAG is a powerful summarization of the CI relations among the observed variables $\bm X$. To formalize the definition of MAG, we first introduce the inducing path. See Figure \ref{fig:example_MAG} for an example.

\begin{figure}
    \centering
    \begin{subfigure}{0.45\textwidth}
        \centering
        \includegraphics[width=0.4\linewidth]{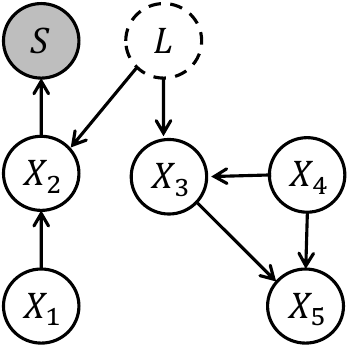}
        \caption{\footnotesize Full DAG $\Gcal$.}
        \label{fig:MAG_inducing}
    \end{subfigure}
    \begin{subfigure}{0.45\textwidth}
        \centering
        \includegraphics[width=0.4\linewidth]{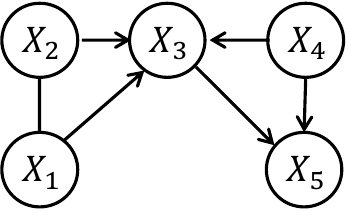}
        \caption{\footnotesize Corresponding $\MAG(\Gcal)$ of $\Gcal$.}
        \label{fig:MAG_unshielded}
    \end{subfigure}
    \caption{\footnotesize Example of a DAG $\Gcal$ (a) and its $\MAG(\Gcal)$ (b). $X_2\leftarrow L\rightarrow X_3$, $X_1\to X_2\leftarrow L\to X_3$, and all edges among $\bm X$ are inducing paths in $\Gcal$. $X_1\to X_3\leftarrow X_4$ is an unshielded collider in $\MAG(\Gcal)$. $X_2\to X_3\leftarrow X_4\to X_5$ is a discriminating path in $\MAG(\Gcal)$.}
    \label{fig:example_MAG}
\end{figure}

\begin{definition}[Inducing Path]\label{def:inducing_path}
    For observable variables $X,Y\in\bm X$, a path $\pi$ in $\Gcal$ between $X,Y$ is called an inducing path relative to $(\bm L,\bm S)$ if every non-endpoint node on $\pi$ is either in $\bm L$ or a collider, and every collider on $\pi$ is an ancestor of either $X,Y$, or $\bm S$.
\end{definition}
Intuitively, there is an inducing path between $X,Y$ if and only if $X,Y$ can not be $d$-separated by observable variables given $\bm S$. The MAG then summarizes all the $d$-separation information among $\bm X$ given $\bm S$ by collecting all the inducing paths.
\begin{definition}[MAG]\label{def:MAG}
    $\MAG(\Gcal)$ is a mixed graph on $\bm X$ that satisfies
    \begin{enumerate}[1)]
        \item for any $X,Y\in\bm X$, there is an edge between them in $\MAG(\Gcal)$ if and only if there is an inducing path relative to $(\bm L,\bm S)$ between $X,Y$ in $\Gcal$,
        \item for any edge in $\MAG(\Gcal)$ between $X,Y$,
        \begin{enumerate}[a)]
            \item $X\rightarrow Y$, if $X\in\ancestor_\Gcal(Y,\bm S)$ and $Y\not\in\ancestor_\Gcal(X,\bm S)$,
            \item $X\leftarrow Y$, if $X\not\in\ancestor_\Gcal(Y,\bm S)$ and $Y\in\ancestor_\Gcal(X,\bm S)$,
            \item $X\leftrightarrow Y$, if $X\not\in\ancestor_\Gcal(Y,\bm S)$ and $Y\not\in\ancestor_\Gcal(X,\bm S)$,
            \item $X - Y$, if $X\in\ancestor_\Gcal(Y,\bm S)$ and $Y\in\ancestor_\Gcal(X,\bm S)$.
        \end{enumerate}
    \end{enumerate}
    
\end{definition}
It is straightforward to verify that when there is no latent and selection variable, $\MAG(\Gcal)$ reduces to $\Gcal$. Since $(\bm L,\bm S)$ is not identifiable due to their unobservability, and MAGs fully characterize the CI relations among $\bm X$ given $\bm S$ in the original graph, the identifiability of $\Gcal$ is usually characterized through MAGs \citep{spirtes1997polynomial,kocaoglu2019characterization,luo2025characterization}. The following definitions are crucial for this purpose. See Figure \ref{fig:example_MAG} for an example.

\begin{definition}[Unshielded Collider]\label{def:unshielded_collider}
    In a MAG, a path with three nodes $\langle X, Y, Z\rangle$ is called an unshielded collider if $X,Z$ are not adjacent and both the edge between $X,Y$ and the edge between $Y,Z$ are into $Y$.
\end{definition}

\begin{definition}[Discriminating Path]\label{def:disc_path}
    In a MAG, a path $\pi=\langle X,\ldots,W,V,Y\rangle$ between $X,Y$ is a discriminating path for $V$ if
    \begin{enumerate}[1)]
        \item $\pi$ includes at least three edges,
        \item $V$ is a non-endpoint node on $\pi$, and is adjacent to $Y$ on $\pi$,
        \item $X,Y$ are not adjacent, every node between $X,V$ is a collider on $\pi$ and is a parent of $Y$.
    \end{enumerate}
\end{definition}

A discriminating path is a specially structured path in which intermediate nodes are colliders that are also parents of an endpoint. This structure makes the collider status of the shielded node $V$ identifiable \citep{spirtes1997polynomial}.

\section{System-Induced Subgraph and its Identifiability}\label{sec:SIS}

\subsection{System-Induced Subgraph}

Intervention indicators $\bm I$ provide a graphical characterization of IMEC through augmented graphs \citep{yang2018characterizing,kocaoglu2019characterization,luo2025characterization}. For each $k\in[K]$, define the \textbf{augmented graph} $\Aug(\Gcal,I_k)$ on $(\bm Z,I_k)$ by adjoining $I_k$ to $\Gcal$, together with the edges $I_k\rightarrow X_k$, $I_k\rightarrow C$ for $C\in\bm C_2$, and $C\rightarrow I_k$ for $C\in\bm C_1$. Under the acyclic mechanism factorization in \eqref{eq:interventional_distribution}, the interventional distribution $P_{\bm Z}^{(k)}$ is Markov to the augmented graph $\Aug(\Gcal, I_k)$ conditional on $I_k$. See Figure \ref{fig:data_DAG_aug} for an example. 

\begin{figure}
    \centering
    \begin{subfigure}{0.18\textwidth}
        \centering
        \includegraphics[width=0.9\linewidth]{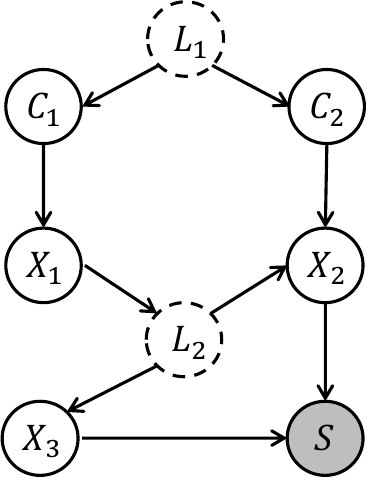}
        \caption{\footnotesize Full DAG $\Gcal$.}
        \label{fig:data_DAG}
    \end{subfigure}
    \hspace{12pt}
    \begin{subfigure}{0.24\textwidth}
        \centering
        \includegraphics[width=0.9\linewidth]{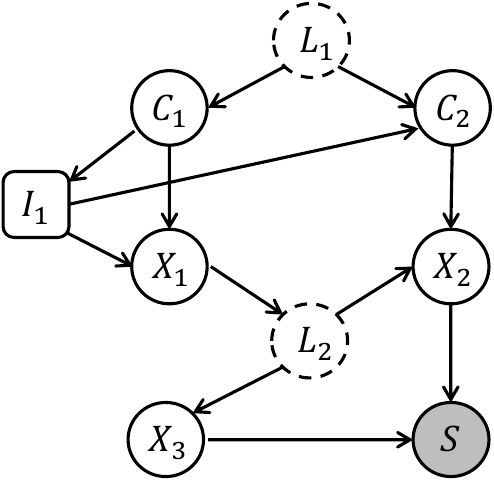}
        \caption{\footnotesize Augmented DAG $\Aug(\Gcal, I_1)$.}
        \label{fig:data_DAG_aug}
    \end{subfigure}
    \hspace{12pt}
    \begin{subfigure}{0.18\textwidth}
        \centering
        \includegraphics[width=0.9\linewidth]{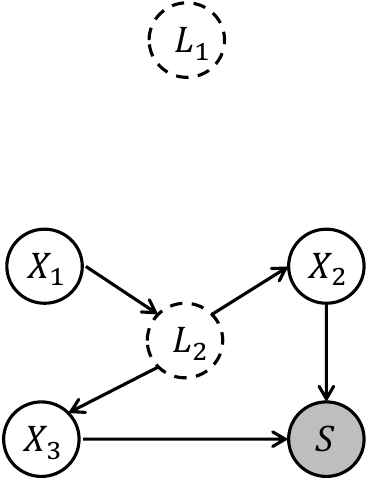}
        \caption{\footnotesize SIS $\Gcal_{\mid\bm C}$ of $\Gcal$.}
        \label{fig:data_SIS}
    \end{subfigure}
    \hspace{12pt}
    \begin{subfigure}{0.24\textwidth}
        \centering
        \includegraphics[width=0.9\linewidth]{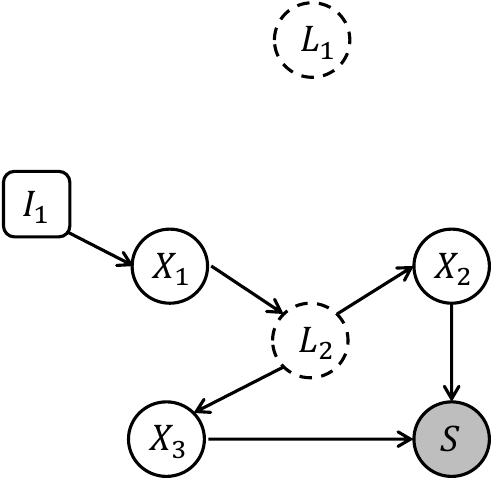}
        \caption{\footnotesize SIS $\Aug(\Gcal, I_1)_{\mid\bm C}$ of augmented DAG.}
        \label{fig:data_SIS_aug}
    \end{subfigure}
    \caption{\footnotesize Example of DAG $\Gcal$ (a), the augmented graph $\Aug(\Gcal, I_1)$ (b), and their SISs $\Gcal_{\mid\bm C}$, $\Aug(\Gcal, I_1)_{\mid\bm C}$ ((c) and (d)).}
    \label{fig:example_data}
\end{figure}

Rather than including all indicators $\bm I$ in a single augmented graph as in \cite{kocaoglu2019characterization,luo2025characterization,yang2018characterizing}, we define $K$ augmented graphs $\Aug(\Gcal, I_k)$ for each indicator $I_k$ to facilitate the graphical identification under selection and latent variables (Section \ref{sec:IMEC_SIS}). These augmented graphs offer unified representations of both the CI relations and the distributional invariance that arises under interventions \citep{yang2018characterizing,kocaoglu2019characterization,luo2025characterization}.

Recall that although the graph $\Gcal$ contains context variables $\bm C$, we are only interested in the causal relations among $\bm X$. To this end, we make the following assumptions throughout the paper. Assumption \ref{asm:context} is an extension of Assumptions 1 and 2 in \cite{mooij2020joint} by including selection variables. See Figure \ref{fig:context} for examples of prohibited structures under this assumption.

\begin{assumption}[Context Variables]\label{asm:context}
    No system variable in $\bm X$ causes any context variable in $\bm C$, i.e., $\ancestor_\Gcal(\bm C)\cap\bm X=\emptyset$. And context variables $\bm C$ are not latently confounded with system variables $\bm X$ or selection variables $\bm S$.
\end{assumption}

\begin{figure}
    \centering
    \begin{subfigure}{0.3\textwidth}
        \centering
        \includegraphics[width=0.35\linewidth]{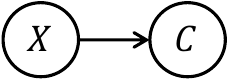}
    \end{subfigure}
    \hspace{10pt}
    \begin{subfigure}{0.3\textwidth}
        \centering
        \includegraphics[width=0.35\linewidth]{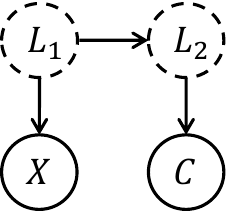}
    \end{subfigure}
    \hspace{10pt}
    \begin{subfigure}{0.3\textwidth}
        \centering
        \includegraphics[width=0.35\linewidth]{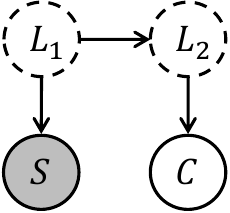}
    \end{subfigure}
    \caption{\footnotesize Forbidden structures under Assumption \ref{asm:context}.}
    \label{fig:context}
\end{figure}

Assumption \ref{asm:context} treats the context variables as exogenous to the system. This assumption allows us to isolate the causal relations between system variables while removing nuisance relations involving context variables. Here we say $Z,Z'\in\bm Z\setminus\bm L$ are \textbf{latently confounded} if there exists $L\in\bm L$ and a walk $Z\leftarrow \ldots\leftarrow L\rightarrow \ldots\rightarrow Z'$ such that 1) all middle nodes are in $\bm L$, 2) the subwalk $Z\leftarrow\ldots\leftarrow L$ has all arrowheads pointing to $Z$, and 3) the subwalk $L\rightarrow\ldots\rightarrow Z'$ has all arrowheads pointing to $Z'$.

To suppress the causal relations involving $\bm C$, we introduce the system-induced subgraph (SIS), an instance of induced subgraph \citep{richardson2002ancestral} that excludes $\bm C$. See Figures \ref{fig:data_SIS} and \ref{fig:data_DAG_aug} for examples.

\begin{definition}[System-Induced Subgraphs]
    \begin{enumerate}[1)]
        \item $\Gcal_{\mid\bm C}$ is a subgraph of $\Gcal$ with nodes $\Vcal(\Gcal_{\mid\bm C})=(\bm X, \bm L, \bm S)$. An edge in $\Ecal(\Gcal)$ is contained in $\Ecal(\Gcal_{\mid\bm C})$ if and only if neither end node is in $\bm C$.
        \item $\Aug(\Gcal,I_k)_{\mid\bm C}$ is a subgraph of $\Aug(\Gcal, I_k)$ with nodes $(\bm X, \bm L, \bm S, I_k)$. An edge in $\Ecal(\Aug(\Gcal, I_k))$ is contained in $\Ecal(\Aug(\Gcal, I_k)_{\mid\bm C})$ if and only if neither end node is in $\bm C$.
    \end{enumerate}
\end{definition}

Our use of context variables is closely related to the joint causal inference (JCI) framework of \cite{mooij2020joint}. However, the SIS serves a different purpose. Rather than learning a joint graph over both context and system variables, we treat context variables as nuisance variables and focus on recovering the causal relations among the remaining variables. Furthermore, although JCI accommodates latent variables, it doesn't explicitly account for selection bias, which plays a central role in our setting. 

Denote the augmented graph $\Aug(\Gcal_{\mid\bm C},I_k)$ of $\Gcal_{\mid\bm C}$ as a graph on $(\bm X,\bm L,\bm S,I_k)$ with an additional edge $I_k\rightarrow X_k$, it is clear that $\Aug(\Gcal_{\mid\bm C}, I_k)=\Aug(\Gcal,I_k)_{\mid\bm C}$. The goal of this paper is to recover the SIS from data.

Although the observed data need not be Markov with respect to SISs, the following proposition establishes a connection between the $d$-separation relations in the SISs $\Gcal_{\mid\bm C},\Aug(\Gcal_{\mid\bm C},I_k)$ and those of the original graphs $\Gcal,\Aug(\Gcal,I_k)$. 

\begin{proposition}[$d$-separations in SIS]\label{prop:context_conditional}
    \begin{enumerate}[1)]
        \item Under Assumption \ref{asm:context}, for any disjoint subsets $\bm Z_1,\bm Z_2,\bm Z_3$ with $\bm Z_1\subseteq\bm X$, $\bm Z_2\subseteq(\bm X,\bm L,\bm S)$, and $\bm Z_3\subseteq(\bm X,\bm S)$, we have $(\bm Z_1\indep\bm Z_2\mid\bm Z_3)_{\Gcal_{\mid\bm C}}$ if and only if $(\bm Z_1\indep\bm Z_2\mid\bm C, \bm Z_3)_{\Gcal}$.
        \item Under Assumption \ref{asm:context}, for any disjoint subsets $\bm Z_1,\bm Z_2,\bm Z_3$ with $\bm Z_1\subseteq\bm X$, $\bm Z_2\subseteq(\bm X,\bm L,\bm S,I_k)$, $\bm Z_3\subseteq(\bm X,\bm S, I_k)$, and $I_k\in(\bm Z_2,\bm Z_3)$, we have $(\bm Z_1\indep\bm Z_2\mid\bm Z_3)_{\Aug(\Gcal_{\mid\bm C},I_k)}$ if and only if $(\bm Z_1\indep\bm Z_2\mid\bm C, \bm Z_3)_{\Aug(\Gcal, I_k)}$.
    \end{enumerate}
\end{proposition}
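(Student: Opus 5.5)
We argue both directions through the path characterization of $d$-separation. For the ``if'' direction of part 1), assume $(\bm Z_1\indep\bm Z_2\mid\bm C,\bm Z_3)_\Gcal$ fails in $\Gcal_{\mid\bm C}$ sense, i.e.\ there is a path $\pi$ in $\Gcal_{\mid\bm C}$ between some $Z_1\in\bm Z_1$ and $Z_2\in\bm Z_2$ that is $d$-connecting given $\bm Z_3$. Since $\Gcal_{\mid\bm C}$ is the subgraph of $\Gcal$ obtained by deleting $\bm C$, the path $\pi$ is also a path in $\Gcal$ that avoids $\bm C$. Its noncolliders are not in $\bm Z_3$, and they are not in $\bm C$ either, so they remain unblocked given $(\bm C,\bm Z_3)$. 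Each collider of $\pi$ has a descendant in $\bm Z_3$ within $\Gcal_{\mid\bm C}$, and descendant relations only grow when edges are added back. Hence $\pi$ is $d$-connecting in $\Gcal$ given $(\bm C,\bm Z_3)$. This gives the ``$\Leftarrow$'' contrapositive, namely separation in $\Gcal$ implies separation in $\Gcal_{\mid\bm C}$.

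For the converse, take a path $\pi$ in $\Gcal$ between $Z_1\in\bm X$ and $Z_2$ that is $d$-connecting given $(\bm C,\bm Z_3)$. We will show that $\pi$ cannot touch $\bm C$, which makes it a path in $\Gcal_{\mid\bm C}$. Any $C\in\bm C$ on $\pi$ lies in the conditioning set, so it must be a collider, $\to C\leftarrow$. Walk from $Z_1\in\bm X$ along $\pi$ toward the first $\bm C$-node $C$, and consider the subpath $Z_1\ldots U\to C$.
\begin{itemize}
\item If $U\in\bm X\cup\bm S$, this contradicts $\ancestor_\Gcal(\bm C)\cap\bm X=\emptyset$; for $\bm S$ it contradicts the fact that $\bm S$ has no children.
\item So $U\in\bm L$. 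Trace backward from $U$ along $\pi$ toward $Z_1$. While the edges point toward $C$, every node is an ancestor of $C$, so none of them lies in $\bm X$ or $\bm S$, and all of them are in $\bm L$. Since $Z_1\in\bm X$, some node on this stretch must be a non-collider source $L\in\bm L$ with $\cdots\leftarrow L\rightarrow\cdots\rightarrow C$. Tracing forward from $L$ toward $Z_1$, we stay in $\bm L$ until we reach either a collider or a node of $\bm X\cup\bm S$.
\item If we reach $X'\in\bm X\cup\bm S$ before any collider, the walk $X'\leftarrow\cdots\leftarrow L\to\cdots\to C$ is a latent confounding of $C$ with $\bm X$ or $\bm S$, which Assumption~\ref{asm:context} forbids.
\item If we instead reach a collider $W\in\bm L$ first, then $W$ must have a descendant in $(\bm C,\bm Z_3)$. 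A descendant in $\bm C$ is impossible by the same argument as above, applied now to the directed path $L\to\cdots\to W\to\cdots$. A descendant in $\bm Z_3\subseteq(\bm X,\bm S)$ yields the directed walk $L\to\cdots\to W\to\cdots\to X''$, and together with $L\to\cdots\to C$ this gives a confounding walk. If that walk passes through non-latent nodes, truncate it at the first node of $\bm X\cup\bm S$, which produces a forbidden confounding.
\end{itemize}
The main obstacle is making this case analysis airtight. In particular we must handle descendants of colliders that leave the path, and check that truncated walks still meet the exact definition of latent confounding, with all middle nodes in $\bm L$. We also need that no directed path runs from $\bm X\cup\bm S$ into $\bm C$, which follows from the no-children property of $\bm S$ and from $\ancestor(\bm C)\cap\bm X=\emptyset$.

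Once $\pi$ avoids $\bm C$, its noncolliders avoid $\bm Z_3$. Each collider has a descendant in $(\bm C,\bm Z_3)$, and by the argument above this descendant is in $\bm Z_3$ through a directed path whose nodes are not ancestors of $\bm C$, since otherwise we would again get a forbidden confounding or a path from $\bm X$ into $\bm C$. That directed path therefore avoids $\bm C$ and lies in $\Gcal_{\mid\bm C}$. Hence $\pi$ is $d$-connecting in $\Gcal_{\mid\bm C}$.

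Part 2) follows by the same argument applied to $\Aug(\Gcal,I_k)$. The only new edges are $I_k\to X_k$, $I_k\to C$ for $C\in\bm C_2$, and $C\to I_k$ for $C\in\bm C_1$. Because $I_k\in(\bm Z_2,\bm Z_3)$, a connecting path that passes through $I_k$ as an interior node must have $I_k$ as a collider when $I_k\in\bm Z_3$, or else the path stops at $I_k$ as an endpoint. A path that enters $\bm C$ through $I_k\to C$ or $C\to I_k$ is handled by the same blocking argument: $C$ is conditioned on, so it must be a collider. A route into $C$ from the system side is ruled out by the argument above. A collider at $C$ adjacent to $I_k$ on both sides would reuse $I_k$, which a path cannot do. We treat these extra cases explicitly and otherwise reuse the argument from part 1).
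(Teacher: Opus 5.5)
Your argument is correct and is essentially the paper's proof. The forward direction embeds the open path of the induced subgraph into the full graph. The converse shows that a path open given $(\bm C,\bm Z_3)$ can neither meet $\bm C$ nor have a collider in $\ancestor(\bm C)$: otherwise it would produce an $\bm X$-ancestor of $\bm C$ or a latent confounding of $\bm C$ with $\bm X\cup\bm S$, contradicting Assumption~\ref{asm:context}.

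The paper avoids your recursive ``by the same argument as above'' step by choosing, from the outset, the middle node of $\pi$ in $\ancestor_{\Aug(\Gcal,I_k)}(\bm C)$ closest to $Z_1$; one minimality argument then handles $\bm C$-nodes, $I_k$, and $\bm C$-ancestral colliders together. Your truncation of the path at $I_k$ when $I_k\in\bm Z_2$ covers a case that the paper's reduction to singleton $Z_2$ passes over.
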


Proposition \ref{prop:context_conditional} suggests that, to recover the causal relations among $\bm X$ encoded by the SISs $\Gcal_{\mid\bm C}$ and $\Aug(\Gcal_{\mid\bm C}, I_k)$, it suffices to recover the corresponding causal relations in the original graphs $\Gcal$ and $\Aug(\Gcal,I_k)$ conditional on $\bm C$. The following lemma connects $d$-separation in $\Aug(\Gcal_{\mid\bm C},I_k)$ to conditional invariance between control and $k$th interventional distributions.

\begin{lemma}\label{lem:Markov_Aug}
    Under Assumption \ref{asm:context}, for any $k\in[K]$ and disjoint sets $\bm X_1,\bm X_2\subseteq\bm X$, $(I_k\indep \bm X_1\mid\bm X_2,\bm S)_{\Aug(\Gcal_{\mid\bm C},I_k)}$ implies $P_{\bm X_1\mid\bm X_2,\bm C,\bm S}^{(k)}=P_{\bm X_1\mid\bm X_2,\bm C,\bm S}^{(0)}$, where equality means that the two conditional distributions admit a common version.
\end{lemma}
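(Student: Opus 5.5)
We begin by transferring the separation from the SIS back to the full augmented graph. Apply Proposition \ref{prop:context_conditional} part 2) with $\bm Z_1=\bm X_1$, $\bm Z_2=\{I_k\}$ and $\bm Z_3=(\bm X_2,\bm S)$. The proposition requires $I_k\in(\bm Z_2,\bm Z_3)$, and this holds. It then gives
\[
(\bm X_1\indep I_k\mid\bm C,\bm X_2,\bm S)_{\Aug(\Gcal,I_k)}.
\]
Here $d$-separation is symmetric, so the roles of the first two arguments can be swapped. The remaining task is to convert this $d$-separation in $\Aug(\Gcal,I_k)$ into a distributional statement about the mixture over regimes $0$ and $k$.

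Next we set up a joint distribution on $(\bm Z,I_k)$ that is Markov to $\Aug(\Gcal,I_k)$. Let $I_k\in\{0,1\}$, where $I_k=0$ is the control regime and $I_k=1$ is regime $k$. Factorize
\[
p_{\bm Z,I_k}=p_{I_k\mid\bm C_1}\prod_{C\in\bm C_2}p_{C\mid\parent_\Gcal(C),I_k}\;p_{X_k\mid\parent_\Gcal(X_k),I_k}\prod_{Z\notin(\bm C_2,X_k)}p^{(0)}_{Z\mid\parent_\Gcal(Z)}.
\]
The components are chosen as follows.
\begin{itemize}
\item For $I_k=0$, the mechanisms are the control ones.
\item For $I_k=1$, the mechanisms are the regime-$k$ ones.
\item $p_{I_k\mid\bm C_1}$ is any assignment with positive probability on both values.
\end{itemize}
This is a product of kernels along the acyclic ordering of $\Aug(\Gcal,I_k)$, which the model assumes exists. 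Hence the joint law is Markov to $\Aug(\Gcal,I_k)$ by the standard factorization-implies-global-Markov result for DAGs.

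We must check that conditioning on $I_k=j$ recovers the correct distribution. Conditioning on $I_k=1$ gives a density whose form matches \eqref{eq:interventional_distribution}, with the reweighting factor $\Prob(I_k=1\mid\bm C_1)/\Prob(I_k=1)$. One subtlety is that the context distribution must reproduce the actual regime-$k$ context law. To handle this, we choose $p_{I_k\mid\bm C_1}$ consistent with the original assignment restricted to $\{0,k\}$, that is, proportional to $\Prob(\bm I=e_k\mid\bm C_1)$ and $\Prob(\text{control}\mid\bm C_1)$. Then $P_{\bm Z\mid I_k=1}=P^{(k)}_{\bm Z}$ and $P_{\bm Z\mid I_k=0}=P^{(0)}_{\bm Z}$. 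Alternatively, since we condition on $\bm C$ anyway, the $\bm C$-marginal only affects the conditionals through a factor that cancels.

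With the Markov property in hand, the global Markov property yields
\[
\bm X_1\indep I_k\mid \bm C,\bm X_2,\bm S.
\]
Conditioning further on the event $\bm S=\bm 1$ is legitimate because $\bm S$ is already in the conditioning set. Therefore
\[
P_{\bm X_1\mid \bm X_2,\bm C,\bm S,I_k=1}=P_{\bm X_1\mid \bm X_2,\bm C,\bm S,I_k=0},
\]
with a common version, and this equality is exactly $P^{(k)}_{\bm X_1\mid\bm X_2,\bm C,\bm S}=P^{(0)}_{\bm X_1\mid\bm X_2,\bm C,\bm S}$.

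The main obstacle is the measure-theoretic bookkeeping. First, the equality must hold via a common version of the regular conditional distributions, which requires positivity of both regimes given $\bm C_1$ (or absolute continuity) so that conditioning on $I_k$ is well defined. Second, we must verify that the constructed two-regime mixture reproduces both $P^{(0)}$ and $P^{(k)}$ exactly, including the context-dependent assignment factor. To address the latter, we would show explicitly that the $\bm C_1$-dependent ratio in \eqref{eq:interventional_distribution} is absorbed into $p_{I_k\mid\bm C_1}$. Conditional independence given $\bm C$ then makes any residual context reweighting irrelevant.
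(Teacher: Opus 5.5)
Your argument is essentially the paper's. You transfer the separation to $\Aug(\Gcal,I_k)$ with $\bm C$ added to the conditioning set via Proposition~\ref{prop:context_conditional}, build a two-regime joint law Markov to the augmented DAG, and apply the global Markov property. You then note that the assignment factor depends only on $\bm C_1\subseteq\bm C$ and so cancels in conditionals given $\bm C$. The paper does the same with specific choices: a fair coin $J_k$ independent of $\bm C_1$, the untilted density $\widetilde p^{(k)}_{\bm Z}$, and the identity $p^{(k)}_{\bm Z\mid\bm C}=\widetilde p^{(k)}_{\bm Z\mid\bm C}$.

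One correction: drop the claim that a $\bm C_1$-dependent choice $\pi(\bm c_1)=\Prob(I_k=1\mid\bm c_1)$ makes $P_{\bm Z\mid I_k=1}=P^{(k)}_{\bm Z}$ and $P_{\bm Z\mid I_k=0}=P^{(0)}_{\bm Z}$ exactly. In the paper's model $P^{(0)}_{\bm Z}$ carries no assignment factor, since there is no $\Prob(\text{control}\mid\bm C_1)$. Your construction tilts the two conditionals by $\pi(\bm c_1)/\Prob(I_k=1)$ and $(1-\pi(\bm c_1))/\Prob(I_k=0)$. These cannot simultaneously equal $\Prob(\bm I=e_k\mid\bm c_1)/\Prob(\bm I=e_k)$ and $1$ unless the assignment ignores $\bm C_1$. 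Your ``alternatively'' cancellation argument is what actually carries the proof, and it works for any $\pi$ with values in $(0,1)$.
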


\subsection{Identifiablity of SIS under Intervention}\label{sec:IMEC_SIS}

As discussed in Section \ref{sec:MAG}, the identifiability of $\Gcal$ is typically characterized through MAGs, which has been studied extensively \citep{verma1991equivalence,spirtes1997polynomial,hauser2012characterization,yang2018characterizing,kocaoglu2019characterization,luo2025characterization}. However, when attention is restricted to causal relations among system variables, the IMEC of the subgraph $\Gcal_{\mid \bm C}$ becomes the central object of interest, and it can exhibit stronger identifiability than the original graph $\Gcal$. 

\begin{definition}[IMEC of $\Gcal_{\mid\bm C}$]\label{def:IMEC}
    In our context, for any SIS $\Gcal_{\mid\bm C}'$ on $(\bm X,\bm L',\bm S')$, $\Gcal_{\mid\bm C}'$ is interventional Markov equivalent to $\Gcal_{\mid\bm C}$ if and only if
    \begin{enumerate}[1)]
        \item for any disjoint sets $\bm X_1,\bm X_2,\bm X_3\subseteq\bm X$, $(\bm X_1\indep\bm X_2\mid\bm X_3,\bm S')_{\Gcal_{\mid\bm C}'}$ if and only if $(\bm X_1\indep\bm X_2\mid\bm X_3,\bm S)_{\Gcal_{\mid\bm C}}$,
        \item for any disjoint sets $\bm X_1,\bm X_2\subseteq\bm X$ and $k\in[K]$, $(I_k\indep\bm X_1\mid\bm X_2,\bm S')_{\Aug(\Gcal_{\mid\bm C}',I_k)}$ if and only if $(I_k\indep\bm X_1\mid\bm X_2,\bm S)_{\Aug(\Gcal_{\mid\bm C}, I_k)}$.
    \end{enumerate}
    Then the IMEC $\IMEC(\Gcal_{\mid\bm C})$ of $\Gcal_{\mid\bm C}$ consists of all SISs that are interventional Markov equivalent to $\Gcal_{\mid\bm C}$,
    \[\IMEC(\Gcal_{\mid\bm C})=\big\{\Gcal_{\mid\bm C}':\Gcal_{\mid\bm C}'\text{ interventional Markov equivalent to }\Gcal_{\mid\bm C}\big\}.\]
\end{definition}
Part 1) in Definition \ref{def:IMEC} requires $\Gcal_{\mid \bm C}',\Gcal_{\mid\bm C}$ to have the same CI relations among the observed variables, and part 2) asks for the same distributional invariance after interventions. Therefore, $\IMEC(\Gcal_{\mid\bm C})$ contains all SISs that are indistinguishable from $\Gcal_{\mid\bm C}$ based on the available distributions.

Similar to the characterizations in \citep{spirtes1997polynomial,yang2018characterizing,kocaoglu2019characterization,luo2025characterization}, we develop a graphical criterion for $\IMEC(\Gcal_{\mid\bm C})$ based on the MAGs of augmented graphs. 

\begin{proposition}[Graphical Characterization of IMEC]\label{prop:IMEC}
    The following statements are equivalent:
    \begin{enumerate}[1)]
        \item $\Gcal_{\mid\bm C}'\in\IMEC(\Gcal_{\mid\bm C})$,
        \item for all $k\in[K]$, $\MAG(\Aug(\Gcal_{\mid\bm C}', I_k))$ and $\MAG(\Aug(\Gcal_{\mid\bm C}, I_k))$ are $m$-separation equivalent,
        \item for all $k\in[K]$, $\MAG(\Aug(\Gcal_{\mid\bm C}',I_k))$ and $\MAG(\Aug(\Gcal_{\mid\bm C},I_k))$ satisfy
    \begin{enumerate}[a)]
        \item they have the same skeleton,
        \item they have the same unshielded colliders,
        \item if a path $\pi$ is a discriminating path for a node $X$ in both graphs, then $X$ is a collider on the path in one graph if and only if it is a collider on the path in the other.
    \end{enumerate}
    \end{enumerate}
\end{proposition}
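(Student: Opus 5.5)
The plan is to prove $1)\Leftrightarrow 2)$ by translating both conditions of Definition~\ref{def:IMEC} into $m$-separation statements in the MAGs of the augmented graphs. We then obtain $2)\Leftrightarrow 3)$ directly from the classical Markov equivalence characterization of MAGs \citep{spirtes1997polynomial,zhang2008completeness}.

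The basic tool is the standard fact \citep{richardson2002ancestral} that $\MAG(\cdot)$ preserves separation. For a DAG $\Hcal$ over observed nodes $\bm O$, latent nodes $\bm L$ and selection nodes $\bm S$, and disjoint $\bm A,\bm B,\bm D\subseteq\bm O$, we have $(\bm A\indep\bm B\mid\bm D,\bm S)_{\Hcal}$ if and only if $(\bm A\indep\bm B\mid\bm D)_{\MAG(\Hcal)}$. We apply this with $\Hcal=\Aug(\Gcal_{\mid\bm C},I_k)$ and $\bm O=(\bm X,I_k)$. Note that $(\bm X_1\indep\bm X_2\mid\bm X_3,\bm S)$ holds in $\Gcal_{\mid\bm C}$ if and only if it holds in $\Aug(\Gcal_{\mid\bm C},I_k)$ with $I_k$ absent from the conditioning set. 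The only new node is $I_k$, which is a source with a single child $X_k$. Any path through $I_k$ therefore enters and leaves through $X_k$, which is impossible on a path, so $I_k$ can only be an endpoint and adds no new connecting paths.

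With this, $1)$ says exactly that, for every $k$, the two MAGs on $(\bm X,I_k)$ agree on every $m$-separation statement that either does not involve $I_k$ or has $I_k$ as an endpoint. Hence $2)\Rightarrow1)$ is immediate. For $1)\Rightarrow2)$ we must also recover the statements in which $I_k$ lies in the conditioning set. We handle them as follows. In both augmented DAGs, $I_k$ is a parentless node whose only child is $X_k$, and it has no ancestors, so it is not an ancestor of $\bm S$ unless through $X_k$. A path between $\bm X_1$ and $\bm X_2$ that is $m$-connecting given $\bm X_3\cup\{I_k\}$ cannot pass through $I_k$ as a noncollider. It also cannot have $I_k$ as a collider, because $I_k$ has no parents, and its MAG edges are all out of $I_k$. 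Conditioning on $I_k$ can only open paths that use a collider whose descendant set contains $I_k$, and there are none, because $I_k$ has no ancestors other than itself. Therefore $(\bm X_1\indep\bm X_2\mid\bm X_3,I_k)$ is equivalent to $(\bm X_1\indep\bm X_2\mid\bm X_3)$ in both MAGs, and these are matched by part~1) of the definition. The remaining case of sets containing $I_k$ on one side with extra $\bm X$ variables is handled by the decomposition/composition properties of $m$-separation: $(I_k\cup\bm X_1\indep\bm X_2\mid\bm X_3)$ holds if and only if both $(\bm X_1\indep\bm X_2\mid\bm X_3)$ and $(I_k\indep\bm X_2\mid\bm X_3\cup\bm X_1)$ hold. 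The latter is a statement of type 2) in Definition~\ref{def:IMEC}. This decomposition relies on the graphoid properties, which separation in a graph satisfies, including composition.

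For $2)\Leftrightarrow3)$, we invoke the theorem of \citet{spirtes1997polynomial} (see also \citealp{zhang2008completeness}). It states that two MAGs over the same vertex set are $m$-separation (Markov) equivalent if and only if they share the same skeleton and unshielded colliders, and every path that is discriminating for a node in both graphs has that node as a collider in one graph exactly when it is a collider in the other. We apply this to each $k$ separately.

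The main obstacle is the bookkeeping in $1)\Rightarrow2)$. There we must make sure that every $m$-separation statement over $(\bm X,I_k)$ reduces to the two families in Definition~\ref{def:IMEC}. In particular we must check that $I_k$, being exogenous with a single child, really cannot act as a collider or as a descendant of one in the MAG. I would verify this first, using Definition~\ref{def:MAG}: edges incident to $I_k$ have tails at $I_k$, since $I_k\in\ancestor(X_k)$ and $I_k\notin\ancestor(\cdot,\bm S)$ of anything except through its descendants.
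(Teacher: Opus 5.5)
Your route is the same as the paper's. You reduce Definition~\ref{def:IMEC} to $d$-separation equivalence of the augmented DAGs over $(\bm X,I_k)$ given $\bm S$, transfer it to the MAGs with Theorem 4.18 of \cite{richardson2002ancestral}, and get $2)\Leftrightarrow 3)$ from \cite{spirtes1997polynomial}. Your treatment of sets containing $I_k$ together with other system variables is also fine; it is even simpler, since separation of sets is defined pairwise.

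The step that does not go through as written is the case with $I_k$ in the conditioning set, which you argue inside the MAG. You show that a path $m$-connecting given $\bm X_3\cup\{I_k\}$ avoids $I_k$ and is $m$-connecting given $\bm X_3$, so conditioning on $I_k$ opens nothing. You do not show the converse, and in the MAG it is not automatic. In $\Aug(\Gcal_{\mid\bm C},I_k)$ the node $I_k$ has one neighbour. In $\MAG(\Aug(\Gcal_{\mid\bm C},I_k))$ it is typically adjacent to several system variables: every $X_j$ reached by an inducing path $I_k\rightarrow X_k\ldots X_j$, as in $X_4\leftarrow I_3\rightarrow X_5$ in Example~\ref{exa:latent}. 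So $I_k$ can be a middle noncollider on an $m$-connecting path given $\bm X_3$, and conditioning on it blocks that path.

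Concluding $(\bm X_1\indep\bm X_2\mid\bm X_3,I_k)\Rightarrow(\bm X_1\indep\bm X_2\mid\bm X_3)$ in the MAG would require showing that such paths can always be rerouted around $I_k$, and you do not do this. Your chain of equivalences passes from each MAG's statement with $I_k$ conditioned to the $I_k$-free statement and back, so it uses both directions. Your closing paragraph checks the wrong risk: the issue is $I_k$ acting as a noncollider, not as a collider or a descendant of one.

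The repair is to do this step in the DAG, as the paper does in Lemma~\ref{lem:add_I}. In $\Aug(\Gcal_{\mid\bm C},I_k)$ the node $I_k$ has the single neighbour $X_k$, so by your own first-paragraph observation it is never a middle node of a path, and conditioning on it blocks nothing. It has no parents, so no collider is an ancestor of it, and conditioning on it opens nothing. Hence $(\bm X_1\indep\bm X_2\mid\bm X_3,\bm S,I_k)_{\Aug(\Gcal_{\mid\bm C},I_k)}$ holds if and only if $(\bm X_1\indep\bm X_2\mid\bm X_3,\bm S)_{\Aug(\Gcal_{\mid\bm C},I_k)}$ holds, and Theorem 4.18 carries this equivalence to the MAG.

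With that change, the MAG-level facts about tails at $I_k$ are no longer needed. Their justification in your last paragraph is also only sketched; it needs the observation that $X_k$ must be a collider on every inducing path out of $I_k$.
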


Unlike existing works \citep{yang2018characterizing,kocaoglu2019characterization,luo2025characterization}, our criterion is based on a collection of $K$ MAGs rather than the single graph $\MAG(\Aug(\Gcal_{\mid\bm C},\bm I))$. The key distinction is that $\MAG(\Aug(\Gcal_{\mid\bm C},\bm I))$ also encodes the $d$-separations among indicators $\bm I$, which are not implied by Definition \ref{def:IMEC} under selection. Therefore, $\MAG(\Aug(\Gcal_{\mid\bm C},\bm I))$ fails to represent Definition \ref{def:IMEC} in our setting. See Example A.1 in the supplement \citep{hou2026supplement} for a counterexample.

By Proposition \ref{prop:IMEC}, $\IMEC(\Gcal_{\mid\bm C})$ is characterized by the $K$ MAGs. We will show that the $K$ MAGs share the same edges among $\bm X$ and the only differences are the nodes $I_k$ and their edges to $\bm X$. Therefore, all the $K$ MAGs can be represented by a single \textbf{combined-MAG} $\CM(\Gcal_{\mid\bm C},\bm I)$ on $(\bm X,\bm I)$, where $\CM(\Gcal_{\mid\bm C},\bm I)$ consists of all the edges from the $K$ MAGs, i.e., 
\begin{equation}\label{eq:def_CM}
    \Ecal\big(\CM(\Gcal_{\mid\bm C},\bm I)\big)=\bigcup_{k\in[K]}\Ecal\big(\MAG(\Aug(\Gcal_{\mid\bm C},I_k))\big).
\end{equation}
See Figure \ref{fig:CM} for an example. The combined-MAG provides a lossless representation of all the $K$ MAGs, and thus encodes all CI relations and distributional invariances in the data.

\begin{figure}
    \centering
    \begin{subfigure}{0.32\textwidth}
        \centering
        \includegraphics[width=0.45\linewidth]{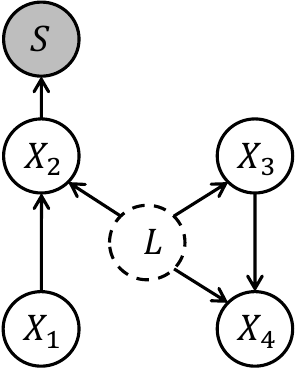}
        \caption{\footnotesize SIS $\Gcal_{\mid\bm C}$ of $\Gcal$.}
    \end{subfigure}
    \begin{subfigure}{0.32\textwidth}
        \centering
        \includegraphics[width=0.65\linewidth]{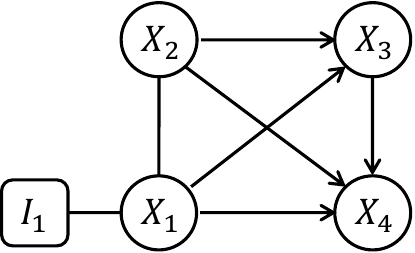}
        \caption{\footnotesize $\MAG(\Aug(\Gcal_{\mid\bm C},I_1))$.}
    \end{subfigure}
    \begin{subfigure}{0.32\textwidth}
        \centering
        \includegraphics[width=0.6\linewidth]{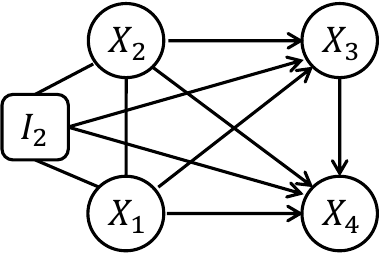}
        \caption{\footnotesize $\MAG(\Aug(\Gcal_{\mid\bm C},I_2))$.}
    \end{subfigure}

    \begin{subfigure}{0.32\textwidth}
        \centering
        \includegraphics[width=0.65\linewidth]{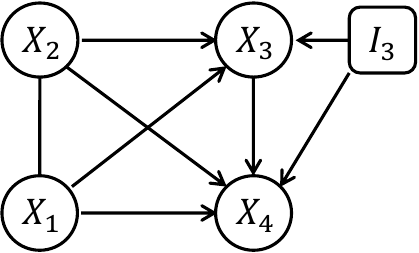}
        \caption{\footnotesize $\MAG(\Aug(\Gcal_{\mid\bm C},I_3))$.}
    \end{subfigure}
    \begin{subfigure}{0.32\textwidth}
        \centering
        \includegraphics[width=0.65\linewidth]{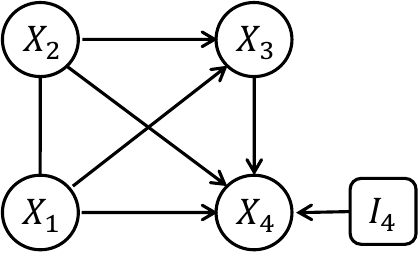}
        \caption{\footnotesize $\MAG(\Aug(\Gcal_{\mid\bm C},I_4))$.}
    \end{subfigure}
    \begin{subfigure}{0.32\textwidth}
        \centering
        \includegraphics[width=0.85\linewidth]{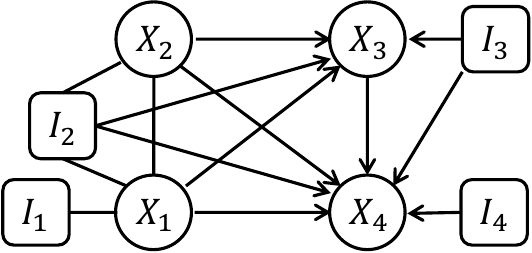}
        \caption{\footnotesize Combined-MAG $\CM(\Gcal_{\mid\bm C},\bm I)$.}
    \end{subfigure}
    \caption{\footnotesize Example of MAGs of augmented SIS $\MAG(\Aug(\Gcal_{\mid\bm C},I_k))$ ((b)-(e)) and combined-MAG $\CM(\Gcal_{\mid\bm C},\bm I)$  (f) for $K=4$.}
    \label{fig:CM}
\end{figure}

\begin{lemma}[Combined-MAG]\label{lem:CM}
    The $K$ MAGs $\big\{\MAG(\Aug(\Gcal_{\mid\bm C}, I_k)): k\in[K]\big\}$ admits a lossless representation by $\CM(\Gcal_{\mid\bm C},\bm I)$, i.e., for any $k\in[K]$,
    \[\Ecal\big(\MAG(\Aug(\Gcal_{\mid\bm C}, I_k))\big)=\big\{\text{edges in $\CM(\Gcal_{\mid\bm C},\bm I)$ with both end nodes in $(\bm X, I_k)$}\big\}.\]
\end{lemma}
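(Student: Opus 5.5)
The claim splits into two parts. First, for any $k,l\in[K]$, the MAGs $\MAG(\Aug(\Gcal_{\mid\bm C},I_k))$ and $\MAG(\Aug(\Gcal_{\mid\bm C},I_l))$ induce exactly the same subgraph on $\bm X$, with the same adjacencies and the same edge marks. Second, an edge incident to $I_k$ can occur only in the $k$th MAG. Given both facts, the union in \eqref{eq:def_CM} is consistent: no pair of nodes receives two different edge types. Restricting $\CM(\Gcal_{\mid\bm C},\bm I)$ to $(\bm X,I_k)$ then returns the common $\bm X$-part together with the edges at $I_k$, which are exactly the edges of the $k$th MAG.

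For the second part I would observe that $I_k$ is a node only of $\Aug(\Gcal_{\mid\bm C},I_k)$. Hence any edge of $\CM$ with an endpoint $I_k$ comes from the $k$th MAG. No edge joins $I_k$ and $I_l$ for $k\neq l$, because no single augmented graph contains both indicators.

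For the first part I would show that, for $X,Y\in\bm X$, adjacency and orientation in $\MAG(\Aug(\Gcal_{\mid\bm C},I_k))$ coincide with those in $\MAG(\Gcal_{\mid\bm C})$. The orientation claim follows from Definition \ref{def:MAG}. Marks depend only on whether $X\in\ancestor(Y,\bm S)$ and $Y\in\ancestor(X,\bm S)$. Adding $I_k$ with the single outgoing edge $I_k\to X_k$ creates no new directed path ending at $X$ or $Y$, except paths that start at $I_k$. Consequently, ancestral relations among $(\bm X,\bm L,\bm S)$ are unchanged.

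For adjacency, I would argue directly with Definition \ref{def:inducing_path}, treating $\bm L$ as latent in both graphs. Any inducing path in $\Gcal_{\mid\bm C}$ between $X$ and $Y$ remains one in the augmented graph: its nodes, colliders, and ancestral relations are all unchanged. In the other direction, take an inducing path $\pi$ in $\Aug(\Gcal_{\mid\bm C},I_k)$ between $X,Y\in\bm X$. If $\pi$ passed through $I_k$, then $I_k$ would be a non-endpoint. It would then have to be latent or a collider. It is not latent, and it cannot be a collider because it has no parents. So $\pi$ avoids $I_k$ and lies in $\Gcal_{\mid\bm C}$. The collider ancestry conditions transfer because ancestral sets of $X$, $Y$, and $\bm S$ restricted to the original nodes are unchanged. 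Hence the $\bm X$-subgraphs of all $K$ MAGs equal $\MAG(\Gcal_{\mid\bm C})$, which gives the first part.

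The main obstacle is the bookkeeping at this last step. One must check that $I_k$, being an observed node of the augmented graph, cannot serve as an intermediate node of an inducing path. This is where it matters that $I_k$ is a source node, and that in the SIS the context edges $C\to I_k$ have been removed. With $\bm C$ present, $I_k$ would have parents, but Assumption \ref{asm:context} and the SIS construction remove them. Once that is settled, the equality of ancestor relations is routine, and the lemma follows by combining the two parts.
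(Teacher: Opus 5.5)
Your proof is correct. Its skeleton matches the paper's: both show that the $\bm X$-part of every $\MAG(\Aug(\Gcal_{\mid\bm C},I_k))$ equals $\MAG(\Gcal_{\mid\bm C})$. In both, the edge marks follow because adding the source node $I_k$ leaves ancestral relations among $(\bm X,\bm L,\bm S)$ unchanged.

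The difference is in the adjacency step. The paper uses the separation characterization of MAG adjacency (Theorem 4.2 of \cite{richardson2002ancestral}): $X_i,X_j$ are adjacent iff they are $d$-connected given $\ancestor(X_i,X_j,\bm S)\setminus(X_i,X_j,\bm L)$. It then transfers this $d$-connection from $\Aug(\Gcal_{\mid\bm C},I_k)$ to $\Gcal_{\mid\bm C}$ via Lemma \ref{lem:add_I}, which absorbs the fact that the conditioning set may now contain $I_k$. You instead match inducing paths directly from Definition \ref{def:inducing_path}. You use that $I_k$ can never be an interior node (it is observed and has no parents; indeed it has a single neighbour), and that collider ancestry conditions transfer.

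Your route is more elementary and self-contained. It also makes immediately visible a fact the paper later invokes as ``by the proof of Lemma \ref{lem:CM}'': any $\bm X$--$\bm X$ edge in the $k$th MAG is witnessed by an inducing path lying in $\Gcal_{\mid\bm C}$. The paper's route instead recycles Lemma \ref{lem:add_I}, which it needs anyway for Proposition \ref{prop:IMEC}.

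You also spell out two points the paper leaves implicit. First, the union in \eqref{eq:def_CM} assigns at most one edge type to each pair. Second, edges at $I_k$ come only from the $k$th MAG.

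One minor correction: the absence of edges $C\to I_k$ is purely a feature of the SIS construction. Assumption \ref{asm:context} plays no role in this lemma, which is a statement about the graphs $\Aug(\Gcal_{\mid\bm C},I_k)$ alone.
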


\begin{remark}[Nonidentifiability of $\langle I_j, X_j\rangle$]
    Although $\CM(\Gcal_{\mid\bm C},\bm I)$ encodes all information of the IMEC, the edge type between $I_j,X_j$ is not identifiable in general. To see this, consider $\Gcal$ on $(X,S)$ with edge $X\rightarrow S$ and the trivial graph $\Gcal'$ on $X$. Suppose $X$ is intervened on, then $\Gcal,\Gcal'$ are interventional Markov equivalent. But $\CM(\Gcal,I)$ has edge $I-X$ whereas $\CM(\Gcal',I)$ has edge $I\rightarrow X$. The difference arises because for isolated selection variables with a single observed ancestor (Definition \ref{def:isolated_selection}), their ancestors are not identifiable from the data. Under additional assumptions, such as the absence of isolated selection, the edge type between $ I_j$ and $ X_j$ can be identified. Such assumptions, however, concern unobserved selection mechanism and are generally unverifiable. To avoid such assumptions, in the rest of this paper, we focus on the \textbf{trimmed combined-MAG} $\TCM(\Gcal_{\mid\bm C},\bm I)$ by removing the intervention-target edges, i.e.,
    \[\Ecal\big(\TCM(\Gcal_{\mid\bm C},\bm I)\big)=\Ecal\big(\CM(\Gcal_{\mid\bm C},\bm I)\big)\setminus\big\{\langle I_j,X_j\rangle: j\in[K]\big\}.\]
    $\CM(\Gcal_{\mid\bm C},\bm I)$ and $\TCM(\Gcal_{\mid\bm C},\bm I)$ share the same edges among $\bm X$, as well as the same edges $\langle I_k,X_j\rangle$ for $j\ne k$. In the remainder of the paper, we therefore do not distinguish between these two graphs when referring to these common edges.
\end{remark}

We take $\TCM(\Gcal_{\mid\bm C},\bm I)$ as our inferential target and consider its equivalent class, 
\[\Mcal(\TCM(\Gcal_{\mid\bm C},\bm I))=\big\{\TCM(\Gcal_{\mid\bm C}',\bm I):\Gcal_{\mid\bm C}'\in\IMEC(\Gcal_{\mid\bm C})\big\}.\]
The following theorem shows that $\Mcal(\TCM(\Gcal_{\mid\bm C},\bm I))$ reduces to a singleton if there are single-target interventions on all system variables, i.e., $K=d_X$. Moreover, $K=d_X$ is also necessary in the worst case. 

\begin{theorem}[Sufficient and Necessary $K$ for Identification]\label{thm:unique}
    If there are single-target interventions on all system variables, i.e., $K=d_X$, then $\Mcal(\TCM(\Gcal_{\mid\bm C},\bm I))$ reduces to the singleton $\Mcal(\TCM(\Gcal_{\mid\bm C},\bm I))=\big\{\TCM(\Gcal_{\mid\bm C},\bm I)\big\}$. Moreover, there exists a DAG $\Gcal$ such that for any $K<d_X$ single-target interventions, $|\Mcal(\TCM(\Gcal,\bm I))|>1$.
\end{theorem}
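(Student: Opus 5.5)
The proof splits into sufficiency, where we show that every MAG in the class has the same edges and marks, and necessity, where we build one bad DAG.

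\emph{Sufficiency.} Take any $\Gcal_{\mid\bm C}'\in\IMEC(\Gcal_{\mid\bm C})$. By Proposition \ref{prop:IMEC}, for every $k$ the MAGs $\MAG(\Aug(\Gcal_{\mid\bm C}',I_k))$ and $\MAG(\Aug(\Gcal_{\mid\bm C},I_k))$ are $m$-separation equivalent. By Lemma \ref{lem:CM}, the two trimmed combined-MAGs therefore share the same skeleton among $\bm X$ and the same adjacencies $\langle I_k,X_j\rangle$ for $j\neq k$. It remains to show that every edge mark is also determined. We first show that $\anterior$ and $\ancestor$ relations among $\bm X$ (relative to $\bm S$) are identified. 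The key claim is
\[X_k\in\anterior_{\MAG}(X_j)\ (\text{or } X_k\in\ancestor(X_j,\bm S) \text{ in the DAG sense})\iff I_k \text{ is adjacent to } X_j \text{ in } \MAG(\Aug(\Gcal_{\mid\bm C},I_k)).\]
This holds because $I_k$ has the single child $X_k$. An inducing path from $I_k$ to $X_j$ must begin $I_k\to X_k$, and $X_k$ is then either a collider or a non-collider on that path. Since $X_k$ is observed, it must be a collider, which is impossible, unless the path continues as a directed or selection-ancestral chain. Working through this characterization of inducing paths, we find that $I_k$ and $X_j$ are adjacent exactly when $X_k\in\ancestor(X_j,\bm S)$, with the relevant path running through latent nodes and colliders that are ancestors of $X_j$ or $\bm S$.

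With $K=d_X$, this gives the full relation $R=\{(k,j): X_k\in\ancestor_{\Gcal_{\mid\bm C}}(X_j,\bm S)\}$ directly from the adjacencies of the common TCM. By Definition \ref{def:MAG}, the mark at each end of an edge $X_j$--$X_k$ is fixed by whether $(j,k)\in R$ and whether $(k,j)\in R$. So every $X$--$X$ edge type is the same in both graphs. The edges $\langle I_k,X_j\rangle$ are always out of $I_k$, since $I_k$ has no parents and is not an ancestor of $\bm S$ except through $X_k$. Their mark at $X_j$ is determined by whether $X_j\in\ancestor(X_k... )$, which is again read off $R$ via Definition \ref{def:MAG}.

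\emph{Main obstacle.} The hard step is proving the adjacency claim above. Two points need care. First, we must verify that ancestry is taken relative to $\bm S$ on both sides of the equivalence. Second, adjacency can also arise from $X_j\in\ancestor(\bm S)$ combined with $X_k\in\ancestor(\bm S)$. In that case $X_k$ may not be an ancestor of $X_j$, yet a collider at an ancestor of $\bm S$ still yields an inducing path. So the criterion we actually need is $I_k\sim X_j \iff X_k\in\ancestor(X_j,\bm S)$ together with conditions on the $X$--$X$ adjacencies. To handle the ambiguous cases, we plan to combine this with the adjacencies $\langle I_j,X_k\rangle$ and the skeleton. Concretely, we show that for any adjacent pair $X_j,X_k$, the membership of $X_k$ in $\ancestor(X_j,\bm S)$ coincides with adjacency of $I_k$ and $X_j$. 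For adjacent pairs, any inducing path from $I_k$ yields the ancestor relation through $X_k$'s membership in $\ancestor(X_j,\bm S)$; otherwise $X_k$ would be a non-collider observed node. This restriction to adjacent pairs is all that Definition \ref{def:MAG} requires.

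\emph{Necessity.} Take $\Gcal$ to be the complete DAG $X_1\to X_2\to\cdots\to X_{d_X}$ with all forward edges and no $\bm L,\bm S$. Suppose $K<d_X$, so some $X_m$ is not intervened on. Swapping $X_m$ with an adjacent variable in the causal order should give $\Gcal'$ with the same complete skeleton. The difficulty is choosing which neighbour to swap with: the swap must not change any $I_k$–$X_j$ adjacency or mark. We would try swapping $X_m$ with $X_{d_X}$, the last node, after first placing $X_m$ just before it in the order. We then check that for every intervened $k$, the descendant sets of $X_k$ are unchanged. They are, since every other $X_k$ precedes both $X_m$ and $X_{d_X}$. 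The marks on the $I_k$ edges are also unchanged, because intervened nodes $k<m$ point into both swapped nodes in both graphs. By Proposition \ref{prop:IMEC} the two graphs are equivalent, while the edge $X_m\to X_{d_X}$ is reversed, so the two TCMs differ and $|\Mcal|>1$.
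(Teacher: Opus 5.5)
Both halves of the proposal have genuine gaps.

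\emph{Sufficiency.} Your argument reads $X_k\in\ancestor_{\Gcal_{\mid\bm C}}(X_j,\bm S)$ off the adjacency of $I_k$ and $X_j$. Even your patched version, restricted to adjacent $X_j,X_k$, is false; only ``adjacency $\Rightarrow$ ancestry'' holds.

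Take $X_1\to X_2$ with no latent or selection variables. The only candidate inducing path is $I_1\to X_1\to X_2$, and it contains the observed non-collider $X_1$. So $I_1$ and $X_2$ are non-adjacent, even though $X_1$ is a parent of $X_2$. In fact $X_1\to X_2$ and $X_2\to X_1$ have trimmed combined-MAGs with the same skeleton (the single edge $X_1$--$X_2$, no $I_k$--$X_j$ edges) but opposite marks. No rule that derives marks from adjacencies can therefore work.

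The invariant you need is marginal $d$-separation, not adjacency. The statement $(I_k\indep X_j\mid\bm S)_{\Aug(\Gcal_{\mid\bm C},I_k)}$ is shared by every member of $\IMEC(\Gcal_{\mid\bm C})$ directly by Definition~\ref{def:IMEC} part 2) with $\bm X_2=\emptyset$. Hence the sets $\bm A_j$ are common to the whole class. You then still have to prove two facts, as in Proposition~\ref{prop:direction_latent_selection}:
\begin{enumerate}[1)]
\item For adjacent $X_j,X_k$, $X_k\in\bm A_j$ iff $X_k\in\ancestor_{\Gcal_{\mid\bm C}}(X_j,\bm S)$. To show this, splice $I_k\to X_k$ onto an inducing path. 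Then reroute along a directed path to $X_j$ at the first node, counting from $X_k$ and including $X_k$, that lies in $\ancestor_{\Gcal_{\mid\bm C}}(X_j)\setminus\ancestor_{\Gcal_{\mid\bm C}}(\bm S)$.
\item For adjacent $I_k,X_j$, the mark at $X_j$ is a tail iff $X_j\in\bm A_k$.
\end{enumerate}
This is what the paper's short sufficiency proof invokes through Propositions~\ref{prop:adjacency_latent_selection} and~\ref{prop:direction_latent_selection}.

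\emph{Necessity.} The complete DAG without $\bm L,\bm S$ fails exactly at $K=d_X-1$, which is the case that separates this theorem from the DAG setting. There is then only one unintervened node, so the swap partner is itself intervened (if the unintervened node is $X_{d_X}$, the partner is $X_{d_X-1}$). Your check that ``every other $X_k$ precedes both'' overlooks the partner.

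Concretely, place $X_m$ just before an intervened $X_{d_X}$. In $\Aug(\Gcal,I_{d_X})$, the triple $I_{d_X}\to X_{d_X}\leftarrow X_m$ is an unshielded collider at the sink $X_{d_X}$, so $I_{d_X}\indep X_m$. In $\Gcal'$ the path $I_{d_X}\to X_{d_X}\to X_m$ is open. The two graphs are not interventionally Markov equivalent.

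No cleverer DAG swap fixes this. Among DAGs without latent or selection variables, every edge with an intervened endpoint $X_a$ is oriented by the presence or absence of the unshielded collider $I_a\to X_a\leftarrow\cdot$. So $d_X-1$ targets identify the DAG, which is the result of \cite{eberhardt2006n} quoted after the theorem. Your construction works only for $K\le d_X-2$, and then only when the unintervened nodes are consecutive in the order. Moreover, the theorem fixes $\Gcal$ before the targets are chosen, so relocating $X_m$ requires a symmetry the complete DAG lacks.

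The witness must exploit selection or latent structure. The paper takes $\Gcal$ with $X_i\to S$ for all $i$. This graph is symmetric in the $X_i$, so the targets can be taken to be $X_1,\ldots,X_K$, and all $X$--$X$ edges are undirected. It compares $\Gcal$ with $\Gcal'$, which has $X_i\to S$ and $X_i\to X_{d_X}$ for $i<d_X$. Every intervened $X_k$ is an ancestor of $S$ in both graphs, so each $I_k-X_k$ edge is undirected. Consequently neither augmented MAG has unshielded colliders or discriminating paths, yet $X_i-X_{d_X}$ becomes $X_i\to X_{d_X}$.
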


When there is no selection or latent variable, \cite{eberhardt2006n,eberhardt2012number} showed that $K=d_X-1$ single-target interventions are sufficient for identifying the underlying DAG. In contrast, under selection and latent variables, Theorem \ref{thm:unique} shows that $K=d_X-1$ is no longer sufficient and $K=d_X$ is necessary for identifying the underlying graph.

\subsection{Interpretation of the Trimmed Combined-MAG when $K=d_X$}\label{sec:interpret}

While the trimmed combined MAG $\TCM(\Gcal_{\mid\bm C},\bm I)$ is identifiable under per-node interventions, it is not immediately clear how to extract the underlying causal relations from it. This section discusses the interpretation of $\TCM(\Gcal_{\mid\bm C},\bm I)$ when $K=d_X$. Here, we consider the setting with selection and latent variables and defer the cases without selection or latent variables to Section B of the supplement \citep{hou2026supplement}. The following lemma summarizes some causal relations. 

\begin{lemma}\label{lem:interpretation_latent_selection}
    For $j\ne k\in[d_X]$, we have the following rules,
    \begin{enumerate}[1)]
        \item if $X_j-X_k$ is in $\TCM(\Gcal_{\mid\bm C},\bm I)$, then $X_j,X_k\in\ancestor_{\Gcal_{\mid\bm C}}(\bm S)$,
        \item if $X_j\leftrightarrow X_k$ is in $\TCM(\Gcal_{\mid\bm C},\bm I)$, then $X_j\not\in\ancestor_{\Gcal_{\mid\bm C}}(X_k,\bm S),X_k\not\in\ancestor_{\Gcal_{\mid\bm C}}(X_j,\bm S)$ and all inducing paths in $\Gcal_{\mid\bm C}$ between $X_j,X_k$ have the form $X_j\leftarrow L\ldots L'\rightarrow X_k$ for some $L,L'\in\bm L$, 
        \item if $X_j\rightarrow X_k$ is in $\TCM(\Gcal_{\mid\bm C},\bm I)$, then $X_j\in\ancestor_{\Gcal_{\mid\bm C}}(X_k,\bm S)$, $X_k\not\in\ancestor_{\Gcal_{\mid\bm C}}(X_j,\bm S)$ and all inducing paths in $\Gcal_{\mid\bm C}$ between $X_j,X_k$ are into $X_k$,
        \item $I_j-X_k$ is in $\TCM(\Gcal_{\mid\bm C},\bm I)$ if and only if $X_j,X_k\in\ancestor_{\Gcal_{\mid\bm C}}(\bm S)$ and $\Gcal_{\mid\bm C}$ contains an inducing path with either $X_j\leftarrow X_k$ or $X_j\leftarrow L\ldots X_k$ for some $L\in\bm L$,
        \item $I_j\rightarrow X_k$ is in $\TCM(\Gcal_{\mid\bm C},\bm I)$ if and only if $X_j\in\ancestor_{\Gcal_{\mid\bm C}}(X_k,\bm S)$, $X_k\not\in\ancestor_{\Gcal_{\mid\bm C}}(X_j,\bm S)$ and $\Gcal_{\mid\bm C}$ contains an inducing path $X_j\leftarrow L\ldots L'\rightarrow X_k$ for some $L,L'\in\bm L$.
    \end{enumerate}
\end{lemma}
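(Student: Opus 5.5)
We work throughout in the augmented SIS $\Aug(\Gcal_{\mid\bm C},I_j)$, and we read every edge of $\TCM(\Gcal_{\mid\bm C},\bm I)$ through Definition \ref{def:MAG} applied to that graph. The latent set is $\bm L$ and the selection set is $\bm S$. By Lemma \ref{lem:CM}, an edge among $\bm X$ in the TCM is exactly an edge of $\MAG(\Aug(\Gcal_{\mid\bm C},I_j))$ for any $j$. Its mark is governed by membership in $\ancestor(\cdot,\bm S)$. Adding the source node $I_j$, whose only child is $X_j$, creates no new ancestral relations among $(\bm X,\bm L,\bm S)$ and no new inducing paths among $\bm X$. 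This holds because $I_j$ is a parentless source, so it cannot be a collider and is not latent, and hence cannot be an intermediate node of an inducing path.

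With this, rule 1) follows directly from clause d) of Definition \ref{def:MAG}. We have $X_j\in\ancestor(X_k,\bm S)$ and $X_k\in\ancestor(X_j,\bm S)$. Acyclicity forbids both $X_j\in\ancestor(X_k)$ and $X_k\in\ancestor(X_j)$. If, say, $X_j\in\ancestor(X_k)$ and $X_k\in\ancestor(\bm S)$, then $X_j\in\ancestor(\bm S)$ as well, so both lie in $\ancestor(\bm S)$. The ancestor parts of rules 2) and 3) are similarly immediate from clauses c) and a).

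For the path-shape parts of 2) and 3), take any inducing path $\pi$ between $X_j$ and $X_k$ in $\Gcal_{\mid\bm C}$. Suppose its edge at $X_j$ were out of $X_j$, i.e.\ $X_j\to W$. If $W=X_k$, then $X_j\in\ancestor(X_k)$, which contradicts 2). Otherwise $W$ is a non-endpoint, so it is either latent or a collider. Following the directed edges from $X_j$ along $\pi$, we eventually reach either a collider, which is an ancestor of $X_j$, $X_k$ or $\bm S$, or the endpoint $X_k$. Either way $X_j\in\ancestor(X_j,X_k,\bm S)$ in a way that is not a trivial self-ancestry. This is the standard argument that MAG edge marks agree with the marks on every inducing path. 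So the path is into $X_j$ whenever $X_j\notin\ancestor(X_k,\bm S)$, and symmetrically it is into $X_k$. Since a collider adjacent to an endpoint would also make the endpoint edge point out of that endpoint, the nodes next to the endpoints must be latent non-colliders with edges pointing outward. This gives the form $X_j\leftarrow L\ldots L'\rightarrow X_k$ in 2) and "into $X_k$" in 3).

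For rules 4) and 5), an edge between $I_j$ and $X_k$ ($k\neq j$) exists iff there is an inducing path in $\Aug(\Gcal_{\mid\bm C},I_j)$ from $I_j$ to $X_k$. Such a path must begin $I_j\to X_j$. Because $X_j$ is observed, it must then be a collider on the path and lie in $\ancestor(I_j,X_k,\bm S)=\ancestor(X_k,\bm S)$. The remainder of the path is therefore an inducing path from $X_j$ to $X_k$ in $\Gcal_{\mid\bm C}$ whose first edge is into $X_j$, i.e.\ it starts $X_j\leftarrow X_k$ or $X_j\leftarrow L\ldots$. Conversely, any such path extended by $I_j\to X_j$ is inducing, provided $X_j\in\ancestor(X_k,\bm S)$; colliders on the tail that were ancestors of $X_j$ remain ancestors of $X_k$ or $\bm S$ through $X_j$.

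The marks then follow. $I_j\notin\ancestor(X_k,\bm S)$ fails, since $I_j\to X_j\in\ancestor(X_k,\bm S)$, so the mark at $I_j$ is a tail. The mark at $X_k$ is a tail iff $X_k\in\ancestor(I_j,\bm S)=\ancestor(\bm S)$, because $I_j$ has no ancestors. For 4), $X_k\in\ancestor(\bm S)$ together with $X_j\in\ancestor(X_k,\bm S)$ gives $X_j\in\ancestor(\bm S)$. For 5), the $\to$ mark gives $X_k\notin\ancestor(\bm S)$, which forces $X_j\in\ancestor(X_k)$ and $X_k\notin\ancestor(X_j,\bm S)$. We then must exclude a path starting $X_j\leftarrow X_k$ and must show the far end is $L'\to X_k$. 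The first is excluded because $X_k\in\ancestor(X_j)$ would make the graph cyclic. For the second, any edge out of $X_k$ on the path would put $X_k$ in $\ancestor(X_j,\bm S)$ by the same chain argument, and the node next to $X_k$ cannot be an observed collider.

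The main obstacle is the careful chain argument that an out-of-endpoint edge on an inducing path forces ancestry, applied in the augmented setting with $X_j$ as an intermediate collider. I would prove it once as a standalone claim and reuse it in 2), 3), 4) and 5).
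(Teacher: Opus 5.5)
Your overall route is the paper's. Your standalone ``chain'' claim is exactly its auxiliary lemma: an inducing path in $\Gcal_{\mid\bm C}$ that is out of $X_k$ forces $X_k\in\ancestor(X_j,\bm S)$. Rules 1)--3) are then read off Definition~\ref{def:MAG} plus that claim, and rules 4)--5) go through the forced collider $X_j$ after $I_j\to X_j$. Rules 1)--4) and the converse of 5) work as you sketch them. All ancestor sets below are taken in $\Gcal_{\mid\bm C}$.

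There is a false step in the forward direction of 5). You write that the arrowhead at $X_k$ gives $X_k\notin\ancestor(\bm S)$, ``which forces $X_j\in\ancestor(X_k)$''. It does not, because $X_j\in\ancestor(X_k,\bm S)$ may hold only through $\bm S$.

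A counterexample: take $\Gcal_{\mid\bm C}$ with edges $L\to X_j$, $L\to X_k$, $X_j\to S$.
\begin{enumerate}[1)]
\item The path $I_j\to X_j\leftarrow L\to X_k$ is inducing in $\Aug(\Gcal_{\mid\bm C},I_j)$, since the collider $X_j$ is an ancestor of $S$.
\item $X_k\notin\ancestor(\bm S)$, so $I_j\to X_k$ is in $\TCM(\Gcal_{\mid\bm C},\bm I)$.
\item Yet $X_j\notin\ancestor(X_k)$.
\end{enumerate}
This is exactly why Remark~\ref{rem:interpretation_latent_selection}~1) needs the extra hypothesis $X_j\notin\ancestor(\bm S)$ to conclude causal ancestry.

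Two of your steps rest on this false premise: your derivation of $X_k\notin\ancestor(X_j,\bm S)$, and your exclusion of $X_j\leftarrow X_k$ ``because it would make the graph cyclic''.

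The repair is a case split. Suppose $X_k\in\ancestor(X_j)$. Since $X_j\in\ancestor(X_k,\bm S)$, either $X_j\in\ancestor(X_k)$, which gives a cycle, or $X_j\in\ancestor(\bm S)$, which gives $X_k\in\ancestor(\bm S)$ and contradicts the arrowhead. Hence $X_k\notin\ancestor(X_j,\bm S)$.

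Now apply your chain claim to the subpath from $X_j$ to $X_k$, which is an inducing path in $\Gcal_{\mid\bm C}$. It shows the subpath is into $X_k$. That rules out the single edge $X_j\leftarrow X_k$, since that edge is out of $X_k$. It also makes the neighbour of $X_k$ a non-collider, hence latent; the point is not merely that it ``cannot be an observed collider''. No separate acyclicity argument is needed, and this is how the paper closes 5).
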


\begin{remark}\label{rem:interpretation_latent_selection}
    \begin{enumerate}[1)]
        \item While undirected edges $X_j-X_k$ or $I_j-X_k$ imply $X_j,X_k\in\ancestor_{\Gcal_{\mid\bm C}}(\bm S)$, arrowheads $\rightarrow X_k$ or $\leftrightarrow X_k$ guarantees $X_k\not\in\ancestor_{\Gcal_{\mid\bm C}}(\bm S)$. For $X_j\not\in\ancestor_{\Gcal_{\mid\bm C}}(\bm S)$, $X_j\rightarrow X_k$ or $I_j\rightarrow X_k$ in $\TCM(\Gcal_{\mid\bm C},\bm I)$ implies $X_j\in\ancestor_{\Gcal_{\mid\bm C}}(X_k)$.
        \item As shown in Lemma F.5 in the supplement \citep{hou2026supplement}, any inducing path of the form $X_j\leftarrow L\ldots L'\rightarrow X_k$ implies the existence of a sequence of latently confounded nodes. Therefore, if $X_j\leftrightarrow X_k$ or $I_j\rightarrow X_k$ is in $\TCM(\Gcal_{\mid\bm C},\bm I)$, there exist a sequence of nodes $Z_{j_0}\overset{\triangle}{=}X_j, Z_{j_1},\ldots,Z_{j_m},Z_{j_{m+1}}\overset{\triangle}{=}X_k\in\bm X\cup\bm S$ such that $Z_{j_{l-1}}$ is latently confounded with $Z_{j_l}$ for $l\in[m+1]$.
        \item If $I_j-X_k$ and $I_k-X_j$ are both in $\TCM(\Gcal_{\mid\bm C},\bm I)$, then at least one of $X_j,X_k$ is latently confounded with some nodes in $\bm X\cup\bm S$.
        \item Conversely, if $X_j$ and $X_k$ are latently confounded, Lemma \ref{lem:interpretation_latent_selection} implies that one of $X_j\leftrightarrow X_k$, $I_j\rightarrow X_k$, $I_k\rightarrow X_j$ or $\{I_j-X_k,I_k-X_j\}$ must appear.
    \end{enumerate}
\end{remark}
By Remark \ref{rem:interpretation_latent_selection}, structures in $\bm E_{\bm L}$ below indicate the existence of latent confounding in $\Gcal_{\mid\bm C}$.
\begin{equation}\label{eq:EL}
    \begin{aligned}
        \bm E_{\bm L}=&\big\{\{X_j,X_k\}:X_j\leftrightarrow X_k\in\Ecal(\TCM(\Gcal_{\mid\bm C},\bm I))\text{ or }I_j\rightarrow X_k\in\Ecal(\TCM(\Gcal_{\mid\bm C},\bm I))\\
        &\text{or both }I_j-X_k,I_k-X_j\in\Ecal(\TCM(\Gcal_{\mid\bm C},\bm I))\big\}.
    \end{aligned}
\end{equation}

\begin{example}\label{exa:latent_selection}
    Consider the DAG $\Gcal_{\mid\bm C}$ shown in Figure \ref{fig:example_latent_selection}. In $\TCM(\Gcal_{\mid\bm C},\bm I)$, edges always connect to $\ancestor_{\Gcal_{\mid\bm C}}(S)\cap\bm X$ through tail marks, such as $X_2-X_3$ even though $X_2$ is a parent of $X_3$ in $\Gcal_{\mid\bm C}$, and $X_3\rightarrow X_4$ even though $X_3$ is not an ancestor of $X_4$. By Remark \ref{rem:interpretation_latent_selection}, undirected edges imply $X_1,X_2,X_3\in\ancestor_{\Gcal_{\mid\bm C}}(\bm S)$ and arrowheads indicate $X_4,X_5\not\in\ancestor_{\Gcal_{\mid\bm C}}(\bm S)$. These results together ensure $\ancestor_{\Gcal_{\mid\bm C}}(\bm S)\cap\bm X=\{X_1,X_2,X_3\}$.
    
    The edge $I_3\rightarrow X_4$ implies the existence of a sequence of latently confounded nodes between $X_3,X_4$. In this example, $I_3\rightarrow X_4$ is due to $X_3\leftarrow L_3\rightarrow X_4$. The two edges $I_1-X_3,I_3-X_1$ indicate at least one of $X_1,X_3$ is latently confounded with some node in $\bm X\cup\bm S$. In this example, the underlying causal structure is $X_1\leftarrow L_1\rightarrow X_3$.

    \begin{figure}
        \centering
        \begin{subfigure}{0.45\textwidth}
            \centering
            \includegraphics[width=0.65\linewidth]{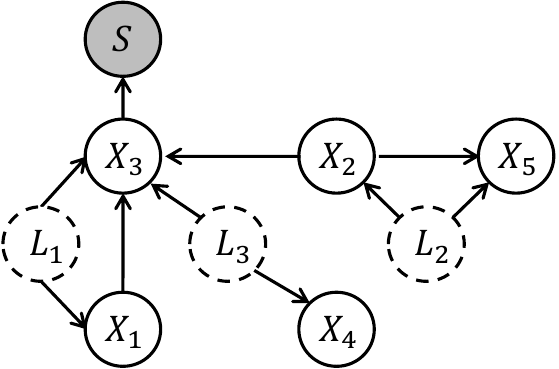}
            \caption{\footnotesize SIS $\Gcal_{\mid\bm C}$ of some $\Gcal$.}
        \end{subfigure}
        \begin{subfigure}{0.45\textwidth}
            \centering
            \includegraphics[width=0.65\linewidth]{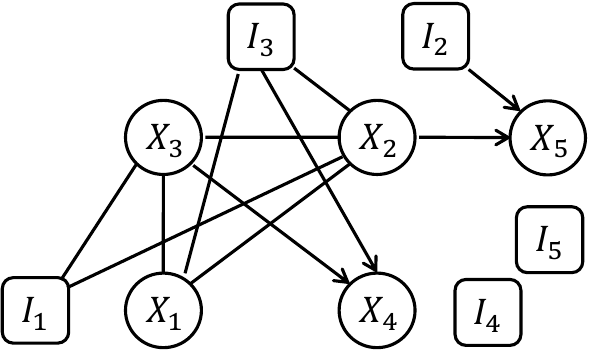}
            \caption{\footnotesize Trimmed combined-MAG $\TCM(\Gcal_{\mid\bm C},\bm I)$.}
        \end{subfigure}
        \caption{\footnotesize Causal structures in Example \ref{exa:latent_selection}.}
        \label{fig:example_latent_selection}
    \end{figure}
\end{example}

\section{Inference under Selection and Latent Variables}\label{sec:latent_selection}

In this section, we suppose every system variable in $\bm X$ has a single-target intervention, i.e., $K=d_X$. When $K<d_X$, we can treat the remaining system variables as latent and learn the graph among intervened variables. Then we propose an inference framework for $\TCM(\Gcal,\bm I)$ with rigorous type I error control and establish its constraint-query optimality.

\subsection{Algorithm and Inference}

Denote $\bm A_j\subseteq\bm X$ to be the set of system variables intervening which affect $X_j$,
\[\bm A_j=\big\{X_k\in\bm X: k\ne j, (I_k\not\indep X_j\mid\bm S)_{\Aug(\Gcal_{\mid\bm C}, I_k)}\big\}.\]
The following proposition shows that $\bm A_j\cup\{X_j\}$ is exactly the anterior set of $X_j$ in $\MAG(\Gcal_{\mid\bm C})$, and a single set determines the adjacency in $\TCM(\Gcal_{\mid\bm C}, \bm I)$. 
\begin{proposition}[Adjacency]\label{prop:adjacency_latent_selection}
    \begin{enumerate}[1)]
        \item For any $j\in[d_X]$, we have $\ancestor_{\Gcal_{\mid\bm C}}(X_j)\setminus(X_j,\bm L)\subseteq\bm A_j=\anterior_{\MAG(\Gcal_{\mid\bm C})}(X_j)\setminus\{X_j\}\subseteq\ancestor_{\Gcal_{\mid\bm C}}(X_j,\bm S)\setminus(X_j,\bm L,\bm S)$.
        \item For any $j\ne k\in[d_X]$ and any set $\bm A$ with $\bm A_j\cup\bm A_k\subseteq\bm A\subseteq\ancestor_{\Gcal_{\mid\bm C}}(X_j,X_k,\bm S)\setminus\bm L$, we have $X_j, X_k$ are adjacent in $\TCM(\Gcal_{\mid\bm C}, \bm I)$ if and only if $\big(X_j\not\indep X_k\mid (\bm A,\bm S)\setminus(X_j,X_k)\big)_{\Gcal_{\mid\bm C}}$.
        \item For any $j\ne k\in[d_X]$ and any set $\bm A_j'$ with $\bm A_j\subseteq\bm A_j'\subseteq\ancestor_{\Gcal_{\mid\bm C}}(X_j,\bm S)\setminus\bm L$, we have $I_k,X_j$ are adjacent in $\TCM(\Gcal_{\mid\bm C}, \bm I)$ if and only if $\big(I_k\not\indep X_j\mid (\bm A_j',\bm S)\setminus\{X_j\}\big)_{\Aug(\Gcal_{\mid\bm C},I_k)}$.
    \end{enumerate}
\end{proposition}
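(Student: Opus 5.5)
We will work entirely inside the augmented SIS $\Aug(\Gcal_{\mid\bm C},I_k)$ and its MAG. By the definition of $\MAG$ (Definition \ref{def:MAG}) and the standard fact that $m$-separation in $\MAG(\Gcal')$ mirrors $d$-separation given $\bm S$ in $\Gcal'$, each $d$-separation statement can be moved freely between the DAG and its MAG. We will also use repeatedly that $I_k$ has the single child $X_k$ in $\Aug(\Gcal_{\mid\bm C},I_k)$ and no parents.

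\emph{Part 1.} We first characterise $\bm A_j$. Since $I_k\to X_k$ is the only edge at $I_k$, any $d$-connecting path from $I_k$ to $X_j$ given $\bm S$ has the form $I_k\to X_k\cdots X_j$. It must contain no non-collider in $\bm S$, and every collider on it must lie in $\ancestor(\bm S)$. The plan is to show that such a path exists exactly when $X_k$ is anterior to $X_j$ in $\MAG(\Gcal_{\mid\bm C})$.

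For the forward inclusion, take a shortest active path. Each of its colliders is an ancestor of $\bm S$, which produces tails in the MAG, so the path projects onto a path of tails or of arrows pointing toward $X_j$ in the MAG. Hence $X_k\in\anterior_{\MAG}(X_j)$.

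For the converse, take an anterior path in the MAG and replace each edge by a directed path or a path through $\ancestor(\bm S)$ in $\Gcal_{\mid\bm C}$. Concatenating these with $I_k\to X_k$ yields a $d$-connecting path, after the usual pruning of walks into paths.

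The outer inclusions then follow directly. A directed path $X_k\to\cdots\to X_j$ with observed endpoints is $d$-connecting given $\bm S$, because $\bm S$ has no children; hence the ancestors of $X_j$ lie in $\bm A_j$. Conversely, anterior in the MAG means $X_k\in\ancestor(X_j,\bm S)$, by the tail/arrow semantics of Definition \ref{def:MAG}.

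\emph{Part 2.} We first use Lemma \ref{lem:CM} to note that the $\bm X$–$\bm X$ edges of $\TCM$ are exactly the edges of $\MAG(\Gcal_{\mid\bm C})$. These are present iff there is an inducing path relative to $(\bm L,\bm S)$.

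For the \emph{only if} direction, an inducing path has all its colliders in $\ancestor(X_j,X_k,\bm S)$. The standard argument of Richardson–Spirtes (Theorem 4.2 of \citealp{richardson2002ancestral}) then gives $d$-connection given any set $\bm A\cup\bm S$ satisfying $\bm A\subseteq\ancestor(X_j,X_k,\bm S)\setminus\bm L$.

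For the \emph{if} direction, we show that non-adjacency implies separation by $(\bm A,\bm S)\setminus\{X_j,X_k\}$. The classical result is that non-adjacent $X_j,X_k$ in a MAG are $m$-separated by $\ancestor_{\MAG}$-type sets. In the presence of undirected edges the natural set is $\anterior(X_j,X_k)\setminus\{X_j,X_k\}$, and by part 1 this set equals $\bm A_j\cup\bm A_k$.

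The main obstacle is \emph{monotonicity}: showing that enlarging $\bm A_j\cup\bm A_k$ to any $\bm A$ inside $\ancestor(X_j,X_k,\bm S)\setminus\bm L$ does not open a path. The plan is as follows.
\begin{enumerate}[1)]
\item Suppose some path between $X_j$ and $X_k$ is $d$-connecting given $\bm A\cup\bm S$.
\item Since all conditioning nodes are ancestors of $\{X_j,X_k\}\cup\bm S$, all colliders on the path lie in $\ancestor(X_j,X_k,\bm S)$.
\item Choose a minimal such path. Its non-colliders that are observed and lie in $\bm A$ can be bypassed, so it can be converted into an inducing path relative to $(\bm L,\bm S)$ after augmenting the latent set.
\item Apply Lemma 4 of \citet{richardson2002ancestral}: a connecting path given a set of ancestors yields an inducing path. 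This contradicts non-adjacency.
\end{enumerate}

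\emph{Part 3.} This part is identical in structure, applied to the pair $(I_k,X_j)$ in $\Aug(\Gcal_{\mid\bm C},I_k)$. The anterior set of $I_k$ is empty, since $I_k$ has no parents, and the anterior set of $X_j$ minus $X_j$ is $\bm A_j$ by part 1, which applies to this augmented graph as well. Thus the minimal separating set is $\bm A_j$, and the same monotonicity argument covers every $\bm A_j'$ up to $\ancestor(X_j,\bm S)\setminus\bm L$. We must also check that $\ancestor(X_j,\bm S)$ suffices here rather than $\ancestor(X_j,I_k,\bm S)$; this holds because $\ancestor(I_k)=\{I_k\}$.
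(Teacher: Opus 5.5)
Part 1 and the ``only if'' directions are essentially fine. Your DAG-level projection in part 1 is a different route from the paper's, which goes through Theorem 4.18 of \citet{richardson2002ancestral}. The projection also needs the observation that an observed \emph{non-collider} gets a tail on its $X_j$-side MAG edge; this holds because the path starts with $I_k\rightarrow X_k$.

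The genuine gap is in the step you correctly single out as the crux: monotonicity in the ``if'' direction of part 2, and through your reduction, part 3. Steps 3--4 do not work.
\begin{enumerate}
\item On a path $\pi$ that is $d$-connecting given $(\bm A,\bm S)\setminus(X_j,X_k)$, no non-collider lies in $\bm A$, so there is nothing of that kind to bypass.
\item What actually prevents $\pi$ from being an inducing path are observed non-colliders \emph{outside} $\bm A$. By Lemma 3.13 of \citet{richardson2002ancestral} and the first inclusion of part 1, they lie in $\ancestor_{\Gcal_{\mid\bm C}}(\bm S)\setminus\bm A$.
\item Declaring them latent gives an inducing path relative to an enlarged latent set. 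That only certifies adjacency in a coarser marginal MAG, not in $\MAG(\Gcal_{\mid\bm C})$, so no contradiction with non-adjacency arises.
\item The Richardson--Spirtes conversion of a connecting path into an inducing path needs the conditioning set to be all of $\ancestor_{\Gcal_{\mid\bm C}}(X_j,X_k,\bm S)\cap\bm X\setminus\{X_j,X_k\}$. Then every observed middle node is conditioned on and must be a collider. That is exactly what fails for intermediate $\bm A$.
\end{enumerate}
Since $d$-separation is not monotone in the conditioning set, you must use some property specific to these extra nodes.

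The paper supplies it by showing that such a non-collider $X_i$ cannot exist. $X_i$ has an outgoing edge toward one endpoint, say $X_j$. Follow $\pi$ from $X_i$ toward $X_j$; every collider on the way lies in $\ancestor_{\Gcal_{\mid\bm C}}(X_j,X_k,\bm S)$. At the first collider that is an ancestor of $X_j$ or $X_k$ but not of $\bm S$, exit along a directed path to that endpoint; otherwise all colliders are in $\ancestor_{\Gcal_{\mid\bm C}}(\bm S)$. Prepending $I_i\rightarrow X_i$ then gives a walk open given $\bm S$ alone in $\Aug(\Gcal_{\mid\bm C},I_i)$. So $X_i\in\bm A_j\cup\bm A_k\subseteq\bm A$, a contradiction. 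Hence all non-colliders on $\pi$ are latent, and $\pi$ is itself an inducing path.

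Alternatively, you can close the gap inside your own MAG reduction.
\begin{enumerate}
\item By the first inclusion of part 1, the extra nodes $\bm W=(\bm A\cap\bm X)\setminus(\bm A_j\cup\bm A_k\cup\{X_j,X_k\})$ lie in $\ancestor_{\Gcal_{\mid\bm C}}(\bm S)$.
\item A collider of $\MAG(\Gcal_{\mid\bm C})$ carries an arrowhead, so it is not in $\ancestor_{\Gcal_{\mid\bm C}}(\bm S)$.
\item Chaining Definition \ref{def:MAG} along a directed MAG path, such a collider cannot be a MAG-ancestor of any node of $\bm W$.
\item So adding $\bm W$ to the anterior separating set can only block $m$-connecting paths. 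Theorem 4.18 then transfers this back to $\Gcal_{\mid\bm C}$, and the same argument works in $\MAG(\Aug(\Gcal_{\mid\bm C},I_k))$ for part 3.
\end{enumerate}
One of these two arguments has to be supplied; as written, the ``if'' directions are not proved.
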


The anterior set $\bm A_j$ corresponds to the set of interventions that influence $X_j$, which can be learned by conditional two-sample (CTS) tests (which we call \textbf{anterior tests}) of
\[P^{(k)}_{X_j\mid\bm C,\bm S=\bm 1}=P_{X_j\mid\bm C,\bm S=\bm 1}^{(0)},\quad k\ne j.\]
Therefore Proposition \ref{prop:adjacency_latent_selection} is a practical refinement of Theorem 4.2 in \cite{richardson2002ancestral}, where the separation set there is $\ancestor_{\Gcal_{\mid\bm C}}(X_j,X_k,\bm S)\setminus(X_j,X_k,\bm L)$. In general, $(\bm A_j\cup\bm A_k\cup\bm S)\setminus(X_j,X_k,\bm L)$ is smaller than $\ancestor_{\Gcal_{\mid\bm C}}(X_j,X_k,\bm S)\setminus(X_j,X_k,\bm L)$ since the former only contains a part of $\ancestor_{\Gcal_{\mid\bm C}}(\bm S)\cap\bm X$ that connects with $X_j$ or $X_k$. 

\begin{proposition}[Orientation]\label{prop:direction_latent_selection}
    \begin{enumerate}[1)]
        \item For any $j\ne k\in[d_X]$, if $X_j,X_k$ are adjacent in $\TCM(\Gcal_{\mid\bm C},\bm I)$, for any $\bm A_j',\bm A_k'$ with $\bm A_j\subseteq\bm A_j'\subseteq\ancestor_{\Gcal_{\mid\bm C}}(X_j,\bm S)\setminus\bm L$ and $\bm A_k\subseteq\bm A_k'\subseteq\ancestor_{\Gcal_{\mid\bm C}}(X_k,\bm S)\setminus\bm L$, we have the edge in $\TCM(\Gcal_{\mid\bm C},\bm I)$ between $X_j,X_k$ is
    \begin{enumerate}[a)]
        \item $X_j\rightarrow X_k$, if $X_j\in\bm A_k'$ and $X_k\not\in\bm A_j'$,
        \item $X_j\leftarrow X_k$, if $X_j\not\in\bm A_k'$ and $X_k\in\bm A_j'$,
        \item $X_j\leftrightarrow X_k$, if $X_j\not\in\bm A_k'$ and $X_k\not\in\bm A_j'$,
        \item $X_j - X_k$, if $X_j\in\bm A_k'$ and $X_k\in\bm A_j'$.
    \end{enumerate}
    \item For any $j\ne k\in[d_X]$, if $I_k,X_j$ are adjacent in $\TCM(\Gcal_{\mid\bm C},\bm I)$, for any $\bm A_k'$ with $\bm A_k\subseteq\bm A_k'\subseteq\ancestor_{\Gcal_{\mid\bm C}}(X_k,\bm S)\setminus\bm L$, we have the edge in $\TCM(\Gcal_{\mid\bm C},\bm I)$ between $I_k,X_j$ is
    \begin{enumerate}[a)]
        \item $I_k\rightarrow X_j$, if $X_j\not\in\bm A_k'$,
        \item $I_k - X_j$, if $X_j\in\bm A_k'$.
    \end{enumerate}
    \end{enumerate}
\end{proposition}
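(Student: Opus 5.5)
The plan is to reduce both parts to the orientation rule of Definition \ref{def:MAG}, applied to the MAGs $\MAG(\Aug(\Gcal_{\mid\bm C},I_k))$ that constitute $\TCM(\Gcal_{\mid\bm C},\bm I)$ by Lemma \ref{lem:CM}. Then we translate ancestral statements in the augmented DAG into membership in $\bm A_j'$, using Proposition \ref{prop:adjacency_latent_selection}(1).

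\textbf{Part 1.} Take an edge between $X_j,X_k$ in $\TCM(\Gcal_{\mid\bm C},\bm I)$. By Lemma \ref{lem:CM}, its mark is the mark in $\MAG(\Aug(\Gcal_{\mid\bm C},I_l))$ for any $l$. Adding $I_l$ creates no new ancestral relations among $(\bm X,\bm L,\bm S)$, since $I_l$ is a source. Hence this mark coincides with the mark in $\MAG(\Gcal_{\mid\bm C})$, and by Definition \ref{def:MAG} it is governed by whether $X_j\in\ancestor_{\Gcal_{\mid\bm C}}(X_k,\bm S)$ and whether $X_k\in\ancestor_{\Gcal_{\mid\bm C}}(X_j,\bm S)$.

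The key claim is that, given adjacency, $X_j\in\ancestor_{\Gcal_{\mid\bm C}}(X_k,\bm S)$ if and only if $X_j\in\bm A_k'$.
\begin{enumerate}[(i)]
\item The forward direction: if $X_j\in\ancestor(X_k,\bm S)$, the MAG edge has a tail at $X_j$. Then $X_j\in\anterior_{\MAG(\Gcal_{\mid\bm C})}(X_k)$ via the single edge, which is $\bm A_k\subseteq\bm A_k'$ by Proposition \ref{prop:adjacency_latent_selection}(1).
\item The converse: $\bm A_k'\subseteq\ancestor_{\Gcal_{\mid\bm C}}(X_k,\bm S)$ by the hypothesis on $\bm A_k'$.
\end{enumerate}
Applying the claim symmetrically to $X_k$ and $\bm A_j'$, the four cases a)--d) follow directly from the four cases of Definition \ref{def:MAG}(2).

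\textbf{Part 2.} Consider an edge $\langle I_k,X_j\rangle$ with $j\ne k$. In $\Aug(\Gcal_{\mid\bm C},I_k)$ the node $I_k$ has no parents, so $X_j\notin\ancestor(I_k,\bm S)$, because $X_j$ cannot reach a source. The MAG edge therefore has an arrowhead or a tail at $X_j$ according to whether $I_k\notin$ or $I_k\in\ancestor(X_j,\bm S)$. It also has a tail at $I_k$, since $I_k\in\ancestor(X_j,\bm S)$ must hold whenever the edge is undirected. The edge can never be $\leftrightarrow$ or point into $I_k$: $I_k$ has the single child $X_k$, so either $I_k\in\ancestor(X_j,\bm S)$ or the inducing path starts $I_k\to X_k$, in which case $I_k\in\ancestor(X_j)$ or $I_k\in\ancestor(\bm S)$ along the path.

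The precise translation needed is that $X_j\in\bm A_k'$ gives $I_k - X_j$, and $X_j\notin\bm A_k'$ gives $I_k\rightarrow X_j$. Since $I_k\to X_k$ is the only edge out of $I_k$, $I_k\in\ancestor(X_j,\bm S)$ iff $X_k\in\ancestor(X_j,\bm S)$, together with the possibility $X_k\in\ancestor(\bm S)$. So the tail/arrow mark at $X_j$ is determined by whether $X_j\in\ancestor(I_k,\bm S)=\ancestor(\bm S)$. The orientation at $X_j$ is then: tail iff $X_j\in\ancestor_{\Gcal_{\mid\bm C}}(\bm S)$.

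It remains to show, given adjacency of $I_k,X_j$, that $X_j\in\ancestor(\bm S)$ iff $X_j\in\bm A_k'$.
\begin{enumerate}[(i)]
\item If $X_j\in\bm A_k'\subseteq\ancestor(X_k,\bm S)$, then either $X_j\in\ancestor(\bm S)$ directly, or $X_j\in\ancestor(X_k)$ with $X_j\ne X_k$. In the latter case, the adjacency of $I_k$ and $X_j$ gives an inducing path $I_k\to X_k\cdots X_j$. On this path $X_k$ is a non-latent noncollider, which forces $X_k$ to be… the adjacency instead comes via Proposition \ref{prop:adjacency_latent_selection}(3) and Lemma \ref{lem:interpretation_latent_selection}(4)--(5).
\item I would lean on Lemma \ref{lem:interpretation_latent_selection}(4)--(5): $I_k - X_j$ requires $X_j\in\ancestor(\bm S)$, and $I_k\to X_j$ requires $X_j\notin\ancestor(X_k,\bm S)$, hence $X_j\notin\bm A_k'$. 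Conversely, $I_k - X_j$ requires $X_j,X_k\in\ancestor(\bm S)$ and an inducing path through $X_j\leftarrow X_k$ or $X_j\leftarrow L$. This makes $X_j$ $m$-connected to $X_k$ by a tail-tail edge, so $X_j\in\anterior_{\MAG}(X_k)=\bm A_k\cup\{X_k\}$, giving $X_j\in\bm A_k'$.
\end{enumerate}
The case $I_k - X_j$ is the main obstacle. There, showing $X_j\in\bm A_k$ requires establishing that the edge between $X_j$ and $X_k$ in the MAG is undirected, or that $X_j$ is anterior to $X_k$ through a chain of such edges. I expect this to need the inducing-path surgery used in the proof of Lemma \ref{lem:interpretation_latent_selection}.
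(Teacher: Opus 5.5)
Your route differs from the paper's and is sound in outline. In part 1 the paper proves $X_j\in\ancestor_{\Gcal_{\mid\bm C}}(X_k,\bm S)\Rightarrow X_j\in\bm A_k$ by turning an inducing path between $X_j,X_k$ into a walk $I_j\to X_j\cdots X_k$ that is open given $\bm S$ in $\Aug(\Gcal_{\mid\bm C},I_j)$. You instead read it off from $\bm A_k=\anterior_{\MAG(\Gcal_{\mid\bm C})}(X_k)\setminus\{X_k\}$ (Proposition \ref{prop:adjacency_latent_selection}(1)), since a tail at $X_j$ on the edge makes $X_j$ anterior to $X_k$. In part 2 the paper proves $X_j\in\ancestor_{\Gcal_{\mid\bm C}}(\bm S)\Rightarrow X_j\in\bm A_k$ by an induction over the colliders of an inducing path. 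You replace this with Lemma \ref{lem:interpretation_latent_selection}(4)--(5) (with $j,k$ swapped) plus the same anterior identity. Both shortcuts are legitimate, since neither result depends on the present proposition; they simply outsource the path arguments to results already proved.

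There is, however, a gap in part 2 as written. Lemma \ref{lem:interpretation_latent_selection}(4)--(5) only characterize the edge types $I_k - X_j$ and $I_k\rightarrow X_j$. Before using them you must know that the $I_k$-end is always a tail, i.e.\ $I_k\in\ancestor_{\Aug(\Gcal_{\mid\bm C},I_k)}(X_j,\bm S)$, so that $\leftrightarrow$ and $\leftarrow$ are excluded. Your justification is circular (``since $I_k\in\ancestor(X_j,\bm S)$ must hold whenever the edge is undirected''). In (i) you even assert the opposite of the fact that drives it, calling $X_k$ a ``non-latent noncollider''.

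The correct argument runs as follows. Any inducing path between $I_k$ and $X_j$ starts with $I_k\to X_k$. Since $X_k\neq X_j$, it is an observed middle node, hence a collider, hence in $\ancestor(I_k,X_j,\bm S)$. Because $I_k$ is a source, $X_k\in\ancestor(X_j,\bm S)$, and so $I_k\in\ancestor(X_j,\bm S)$.

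Also correct your opening claim. Only $X_j\notin\ancestor(I_k)$ holds, not $X_j\notin\ancestor(I_k,\bm S)$. The mark at $X_j$ is governed by whether $X_j\in\ancestor(I_k,\bm S)=\ancestor(\bm S)$, not by whether $I_k\in\ancestor(X_j,\bm S)$.

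With that repaired, your item (ii) already finishes the proof, so the ``main obstacle'' you flag is not one.
\begin{itemize}
\item If the edge is $I_k - X_j$, Lemma \ref{lem:interpretation_latent_selection}(4) gives an inducing path in $\Gcal_{\mid\bm C}$ between $X_j,X_k$ (containing $X_k\leftarrow X_j$ or $X_k\leftarrow L\cdots X_j$ in your indexing) with $X_j,X_k\in\ancestor(\bm S)$. Definition \ref{def:MAG} then forces $X_j - X_k$ in $\MAG(\Gcal_{\mid\bm C})$, hence $X_j\in\bm A_k\subseteq\bm A_k'$.
\item If the edge is $I_k\rightarrow X_j$, Lemma \ref{lem:interpretation_latent_selection}(5) gives $X_j\notin\ancestor(X_k,\bm S)\supseteq\bm A_k'$.
\end{itemize}
No further inducing-path surgery is needed, and the unfinished item (i) can be dropped.
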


Since $\bm A_j,\bm A_k$ are identifiable, Propositions \ref{prop:adjacency_latent_selection} and \ref{prop:direction_latent_selection} allow us to learn the adjacency and edge types from data. Although $\bm A_j$ is already learnable, it may require a strong faithfulness assumption. In practice, we choose a specific set between $\bm A_j$ and $\ancestor_{\Gcal_{\mid\bm C}}(X_j,\bm S)\setminus\bm L$ that requires weaker faithfulness assumption to learn. To introduce the set, we first define a subset of selections whose existence can not be identified. 

\begin{definition}[Isolated Selection]\label{def:isolated_selection}
    We say $\bm S_{\rm is}\subseteq\bm S$ be all isolated selection variables if for all $S\in\bm S_{\rm is}$, $S$ satisfies one of the following:
    \begin{enumerate}[1)]
        \item $S$ has no observed ancestor, $|\ancestor_{\Gcal_{\mid\bm C}}(S)\cap\bm X|=0$,
        \item $S$ has only one observed ancestor $X$ and $X$ is $d$-separated from other observed ancestors of selections given $\bm S$, i.e., $\ancestor_{\Gcal_{\mid\bm C}}(S)\cap\bm X=\{X\}$ and $(X\indep X'\mid\bm S)_{\Gcal_{\mid\bm C}}$ for all $X'\in\ancestor_{\Gcal_{\mid\bm C}}(\bm S)\cap\bm X\setminus\{X\}$.
    \end{enumerate}
\end{definition}

The following lemma shows that the observed ancestors of $\bm S\setminus\bm S_{\rm is}$ are identifiable. 

\begin{lemma}[Observed Ancestors of Selection]\label{lem:S}
    For any $X_j\in\bm X$, $X_j\in\bm A_{\bm S}\overset{\triangle}{=}\ancestor_{\Gcal_{\mid\bm C}}(\bm S\setminus\bm S_{\rm is})\cap\bm X$ if and only if there exists $X_k\in\bm X\setminus\{X_j\}$ such that $X_k\in\bm A_j$ and $X_j\in\bm A_k$.
\end{lemma}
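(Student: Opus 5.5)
The plan is to prove both directions using only the description of $\bm A_j$ in Proposition \ref{prop:adjacency_latent_selection}, part 1), together with its defining $d$-connection property $(I_k\not\indep X_j\mid\bm S)_{\Aug(\Gcal_{\mid\bm C},I_k)}$. I will use that $\bm S$ has no children in $\Gcal_{\mid\bm C}$. Hence every directed path ending in a selection variable passes only through nodes of $\bm X\cup\bm L$, none of which are in the conditioning set $\bm S$.

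\emph{($\Leftarrow$).} Suppose $X_k\in\bm A_j$ and $X_j\in\bm A_k$. By Proposition \ref{prop:adjacency_latent_selection}, part 1), $X_k\in\anterior_{\MAG(\Gcal_{\mid\bm C})}(X_j)$ and $X_j\in\anterior_{\MAG(\Gcal_{\mid\bm C})}(X_k)$.
\begin{enumerate}[1)]
\item In an ancestral graph there are no directed or partially directed cycles. So mutual anteriority forces $X_j$ and $X_k$ to be joined by a path of undirected edges. In particular $X_j$ has an incident undirected edge $X_j - X'$ in $\MAG(\Gcal_{\mid\bm C})$ with $X'\ne X_j$.
\item By Definition \ref{def:MAG}, this edge gives $X_j,X'\in\ancestor_{\Gcal_{\mid\bm C}}(\bm S)$. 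Because the two nodes are adjacent, $X_j$ and $X'$ are not $d$-separated given $\bm S$.
\item It remains to rule out that every selection variable having $X_j$ as an ancestor is isolated. Case 1) of Definition \ref{def:isolated_selection} is impossible, since $X_j$ is an observed ancestor. Case 2) requires $(X_j\indep X'\mid\bm S)$ for every $X'\in\ancestor(\bm S)\cap\bm X\setminus\{X_j\}$, which contradicts step 2).
\end{enumerate}
Hence $X_j\in\bm A_{\bm S}$.

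\emph{($\Rightarrow$).} Let $X_j\in\ancestor_{\Gcal_{\mid\bm C}}(S')$ with $S'\notin\bm S_{\rm is}$. I first produce some $X_k\ne X_j$, with $X_k\in\ancestor(\bm S)\cap\bm X$, that is $d$-connected to $X_j$ given $\bm S$. Since $S'$ is not isolated, one of two cases holds.
\begin{enumerate}[1)]
\item $S'$ has a second observed ancestor $X_k$. Then the walk $X_j\to\cdots\to S'\leftarrow\cdots\leftarrow X_k$ has its only collider at $S'\in\bm S$ and all non-colliders outside $\bm S$. After reducing it to a path in the standard way, $X_j$ and $X_k$ are $d$-connected given $\bm S$.
\item $X_j$ is the unique observed ancestor of $S'$, but $X_j$ is not $d$-separated given $\bm S$ from some $X_k\in\ancestor(\bm S)\cap\bm X\setminus\{X_j\}$.
\end{enumerate}
Next I extend this connecting path $\pi$ by the edge $I_k\to X_k$ in $\Aug(\Gcal_{\mid\bm C},I_k)$. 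If $\pi$ leaves $X_k$ by a tail, then $X_k$ is an unconditioned non-collider, so the extended path stays open. If $\pi$ enters $X_k$ with an arrowhead, then $X_k$ becomes a collider, but it is open because $X_k\in\ancestor(\bm S)$ and $\bm S$ is conditioned on. Either way $(I_k\not\indep X_j\mid\bm S)_{\Aug(\Gcal_{\mid\bm C},I_k)}$, i.e.\ $X_k\in\bm A_j$. The symmetric argument, using $I_j\to X_j$ and $X_j\in\ancestor(\bm S)$, gives $X_j\in\bm A_k$.

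The main obstacle is the bookkeeping in the walk-to-path reduction and in the collider status at the target node in the ($\Rightarrow$) direction. This includes checking that intermediate colliders created when walks are merged remain ancestors of $\bm S$, which I would handle by citing the standard fact that $d$-connection by a walk implies $d$-connection by a path. In the ($\Leftarrow$) direction, the delicate point is invoking the ancestral-graph property that mutual anteriority forces an undirected connection.
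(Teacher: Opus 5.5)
Your proof is correct. The ($\Rightarrow$) direction is the paper's argument. You are in fact more explicit than the paper about the case where $S'$ has two observed ancestors, which the paper folds into ``by Definition \ref{def:isolated_selection}''.

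The ($\Leftarrow$) direction takes a genuinely different route from the paper.

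\textbf{The paper's route.} It stays in the augmented DAG and uses only the easy inclusion $\bm A_k\subseteq\ancestor_{\Gcal_{\mid\bm C}}(X_k,\bm S)$. Assuming $X_j\notin\ancestor_{\Gcal_{\mid\bm C}}(\bm S)$ forces $X_j\in\ancestor_{\Gcal_{\mid\bm C}}(X_k)$. Now take the open path $I_k\rightarrow X_k\cdots X_j$ that witnesses $X_k\in\bm A_j$. Either it has no collider, which contradicts acyclicity. Or its first collider $Z$ lies in $\ancestor_{\Gcal_{\mid\bm C}}(\bm S)$ with $X_k\in\ancestor_{\Gcal_{\mid\bm C}}(Z)$, which contradicts the assumption. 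The subpath of that same path between $X_k$ and $X_j$ then serves directly as the $d$-connection to $X_k\in\ancestor_{\Gcal_{\mid\bm C}}(\bm S)$ needed to rule out isolation.

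\textbf{Your route.} You instead invoke the harder identity $\bm A_j=\anterior_{\MAG(\Gcal_{\mid\bm C})}(X_j)\setminus\{X_j\}$, whose proof in the paper goes through the MAG Markov property. You combine it with the ancestral-graph axioms, so your witness is an undirected neighbour $X'$ of $X_j$ rather than $X_k$.

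\textbf{What your route buys.} There is a structural byproduct. Together with the converse (an edge $X_j-X'$ gives mutual anteriority, hence $X_j\in\bm A_{\bm S}$ by the lemma), it shows that $\bm A_{\bm S}$ is exactly the set of vertices incident to an undirected edge of $\MAG(\Gcal_{\mid\bm C})$. This ties Lemma \ref{lem:S} directly to Lemma \ref{lem:interpretation_latent_selection} 1).

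\textbf{Two small tightenings.} First, $X_j-X'$ gives $X_j,X'\in\ancestor_{\Gcal_{\mid\bm C}}(\bm S)$ not immediately from Definition \ref{def:MAG}. It needs the short acyclicity argument used in Lemma \ref{lem:interpretation_latent_selection} 1). Second, once both endpoints lie in $\ancestor_{\Gcal_{\mid\bm C}}(\bm S)$, every collider on an inducing path between them is an ancestor of $\bm S$ and every non-collider is latent. So the inducing path itself is open given $\bm S$, and you do not need the general inducing-path theorem.
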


Lemma \ref{lem:S} implies that for any $X_j\in\ancestor_{\Gcal_{\mid\bm C}}(\bm S\setminus\bm S_{\rm is})\cap\bm X$, there exists some $X_k$ such that $(X_j,X_k)$ exhibit cyclic behavior: intervention on either variable affects the other. This unique behavior allows for an efficient detection of ancestors of selection.

Motivated by Propositions \ref{prop:adjacency_latent_selection}, \ref{prop:direction_latent_selection}, and Lemma \ref{lem:S}, we choose $\bm A_j'=\bm A_j\cup\ancestor_{\Gcal_{\mid\bm C}}(\bm S\setminus\bm S_{\rm is})\cap\bm X$, which allows a weaker faithfulness assumption. Suppose that the sets $\bm A_{[d_X]}=\{\bm A_j:j\in[d_X]\}$ and $\bm A_{\bm S}$ are known. Proposition \ref{prop:adjacency_latent_selection} suggests to learn the adjacency between $X_j,X_k$ by the CI test, which we call an \textbf{adjacency test},
\[X_j\indep X_k\mid \bm S=\bm 1,\bm C\cup\bm A_j\cup\bm A_k\cup\bm A_{\bm S}\setminus\{X_j,X_k\}.\]
Similarly, the adjacency between $I_k,X_j$ can be learned by the CTS adjacency test
\[P_{X_j\mid\bm S=\bm 1, \bm C\cup\bm A_j\cup\bm A_{\bm S}\setminus\{X_j\}}^{(k)}=P_{X_j\mid\bm S=\bm 1,\bm C\cup\bm A_j\cup\bm A_{\bm S}\setminus\{X_j\}}^{(0)}.\]
Once an adjacency is detected, its edge type is determined from the anterior sets according to Proposition \ref{prop:direction_latent_selection}.

The sets $\bm A_{\bm S}$ and $\bm A_{[d_X]}$ can be learned in parallel, and, conditional on these sets, all adjacency tests are also parallelizable. This motivates the two-stage procedure in Algorithm \ref{alg:conf_MAG_latent_selection}. In Stage I, we estimate $\bm A_{\bm S}$ and $\bm A_{[d_X]}$ using the anterior tests, with family-wise type I error ($\text{FWER}_1$) controlled below $\frac{\alpha}{2}$. In Stage II, we replace the unknown anterior sets by $\widehat{\bm A}_{[d_X]}, \widehat{\bm A}_{\bm S}$ and perform adjacency tests with $\text{FWER}_1$ below $\frac{\alpha}{2}$. Together, the two stages yield a unified procedure for learning edges of $\TCM(\Gcal_{\mid\bm C}, \bm I)$ with $\text{FWER}_1$ control. Although the anterior and adjacency tests use the same data and the Stage II tests are selected using the Stage I results, Algorithm \ref{alg:conf_MAG_latent_selection} doesn't account for this dependence and performs Stage II by treating $\widehat{\bm A}_j,\widehat{\bm A}_{\bm S}$ as fixed. The resulting error control is formalized in Theorem \ref{thm:coverage_latent_selection}.

In Steps 5 and 6 of Algorithm \ref{alg:conf_MAG_latent_selection}, we apply the adjacency tests based on the pooled data rather than the control data. This is motivated by the following lemma, which says that the conditional distributions remain the same across the pooled samples. 

\begin{lemma}[Data Pooling]\label{lem:pool}
    For any $j,k\in[d_X]$, any $\bm A$ with $\bm A_j\cup\bm A_k\subseteq\bm A\subseteq\ancestor_{\Gcal_{\mid\bm C}}(X_j,X_k,\bm S)\setminus\bm L$, and any $l\in[d_X]$ such that $X_l\not\in\bm A_j\cup\bm A_k\cup\{X_j,X_k\}$, we have $\big(I_l\indep (X_j,X_k)\mid\bm A\cup\bm S\setminus\{X_j,X_k\}\big)_{\Aug(\Gcal_{\mid\bm C},I_l)}$.
\end{lemma}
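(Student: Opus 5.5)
The plan is to argue by contradiction. We work with $d$-separation in $\Aug(\Gcal_{\mid\bm C},I_l)$, and write $\bm W=(\bm A\cup\bm S)\setminus\{X_j,X_k\}$. It suffices to show that $I_l$ is $d$-separated from $X_j$ given $\bm W$, and likewise from $X_k$. Since $d$-separation from a set is equivalent to separation from each of its elements, the joint statement for $(X_j,X_k)$ then follows once we also handle paths from $I_l$ that reach one endpoint through the other. We handle these by noting that $X_j,X_k$ are not in the conditioning set, and by running the argument on the portion of the path up to the first of $X_j,X_k$ it hits.

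So suppose there is a $d$-connecting path $\pi$ from $I_l$ to $X_j$ given $\bm W$, chosen shortest. In $\Aug(\Gcal_{\mid\bm C},I_l)$ the node $I_l$ has the single child $X_l$, so $\pi$ starts with $I_l\rightarrow X_l$.

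\textbf{Step 1: $X_l\in\ancestor(X_j,X_k,\bm S)$.} Every collider on $\pi$ is an ancestor of $\bm W$. Also $\bm W\subseteq\ancestor_{\Gcal_{\mid\bm C}}(X_j,X_k,\bm S)$, because $\bm A\subseteq\ancestor_{\Gcal_{\mid\bm C}}(X_j,X_k,\bm S)\setminus\bm L$ by hypothesis. We consider the two cases for $X_l$ on $\pi$.
\begin{itemize}
\item If $X_l$ is a collider on $\pi$, then $X_l\in\ancestor(\bm W)\subseteq\ancestor(X_j,X_k,\bm S)$.
\item If $X_l$ is a non-collider, follow the directed edges out of $X_l$ along $\pi$. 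This directed segment ends either at the endpoint $X_j$ or at a collider, which is an ancestor of $\bm W$. In both cases $X_l\in\ancestor_{\Gcal_{\mid\bm C}}(X_j,X_k,\bm S)$.
\end{itemize}
Proposition~\ref{prop:adjacency_latent_selection}(1) gives $\ancestor_{\Gcal_{\mid\bm C}}(X_j)\setminus(X_j,\bm L)\subseteq\bm A_j$, and the same for $k$. Since $X_l\notin\bm A_j\cup\bm A_k\cup\{X_j,X_k\}$, this rules out $X_l\in\ancestor(X_j)\cup\ancestor(X_k)$. Hence $X_l\in\ancestor_{\Gcal_{\mid\bm C}}(\bm S)\setminus\ancestor_{\Gcal_{\mid\bm C}}(X_j,X_k)$.

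\textbf{Step 2 (main obstacle): rule out the selection-only case.} Here $X_l$ is an ancestor of $\bm S$ but not of $X_j$ or $X_k$, and we need the characterization $\bm A_j=\anterior_{\MAG(\Gcal_{\mid\bm C})}(X_j)\setminus\{X_j\}$ from Proposition~\ref{prop:adjacency_latent_selection}(1). I would first pass to $\MAG(\Aug(\Gcal_{\mid\bm C},I_l))$. Because $\bm W$ consists of observed nodes together with $\bm S$, the $d$-connection of $I_l$ and $X_j$ given $\bm W$ is equivalent to $m$-connection given $\bm W\setminus\bm S$ in this MAG. We then take a shortest $m$-connecting path $\langle I_l,V_1,\ldots,V_m=X_j\rangle$. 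Every collider on it lies in $\anterior(\bm W\setminus\bm S)$, which is contained in $\anterior(X_j)\cup\anterior(X_k)\cup(\ancestor(\bm S)\cap\bm X)$.

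Next I would exploit that in a MAG all observed ancestors of $\bm S$ are joined to their neighbours by tails. Concretely, if $V_1$ is an ancestor of $\bm S$, every edge at $V_1$ has a tail at $V_1$ (Definition~\ref{def:MAG}), so $V_1$ is a non-collider. Walking along the path, I would then show the following. Either the path continues through tails and undirected or outgoing edges all the way to $X_j$, which makes $X_l$ anterior to $X_j$ and contradicts $X_l\notin\bm A_j$. Or it meets a collider lying in $\anterior(X_j)\cup\anterior(X_k)$, which likewise makes $X_l$ anterior to $X_j$ or $X_k$ and contradicts $X_l\notin\bm A_j\cup\bm A_k$. Any collider in $\ancestor(\bm S)\cap\bm X$ is impossible, since such nodes carry only tails.

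The delicate part is verifying this "tail propagation" along the MAG path. Specifically, I must check that the path cannot pass from the $\ancestor(\bm S)$ region into an arrowhead without creating a collider. If needed, I would fall back on the structure of inducing paths through $\bm L$, using Lemma~\ref{lem:interpretation_latent_selection} to pin down the edge types.

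Finally, the same argument with $X_k$ in place of $X_j$ completes the proof of the lemma.
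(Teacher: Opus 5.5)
Your Step~1 is correct, but nothing later uses it; Step~2, once completed, covers every case by itself. The real problem is that Step~2 stops exactly where the content of the lemma lies. The property you say must be checked is also not the right one. An $m$-connecting path \emph{can} enter an arrowhead without creating a collider, e.g.\ $V_{i-1}\rightarrow V_i\rightarrow V_{i+1}$ with $V_i\notin\ancestor_{\Gcal_{\mid\bm C}}(\bm S)$, and that is harmless.

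What you actually need are two facts you never state:
\begin{enumerate}[(i)]
\item The edge between $I_l$ and $V_1$ in $\MAG(\Aug(\Gcal_{\mid\bm C},I_l))$ has a tail at $I_l$. Every inducing path from $I_l$ begins $I_l\rightarrow X_l$, with $X_l$ either the endpoint or a collider, so $I_l\in\ancestor(V_1,\bm S)$.
\item In an ancestral graph, a vertex incident to an undirected edge has no arrowhead at it.
\end{enumerate}
Now take the first edge on the path with an arrowhead at its end nearer to $I_l$. The preceding edge has a tail at its near end, and by (ii) it cannot be undirected. So it points into the shared vertex, which is therefore a collider. Hence the segment from $I_l$ to the first collider $C$ (or to $X_j$, if there is none) is an anterior path.

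Since $C$ is a MAG-ancestor of $\bm A\setminus\{X_j,X_k\}$, reading the edge marks back in the DAG gives $C\in\ancestor_{\Gcal_{\mid\bm C}}(X_j,X_k,\bm S)$. The arrowheads at $C$ give $C\notin\ancestor_{\Gcal_{\mid\bm C}}(\bm S)$. So $C\in\bm A_j\cup\bm A_k\cup\{X_k\}$ by Proposition~\ref{prop:adjacency_latent_selection}(1), and $I_l$ is anterior to $X_j$ or $X_k$ in the augmented MAG; Lemma~\ref{lem:CM} transports the anterior path from $C$. This makes $I_l$ anterior, not $X_l$, since the path need not pass through $X_l$. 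The contradiction then comes from the definition of $\bm A_j$: an anterior path is $m$-connecting given $\emptyset$, so $(I_l\not\indep X_j\mid\bm S)$ or $(I_l\not\indep X_k\mid\bm S)$ holds in $\Aug(\Gcal_{\mid\bm C},I_l)$, i.e.\ $X_l\in\bm A_j\cup\bm A_k$.

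The paper makes the same dichotomy directly in the DAG, with no MAG, no DAG-to-MAG separation equivalence, and no anterior characterization. On an open path $\pi$ from $I_l$ given $\bm W$, every collider lies in $\ancestor_{\Gcal_{\mid\bm C}}(X_j,X_k,\bm S)$, because $\bm A$ does.
\begin{itemize}
\item If every collider is in $\ancestor_{\Gcal_{\mid\bm C}}(\bm S)$, then $\pi$ is already open given $\bm S$ alone, since its non-colliders avoid $\bm W\supseteq\bm S$. So $X_l\in\bm A_j$ by definition.
\item Otherwise, cut $\pi$ at the collider $Z\in\ancestor_{\Gcal_{\mid\bm C}}(X_j,X_k)\setminus\ancestor_{\Gcal_{\mid\bm C}}(\bm S)$ closest to $I_l$. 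Append a directed path from $Z$ to $X_j$ or $X_k$, which avoids $\bm S$. The resulting walk is open given $\bm S$, giving $X_l\in\bm A_j\cup\bm A_k$.
\end{itemize}
Your MAG route, once completed as above, is a valid alternative. However, it brings in heavier machinery to prove what is a two-case path-surgery argument on the augmented DAG.
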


\begin{algorithm}
\caption{\footnotesize Lower Confidence Set for $\Ecal(\TCM(\Gcal_{\mid\bm C},\bm I))$ with Selection and Latent Variables}
\label{alg:conf_MAG_latent_selection}
\footnotesize
\begin{algorithmic}[1]
\Statex{\bf Input:} Observational data $\Dcal_0$, interventional data $\Dcal_k$, $k\in[d_X]$, confidence level $1-\alpha$.
\Statex{\bf Output:} Lower Confidence set $\Ccal_\alpha$ for $\Ecal(\TCM(\Gcal_{\mid\bm C}, \bm I))$.

\Statex \textbf{Stage I. Anterior Sets Recovery}
\State{\bf Step 1:} For $k\in[d_X]$, $j\in[d_X]\setminus\{k\}$, conduct CTS test $\psi_j^{\anterior(k)}\in\{0,1\}$ based on $\Dcal_0$ and $\Dcal_k$,
\[H_{j,0}^{\anterior(k)}:\Prob\big(p^{(k)}_{X_j\mid\bm C,\bm S=\bm 1}\ne p_{X_j\mid\bm C,\bm S=\bm 1}^{(0)}\big)=0\quad \text{vs} \quad H_{j,1}^{\anterior(k)}:\Prob\big(p^{(k)}_{X_j\mid\bm C,\bm S=\bm 1}\ne p_{X_j\mid\bm C,\bm S=\bm 1}^{(0)}\big)>0,\]
with $\text{FWER}_1$ controlled below $\frac{\alpha}{2}$. 
\State{\bf Step 2:} For $j\in[d_X]$, initialize the detected anterior set for $X_j$ and $\bm S$ as
\[\widehat{\bm A}_j=\bigg\{X_k:\exists m\ge 0,j_0=j,j_{m+1}=k, (j_0,\ldots,j_{m+1})\subseteq[d_X]\text{ s.t. }\prod_{l=0}^m\psi_{j_l}^{\anterior(j_{l+1})}=1\bigg\}\setminus\{X_j\},\]
\[\widehat{\bm A}_{\bm S}=\big\{X_k:\exists j\ne k\text{ s.t. }\psi_j^{\anterior(k)}\psi_k^{\anterior(j)}=1\big\}.\]
\State{\bf Step 3:} Iteratively update $\{\widehat{\bm A}_j:j\in[d_X]\}$ and $\widehat{\bm A}_{\bm S}$ until convergence:
\begin{itemize}
    \item For any $X_k\in\widehat{\bm A}_{\bm S}$ and $X_j\in\widehat{\bm A}_k$, update 
    \[\widehat{\bm A}_{\bm S} \leftarrow \widehat{\bm A}_{\bm S}\cup \{X_j\},\quad \widehat{\bm A}_j\leftarrow \widehat{\bm A}_j\cup\{X_k\}.\]
    \item For all $j\in[d_X]$, update 
    \[\widehat{\bm A}_j \leftarrow\big\{X_k:\exists m\ge 0,j_0=j,j_{m+1}=k,(j_0,\ldots,j_{m+1})\subseteq[d_X]\text{ s.t. }\forall l\in[m+1],X_{j_l}\in\widehat{\bm A}_{j_{l-1}}\big\}\setminus\{X_j\}.\]
\end{itemize}

\Statex \textbf{Stage II. Adjacency and Orientation Recovery}
\State{\bf Step 4:} Initialize the detected edge set $\Ccal_\alpha=\emptyset$, and conduct the following parallel CI and CTS tests with $\text{FWER}_1$ controlled below $\frac{\alpha}{2}$, pretending $\widehat{\bm A}_{\bm S}$ and $\widehat{\bm A}_{[d_X]}=\{\widehat{\bm A}_{j}:j\in[d_X]\}$ are fixed.
\State{\bf Step 5:} For $j,k\in[d_X]$, $j>k$, conduct CI test $\psi_{j,k}^{\rm Adj}\big(\widehat{\bm A}_{\bm S},\widehat{\bm A}_{[d_X]}\big)\in\{0,1\}$ based on the pooled data $\Dcal_0\cup\big\{\Dcal_l:X_l\not\in\widehat{\bm A}_j\cup\widehat{\bm A}_k\cup\{X_j,X_k\}\big\}$: 
\begin{align*}
    &H_{j,k,0}^{\rm Adj}\big(\widehat{\bm A}_{\bm S},\widehat{\bm A}_{[d_X]}\big):X_j\indep X_k\mid \bm S=\bm 1,\bm C\cup\widehat{\bm A}_j\cup\widehat{\bm A}_k\cup\widehat{\bm A}_{\bm S}\setminus\{X_j,X_k\}\\
    \text{vs} \quad &H_{j,k,1}^{\rm Adj}\big(\widehat{\bm A}_{\bm S},\widehat{\bm A}_{[d_X]}\big):X_j\not\indep X_k\mid\bm S=\bm 1, \bm C\cup \widehat{\bm A}_j\cup\widehat{\bm A}_k\cup\widehat{\bm A}_{\bm S}\setminus\{X_j,X_k\}.
\end{align*}
\begin{itemize}
    \item If $\psi_{j,k}^{\rm Adj}\big(\widehat{\bm A}_{\bm S},\widehat{\bm A}_{[d_X]}\big)=1$, $X_j\in \widehat{\bm A}_k\cup\widehat{\bm A}_{\bm S}$, and $X_k\not\in\widehat{\bm A}_j\cup\widehat{\bm A}_{\bm S}$, append $X_j\rightarrow X_k$ to $\Ccal_\alpha$. 
    \item If $\psi_{j,k}^{\rm Adj}\big(\widehat{\bm A}_{\bm S},\widehat{\bm A}_{[d_X]}\big)=1$, $X_j\in \widehat{\bm A}_k\cup\widehat{\bm A}_{\bm S}$, and $X_k\in\widehat{\bm A}_j\cup\widehat{\bm A}_{\bm S}$, append $X_j-X_k$ to $\Ccal_\alpha$. 
    \item If $\psi_{j,k}^{\rm Adj}\big(\widehat{\bm A}_{\bm S},\widehat{\bm A}_{[d_X]}\big)=1$, $X_j\not\in \widehat{\bm A}_k\cup\widehat{\bm A}_{\bm S}$, and $X_k\in\widehat{\bm A}_j\cup\widehat{\bm A}_{\bm S}$, append $X_j\leftarrow X_k$ to $\Ccal_\alpha$. 
    \item If $\psi_{j,k}^{\rm Adj}\big(\widehat{\bm A}_{\bm S},\widehat{\bm A}_{[d_X]}\big)=1$, $X_j\not\in \widehat{\bm A}_k\cup\widehat{\bm A}_{\bm S}$, and $X_k\not\in\widehat{\bm A}_j\cup\widehat{\bm A}_{\bm S}$, append $X_j\leftrightarrow X_k$ to $\Ccal_\alpha$. 
\end{itemize}
\State{\bf Step 6: } For $j\in[d_X]$, $X_k\in\widehat{\bm A}_j$, conduct CTS test $\psi_j^{{\rm Adj}(k)}\big(\widehat{\bm A}_{\bm S},\widehat{\bm A}_{[d_X]}\big)\in\{0,1\}$ based on $\Dcal_k$ and the pooled data $\Dcal_0\cup\{\Dcal_l:X_l\not\in\widehat{\bm A}_j\cup\{X_j,X_k\}\}$
\begin{align*}
    &H_{j,0}^{{\rm Adj}(k)}\big(\widehat{\bm A}_{\bm S},\widehat{\bm A}_{[d_X]}\big): \Prob\big(p_{X_j\mid\bm S=\bm 1, \bm C\cup\widehat{\bm A}_j\cup\widehat{\bm A}_{\bm S}\setminus\{X_j\}}^{(k)}\ne p_{X_j\mid\bm S=\bm 1,\bm C\cup\widehat{\bm A}_j\cup\widehat{\bm A}_{\bm S}\setminus\{X_j\}}^{(0)}\big)=0\\
    \text{vs} \quad & H_{j,1}^{{\rm Adj}(k)}\big(\widehat{\bm A}_{\bm S},\widehat{\bm A}_{[d_X]}\big): \Prob\big(p_{X_j\mid\bm S=\bm 1, \bm C\cup\widehat{\bm A}_j\cup\widehat{\bm A}_{\bm S}\setminus\{X_j\}}^{(k)}\ne p_{X_j\mid\bm S=\bm 1,\bm C\cup\widehat{\bm A}_j\cup\widehat{\bm A}_{\bm S}\setminus\{X_j\}}^{(0)}\big)>0.
\end{align*}
\begin{itemize}
    \item If $\psi_j^{{\rm Adj}(k)}\big(\widehat{\bm A}_{\bm S},\widehat{\bm A}_{[d_X]}\big)=1$ and $X_j\in \widehat{\bm A}_k\cup\widehat{\bm A}_{\bm S}$, append $I_k- X_j$ to $\Ccal_\alpha$.
    \item If $\psi_j^{{\rm Adj}(k)}\big(\widehat{\bm A}_{\bm S},\widehat{\bm A}_{[d_X]}\big)=1$ and $X_j\not\in \widehat{\bm A}_k\cup\widehat{\bm A}_{\bm S}$, append $I_k\rightarrow X_j$ to $\Ccal_\alpha$.
\end{itemize}

\end{algorithmic}
\end{algorithm}

To illustrate the performance of Algorithm \ref{alg:conf_MAG_latent_selection}, we require the following faithfulness assumptions. While Assumption \ref{asm:faithful_descendant_latent_selection} enables accurate ancestor detection, Assumption \ref{asm:faithful_adjacency_latent_selection} allows consistent adjacency detection. By transitivity of ancestry \citep{park2026confounder}, Assumption \ref{asm:faithful_descendant_latent_selection} only requires the marginal distribution shift for variables directly affected by intervention targets. 

\begin{assumption}[Faithful Interventional Effect]\label{asm:faithful_descendant_latent_selection}
    \begin{enumerate}[1)]
        \item For any $j\in[d_X]$ and $X_k\in\ancestor_{\Gcal_{\mid\bm C}}(X_j)\cap \bm X\setminus\{X_j\}$, if there is no directed path $X_k\rightarrow\ldots\rightarrow X_j$ whose nodes intersect $\bm X\setminus\{X_j,X_k\}$, then  $P_{X_j\mid\bm C,\bm S=\bm 1}^{(k)}\ne P_{X_j\mid\bm C,\bm S=\bm 1}^{(0)}$.
        \item For any $X_j\in\ancestor_{\Gcal_{\mid\bm C}}(\bm S)\cap\bm X$ and $X_k\not\in\ancestor_{\Gcal_{\mid\bm C}}(X_j)$ such that there exists an open path between $X_j,X_k$ given $\bm S$ that is disjoint from $\bm X\setminus\{X_j,X_k\}$, we have $P_{X_k\mid\bm C,\bm S=\bm 1}^{(j)}\ne P_{X_k\mid\bm C,\bm S=\bm 1}^{(0)}$.
        \item For any $X_j\in\ancestor_{\Gcal_{\mid\bm C}}(\bm S)\cap\bm X$, if there is no selection variables $S\in\bm S$ and directed path $X_j\rightarrow\ldots\rightarrow S$ whose nodes intersect $\bm X\setminus\{X_j\}$, then for any $X_k\in\bm X\setminus\{X_j\}$ such that there is an open path between $X_j,X_k$ given $\bm S$ that is disjoint from $\bm X\setminus\{X_j,X_k\}$, we have $P_{X_k\mid\bm C,\bm S=\bm 1}^{(j)}\ne P_{X_k\mid\bm C,\bm S=\bm 1}^{(0)}$.
    \end{enumerate}
\end{assumption}

\begin{assumption}[Faithful Conditional Dependence]\label{asm:faithful_adjacency_latent_selection}
    \begin{enumerate}[1)]
        \item For all $j,k\in[d_X]$, $j>k$, if $\big(X_j\not\indep X_k\mid \bm A_j\cup\bm A_k\cup\bm A_{\bm S}\cup\bm S\setminus\{X_j,X_k\}\big)_{\Gcal_{\mid\bm C}}$, then $X_j\not\indep X_k\mid\bm S=\bm 1, \bm C,\bm A_j\cup\bm A_k\cup\bm A_{\bm S}\setminus\{X_j,X_k\}$.
        \item For all $j\in[d_X]$, $k\in\bm A_j$, if $\big(I_k\not\indep X_j\mid\bm A_j\cup\bm A_{\bm S}\cup\bm S\setminus\{X_j\}\big)_{\Aug(\Gcal_{\mid\bm C},I_k)}$, then $P_{X_j\mid\bm S=\bm 1, \bm C\cup\bm A_j\cup\bm A_{\bm S}\setminus\{X_j\}}^{(k)}\ne P_{X_j\mid\bm S=\bm 1,\bm C\cup\bm A_j\cup\bm A_{\bm S}\setminus\{X_j\}}^{(0)}$.
    \end{enumerate}
\end{assumption}

To control the FWER of Algorithm \ref{alg:conf_MAG_latent_selection}, we assume that the parallel tests in Stage I and II each achieve the desired FWER control. Although the tests conducted in Stage II are selected based on the results of Stage I and the two stages use the same data, the following assumption requires FWER control for Stage II only when the true anterior sets $\bm A_j,\bm A_{\bm S}$ are used.
\begin{assumption}[${\rm FWER}_1$]\label{asm:FWER_latent_selection}
    The ${\rm FWER}_1$ of the fully parallel anterior tests $\{\psi_j^{\anterior(k)}:k\in[d_X],j\in[d_X]\setminus\{k\}\}$ are controlled below $\frac{\alpha}{2}$. And the ${\rm FWER}_1$ of the fully parallel adjacency tests $\big\{\psi_{j,k}^{\rm Adj}\big(\bm A_{\bm S},\bm A_{[d_X]}\big):j,k\in[d_X],j>k\big\}\cup\big\{\psi_j^{{\rm Adj}(k)}\big(\bm A_{\bm S},\bm A_{[d_X]}\big):j\in[d_X],k\in\bm A_j\big\}$ are also below $\frac{\alpha}{2}$.
\end{assumption}

Let FWER$_2^\anterior$ and FWER$_2^{\rm Adj}$ denote the family-wise type II error rates of the anterior and adjacency tests in Assumption \ref{asm:FWER_latent_selection}, respectively, with the latter evaluated at the true anterior sets. Under Assumptions \ref{asm:context}, \ref{asm:faithful_descendant_latent_selection}, and \ref{asm:FWER_latent_selection}, the following proposition shows that the anterior sets $\bm A_j$ and $\bm A_{\bm S}$ can be estimated under powerful anterior tests. 

\begin{proposition}[Coverage and Power for Anterior Sets]\label{prop:consistent_anterior_latent_selection}
    Under Assumptions \ref{asm:context} and \ref{asm:FWER_latent_selection}, the estimated anterior sets $\widehat{\bm A}_j$ and $\widehat{\bm A}_{\bm S}$ do not contain incorrect nodes,
    \[\Prob\big(\exists j\in[d_X]{\rm ~s.t.~}\widehat{\bm A}_j\not\subseteq\bm A_j{\rm ~or~}\widehat{\bm A}_{\bm S}\not\subseteq\bm A_{\bm S}\big)\le\frac{\alpha}{2}.\]
    In addition, if Assumption \ref{asm:faithful_descendant_latent_selection} holds, 
    \[\Prob\big(\exists j\in[d_X]{\rm ~s.t.~}\widehat{\bm A}_j\ne\bm A_j{\rm ~or~}\widehat{\bm A}_{\bm S}\ne\bm A_{\bm S}\big)\le\frac{\alpha}{2}+{\rm FWER}_2^\anterior.\]
\end{proposition}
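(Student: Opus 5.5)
Both parts work on a single good event. For part one, let $\Ecal_1$ be the event that no anterior test $\psi_j^{\anterior(k)}$ rejects a true null $H_{j,0}^{\anterior(k)}$. By Assumption \ref{asm:FWER_latent_selection}, $\Prob(\Ecal_1^c)\le\alpha/2$. It therefore suffices to show that on $\Ecal_1$ the outputs of Steps 2 and 3 satisfy $\widehat{\bm A}_j\subseteq\bm A_j$ for all $j$ and $\widehat{\bm A}_{\bm S}\subseteq\bm A_{\bm S}$.

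On $\Ecal_1$, $\psi_j^{\anterior(k)}=1$ forces $H_{j,1}^{\anterior(k)}$ to hold. That is, $P^{(k)}_{X_j\mid\bm C,\bm S=\bm 1}\ne P^{(0)}_{X_j\mid\bm C,\bm S=\bm 1}$. By the contrapositive of Lemma \ref{lem:Markov_Aug} with $\bm X_2=\emptyset$, this gives $(I_k\not\indep X_j\mid\bm S)_{\Aug(\Gcal_{\mid\bm C},I_k)}$, i.e.\ $X_k\in\bm A_j$.

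Next we need closure properties of the true sets.
\begin{enumerate}[(i)]
\item Transitivity: $X_k\in\bm A_j$ and $X_l\in\bm A_k$ imply $X_l\in\bm A_j\cup\{X_j\}$. This follows from Proposition \ref{prop:adjacency_latent_selection}(1), $\bm A_j=\anterior_{\MAG(\Gcal_{\mid\bm C})}(X_j)\setminus\{X_j\}$, because anteriority in a MAG is transitive.
\item Mutual anteriority: $X_k\in\bm A_j$ and $X_j\in\bm A_k$ imply $X_j,X_k\in\bm A_{\bm S}$. This is Lemma \ref{lem:S}.
\item Propagation through $\bm A_{\bm S}$: $X_k\in\bm A_{\bm S}$ and $X_j\in\bm A_k$ imply $X_j\in\bm A_{\bm S}$ and $X_k\in\bm A_j$. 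This is the Step 3 update.
\end{enumerate}

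Given (i)--(iii), induction over the initialization and the iterations of Steps 2--3 preserves $\widehat{\bm A}_j\subseteq\bm A_j$ and $\widehat{\bm A}_{\bm S}\subseteq\bm A_{\bm S}$ on $\Ecal_1$. The chain closure in Step 2 uses (i), the definition of $\widehat{\bm A}_{\bm S}$ uses (ii), and the first bullet of Step 3 uses (iii).

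Property (iii) is the main obstacle. Since $X_j\in\bm A_k\subseteq\ancestor_{\Gcal_{\mid\bm C}}(X_k,\bm S)$, we get $X_j\in\ancestor(X_k)\cup\ancestor(\bm S)$. Because $X_k$ is an ancestor of some $S\in\bm S\setminus\bm S_{\rm is}$, it follows that $X_j\in\ancestor(\bm S)$. I would then show that $X_j$ is an ancestor of some non-isolated selection. If $X_j\in\ancestor(X_k)$, it is an ancestor of the same non-isolated $S$. Otherwise, I would argue via Definition \ref{def:isolated_selection} that an isolated $S'\ni X_j$ would force $(X_j\indep X_k\mid\bm S)$, contradicting $X_j\in\bm A_k$. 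For $X_k\in\bm A_j$, I would use the MAG characterization: $X_k\in\ancestor(\bm S)$ and $X_j\in\ancestor(\bm S)$ are linked by an inducing/anterior path. Then Definition \ref{def:MAG}(d) gives a tail at $X_k$, so $X_k\in\anterior_{\MAG}(X_j)$. If this direct argument fails, I would instead invoke Lemma \ref{lem:S} to produce a mutually anterior partner $X_m$ of $X_k$ and chain via (i).

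For part two, let $\Ecal_2$ be the event that every anterior test with a true alternative rejects. Then $\Prob(\Ecal_1^c\cup\Ecal_2^c)\le\alpha/2+{\rm FWER}_2^\anterior$. On $\Ecal_1\cap\Ecal_2$, $\psi_j^{\anterior(k)}=1$ whenever the marginal shift holds. By Assumption \ref{asm:faithful_descendant_latent_selection}(1), this covers every ``direct'' observed ancestor of $X_j$ (those reached without passing through other $\bm X$ nodes). Then the Step 2 chain closure, using transitivity of ancestry, captures $\ancestor(X_j)\cap\bm X$. Assumption \ref{asm:faithful_descendant_latent_selection}(2)--(3) supply the mutual detections for ancestors of selection, so that Step 2 recovers $\bm A_{\bm S}$ (via Lemma \ref{lem:S}) and Step 3 propagates to the anterior elements of $\bm A_j$ that are not ancestors. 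Matching these cases against $\bm A_j=\anterior_{\MAG}(X_j)\setminus\{X_j\}$ gives $\widehat{\bm A}_j\supseteq\bm A_j$ and $\widehat{\bm A}_{\bm S}\supseteq\bm A_{\bm S}$. Combined with part one, this yields equality.
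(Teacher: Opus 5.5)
Your part one, and the probability bookkeeping in both parts, follow the paper. On the no-false-positive event, the contrapositive of Lemma~\ref{lem:Markov_Aug} turns each rejection into $X_k\in\bm A_j$. Your (i)--(iii) are exactly Lemma~\ref{lem:A_closure} plus Lemma~\ref{lem:S}, and they make Steps 2--3 preserve $\widehat{\bm A}_j\subseteq\bm A_j$ and $\widehat{\bm A}_{\bm S}\subseteq\bm A_{\bm S}$.

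Two points in (iii) need tightening:
\begin{itemize}
\item For $X_j\in\bm A_{\bm S}$, say explicitly that $X_j\in\bm A_k$ gives $(X_j\not\indep X_k\mid\bm S)_{\Gcal_{\mid\bm C}}$ via the $X_j$--$X_k$ subpath of an open path from $I_j$. That is what Definition~\ref{def:isolated_selection} contradicts.
\item For $X_k\in\bm A_j$, a tail at $X_k$ on one edge is not enough. Your MAG route works because, on an anterior path $X_j=V_0,\ldots,V_m=X_k$, each edge gives $V_{i-1}\in\ancestor(V_i,\bm S)$. Backward induction from $X_k\in\ancestor(\bm S)$ then puts every $V_i$ in $\ancestor(\bm S)$, so every edge is undirected and the path reverses. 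Your fallback fails: a partner $X_m$ supplied by Lemma~\ref{lem:S} need not lie in $\bm A_j$. The paper instead concatenates $I_k\rightarrow X_k$ with that same $X_j$--$X_k$ subpath, which stays open because $X_k\in\ancestor(\bm S)$.
\end{itemize}

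The genuine gap is part two, which is asserted rather than proved. One concrete claim in it is false: Step 2 does not recover $\bm A_{\bm S}$. Lemma~\ref{lem:S} characterizes $\bm A_{\bm S}$ through the true, transitively closed sets $\bm A_j$. Step 2 instead uses raw mutual rejections, and Assumption~\ref{asm:faithful_descendant_latent_selection} guarantees rejections only for ``direct'' pairs. Take the DAG $X_1\rightarrow X_2\rightarrow X_3\rightarrow S$ with no latents. The assumption guarantees only $\psi_2^{\anterior(1)}$, $\psi_3^{\anterior(2)}$ and $\psi_2^{\anterior(3)}$. So Step 2 may give $\widehat{\bm A}_{\bm S}=\{X_2,X_3\}$ although $\bm A_{\bm S}=\{X_1,X_2,X_3\}$, and $X_1$ enters only through the first bullet of Step 3. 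The paper's Lemma~\ref{lem:hat_A_S} handles this as follows:
\begin{itemize}
\item For $X_j\in\bm A_{\bm S}$, take a directed path to $\bm S\setminus\bm S_{\rm is}$ with the most system variables on it.
\item Consecutive observed nodes on it are then direct ancestors, and the last one reaches a non-isolated selection without passing other system variables.
\item Parts 1) and 3) of the assumption give the chain and a mutual pair at its end, and Step 3 propagates back to $X_j$.
\item A separate $d$-connecting-path argument covers a trivial chain.
\end{itemize}

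More importantly, nothing in your sketch shows that anterior non-ancestors of $X_j$ are reached. ``Matching cases against $\anterior_{\MAG}(X_j)$'' is the whole difficulty. The paper's Lemma~\ref{lem:hat_A_j} argues by contradiction:
\begin{itemize}
\item Take $X_k\in\bm A_j\setminus\widehat{\bm A}_j$ and an open path from $I_k$ to $X_j$ given $\bm S$; its middle nodes lie in $\ancestor(X_j,\bm S)$.
\item Let $X_l$ be the node of $\widehat{\bm A}_j\cup\{X_j\}$ on this path closest to $X_k$, and $X_{l'}$ the system variable just before $X_l$.
\item Because ancestors of $X_j$ are already recovered, $X_{l'}\in\ancestor(\bm S)$.
\item If $X_l\notin\ancestor(X_{l'})$, part 2) applied to the segment between them (which avoids other system variables) gives $\psi_l^{\anterior(l')}=1$.
\item If $X_l\in\ancestor(X_{l'})$, then $X_l\in\widehat{\bm A}_{l'}$ by the ancestor step. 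A selection below $X_{l'}$ then has two observed ancestors, so it is non-isolated. Hence $X_{l'}\in\bm A_{\bm S}\subseteq\widehat{\bm A}_{\bm S}$, and Step 3 puts $X_{l'}$ into $\widehat{\bm A}_l$.
\item Either way, chain closure gives $X_{l'}\in\widehat{\bm A}_j$, a contradiction.
\end{itemize}
This step needs $\bm A_{\bm S}\subseteq\widehat{\bm A}_{\bm S}$ to be established first.
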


Given the estimators of the anterior sets, the following theorem shows that the constructed confidence set $\Ccal_\alpha$ has the desired coverage. 

\begin{theorem}[Coverage]\label{thm:coverage_latent_selection}
    Under Assumptions \ref{asm:context}, \ref{asm:faithful_descendant_latent_selection}, and \ref{asm:FWER_latent_selection}, 
    \[\Prob\big(\Ccal_\alpha\not\subseteq\Ecal(\TCM(\Gcal_{\mid\bm C},\bm I))\big)\le\alpha+{\rm FWER}_2^\anterior.\]
\end{theorem}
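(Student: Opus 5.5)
The plan is to work on a single good event and bound the probability of its complement by a union bound. Let $\Ecal_1$ be the event that Stage I recovers the anterior sets exactly, i.e.\ $\widehat{\bm A}_j=\bm A_j$ for all $j\in[d_X]$ and $\widehat{\bm A}_{\bm S}=\bm A_{\bm S}$. By Proposition \ref{prop:consistent_anterior_latent_selection}, which uses Assumptions \ref{asm:context}, \ref{asm:faithful_descendant_latent_selection} and \ref{asm:FWER_latent_selection}, we have $\Prob(\Ecal_1^c)\le\frac{\alpha}{2}+{\rm FWER}_2^\anterior$. Let $\Ecal_2$ be the event that none of the Stage II adjacency tests, evaluated at the true sets $(\bm A_{\bm S},\bm A_{[d_X]})$, makes a false rejection. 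By Assumption \ref{asm:FWER_latent_selection}, $\Prob(\Ecal_2^c)\le\frac{\alpha}{2}$.

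On $\Ecal_1$ the plug-in tests $\psi(\widehat{\bm A}_{\bm S},\widehat{\bm A}_{[d_X]})$ coincide with the tests $\psi(\bm A_{\bm S},\bm A_{[d_X]})$ built from the true sets. This holds because the conditioning sets, the pooled datasets and the index set $\{k\in\widehat{\bm A}_j\}$ are all identical. This identification removes the selection and data-reuse issue between the two stages. It therefore suffices to show that, on $\Ecal_1\cap\Ecal_2$, every edge appended to $\Ccal_\alpha$ belongs to $\Ecal(\TCM(\Gcal_{\mid\bm C},\bm I))$ with the correct type. The union bound then gives $\alpha+{\rm FWER}_2^\anterior$.

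\textbf{Adjacency.} On $\Ecal_1\cap\Ecal_2$, a rejection $\psi_{j,k}^{\rm Adj}=1$ means that $H^{\rm Adj}_{j,k,0}$ is false. Hence $X_j\not\indep X_k\mid\bm S=\bm 1,\bm C,\bm A\setminus\{X_j,X_k\}$, where $\bm A=\bm A_j\cup\bm A_k\cup\bm A_{\bm S}$. Data pooling is legitimate by Lemma \ref{lem:pool}, since the pooled regimes share this conditional law. I then use the global Markov property contrapositively. The distribution conditional on $\bm C$ and $\bm S$ is Markov to $\Gcal$, so Proposition \ref{prop:context_conditional} gives $(X_j\not\indep X_k\mid(\bm A,\bm S)\setminus\{X_j,X_k\})_{\Gcal_{\mid\bm C}}$. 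Next I check the sandwich $\bm A_j\cup\bm A_k\subseteq\bm A\subseteq\ancestor_{\Gcal_{\mid\bm C}}(X_j,X_k,\bm S)\setminus\bm L$. This holds because $\bm A_j\subseteq\ancestor(X_j,\bm S)$ by Proposition \ref{prop:adjacency_latent_selection}(1), and $\bm A_{\bm S}\subseteq\ancestor(\bm S)$ by definition. Proposition \ref{prop:adjacency_latent_selection}(2) then shows that $X_j,X_k$ are adjacent in $\TCM$.

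For the $I_k$--$X_j$ tests, a rejection means that the conditional laws differ. Lemma \ref{lem:Markov_Aug}, applied contrapositively with conditioning set $\bm A_j\cup\bm A_{\bm S}\setminus\{X_j\}$, gives $d$-connection in $\Aug(\Gcal_{\mid\bm C},I_k)$. Proposition \ref{prop:adjacency_latent_selection}(3), applied with $\bm A_j'=\bm A_j\cup\bm A_{\bm S}$, then yields adjacency. Here $\bm A_j'\subseteq\ancestor(X_j,\bm S)\setminus\bm L$ as before.

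\textbf{Orientation.} The algorithm orients using membership in $\bm A_k\cup\bm A_{\bm S}$ and $\bm A_j\cup\bm A_{\bm S}$. Setting $\bm A_j'=\bm A_j\cup\bm A_{\bm S}$ and $\bm A_k'=\bm A_k\cup\bm A_{\bm S}$, both lie in the admissible ranges, so Proposition \ref{prop:direction_latent_selection}(1) shows that the four cases in Step 5 return the true edge type. Likewise, Proposition \ref{prop:direction_latent_selection}(2) with $\bm A_k'=\bm A_k\cup\bm A_{\bm S}$ handles Step 6.

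The main obstacle is the Markov direction in the adjacency step. I need the observed conditional law given $\bm S=\bm 1$, and in the pooled or interventional regimes, to inherit $d$-separations from $\Gcal_{\mid\bm C}$. For the pure CI test this requires the control-regime Markov property plus Lemma \ref{lem:pool}. The pooled mixture also needs care: I would show that its conditional law of $(X_j,X_k)$ given $\bm C,\bm A$ equals the control one, so that independence in control transfers to the pooled data, which makes the null hypothesis hold there.
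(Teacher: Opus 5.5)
Your proposal is correct and follows essentially the same route as the paper. Both arguments restrict to the event that Stage I recovers $\bm A_{[d_X]}$ and $\bm A_{\bm S}$ exactly, so that the plug-in Stage II tests coincide with the true-set tests. Both then use the Markov property with Propositions \ref{prop:adjacency_latent_selection} and \ref{prop:direction_latent_selection} (taking $\bm A_j'=\bm A_j\cup\bm A_{\bm S}$) on the no-false-rejection event, and finish with a union bound giving $\frac{\alpha}{2}+{\rm FWER}_2^{\anterior}$ plus $\frac{\alpha}{2}$. The differences are cosmetic. The paper bounds false-positive and false-negative anterior tests as separate terms instead of quoting Proposition \ref{prop:consistent_anterior_latent_selection} directly, and it asserts the Markov step for the pooled-data tests without the extra care you flag.
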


In Theorem \ref{thm:coverage_latent_selection}, the additional term FWER$_2^{\anterior}$ accounts for missed anterior relations, which can affect subsequent conditioning sets and edge orientations. Such error propagation is not unique to our procedure. In classical PC/FCI-type algorithms, successive updates to the estimated graph determine the conditioning sets and tests in subsequent rounds, allowing testing errors to propagate repeatedly \citep{spirtes2000causation,spirtes1995causal,spirtes2001anytime,kocaoglu2019characterization,luo2025characterization}. In contrast, our procedure restricts propagation between testing stages to a single step, from the anterior tests in Stage I to the adjacency tests in Stage II. Moreover, when no context variables are present, the anterior tests reduce to univariate two-sample tests for equality of distributions, for which nonparametric tests consistent against any fixed alternative are available \citep{lehmann2005testing}. For example, consider Kolmogorov-Smirnov statistic \citep{lehmann2005testing} with Bonferroni correction and threshold chosen by the Dvoretzky-Kiefer-Wolfowitz inequality \citep{massart1990tight}, we have 
\[{\rm FWER}_2^\anterior\le 4d_X^2\exp\bigg(-\frac{n_{\min}}{2}\bigg(\Delta_{\min}-\sqrt{\frac{2}{n_{\min}}\log\frac{8d_X^2}{\alpha}}\bigg)_+^2\bigg)\rightarrow 0\quad\text{as}\quad n_{\min}\rightarrow \infty,\]
with $n_{\min}=\min_{0\le k\le d_X}n_k$, $\Delta_{\min}=\min_{j\ne k,F_j^{(k)}\ne F_j^{(0)}}\|F_j^{(k)}-F_j^{(0)}\|_\infty$ and $F_j^{(k)}$ be the CDF of $P_{X_j\mid\bm S=\bm 1}^{(k)}$. Therefore, $\Ccal_\alpha$ is asymptotically valid.

In addition, if the faithfulness Assumption \ref{asm:faithful_adjacency_latent_selection} holds, $\Ccal_\alpha$ is guaranteed to recover $\TCM(\Gcal_{\mid\bm C},\bm I)$. 
\begin{theorem}[Power]\label{thm:power_latent_selection}
    Under Assumptions \ref{asm:context}, \ref{asm:faithful_descendant_latent_selection}, \ref{asm:faithful_adjacency_latent_selection}, and \ref{asm:FWER_latent_selection}, 
    \[\Prob\big(\Ccal_\alpha\ne\Ecal(\TCM(\Gcal_{\mid\bm C},\bm I))\big)\le\alpha+{\rm FWER}_2^\anterior+{\rm FWER}_2^{\rm Adj}.\]
\end{theorem}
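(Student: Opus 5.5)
The plan is to intersect a small number of good events and show that on their intersection the algorithm outputs exactly $\Ecal(\TCM(\Gcal_{\mid\bm C},\bm I))$; the bound then follows from a union bound over the complements. Let $E_1$ be the event that the Stage I estimates are exact, i.e. $\widehat{\bm A}_j=\bm A_j$ for all $j$ and $\widehat{\bm A}_{\bm S}=\bm A_{\bm S}$. By the second statement of Proposition \ref{prop:consistent_anterior_latent_selection}, under Assumptions \ref{asm:context}, \ref{asm:faithful_descendant_latent_selection} and \ref{asm:FWER_latent_selection}, we have $\Prob(E_1^c)\le\frac{\alpha}{2}+{\rm FWER}_2^\anterior$.

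On $E_1$, the Stage II tests $\psi^{\rm Adj}_{j,k}(\widehat{\bm A}_{\bm S},\widehat{\bm A}_{[d_X]})$ and $\psi^{{\rm Adj}(k)}_j(\widehat{\bm A}_{\bm S},\widehat{\bm A}_{[d_X]})$ coincide with the same tests evaluated at the true sets. The same holds for the pooled datasets, since the pooling index sets are built from $\widehat{\bm A}$ and therefore equal their oracle versions. Let $E_2$ be the event that the oracle family of adjacency tests in Assumption \ref{asm:FWER_latent_selection} makes no type I error and no type II error. Then $\Prob(E_2^c)\le\frac{\alpha}{2}+{\rm FWER}_2^{\rm Adj}$. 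The test validity is legitimate because Lemma \ref{lem:pool}, combined with Lemma \ref{lem:Markov_Aug}, ensures that the pooled data share the null conditional distribution, so the oracle hypotheses are the ones being tested. Assumption \ref{asm:FWER_latent_selection} states the error control for precisely this family. Since the Stage II tests only need to be evaluated at the oracle sets, the dependence on Stage I is irrelevant on $E_1$.

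It remains to show that on $E_1\cap E_2$ the output $\Ccal_\alpha$ equals the true edge set. I will match the null and alternative hypotheses to the graph.

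\textbf{Adjacency among $\bm X$.} Take $\bm A=\bm A_j\cup\bm A_k\cup\bm A_{\bm S}$. By Lemma \ref{lem:S}, $\bm A_{\bm S}\subseteq\ancestor_{\Gcal_{\mid\bm C}}(\bm S)$, and by Proposition \ref{prop:adjacency_latent_selection}(1), $\bm A_j\subseteq\ancestor(X_j,\bm S)$, so $\bm A$ lies in the admissible range. Proposition \ref{prop:adjacency_latent_selection}(2) then says $X_j,X_k$ are adjacent in the TCM iff the $d$-connection holds in $\Gcal_{\mid\bm C}$. Proposition \ref{prop:context_conditional} translates this into $d$-connection in $\Gcal$ given $\bm C$. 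Under the global Markov property of the selected distribution, $d$-separation implies the null holds; under Assumption \ref{asm:faithful_adjacency_latent_selection}(1), $d$-connection implies the alternative holds.

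\textbf{Adjacency between $I_k$ and $X_j$.} Proposition \ref{prop:adjacency_latent_selection}(3) with $\bm A_j'=\bm A_j\cup\bm A_{\bm S}$, together with Lemma \ref{lem:Markov_Aug} and Assumption \ref{asm:faithful_adjacency_latent_selection}(2), gives the analogous equivalence. One point needs care: the algorithm tests only $k\in\bm A_j$. For $X_k\notin\bm A_j$ I will argue that $I_k$ and $X_j$ are nonadjacent. If they were adjacent, the MAG would contain an edge $I_k$–$X_j$, which would make $I_k\not\indep X_j\mid\bm S$ in $\Aug$ (adjacent nodes are $m$-connected given anything). That contradicts the definition of $\bm A_j$.

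\textbf{Orientations.} These follow from Proposition \ref{prop:direction_latent_selection} with $\bm A_j'=\bm A_j\cup\bm A_{\bm S}$, which matches the membership checks $X_j\in\widehat{\bm A}_k\cup\widehat{\bm A}_{\bm S}$ used in Steps 5 and 6.

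The main obstacle is the last verification step, in particular checking that $\bm A_j\cup\bm A_{\bm S}$ lies between $\bm A_j$ and $\ancestor_{\Gcal_{\mid\bm C}}(X_j,\bm S)\setminus\bm L$, and that the pooled-data nulls correspond exactly to the oracle hypotheses. The first point rests on Lemma \ref{lem:S}: $\bm A_{\bm S}\subseteq\ancestor(\bm S)\cap\bm X$, which avoids $\bm L$. The second is obtained by combining Lemma \ref{lem:pool} with Lemma \ref{lem:Markov_Aug}. Once these hold, a union bound gives
\[
\Prob\big(\Ccal_\alpha\ne\Ecal(\TCM(\Gcal_{\mid\bm C},\bm I))\big)\le\Prob(E_1^c)+\Prob(E_2^c)\le\alpha+{\rm FWER}_2^\anterior+{\rm FWER}_2^{\rm Adj}.
\]
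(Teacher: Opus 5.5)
Your proposal is correct and follows essentially the same route as the paper. On the event that Stage I returns the true anterior sets, the Stage II tests coincide with the oracle tests, so exact recovery follows from Propositions~\ref{prop:adjacency_latent_selection} and \ref{prop:direction_latent_selection} together with the Markov property and Assumption~\ref{asm:faithful_adjacency_latent_selection}, and a union bound over Stage I and Stage II errors gives the stated bound. Your extra checks (nonadjacency of the untested pairs $I_k,X_j$ with $X_k\notin\bm A_j$, admissibility of $\bm A_j\cup\bm A_{\bm S}$, and the validity of pooling) make explicit what the paper leaves implicit. One small correction: $\bm A_{\bm S}\subseteq\ancestor_{\Gcal_{\mid\bm C}}(\bm S)\cap\bm X$ holds by definition, not by Lemma~\ref{lem:S}.
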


\begin{remark}[Applications to Downstream Inference]
    The constructed lower confidence set $\Ccal_\alpha$ can be applied to many downstream causal structure inference problems by checking the alignments between the causal structures and $\Ccal_\alpha$. Suppose we want to test the existence of latent confounding in $\Gcal_{\mid\bm C}$, we can set the test statistic as $\1\big(\text{exists }X_j\leftrightarrow X_k,\text{ or }I_j\rightarrow X_k,\text{ or }\{I_j-X_k,I_k-X_j\}\text{ in }\Ccal_\alpha\big)$. Then, Theorem \ref{thm:coverage_latent_selection} implies the asymptotic type I error control. If we want to test the existence of selection effect in $\Gcal_{\mid\bm C}$, the test statistic can be $\1\big(\widehat{\bm A}_{\bm S}\ne\emptyset\big)$, and Proposition \ref{prop:consistent_anterior_latent_selection} guarantees the type I error control.
\end{remark}

Algorithm \ref{alg:conf_MAG_latent_selection} is agnostic to the specific choice of CTS and CI tests. In Section C of the supplement \citep{hou2026supplement}, we show that the generalized covariance measure (GCM), originally developed for CI testing \citep{shah2020hardness}, is also applicable to CTS testing and provides doubly robust asymptotic type I error control. Power properties of GCM and related tests have been established under specific signal and conditions \citep{shah2020hardness,lundborg2024projected,scheidegger2022weighted}.

\subsection{Constraint-Query Optimality}

If $\bm C=\bm L=\bm S=\emptyset$, then $\TCM(\Gcal_{\mid\bm C},\bm I)=\Gcal$. By Theorem \ref{thm:power_latent_selection}, Algorithm \ref{alg:conf_MAG_latent_selection} identifies $\Gcal$ through at most $\frac{5}{2}d_X^2$ statistical tests. The $\frac{1}{2}d_X(d_X-1)$ possible adjacencies among $d_X$ system variables also suggest a quadratic lower bound. 

\begin{definition}[$m$-Constraint-Based Algorithm]\label{def:algorithm}
    Given a set of distributions $\Pcal_\Gcal$ associated with a graph $\Gcal$ of $\bm X$, a deterministic algorithm $\Acal$ is $m$-constraint-based if it maps the distributions $\Pcal_\Gcal$ to a graph, and accesses $\Pcal_\Gcal$ only through testing $m$ adaptively selected constraints. At step $j$, $\Acal$ selects a constraint $c_j$ based on the history $\big((c_l,\psi_l):l\in[j-1]\big)$, and queries the testing oracle for a response $\psi_j\in\{0,1\}$, indicating whether $c_j$ is satisfied. After $m$ queries, $\Acal$ outputs a graph based on $\big((c_j,\psi_j):\psi_j\in\{0,1\},j\in[m]\big)$.
\end{definition}

This algorithm class accommodates adaptive testing, including PC \citep{spirtes2000causation} and FCI algorithm \citep{spirtes1995causal,spirtes2001anytime}. Algorithm \ref{alg:conf_MAG_latent_selection} belongs to the class with $m\le \frac{5}{2}d_X^2$. The following theorem shows that any constraint-based algorithm requires $\frac{1}{2}d_X(d_X-1)$ queries in the worst case. Therefore, Algorithm \ref{alg:conf_MAG_latent_selection} achieves constraint-query optimality up to a multiplicative constant. 

\begin{theorem}[Constraint-Query Lower Bound]\label{thm:lower_bound}
    For any $m<\frac{1}{2}d_X(d_X-1)$, any $m$-constraint-based algorithm $\Acal$, and any distributions class $\big\{\Pcal_\Gcal:\text{$\Gcal$ is a DAG on }\bm X\big\}$, there exist a DAG $\Gcal$ such that $\Acal\big(\Pcal_\Gcal\big)\ne\Gcal$.
\end{theorem}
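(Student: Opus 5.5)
We will prove Theorem \ref{thm:lower_bound} by an adversary argument that keeps two candidate DAGs consistent with every query answer as long as some pair of vertices has not been ``probed''. The natural reference graph is the empty DAG $\Gcal_\emptyset$ on $\bm X$. For each unordered pair $\{X_j,X_k\}$ we take the DAG $\Gcal_{jk}$ that consists of the single edge $X_j\rightarrow X_k$. A constraint $c$ is any CI relation or distributional-invariance relation over $\bm X$ (and indicators, if available). We say $c$ \emph{touches} the pair $\{j,k\}$ if its truth value differs between $\Pcal_{\Gcal_\emptyset}$ and $\Pcal_{\Gcal_{jk}}$.

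The adversary answers every query $c_l$ as the oracle would under $\Gcal_\emptyset$. Because $\Acal$ is deterministic, the query sequence $c_1,\ldots,c_m$ is then fixed, and so is the output $\widehat\Gcal=\Acal(\Pcal_{\Gcal_\emptyset})$. The core step is to show that each constraint touches at most one pair. Suppose we are testing CI. In $\Gcal_{jk}$ the only dependencies are between $X_j$ and $X_k$, so a CI statement $(\bm X_1\indep\bm X_2\mid\bm X_3)$ fails only if $X_j\in\bm X_1$ and $X_k\in\bm X_2$, or the reverse. Such a constraint separating several variables could touch several pairs, so we first reduce to elementary constraints $X_a\indep X_b\mid\bm X_3$. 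Alternatively, and more safely, we restrict the lower bound to this natural query class, as the definition of $m$-constraint-based algorithms with single oracle bits suggests. For interventional invariances $P^{(k)}_{X_a\mid\cdot}=P^{(0)}_{X_a\mid\cdot}$, the constraint fails in $\Gcal_{jk}$ only when it concerns target $j$ and response $k$, which again involves a single pair. Given the distribution class, we choose faithful members: for each DAG in the class we pick $\Pcal_\Gcal$ whose constraints match the graphical separations. Any class to which the theorem applies should then contain, or be argued to contain, such representatives. Otherwise we interpret ``constraint'' graphically via $d$-separation oracle answers.

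With this in place, the counting goes as follows. There are $\frac12 d_X(d_X-1)$ pairs, and $m<\frac12 d_X(d_X-1)$ queries touch at most $m$ of them. Hence some pair $\{j,k\}$ is untouched, and every answer under $\Gcal_{jk}$ coincides with the answer under $\Gcal_\emptyset$. The algorithm therefore follows the identical history and outputs the same $\widehat\Gcal$ on both inputs. Since $\Gcal_\emptyset\ne\Gcal_{jk}$, $\widehat\Gcal$ is wrong for at least one of them, which yields the required DAG.

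The main obstacle is the ``one pair per query'' step. A general constraint such as a joint independence $\bm X_1\indep\bm X_2\mid\bm X_3$ with large sets distinguishes $\Gcal_\emptyset$ from many single-edge graphs simultaneously, and that would break the counting. I would first try to handle this by refining the adversary. Answer ``satisfied'' to a multi-pair constraint only when it holds in $\Gcal_\emptyset$. When it does, record that all pairs it touches are ``eliminated''. Then observe that a satisfied answer eliminates every touched pair only if we insist on single-edge alternatives. So we instead use alternatives indexed by pairs and argue via a covering bound. If that fails, we restrict Definition \ref{def:algorithm} to pairwise constraints (CI between two variables, or invariance of one variable under one intervention), which covers PC, FCI, and Algorithm \ref{alg:conf_MAG_latent_selection}, and run the argument above.
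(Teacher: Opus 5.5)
There is a genuine gap, and it sits at the step you flag yourself. Definition~\ref{def:algorithm} puts no restriction on what a constraint $c_j$ may be; it requires only that the oracle return one bit. The theorem also quantifies over an \emph{arbitrary} distribution class. So you are not entitled to reduce to elementary constraints $X_a\indep X_b\mid\bm X_3$, and you cannot choose faithful representatives $\Pcal_\Gcal$.

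Without those restrictions the ``one pair per query'' claim fails. A joint constraint such as ``$X_a\indep X_b$ for all $\{a,b\}\in T$'' touches every pair in $T$. A non-faithful $\Pcal_{\Gcal_{jk}}$ can also make even a pairwise constraint differ from its value under $\Pcal_{\Gcal_\emptyset}$.

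More fundamentally, your hard family $\{\Gcal_\emptyset\}\cup\{\Gcal_{jk}\}$ can never give a quadratic bound against one-bit queries. It has only $\frac12 d_X(d_X-1)+1$ members, so for a faithful class, binary search with constraints of the form above identifies the member in about $\log_2\big(\frac12 d_X(d_X-1)+1\big)$ queries. Your fallback, restricting to pairwise constraints and faithful classes, therefore proves a different and weaker statement, not the theorem as stated.

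The missing idea is a transcript-counting (pigeonhole) argument over an exponentially large family, which is what the paper uses.
\begin{itemize}
\item Fix the ordering $X_1,\ldots,X_{d_X}$. Take all $2^{\frac12 d_X(d_X-1)}$ DAGs obtained by including or omitting each edge $X_j\rightarrow X_k$ with $j<k$.
\item Since $\Acal$ is deterministic, $c_1$ is fixed and each later $c_j$ is a function of $(\psi_1,\ldots,\psi_{j-1})$. So the whole query sequence, and hence the output, is a function of the response vector $\Psi_\Gcal\in\{0,1\}^m$.
\item If $m<\frac12 d_X(d_X-1)$, then $2^m<2^{\frac12 d_X(d_X-1)}$. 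Hence two distinct DAGs $\Gcal\ne\Gcal'$ in the family share a response vector and receive the same output, so at least one output is wrong.
\end{itemize}
This argument uses nothing about the form of the constraints, faithfulness, or which pairs a query ``touches''. That is exactly why it covers the full generality of the statement, where your adversary cannot.
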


\section{Inference under Selection in the Absence of Latent Variables}\label{sec:selection}

In this section, we consider the setting where there is no latent variable in $\Gcal$, but there might be selection, i.e., $\bm Z=(\bm C, \bm X, \bm S)$. Here, we do not assume all system variables are intervened, and thus allow $K \le d_X$. In this case, we will show that the parent nodes of intervened variables in $\Gcal_{\mid\bm C}$ can be identified without introducing the MAG. To this end, we introduce the induced subgraph \citep{richardson2002ancestral} $\Gcal_{\bm X}$ of $\Gcal_{\mid\bm C}$ on $\bm X$ with edges
\[\Ecal(\Gcal_{\bm X})=\big\{\text{edge in }\Gcal_{\mid\bm C}: \text{both end nodes in }\bm X\big\}.\]

Our algorithm is motivated by the following lemma, which shows that a single test determines the parent set of the intervened variable. 

\begin{lemma}[Parents of Intervention Targets]\label{lem:parent_intervention} In the absence of latent variables, 
    for any $k\in[K]$ and $j\in[d_X]\setminus\{k\}$, we have $X_j\in\parent_{\Gcal_{\mid\bm C}}(X_k)$ if and only if $\big(I_k\not\indep X_j\mid\bm X_{-j},\bm S\big)_{\Aug(\Gcal_{\mid\bm C},I_k)}$.
\end{lemma}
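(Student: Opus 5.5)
The plan is to argue directly with $d$-separation in the augmented DAG $\Aug(\Gcal_{\mid\bm C},I_k)$, whose node set is $(\bm X,\bm S,I_k)$ since there are no latent variables. The conditioning set $(\bm X_{-j},\bm S)$ contains every node except $I_k$ and $X_j$, and in particular it contains $X_k$. The key structural facts I will use are three. First, $I_k$ is adjacent only to $X_k$, through the edge $I_k\rightarrow X_k$; in the SIS the context edges are removed. Second, selection variables have no children. Third, every non-endpoint node on any path between $I_k$ and $X_j$ is conditioned on.

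For the ``if'' direction, I will show that every path between $I_k$ and $X_j$ other than $I_k\rightarrow X_k\leftarrow X_j$ is blocked.
\begin{enumerate}[1)]
    \item Any such path begins $I_k\rightarrow X_k$. If the path continues $X_k\rightarrow W$, then $X_k$ is a conditioned non-collider, so the path is blocked.
    \item Otherwise the path continues $X_k\leftarrow W$ for some parent $W$ of $X_k$. Since $\bm S$ has no children, $W\notin\bm S$, so $W\in\bm X$.
    \item If $W\ne X_j$, then $W$ is a non-endpoint of the path and lies in the conditioning set $\bm X_{-j}$. The edge $W\rightarrow X_k$ is out of $W$, so $W$ is a non-collider on the path, and the path is blocked there.
\end{enumerate}
Hence, if $X_j\notin\parent_{\Gcal_{\mid\bm C}}(X_k)$, every path is blocked and $(I_k\indep X_j\mid\bm X_{-j},\bm S)_{\Aug(\Gcal_{\mid\bm C},I_k)}$ holds. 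This is the contrapositive of the ``if'' direction.

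For the ``only if'' direction, assume $X_j\in\parent_{\Gcal_{\mid\bm C}}(X_k)$. Then the path $I_k\rightarrow X_k\leftarrow X_j$ exists in the augmented graph. Its only non-endpoint is the collider $X_k$, which lies in the conditioning set because $k\ne j$. So the path is open, and $I_k$ and $X_j$ are $d$-connected given $(\bm X_{-j},\bm S)$.

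The argument is essentially routine. The only step that needs care is the blocking argument in the ``if'' direction: I must check that no non-conditioned node other than the endpoints can occur on a path. This is exactly where the absence of latent variables is used, because a latent parent of $X_k$ would not be conditioned on and could leave a path open. I must also use that $\bm S$ has no children, so that a selection variable can never appear as a parent of $X_k$, and that $\bm C$ has been removed in the SIS.
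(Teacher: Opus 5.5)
Your proposal is correct and takes essentially the same route as the paper. The forward direction uses the open collider path $I_k\rightarrow X_k\leftarrow X_j$. The converse notes that any open path starts with $I_k\rightarrow X_k$, that the conditioned $X_k$ must then be a collider, and that the next node is a non-collider parent of $X_k$, so it must be the unconditioned endpoint $X_j$. Your explicit check that this parent lies in $\bm X$ (no latents, $\bm S$ childless, $\bm C$ removed) fills in a step the paper leaves implicit when it simply names that node $X_l$.
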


Lemma \ref{lem:parent_intervention} implies that $K(d_X-1)$ fully parallel CTS tests, which we call \textbf{parent tests},  
\[P^{(k)}_{X_j\mid\bm C,\bm X_{-j},\bm S=\bm 1}=P_{X_j\mid\bm C,\bm X_{-j},\bm S=\bm 1}^{(0)},\quad j\ne k\]
identify the parent sets of intervened nodes. We summarize the procedure in Algorithm \ref{alg:conf_DAG_selection}.

\begin{algorithm}
\footnotesize
\caption{\footnotesize Lower Confidence Set for $\Ecal(\Gcal_{\bm X})$}
\label{alg:conf_DAG_selection}
\begin{algorithmic}[1]
\Statex{\bf Input:} Observational data $\Dcal_0$, interventional data $\Dcal_k$, $k\in[K]$, confidence level $1-\alpha$.
\Statex{\bf Output:} Lower Confidence set $\Ccal_\alpha$ for $\Ecal(\Gcal_{\bm X})$.
\State{\bf Step 1:} Set $\Ccal_\alpha=\emptyset$.
\State{\bf Step 2:} For $k\in[K]$, $j\in[d_X]\setminus\{k\}$, conduct CTS test $\psi_j^{P(k)}\in\{0,1\}$ for
\[H_{j,0}^{P(k)}:\Prob\big(p^{(k)}_{X_j\mid\bm C,\bm X_{-j},\bm S=\bm 1}\ne p_{X_j\mid\bm C,\bm X_{-j},\bm S=\bm 1}^{(0)}\big)=0\quad \text{vs} \quad H_{j,1}^{P(k)}:\Prob\big(p^{(k)}_{X_j\mid\bm C,\bm X_{-j},\bm S=\bm 1}\ne p_{X_j\mid\bm C,\bm X_{-j},\bm S=\bm 1}^{(0)}\big)>0.\]
with $\text{FWER}_1$ controlled below $\alpha$.
\State{\bf Step 3:} For $k\in[K]$, $j\in[d_X]\setminus\{k\}$, if $\psi_j^{P(k)}=1$, append $X_j\rightarrow X_k$ to $\Ccal_\alpha$.
\end{algorithmic}
\end{algorithm}

If all the parallel tests have family-wise type I error rate below $\alpha$, we can show that the coverage of $\Ccal_\alpha$ can be guaranteed without any faithfulness assumption.

\begin{assumption}[FWER$_1$]\label{asm:FWER_selection}
    The ${\rm FWER}_1$ of the fully parallel CTS parent tests $\big\{\psi_j^{P(k)}:k\in[K],j\in[d_X]\setminus\{k\}\big\}$ are controlled below $\alpha$.
\end{assumption}

\begin{theorem}[Coverage]\label{thm:coverage_selection}
    Under Assumptions \ref{asm:context} and \ref{asm:FWER_selection}, $\Ccal_\alpha$ has the desired coverage, i.e.,
    \[\Prob\big(\Ccal_\alpha\not\subseteq\Ecal(\Gcal_{\bm X})\big)\le\alpha.\]
\end{theorem}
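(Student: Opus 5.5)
We will show that on the event that every parent test rejecting is a true rejection, every edge appended to $\Ccal_\alpha$ belongs to $\Ecal(\Gcal_{\bm X})$. The probability bound then follows directly from Assumption \ref{asm:FWER_selection}. So the argument reduces to a deterministic statement: if the null $H_{j,0}^{P(k)}$ is false, then $X_j\rightarrow X_k$ is an edge of $\Gcal_{\bm X}$. Equivalently, by contraposition, if $X_j\notin\parent_{\Gcal_{\mid\bm C}}(X_k)$, then $H_{j,0}^{P(k)}$ holds. No faithfulness is needed, since only this Markov direction is used.

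The first step applies Lemma \ref{lem:parent_intervention}. If $X_j\notin\parent_{\Gcal_{\mid\bm C}}(X_k)$, then $(I_k\indep X_j\mid\bm X_{-j},\bm S)_{\Aug(\Gcal_{\mid\bm C},I_k)}$. Taking $\bm X_1=\{X_j\}$ and $\bm X_2=\bm X_{-j}$, Lemma \ref{lem:Markov_Aug} (which uses Assumption \ref{asm:context}) then gives
\[P^{(k)}_{X_j\mid\bm X_{-j},\bm C,\bm S}=P^{(0)}_{X_j\mid\bm X_{-j},\bm C,\bm S},\]
in the sense that the two conditional distributions admit a common version. Evaluating this at $\bm S=\bm 1$ yields equality of the conditional densities $P$-almost surely, which is exactly $H_{j,0}^{P(k)}$. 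Lemma \ref{lem:Markov_Aug} already absorbs the assignment ratio $\Prob(\bm I=e_k\mid\bm C_1)/\Prob(\bm I=e_k)$ and the spillover on $\bm C_2$ through conditioning on $\bm C$, so this step needs no further work. Because no latent variables are present, a parent of $X_k$ in $\Gcal_{\mid\bm C}$ that lies in $\bm X$ is the same as an edge $X_j\rightarrow X_k$ in $\Gcal_{\bm X}$.

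The main point requiring care is measure-theoretic. The null is written as $\Prob(p^{(k)}\ne p^{(0)})=0$, whereas Lemma \ref{lem:Markov_Aug} gives a common version. I would check that these agree: the selected law $P^{(k)}_{\cdot\mid \bm S=\bm 1}$ is absolutely continuous with respect to the control law on the conditioning variables, or at least the conditioning on $\bm S=\bm 1$ is well defined because $\Prob(\bm S=\bm 1)>0$ in every regime. A common version then implies that the densities agree except on a null set under the relevant distribution.

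With the deterministic inclusion in hand, the event $\{\Ccal_\alpha\not\subseteq\Ecal(\Gcal_{\bm X})\}$ is contained in the event that some $\psi_j^{P(k)}=1$ while $H^{P(k)}_{j,0}$ is true. That event is precisely a family-wise type I error, so its probability is at most $\alpha$ by Assumption \ref{asm:FWER_selection}.
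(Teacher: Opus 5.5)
Your proof is correct and follows the paper's argument. The paper likewise argues that, when no parent test falsely rejects, each rejection reflects a genuine conditional shift. By the Markov property under Assumption~\ref{asm:context} (your Lemma~\ref{lem:Markov_Aug} step), that shift gives a $d$-connection, and Lemma~\ref{lem:parent_intervention} turns it into a true parent edge, so the remaining probability is bounded by Assumption~\ref{asm:FWER_selection}. Your contrapositive phrasing and explicit appeal to Lemma~\ref{lem:Markov_Aug} just spell out the step the paper states as ``conditional dependence implies $d$-connection.''
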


To illustrate the power of Algorithm \ref{alg:conf_DAG_selection}, we make the following faithfulness assumption about the conducted parent tests.

\begin{assumption}[Faithful Conditional Dependence]\label{asm:faithful_parent_selection}
    For any $k\in[K]$ and $j\in[d_X]\setminus\{k\}$, if $\big(I_k\not\indep X_j\mid\bm X_{-j},\bm S\big)_{\Aug(\Gcal_{\mid\bm C},I_k)}$, then $\Prob\big(p_{X_j\mid\bm C,\bm X_{-j},\bm S=\bm 1}^{(k)}\ne p_{X_j\mid\bm C,\bm X_{-j},\bm S=\bm 1}^{(0)}\big)>0$.
\end{assumption}

Then we can guarantee that $\Ccal_\alpha$ identifies all the parent sets of the intervened variables. Particularly, if $K=d_X$, we recover $\Gcal_{\bm X}$.

\begin{theorem}[Power]\label{thm:power_selection}
    Under Assumptions \ref{asm:context}, \ref{asm:FWER_selection} and \ref{asm:faithful_parent_selection}, 
    \[\Prob\big(\Ccal_\alpha\ne \{X_j\rightarrow X_k: k\in[K], X_j\in\parent_{\Gcal_{\mid\bm C}}(X_k)\}\big)\le\alpha+{\rm FWER}_2^{P},\]
    where ${\rm FWER}_2^{P}$ is the family-wise type II error rate of the parent tests. If in addition $K=d_X$, 
    \[\Prob\big(\Ccal_\alpha\ne\Ecal(\Gcal_{\bm X})\big)\le\alpha+{\rm FWER}_2^P.\]
\end{theorem}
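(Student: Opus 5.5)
The argument is a union bound over two error events. The first, $E_1$, is that Algorithm \ref{alg:conf_DAG_selection} outputs something other than the true parent edge set. The second, $E_2$, is that some parent test rejects a true null or fails to reject a false null. By Assumption \ref{asm:FWER_selection} and the definition of ${\rm FWER}_2^P$, we have $\Prob(E_2)\le \alpha+{\rm FWER}_2^P$. It therefore suffices to show that $E_1\subseteq E_2$. Equivalently, on the event where every test is correct, the output equals $\{X_j\rightarrow X_k: k\in[K], X_j\in\parent_{\Gcal_{\mid\bm C}}(X_k)\}$.

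On the complement of $E_2$, every test $\psi_j^{P(k)}$ equals the indicator of $H_{j,1}^{P(k)}$, and the key step is to identify this indicator with the parenthood of $X_j$ for $X_k$. In one direction, suppose $X_j\in\parent_{\Gcal_{\mid\bm C}}(X_k)$. Lemma \ref{lem:parent_intervention} then gives the $d$-connection $(I_k\not\indep X_j\mid\bm X_{-j},\bm S)_{\Aug(\Gcal_{\mid\bm C},I_k)}$, and Assumption \ref{asm:faithful_parent_selection} converts this into $H_{j,1}^{P(k)}$. In the other direction, suppose $X_j\notin\parent_{\Gcal_{\mid\bm C}}(X_k)$. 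Lemma \ref{lem:parent_intervention} gives the $d$-separation, and Lemma \ref{lem:Markov_Aug}, applied with $\bm X_1=\{X_j\}$ and $\bm X_2=\bm X_{-j}$ under Assumption \ref{asm:context}, yields $P^{(k)}_{X_j\mid\bm X_{-j},\bm C,\bm S}=P^{(0)}_{X_j\mid\bm X_{-j},\bm C,\bm S}$. We then evaluate at $\bm S=\bm 1$ to obtain $H_{j,0}^{P(k)}$. Here we have to check that the common version means the densities agree almost surely on the relevant support, so that the probability in $H_{j,0}^{P(k)}$ is indeed $0$. This is the same step as in Theorem \ref{thm:coverage_selection}, which we may reuse.

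Step 3 of the algorithm appends $X_j\rightarrow X_k$ exactly when $\psi_j^{P(k)}=1$. On the complement of $E_2$ this happens exactly when $X_j$ is a parent of $X_k$, so $E_1\subseteq E_2$ and the first bound follows.

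For the second bound, take $K=d_X$, so that every $k$ ranges over all of $[d_X]$. Since there are no latent variables, $\Gcal_{\bm X}$ is the subgraph of the DAG $\Gcal_{\mid\bm C}$ restricted to $\bm X$. Its edges are therefore exactly the pairs $X_j\rightarrow X_k$ with $X_j\in\parent_{\Gcal_{\mid\bm C}}(X_k)\cap\bm X$. The target set in the first bound thus coincides with $\Ecal(\Gcal_{\bm X})$, and the second bound follows from the first.

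I expect no step to be a real obstacle. The most delicate point is the measure-theoretic passage from equality of a common version of the conditional laws given $\bm S$ to the null $H_{j,0}^{P(k)}$, which is stated at $\bm S=\bm 1$ and in terms of densities.
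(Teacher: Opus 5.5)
Your proposal is correct and follows essentially the same route as the paper's proof. Both use a union bound over type I and type II errors of the parent tests. On the event that every test is correct, Lemma \ref{lem:parent_intervention} combined with the Markov property (Lemma \ref{lem:Markov_Aug}) rules out false edges, and Assumption \ref{asm:faithful_parent_selection} guarantees every true parent edge is detected. The $K=d_X$ case follows because the target set then coincides with $\Ecal(\Gcal_{\bm X})$.
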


When only a subset of observed variables are intervened on, Algorithm \ref{alg:conf_MAG_latent_selection} discards all variables without intervention and learns a MAG on the intervened variables. Since edges in this latent projected graph may not exist in the MAG on all observed variables, it is unclear how to combine the learned MAG and unintervened variables for a finer causal discovery. Unlike Algorithm \ref{alg:conf_MAG_latent_selection}, even if $K<d_X$, all edges recovered by Algorithm \ref{alg:conf_DAG_selection} are in $\Gcal_{\bm X}$. Then, we can treat $\Ccal_\alpha$ as prior knowledge for standard algorithms such as PC and FCI to learn the remaining structures in $\Gcal_{\bm X}$ more efficiently.

A line of work \citep{fan2024environment,gu2025causality,peters2016causal,gu2025fundamental} aims to identify causal parents of the response through many interventions on covariates using the invariance principle. While \cite{gu2025fundamental} shows that learning from invariance is NP-hard even under linear models, Lemma \ref{lem:parent_intervention} and Algorithm \ref{alg:conf_DAG_selection} imply that a single intervention on the response variable and $d$ tests are sufficient for identifying the parents of the response from $d$ covariates.

\section{Numerical Results}\label{sec:numerical}

In this section, we evaluate the numerical performance of the proposed methods using both simulation data and real data.

\subsection{Simulation Evaluations}\label{sec:simulation}

For the simulation study, we consider two data-generating models with continuous (M1) and count variables (M2), respectively, with interventions on all system variables, i.e., $K=d_X$. Due to limited space, we only present the count model (M2) and defer the complete model specification of the continuous model (M1) to Section E of the supplement \citep{hou2026supplement}.

\begin{enumerate}[(M1)]
    \item $\bm Z=(\bm C,\bm X,\bm L,\bm S)$, $\bm C\in\R^2$, $\bm X\in\R^{20}$, $\bm L^{(k)}\in\R^{d_L}$, $\bm S\in\{0,1\}^{d_S}$.
    \item $\bm Z=(\bm C,\bm X,\bm L,\bm S)$, $\bm C=(C_1,C_2,\ell)$ with $(C_1^{(k)},C_2^{(k)})\sim N(\bm 0,I_2), \log \ell^{(k)}\sim N(1,0.01)$, $\bm X\in\R^{20}$, $\bm L\in\R^{d_L}$, $\bm S\in\{0,1\}^{d_S}$, $\epsilon_{L_j}^{(k)}\sim N(0,0.2^2)$, $\epsilon_j^{(k)}\sim N(0,0.2^2)$,
    \[L_j^{(k)}\mid\bm C^{(k)},\epsilon_{L_j}^{(k)}\sim {\rm NB}\big(\ell^{(k)}\exp(\text{clip}_{[-10,10]}(b_{j,1}C_1^{(k)}+b_{j,2}C_2^{(k)}+\epsilon_{L_j}^{(k)})), 10\big),\]
    \[X_j^{(k)}\mid \parent_{\Gcal_{\mid\bm C}}(X_j^{(k)}),\bm C^{(k)},\epsilon_j^{(k)}\sim{\rm NB}\big(\ell\exp(\text{clip}_{[-10,10]}(\eta_j^{(k)})),10\big),\]
    \begin{eqnarray*}\eta_j^{(k)}=\bm\gamma_j^\top\bm C^{(k)}&+&\sum_{Z_l^{(k)}\in\parent_{\Gcal_{\mid\bm C}}(X_j^{(k)})}\beta_{j,l}H_l^{(k)}\\
    &+&0.02\sum_{\{Z_l^{(k)},Z_m^{(k)}\}\subset\parent_{\Gcal_{\mid\bm C}}(X_j^{(k)})}\beta_{j,l}\beta_{j,m}H_l^{(k)}H_m^{(k)}-4\1(j=k)+\epsilon_j^{(k)},
    \end{eqnarray*}
    \[H_l^{(k)}=\log\bigg(1+\frac{10^4Z_l^{(k)}}{\ell^{(k)}}\bigg),\quad S_i=\1\bigg(\sum_{Z_l\in\parent_{\Gcal}(S_i)}a_{i,l}Z_l>\tau\bigg),\quad b_{j,l},\gamma_{j,l},\beta_{j,l},a_{j,l}\in\{0.2,-0.2\},\]
   and  $\tau$ is the largest real number such that $\Prob_{P^{(k)}}(\bm S=\bm 1) \ge 0.05$ for all $k=0,\ldots,20$.
\end{enumerate}

In models (M1) and (M2), we have $d_X=20$ observed variables with fixed topological ordering $X_1,\ldots,X_{20}$. $0-2$ parents are randomly selected from preceding variables. Each latent variable has two randomly selected observed children, and each selection variable has two randomly selected observed parents. The true DAGs are reported in Section E of the supplement \citep{hou2026supplement}. In both models, the control group and each intervention group have $n_k=500$ samples. (M1) considers continuous $\bm X$ with two context variables. (M2) mimics the Perturb-seq data, where we consider negative binomial (NB) count data $\bm X$ and include the library size $\ell$ as an additional context variable. For the $d_S$ selection variables and $d_L$ latent variables, we consider three combinations $(d_S,d_L)\in\{(2,2),(2,0),(0,2)\}$. 

In addition to Algorithms \ref{alg:conf_MAG_latent_selection} and \ref{alg:conf_DAG_selection}, to demonstrate the necessity of dealing with selection, we introduce another Algorithm D.1 in the supplement \citep{hou2026supplement} by assuming the absence of selection in Algorithm \ref{alg:conf_MAG_latent_selection}. We also compare with the state-of-the-art $\Fcal$-FCI algorithm in \cite{luo2025characterization}. In Algorithms \ref{alg:conf_MAG_latent_selection}, \ref{alg:conf_DAG_selection} and D.1, we use the GCM-based test described in Section C of the supplement \citep{hou2026supplement} for all tests, with FWER control at $\alpha=0.05$ implemented via multiplier bootstrap \citep{shah2020hardness}. The nuisance functions in GCM are estimated by Super Learner \citep{van2007super}. For $\Fcal$-FCI, we follow the default implementation in \cite{luo2025characterization}, which chooses KCI \citep{KCI} for CI tests with significance levels $\alpha=0.1$ in skeleton search and $\alpha=0.05$ in orientation. We didn't apply a multiplicity correction to $\Fcal$-FCI because its released implementation uses per-test significance thresholds and adaptively determines which CI tests to perform. To adjust for the context variables, we additionally condition on $\bm C$ in all KCI tests in $\Fcal$-FCI.

Although our algorithm is designed for learning causal graphs, Remark \ref{rem:interpretation_latent_selection} reveals that some information has a more direct causal interpretation. $\bm A_j$ is the causally anterior set of $X_j$. When there is no isolated selection, $\bm A_{\bm S}$ contains all observed ancestors of selection. Any pair in $\bm E_{\bm L}$ \eqref{eq:EL} indicates the existence of latent confounding.

Since Algorithms \ref{alg:conf_MAG_latent_selection}, D.1 learn $\TCM(\Gcal_{\mid\bm C},\bm I)$ on $(\bm X,\bm I)$, Algorithm \ref{alg:conf_DAG_selection} learns $\Gcal_{\bm X}$, and $\Fcal$-FCI learns the $\Fcal$-partial ancestral graph ($\Fcal$-PAG) on $\bm X$, these algorithms are not directly comparable. However, when $K=d_X$, both $\TCM(\Gcal_{\mid\bm C},\bm I)$ and $\Fcal$-PAG imply $\bm A_j$, $\bm A_{\bm S}$ and $\MAG(\Gcal_{\mid\bm C})$. When $d_L=0$, $\Gcal_{\bm X}$ can be identified from both $\TCM(\Gcal_{\mid\bm C},\bm I)$ and $\Fcal$-PAG. In summary, we consider 6 inferencial targets $\bm A_j,\bm A_{\bm S},\bm E_{\bm L},\TCM(\Gcal_{\mid\bm C},\bm I),\MAG(\Gcal_{\mid\bm C}),\Gcal_{\bm X}$ and conduct 100 experiments. For each target and algorithm, we calculate the FWER$_1$ and FWER$_2$ among the 100 experiments and calculate the average false discovery rate (FDR) and false negative rate (FNR). All the simulation results are reported in Tables \ref{tab:error_MAG_latent_selection}, \ref{tab:error_MAG_selection} and \ref{tab:error_MAG_latent}. The sizes of these true targets are reported in Table \ref{tab:simulation_size}.

\begin{table}
    \caption{\footnotesize Sizes of true inferential targets under 3 combinations of $(d_S,d_L)$ in models (M1) and (M2).}
    \label{tab:simulation_size} 
    \footnotesize
    \centering
    \begin{tabular*}{\columnwidth}{@{\extracolsep\fill}cccccccc@{\extracolsep\fill}}\hline
    Model & $(d_S,d_L)$ & $\sum_{j\in[d_X]}|\bm A_j|$ & $|\bm A_{\bm S}|$ & $|\bm E_{\bm L}|$ & $|\Ecal(\TCM(\Gcal_{\mid\bm C},\bm I))|$ & $|\Ecal(\MAG(\Gcal_{\mid\bm C}))|$ & $|\Ecal(\Gcal_{\bm X})|$ \\\hline
    \multirow{3}{*}{(M1)} & (2,2) & 74 & 6 & 2 & 30 & 20 & 16 \\
     & (2,0) & 98 & 7 & 0 & 26 & 20 & 18 \\
     & (0,2) & 40 & 0 & 2 & 21 & 20 & 18 \\[8pt]
    \multirow{3}{*}{(M2)} & (2,2) & 237 & 12 & 2 & 73 & 50 & 37 \\
     & (2,0) & 237 & 12 & 0 & 66 & 45 & 37 \\
     & (0,2) & 100 & 0 & 2 & 42 & 41 & 37 \\\hline
    \end{tabular*}
 
\end{table}

\begin{table}
      \caption{\footnotesize Errors of Algorithms \ref{alg:conf_MAG_latent_selection}, \ref{alg:conf_DAG_selection}, D.1 and $\Fcal$-FCI under (M1) and (M2) with selection and latent variables $d_S=2, d_L=2$. Alg \ref{alg:conf_MAG_latent_selection}: proposed algorithm allowing for latent variables and selection; Alg \ref{alg:conf_DAG_selection}: proposed algorithm in the absence of latent variables;  Alg D.1: proposed algorithm in the absence of selection; $\Fcal$-FCI:  FCI algorithm of \cite{luo2025characterization}.}
    \label{tab:error_MAG_latent_selection} \footnotesize
    \centering
    \begin{tabular*}{\columnwidth}{@{\extracolsep\fill}cccccccccc@{\extracolsep\fill}}\hline
     & & \multicolumn{4}{c}{(M1)} & \multicolumn{4}{c}{(M2)}\\\cline{3-6}\cline{7-10}
    Target & Methods & FWER$_1$ & FWER$_2$ & FDR & FNR & FWER$_1$ & FWER$_2$ & FDR & FNR\\\hline
    \multirow{3}{*}{$\bm A_j$} & Alg \ref{alg:conf_MAG_latent_selection} & 0.020 & 0.050 & 0.007 & 0.009 & 0.010 & 0.060 & 0.002 & 0.009 \\
    & Alg D.1 & 0.000 & 1.000 & 0.000 & 0.518 & 0.010 & 1.000 & 0.002 & 0.279 \\
    & $\Fcal$-FCI & 0.480 & 0.250 & 0.053 & 0.046 & 0.560 & 1.000 & 0.038 & 0.589 \\[8pt]
    \multirow{2}{*}{$\bm A_{\bm S}$} & Alg \ref{alg:conf_MAG_latent_selection} & 0.020 & 0.010 & 0.007 & 0.003 & 0.010 & 0.030 & 0.003 & 0.009 \\
    & $\Fcal$-FCI & 0.390 & 0.180 & 0.108 & 0.035 & 0.500 & 1.000 & 0.114 & 0.513 \\[8pt]
    \multirow{2}{*}{$\bm E_{\bm L}$} & Alg \ref{alg:conf_MAG_latent_selection} & 0.030 & 0.040 & 0.010 & 0.020 & 0.020 & 1.000 & 0.015 & 0.645 \\
    & Alg D.1 & 0.950 & 0.000 & 0.432 & 0.000 & 1.000 & 0.890 & 0.846 & 0.570 \\[8pt]
    \multirow{2}{*}{$\TCM(\Gcal_{\mid\bm C},\bm I)$} & Alg \ref{alg:conf_MAG_latent_selection} & 0.060 & 0.080 & 0.008 & 0.008 & 0.070 & 1.000 & 0.009 & 0.239 \\
    & Alg D.1 & 1.000 & 1.000 & 0.459 & 0.603 & 1.000 & 1.000 & 0.556 & 0.751 \\[8pt]
    \multirow{3}{*}{$\MAG(\Gcal_{\mid\bm C})$} & Alg \ref{alg:conf_MAG_latent_selection} & 0.050 & 0.030 & 0.004 & 0.002 & 0.060 & 1.000 & 0.005 & 0.153 \\
    & Alg D.1 & 1.000 & 1.000 & 0.172 & 0.189 & 1.000 & 1.000 & 0.174 & 0.427 \\
    & $\Fcal$-FCI & 0.990 & 1.000 & 0.117 & 0.164 & 1.000 & 1.000 & 0.278 & 0.610 \\[8pt]
    $\Gcal_{\bm X}$ & Alg \ref{alg:conf_DAG_selection} & 1.000 & 0.000 & 0.209 & 0.000 & 0.530 & 1.000 & 0.064 & 0.752 \\\hline
    
    \end{tabular*}
 
\end{table}

\begin{table}
    \caption{\footnotesize Errors of Algorithms \ref{alg:conf_MAG_latent_selection}, \ref{alg:conf_DAG_selection}, D.1 and $\Fcal$-FCI under (M1) and (M2) with selection in the absence of latent variables $d_S=2, d_L=0$. Alg \ref{alg:conf_MAG_latent_selection}: proposed algorithm allowing for latent variables and selection; Alg \ref{alg:conf_DAG_selection}: proposed algorithm in the absence of latent variables;  Alg D.1: proposed algorithm in the absence of selection; $\Fcal$-FCI:  FCI algorithm of \cite{luo2025characterization}.}
    \label{tab:error_MAG_selection}
    \footnotesize
    \centering
    \begin{tabular*}{\columnwidth}{@{\extracolsep\fill}cccccccccc@{\extracolsep\fill}}\hline
     & & \multicolumn{4}{c}{(M1)} & \multicolumn{4}{c}{(M2)}\\\cline{3-6}\cline{7-10}
    Target & Methods & FWER$_1$ & FWER$_2$ & FDR & FNR & FWER$_1$ & FWER$_2$ & FDR & FNR \\\hline
    \multirow{3}{*}{$\bm A_j$} & Alg \ref{alg:conf_MAG_latent_selection} & 0.010 & 0.000 & 0.001 & 0.000 & 0.000 & 0.170 & 0.000 & 0.052 \\
    & Alg D.1 & 0.010 & 1.000 & 0.000 & 0.613 & 0.010 & 1.000 & 0.001 & 0.393 \\
    & $\Fcal$-FCI & 0.500 & 0.540 & 0.070 & 0.173 & 0.670 & 1.000 & 0.047 & 0.549 \\[8pt]
    \multirow{2}{*}{$\bm A_{\bm S}$} & Alg \ref{alg:conf_MAG_latent_selection} & 0.000 & 0.000 & 0.000 & 0.000 & 0.000 & 0.170 & 0.000 & 0.071 \\
    & $\Fcal$-FCI & 0.460 & 0.420 & 0.108 & 0.077 & 0.640 & 1.000 & 0.163 & 0.540 \\[8pt]
    \multirow{2}{*}{$\bm E_{\bm L}$} & Alg \ref{alg:conf_MAG_latent_selection} & 0.040 & 0.000 & 0.040 & 0.000 & 0.010 & 0.000 & 0.010 & 0.000 \\
    & Alg D.1 & 1.000 & 0.000 & 1.000 & 0.000 & 0.930 & 0.000 & 0.930 & 0.000 \\[8pt]
    \multirow{2}{*}{$\TCM(\Gcal_{\mid\bm C},\bm I)$} & Alg \ref{alg:conf_MAG_latent_selection} & 0.040 & 0.910 & 0.002 & 0.039 & 0.190 & 1.000 & 0.043 & 0.214 \\
    & Alg D.1 & 1.000 & 1.000 & 0.481 & 0.538 & 1.000 & 1.000 & 0.601 & 0.758 \\[8pt]
    \multirow{3}{*}{$\MAG(\Gcal_{\mid\bm C})$} & Alg \ref{alg:conf_MAG_latent_selection} & 0.020 & 0.910 & 0.001 & 0.051 & 0.180 & 1.000 & 0.019 & 0.102 \\
    & Alg D.1 & 1.000 & 1.000 & 0.174 & 0.164 & 1.000 & 1.000 & 0.203 & 0.346 \\
    & $\Fcal$-FCI & 0.980 & 0.980 & 0.096 & 0.094 & 1.000 & 1.000 & 0.322 & 0.586 \\[8pt]
    \multirow{3}{*}{$\Gcal_{\bm X}$} & Alg \ref{alg:conf_MAG_latent_selection} & 0.000 & 0.000 & 0.000 & 0.000 & 0.190 & 1.000 & 0.011 & 0.179 \\
    & Alg \ref{alg:conf_DAG_selection} & 0.050 & 0.000 & 0.003 & 0.000 & 0.060 & 1.000 & 0.008 & 0.790 \\
    & Alg D.1 & 1.000 & 1.000 & 0.194 & 0.110 & 0.990 & 0.820 & 0.070 & 0.076 \\
    & $\Fcal$-FCI & 0.930 & 0.260 & 0.185 & 0.016 & 1.000 & 1.000 & 0.263 & 0.425 \\\hline
    
    \end{tabular*}
 
\end{table}

\begin{table}
   \caption{\footnotesize Errors of Algorithms \ref{alg:conf_MAG_latent_selection}, \ref{alg:conf_DAG_selection}, D.1 and $\Fcal$-FCI under (M1) and (M2) with latent variables in the absence of selection $d_S=0, d_L=2$. Alg \ref{alg:conf_MAG_latent_selection}: proposed algorithm allowing for latent variables and selection; Alg \ref{alg:conf_DAG_selection}: proposed algorithm in the absence of latent variables;  Alg D.1: proposed algorithm in the absence of selection; $\Fcal$-FCI:  FCI algorithm of \cite{luo2025characterization}.}
    \label{tab:error_MAG_latent}
    \footnotesize
    \centering
    \begin{tabular*}{\columnwidth}{@{\extracolsep\fill}cccccccccc@{\extracolsep\fill}}\hline
     & & \multicolumn{4}{c}{(M1)} & \multicolumn{4}{c}{(M2)}\\\cline{3-6}\cline{7-10}
    Target & Methods & FWER$_1$ & FWER$_2$ & FDR & FNR & FWER$_1$ & FWER$_2$ & FDR & FNR\\\hline
    \multirow{3}{*}{$\bm A_j$} & Alg \ref{alg:conf_MAG_latent_selection} & 0.000 & 0.000 & 0.000 & 0.000 & 0.010 & 0.000 & 0.000 & 0.000 \\
    & Alg D.1 & 0.000 & 0.000 & 0.000 & 0.000 & 0.010 & 0.000 & 0.000 & 0.000 \\
    & $\Fcal$-FCI & 0.790 & 0.070 & 0.174 & 0.005 & 0.900 & 1.000 & 0.143 & 0.413 \\[8pt]
    \multirow{2}{*}{$\bm A_{\bm S}$} & Alg \ref{alg:conf_MAG_latent_selection} & 0.000 & 0.000 & 0.000 & 0.000 & 0.000 & 0.000 & 0.000 & 0.000 \\
    & $\Fcal$-FCI & 0.750 & 0.000 & 0.750 & 0.000 & 0.850 & 0.000 & 0.850 & 0.000 \\[8pt]
    \multirow{2}{*}{$\bm E_{\bm L}$} & Alg \ref{alg:conf_MAG_latent_selection} & 0.000 & 0.000 & 0.000 & 0.000 & 0.000 & 0.000 & 0.000 & 0.000 \\
    & Alg D.1 & 0.000 & 0.000 & 0.000 & 0.000 & 0.000 & 0.000 & 0.000 & 0.000 \\[8pt]
    \multirow{2}{*}{$\TCM(\Gcal_{\mid\bm C},\bm I)$} & Alg \ref{alg:conf_MAG_latent_selection} & 0.000 & 0.000 & 0.000 & 0.000 & 0.000 & 0.030 & 0.000 & 0.001 \\
    & Alg D.1 & 0.000 & 0.000 & 0.000 & 0.000 & 0.000 & 0.040 & 0.000 & 0.001 \\[8pt]
    \multirow{3}{*}{$\MAG(\Gcal_{\mid\bm C})$} & Alg \ref{alg:conf_MAG_latent_selection} & 0.000 & 0.000 & 0.000 & 0.000 & 0.000 & 0.030 & 0.000 & 0.001 \\
    & Alg D.1 & 0.000 & 0.000 & 0.000 & 0.000 & 0.000 & 0.040 & 0.000 & 0.001 \\
    & $\Fcal$-FCI & 0.980 & 0.990 & 0.088 & 0.073 & 1.000 & 1.000 & 0.155 & 0.442 \\[8pt]
    $\Gcal_{\bm X}$ & Alg \ref{alg:conf_DAG_selection} & 1.000 & 0.010 & 0.214 & 0.001 & 0.840 & 1.000 & 0.162 & 0.808 \\\hline
    
    \end{tabular*}
  
\end{table}

In Table \ref{tab:error_MAG_latent_selection} with $d_S=2,d_L=2$, we have both Algorithms \ref{alg:conf_DAG_selection} and D.1 to be misspecified. For Algorithm \ref{alg:conf_MAG_latent_selection}, the FWER$_1$'s for $\bm A_j,\bm A_{\bm S}$ are controlled below $\frac{\alpha}{2}=0.025$ for both (M1) and (M2). The FWER$_1$'s for $\bm E_{\bm L},\TCM(\Gcal_{\mid\bm C},\bm I),\MAG(\Gcal_{\mid\bm C})$ are approximately at the level of $\alpha=0.05$. Meanwhile, other algorithms all have a significant FWER$_1$. Moreover, the average FDR and FNR are also smaller than those of other algorithms. Compared to (M1), Algorithm \ref{alg:conf_MAG_latent_selection} has worse FWER$_2$ and FNR under (M2). This is largely due to the less information contained in the count data and the additional noise introduced by the negative binomial link. For $\bm E_{\bm L}$ under (M2), we have $|\bm E_{\bm L}|=2$, but Algorithm \ref{alg:conf_MAG_latent_selection} only identifies at most one true pair in $\bm E_{\bm L}$ in all 100 experiments, leading to an FWER$_2=1$ and an FNR of 0.645. For Algorithm D.1, due to the no-selection assumption, when both $X_j\in\bm A_k$ and $X_k\in\bm A_j$ appear, Algorithm D.1 has to discard one of the ancestry structures, leading to a large FNR for $\bm A_j$. For Algorithm \ref{alg:conf_DAG_selection}, due to $d$-connections implied through latent variables, the parent test in Algorithm \ref{alg:conf_DAG_selection} fails to control the type I error, leading to a large FWER$_1$ and FDR.

In Table \ref{tab:error_MAG_selection} with $d_S=2,d_L=0$, only Algorithm D.1 is misspecified. Under (M1), both Algorithms \ref{alg:conf_MAG_latent_selection} and \ref{alg:conf_DAG_selection} have controlled FWER$_1$ with small FDR and FNR. Under (M2), Algorithm \ref{alg:conf_MAG_latent_selection} has a FWER$_1=0.19$ for $\TCM(\Gcal_{\mid\bm C},\bm I)$. Among the 19 experiments with false discovery, 17 of them have false negatives in Stage I of Algorithm \ref{alg:conf_MAG_latent_selection} due to a weak selection effect in our parameter realization. This leads to false negatives in $\widehat{\bm A}_j$ and $\widehat{\bm A}_{\bm S}$ in the 17 experiments. Since the type I error control for Stage II relies on the correct detection of $\bm A_j$ and $\bm A_{\bm S}$, these false negatives in Stage I lead to an inflated FWER$_1$, which aligns with Theorem \ref{thm:coverage_latent_selection}. For Algorithm \ref{alg:conf_DAG_selection} under (M2), the strong knockdown of $I_k$ makes the intervened variable $X_k$ almost degenerate at 0, which is poorly overlapped with $X_k$ in the control group. Then the conditioning set for the test $I_k, X_j\mid\bm C, \bm X_{-j},\bm S=\bm 1$ becomes almost disjoint under control and intervention. This lack of comparable information leads to low power of the conditional two-sample tests. Algorithm D.1 and $\Fcal$-FCI have a significant FWER$_1$ with larger FDR and FNR.

In Table \ref{tab:error_MAG_latent} with $d_S=0,d_L=2$, Algorithm \ref{alg:conf_DAG_selection} is misspecified. Both Algorithms \ref{alg:conf_MAG_latent_selection} and D.1 have near-perfect FWER$_1$, FWER$_2$, FDR, and FNR. Whereas Algorithm \ref{alg:conf_DAG_selection} and $\Fcal$-FCI have significant FWER$_1$ with larger FDR and FNR.

\subsection{Analysis of A549 lung cancer cells stimulated with interferon-$\beta$}

For real data analysis, we focus on a Perturb-seq dataset consisting of A549 lung cancer cells stimulated with interferon-$\beta$ (IFN-$\beta$) that was reported in \cite{jiang2025systematic}, to study the causal gene regulatory relationships within the IFN signaling pathway. 
In this experiment, cells expressing a CRISPRi system (dCas9–KRAB–MeCP2) were subjected to targeted knockdown of 61 IFN-$\beta$ pathway regulator genes, each perturbed using multiple single guide RNAs (sgRNAs), alongside control cells with non-targeting guides. Following 24 hours of IFN-$\beta$ stimulation, single-cell RNA sequencing yielded an expression matrix containing 34,025 gene features (including non-coding RNA genes, pseudogenes, and about 20,000 protein-coding genes) across 32,233 cells, with 124--950 cells per perturbation. The resulting data provide a high-dimensional mapping from targeted regulatory perturbations to genome-wide gene expression responses under a fixed signaling context.

We treat batch and the percentage of mitochondrial transcripts from the original data as context variables, following the covariate-adjustment practice in \cite{barry2024robust}. Batch accounts for systematic technical variation across experimental batches, whereas mitochondrial transcript percentage serves as a proxy for cell quality. For each gene, we normalize the UMI count by library size and apply a log transformation, $X_{ij}=\log\big(1+\frac{10^4g_{ij}}{\ell_i}\big)$, where $g_{ij}$ denotes the UMI count of the $j$th gene in the $i$th cell, and $\ell_i$ denotes the total UMI count in the $i$th cell $i$. We then select the intervention groups for which the intervention causes a significant change in the distribution of the intervened gene conditional on context variables, at a significance level of 0.05. Restricting the analyzed system to the corresponding target genes gives $d_X=51$ and $d_C=2$. The final dataset contains 27,555 cells, including 1,379 controls.

We implement Algorithm \ref{alg:conf_MAG_latent_selection} in the same way as in Section \ref{sec:simulation}, resulting in a graph $\Ccal_\alpha$ with 246 edges. For simplicity, we only report some implied causal information. Figure E.7 in the supplement \citep{hou2026supplement} encodes the anterior sets $\widehat{\bm A}_j$, which characterize the genes that need to be adjusted when studying causal structures (Propositions \ref{prop:adjacency_latent_selection} and \ref{prop:direction_latent_selection}). Among the 51 genes, we detect 6 genes in $\bm A_{\bm S}\subseteq\ancestor_{\Gcal_{\mid\bm C}}(\bm S)\cap\bm X$, including IFNAR1, IFNAR2, JAK1, STAT2, TYK2, USP18. Figure \ref{fig:selection} illustrates the construction of $\widehat{\bm A}_{\bm S}$. The CTS tests show $\text{USP18}\in\widehat{\bm A}_\text{STAT2}$ and $\text{STAT2}\in\widehat{\bm A}_\text{USP18}$, this mutual influence implies $\{\text{USP18, STAT2}\}\subseteq\widehat{\bm A}_{\bm S}$. Then the detected anteriority $\{\text{TYK2, IFNAR1, IFNAR2, JAK1}\}\subseteq\widehat{\bm A}_\text{STAT2}$ concludes all the 6 genes are in $\widehat{\bm A}_{\bm S}$. This finding is biologically plausible because IFN-$\beta$ exerts antiproliferative pressure on A549 cells, such that alterations in IFNAR--JAK--STAT signaling affect cell growth and the likelihood that specific subclones are retained in the population \citep{hiebinger2023tumour}. Moreover, USP18 protects A549 cell growth under interferon stimulation by non-catalytically attenuating JAK--STAT signaling \citep{clancy2023isgylation}.

\begin{figure}[!htbp]
    \centering
    \includegraphics[width=0.7\linewidth]{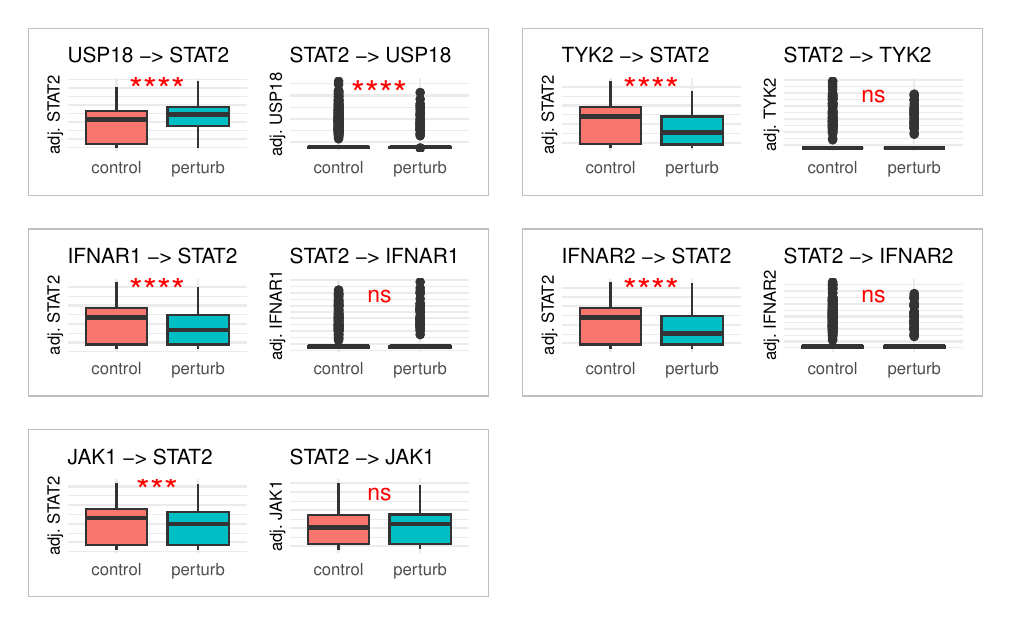}
    \caption{\footnotesize Demonstration of selection ancestors in $\bm A_{\bm S}$. The panel title $X_j\rightarrow X_k$ indicates that $X_j$ is the intervention target and we assess a distribution shift in $X_k$ conditional on context variables. Boxplots show the regression residuals of $X_k$ on context variables for control and $j$th interventional groups. Red stars indicate significance of CTS tests, ns indicates not significant.}
    \label{fig:selection}
\end{figure}

We identified 200 gene pairs in $\widehat{\bm E}_{\bm L}$ exhibiting graph features that imply the presence of latent confounding under Remark \ref{rem:interpretation_latent_selection}, where the latent confounders may include genes that are outside the 51-gene system. The prevalence of such pairs is plausible because our analysis is restricted to only 51 genes, representing a small subset of the underlying gene regulatory system. Numerous unmeasured genes may act as common causes of the analyzed genes and therefore appear as latent confounders in the causal graph. The 200 pairs in $\widehat{\bm E}_{\bm L}$ are reported in Section E in the supplement \citep{hou2026supplement}. Figure \ref{fig:confounder} illustrates 6 randomly selected bidirected edges. Consider TRIM22 $\leftrightarrow$ IFI16 as an example. Intervening on either gene doesn't induce a conditional distribution shift in the other, providing no evidence of an ancestral relationship between them. Nevertheless, TRIM22 and IFI16 remain dependent (with CI test $p$-value $\approx 10^{-19}$) after conditioning on their detected anterior genes, supporting the presence of latent confounding between them.

\begin{figure}[!htbp]
    \centering
    \includegraphics[width=0.9\linewidth]{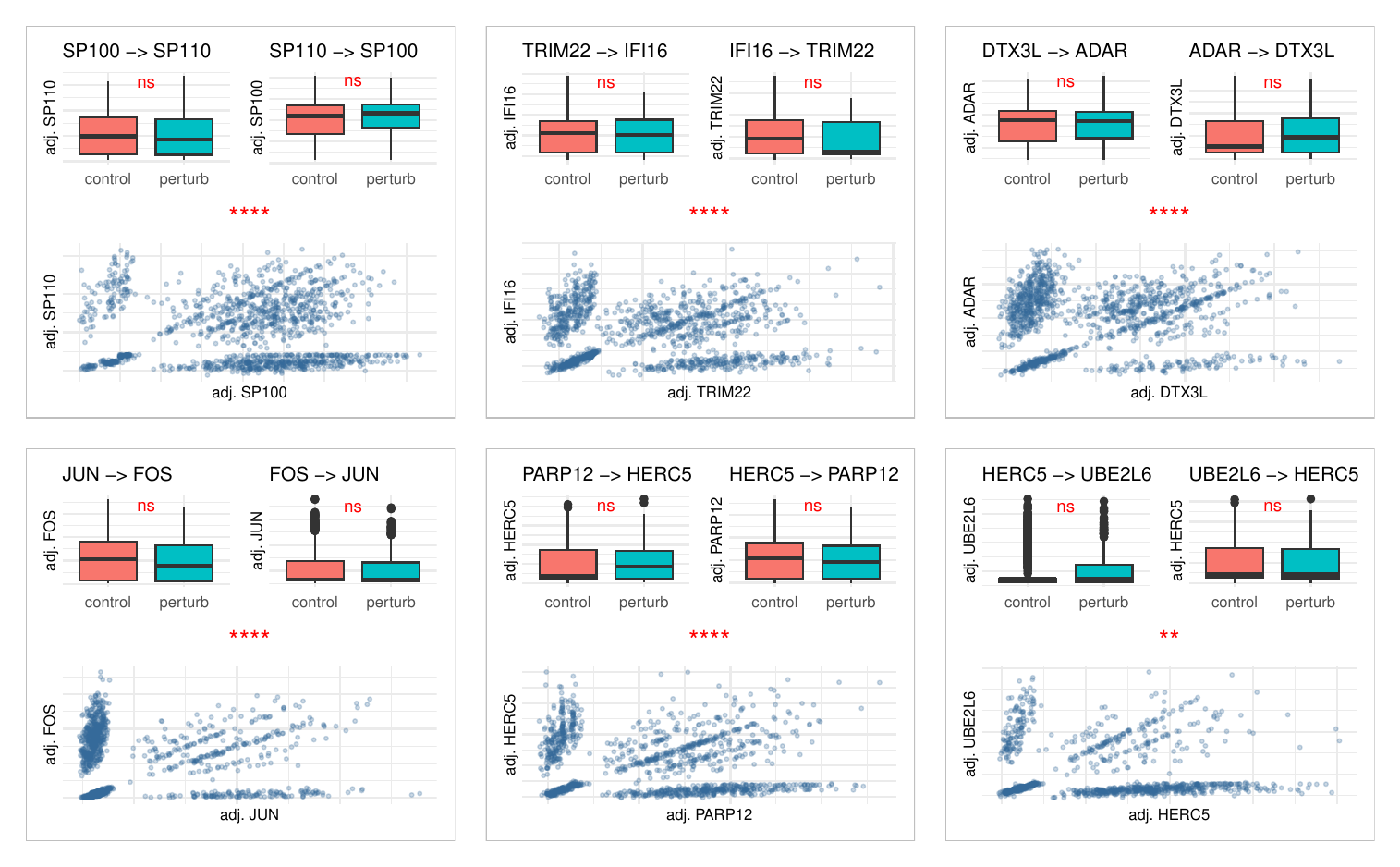}
    \caption{\footnotesize Demonstration of unmeasured confounders for 6 bidirected edges. The boxplot title $X_j\rightarrow X_k$ indicates that $X_j$ is the intervention target and we assess a distribution shift in $X_k$ conditional on context variables. Boxplots show the regression residuals of $X_k$ on context variables for control and $j$th interventional groups. Scatterplots show the regression residuals of $X_k$ versus those of $X_j$, after adjustment for context variables and anterior genes. Red stars indicate significance of CTS tests for boxplots and CI tests for scatterplots, ns indicates not significant.}
    \label{fig:confounder}
\end{figure}

A directed edge $X_j\rightarrow X_k$ in $\TCM(\Gcal_{\mid\bm C},\bm I)$ for $X_j\in \ancestor_{\Gcal_{\mid\bm C}}(\bm S)$ may reflect ancestry of selection rather than causal ancestry $X_j\in\ancestor_{\Gcal_{\mid\bm C}}(X_k)$. By Remark \ref{rem:interpretation_latent_selection}, undirected edges detect some ancestors of selection, and arrowheads at a variable rule out its ancestry of selection. We therefore divide the 51 genes into 3 groups, $|\widehat{\bm A}_{\bm S}|=6,|\widehat{\bm U}_{\bm S}|=7,|\widehat{\bm N}_{\bm S}|=38$, where $\widehat{\bm N}_{\bm S}$ contains all nodes with incoming arrowheads and $\widehat{\bm U}_{\bm S}=\bm X\setminus\widehat{\bm A}_{\bm S}\setminus\widehat{\bm N}_{\bm S}$. We restrict causal ancestry interpretations to directed edges whose source genes belong to $\widehat{\bm N}_{\bm S}$. Considering both directly detected edges $X_j\rightarrow X_k$ and edges $X_j\rightarrow X_k$ implied by $I_j\rightarrow X_k$ according to Lemma \ref{lem:interpretation_latent_selection}, we identify 3 edges, including STAT3 $\rightarrow$ IRF1, IFI16 $\rightarrow$ ID3, ADAR $\rightarrow$ NFKB1. We also detect 2 directed edges with source genes in $\widehat{\bm U}_{\bm S}$, including ID1 $\rightarrow$ ELK1 and DRAP1 $\rightarrow$ TRIM22. For these pairs, their ancestral interpretation requires further inspection. Figure \ref{fig:causal} contrasts interventions in both directions for these five directed edges. Consider STAT3 $\rightarrow$ IRF1 as an example. Intervening on STAT3 induces a significant conditional distribution shift in IRF1, whereas the reverse comparison is not significant. Since both genes are not ancestor of selection, this asymmetric structure implies that STAT3 is an ancestor of IRF1.

\begin{figure}[!htbp]
    \centering
    \begin{subfigure}{0.4\textwidth}
        \centering
        \includegraphics[width=\linewidth]{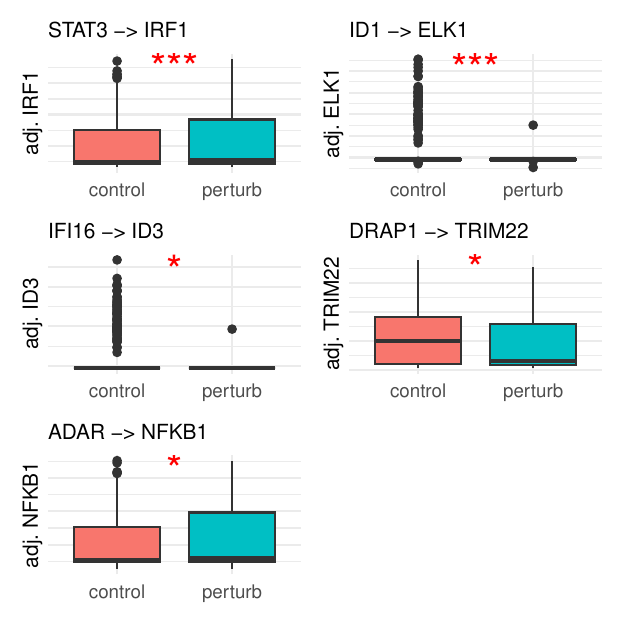}
        \caption{\footnotesize Causal directions.}
    \end{subfigure}
    \hfill
    \begin{subfigure}{0.4\textwidth}
        \centering
        \includegraphics[width=\linewidth]{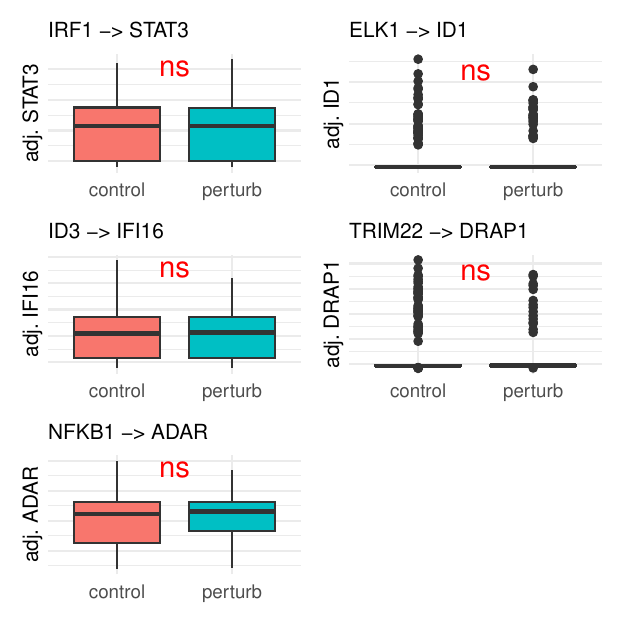}
        \caption{\footnotesize Reverse causal directions.}
    \end{subfigure}
    \caption{\footnotesize Demonstration of 5 learned directed edges. The panel title $X_j\rightarrow X_k$ indicates that $X_j$ is the intervention target and we assess a distribution shift in $X_k$ conditional on context variables. Boxplots show the regression residuals of $X_k$ on context variables for the control and the $j$th interventional groups. Red stars indicate significance of CTS tests, ns indicates not significant.}
    \label{fig:causal}
\end{figure}

Finally, to evaluate the sensitivity of Algorithm \ref{alg:conf_MAG_latent_selection} to the choice of test statistics, we replace the GCM-based test with RCIT \citep{strobl2019approximate}. Table \ref{tab:error_MAG_RCIT} reports the differences between the resulting estimates, using the results obtained with GCM as the reference. The two approaches yield identical estimates of $\bm A_{\bm S}$ and closely agree on $\bm A_j$. These differences propagate to the subsequent estimates of $\bm E_{\bm L},\TCM(\Gcal_{\mid\bm C},\bm I)$ and $\MAG(\Gcal_{\mid\bm C})$, for which the FDR and FNR remain below 0.18. Overall, the results show that the main structural conclusions are reasonably stable to the choice of test statistics given the relatively small number of cells for some perturbations. 

\begin{table}[!htbp]
    \caption{\footnotesize Sensitivity of Algorithm \ref{alg:conf_MAG_latent_selection} to the choice of test statistics. FDR and FNR are errors of the graph obtained by replacing the GCM-based test with RCIT, where we treat the graph under GCM as the reference.}
    \label{tab:error_MAG_RCIT}
    \footnotesize
    \centering
    \begin{tabular*}{\columnwidth}{@{\extracolsep\fill}cccccc@{\extracolsep\fill}}\hline
     & $\bm A_j$ & $\bm A_{\bm S}$ & $\bm E_{\bm L}$ & $\TCM(\Gcal_{\mid\bm C},\bm I)$ & $\MAG(\Gcal_{\mid\bm C})$ \\\hline
     FDR & 0.000 & 0.000 & 0.128 & 0.136 & 0.176\\
     FNR & 0.058 & 0.000 & 0.115 & 0.118 & 0.138\\\hline
    \end{tabular*}
 
\end{table}

\section{Discussion}

We develop an inference framework for causal discovery under latent variables and selection bias using single-target interventions. This result provides a basis for downstream causal structural testing and causal inference. 

Several directions remain for future research. When interventions are available only for a subset of observed variables, our algorithm restricts attention to the intervention targets and treats the remaining observed variables as latent. An important extension is to retain information from these untargeted variables and develop efficient procedures for the IMEC over all observed system variables.

Another direction concerns the additional information available from hard interventions. Although $\TCM(\Gcal_{\mid\bm C},\bm I)$ is identifiable under per-node interventions, it doesn't recover all underlying causal relations when both selection and latent variables are present. Hard interventions remove incoming edges to their targets, including those from latent causes, and may therefore provide additional identifying information. Extending our framework to exploit this information would allow us to investigate which remaining causal ambiguities can be resolved and how stronger identification results can be combined with efficient statistical inference.

\section{Acknowledgements}

This research was supported by NIH grants U01HG013841 and R01GM129781.

\bibliographystyle{plainnat}

\bibliography{reference}

\newpage

\begin{center}
    {\Large Supplement to ``Statistical Inference for Causal Discovery under Selection and Latent Variables via Single-Target Interventions''}
\end{center}

\appendix

\numberwithin{lemma}{section}
\numberwithin{equation}{section}
\numberwithin{definition}{section}
\numberwithin{assumption}{section}
\numberwithin{corollary}{section}
\numberwithin{algorithm}{section}
\numberwithin{remark}{section}
\numberwithin{theorem}{section}
\numberwithin{proposition}{section}
\numberwithin{example}{section}
\numberwithin{figure}{section}

\section{Example: \texorpdfstring{$\MAG(\Aug(\Gcal_{\mid\bm C},\bm I))$}{MAG(Aug(G,I))} Cannot Represent IMEC}
\begin{example}\label{exa:I-I}

    Consider two DAGs $\Gcal,\Gcal'$ and their augmented graphs $\Aug(\Gcal,\bm I),\Aug(\Gcal',\bm I)$ with $\bm I=(I_1,I_2)$ as in Figure \ref{fig:counterexample_aug}. One can verify that for any $\tilde\Gcal\in\{\Gcal,\Gcal'\}$ and $j,k\in[2]$, $(X_1\not\indep X_2\mid\bm S)_{\tilde\Gcal}$, $(I_j\not\indep X_k\mid\bm S)_{\Aug(\tilde\Gcal, I_j)}$, and $(I_j\not\indep X_k\mid X_{3-k},\bm S)_{\Aug(\tilde\Gcal,I_j)}$. Therefore $\Gcal'\in\IMEC(\Gcal)$. But $\MAG(\Aug(\Gcal,\bm I))$ and $\MAG(\Aug(\Gcal',\bm I))$ have different skeleton.

    \begin{figure}[!htbp]
    \centering
    \begin{subfigure}{0.45\textwidth}
        \centering
        \includegraphics[width=0.8\linewidth]{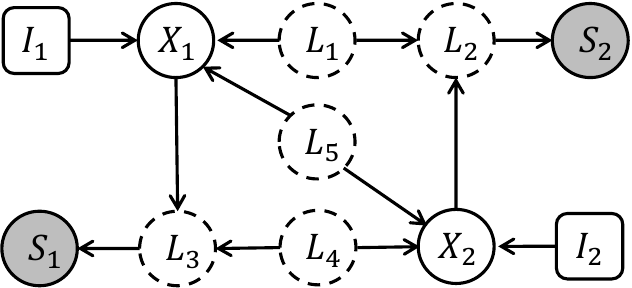}
        \caption{$\Aug(\Gcal,\bm I)$}
    \end{subfigure}
    \begin{subfigure}{0.45\textwidth}
        \centering
        \includegraphics[width=0.8\linewidth]{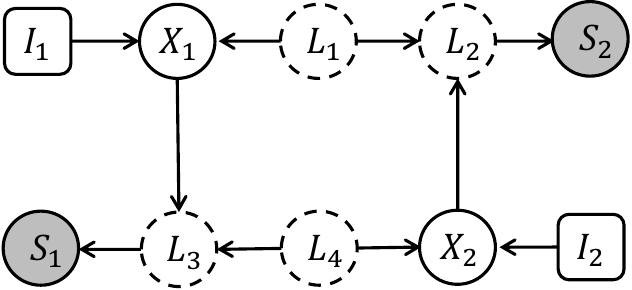}
        \caption{$\Aug(\Gcal',\bm I)$}
    \end{subfigure}

    \begin{subfigure}{0.45\textwidth}
        \centering
        \includegraphics[width=0.3\linewidth]{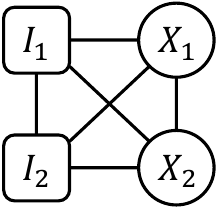}
        \caption{$\MAG(\Aug(\Gcal,\bm I))$}
    \end{subfigure}
    \begin{subfigure}{0.45\textwidth}
        \centering
        \includegraphics[width=0.3\linewidth]{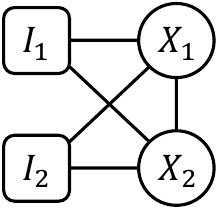}
        \caption{$\MAG(\Aug(\Gcal',\bm I))$}
    \end{subfigure}
    \caption{Causal structures in Example \ref{exa:I-I}.}
    \label{fig:counterexample_aug}
\end{figure}
\end{example}

\section{Interpretation of the Trimmed Combined-MAG when $K=d_X$}\label{supp:sec:interpret}

While the trimmed combined MAG $\TCM(\Gcal_{\mid\bm C},\bm I)$ is identifiable under per-node interventions, it is not immediately clear how to extract the underlying causal relations from $\TCM(\Gcal_{\mid\bm C},\bm I)$. This section discusses the interpretation of $\TCM(\Gcal_{\mid\bm C},\bm I)$ based on the existence of selection and latent variables. Throughout Section \ref{supp:sec:interpret}, we assume there is a single-target intervention for each observed variable, i.e., $K=d_X$.

\subsection{Interpretation in the Absence of Latent Variable}\label{sec:interpret_selection}

In this section, we consider the case where there is no latent variable. Although $\TCM(\Gcal_{\mid\bm C},\bm I)$ may contain undirected edges, we will show that the causal relations among $\bm X$ can be uniquely identified. Denote $\Gcal_{\bm X}$ as the induced subgraph \citep{richardson2002ancestral} of $\Gcal_{\mid\bm C}$, and thus $\Gcal$, on $\bm X$, with edges
\begin{equation}\label{eq:GX}
    \Ecal(\Gcal_{\bm X})=\big\{\text{edge in }\Gcal_{\mid\bm C}: \text{both end nodes in }\bm X\big\}.
\end{equation}
The following lemma shows that if $K=d_X$ and there is no latent variable, $\Gcal_{\bm X}$ can be recovered from $\TCM(\Gcal_{\mid\bm C},\bm I)$, and thus is identifiable according to Theorem \ref{thm:unique}. 

\begin{lemma}[Identification of $\Gcal_{\bm X}$]\label{lem:identify_GX_selection}
    If $K=d_X$ and $d_L=0$, $\Gcal_{\bm X}$ can be uniquely recovered from $\TCM(\Gcal_{\mid\bm C},\bm I)$ through the following rules:
    \begin{enumerate}[1)]
        \item if $X_j, X_k$ are not adjacent in $\TCM(\Gcal_{\mid\bm C},\bm I)$, then they are not adjacent in $\Gcal_{\bm X}$,
        \item if $X_j\rightarrow X_k$ is in $\TCM(\Gcal_{\mid\bm C},\bm I)$, then $X_j\rightarrow X_k$ is in $\Gcal_{\bm X}$,
        \item if $X_j-X_k$ is in $\TCM(\Gcal_{\mid\bm C},\bm I)$, then $X_j\rightarrow X_k$ is in $\Gcal_{\bm X}$ if and only if $I_k-X_j$ is in $\TCM(\Gcal_{\mid\bm C},\bm I)$.
    \end{enumerate}
\end{lemma}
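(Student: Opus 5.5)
With $d_L=0$, every inducing path relative to $\bm S$ in $\Gcal_{\mid\bm C}$ has only colliders as non-endpoints, and each such collider is an ancestor of an endpoint or of $\bm S$. The plan is to use the edge interpretations of Lemma~\ref{lem:interpretation_latent_selection} together with the definition of MAG and of $\Aug(\Gcal_{\mid\bm C},I_k)$, and to prove the three rules in turn.

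\emph{Rule 1.} If $X_j\to X_k$ or $X_k\to X_j$ is in $\Gcal_{\bm X}$, that single edge is an inducing path. Hence $X_j,X_k$ are adjacent in $\MAG(\Gcal_{\mid\bm C})$. By Lemma~\ref{lem:CM}, the edges among $\bm X$ of $\TCM$ coincide with those of $\MAG(\Aug(\Gcal_{\mid\bm C},I_k))$. Restricting to $\bm X$ leaves $\MAG(\Gcal_{\mid\bm C})$ unchanged: $I_k$ is a source, so it is never a collider, and ancestry among $(\bm X,\bm S)$ is unchanged. So they are adjacent in $\TCM$. The contrapositive is Rule 1.

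\emph{Rule 2.} Suppose $X_j\to X_k$ is in $\TCM$. Lemma~\ref{lem:interpretation_latent_selection}(3) gives $X_k\notin\ancestor(X_j,\bm S)$ and says every inducing path is into $X_k$. Without latents, the plan is to show that some inducing path consists of a single edge. Take an inducing path $X_j *\!\!\to C_1\leftarrow\cdots$. Each collider must be an ancestor of $X_j$, of $X_k$, or of $\bm S$. Ancestors of $X_k$ or $\bm S$ that appear as colliders adjacent to the endpoint can be absorbed by the standard shortening argument. Concretely, I will show that a minimal inducing path with a collider $C_1$ forces $C_1\in\ancestor(X_k)$ or $C_1\in\ancestor(\bm S)$. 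Since $X_j\to C_1$ is a directed edge, this would give $X_j\in\ancestor(\bm S)$ or a path making $X_k$ an ancestor issue. The obstacle is ruling out a long inducing path through selection ancestors. Here I would instead argue by contradiction: if $X_j,X_k$ are not adjacent in $\Gcal_{\bm X}$, then $X_k$, not being an ancestor of $\bm S$, has no child on any collider chain leading to $\bm S$. The edge at $X_k$ is therefore $C_m\to X_k$ with $C_m$ a collider, which is impossible because a collider needs an arrowhead from $X_k$'s side. Hence the only inducing path is the direct edge, oriented $X_j\to X_k$ since $X_k\notin\ancestor(X_j)$.

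\emph{Rule 3.} Suppose $X_j-X_k$ is in $\TCM$, so both lie in $\ancestor(\bm S)$ by Lemma~\ref{lem:interpretation_latent_selection}(1). By Lemma~\ref{lem:interpretation_latent_selection}(4), with indices swapped, $I_k-X_j$ is present if and only if $\Gcal_{\mid\bm C}$ contains an inducing path beginning $X_k\leftarrow X_j$; the latent option is vacuous since $d_L=0$. For the direction ($\Leftarrow$), if $X_j\to X_k$ is in $\Gcal_{\bm X}$, this edge itself is such an inducing path. For the direction ($\Rightarrow$), take an inducing path of the form $X_k\leftarrow X_j$ or $X_k\leftarrow C\cdots$. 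In the $\Aug$ graph, $I_k\to X_k\leftarrow\cdots$ must be inducing relative to $\bm S$, so $X_k$ becomes a collider. This is the delicate point, and I expect it to be the main obstacle. I will argue that if the edge into $X_k$ comes from a collider $C$, then $C$ would have an arrowhead from $X_k$'s side, a contradiction. So the first edge is a direct $X_j\to X_k$. The remaining case of an undirected $X_j-X_k$ with both orientations impossible is excluded by acyclicity. Combining Rules 1--3 determines every adjacency and orientation of $\Gcal_{\bm X}$.
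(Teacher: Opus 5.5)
Your proposal is correct and follows essentially the paper's route. Rule 1 is immediate. Rule 2 and the ($\Rightarrow$) half of Rule 3 use the fact that, with no latents, every middle node of an inducing path is a collider. A middle node next to $X_k$ would therefore force a tail at $X_k$: in Rule 2 this contradicts the arrowhead at $X_k$, and in Rule 3 it forces the path to be $I_k\to X_k\leftarrow X_j$. The ($\Leftarrow$) half rests on the inducing path $I_k\to X_k\leftarrow X_j$, which you reach through Lemma~\ref{lem:interpretation_latent_selection}(4) and the paper builds directly.

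The only substantive difference is that you cite Lemma~\ref{lem:interpretation_latent_selection}(3)--(4), where the paper proves a small auxiliary lemma (at most one observed middle node on a no-latent inducing path). Your Rule 2 write-up should drop the abandoned ``shortening'' detour and keep only the final contradiction argument.
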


We illustrate the identification rule of $\Gcal_{\bm X}$ through the following example.

\begin{example}\label{exa:selection}

    Consider the DAG $\Gcal_{\mid\bm C}$ without latent variables shown in Figure \ref{fig:example_selection}. $X_3$ is a direct cause of $X_2, X_4$, but the relation between $X_1, X_4$ is due to selection bias. These underlying causal relations among $\bm X$ are represented in $\Gcal_{\bm X}$. 

    Now we recover $\Gcal_{\bm X}$ from $\TCM(\Gcal_{\mid\bm C},\bm I)$. Since $(X_1,X_2), (X_1,X_3), (X_2,X_4)$ are not adjacent in $\TCM(\Gcal_{\mid\bm C},\bm I)$, they are also not adjacent in $\Gcal_{\bm X}$. Since $X_3\rightarrow X_2$ is in $\TCM(\Gcal_{\mid\bm C},\bm I)$, it also exists in $\Gcal_{\bm X}$. For the undirected edge $X_1-X_4$, since neither $I_1-X_4$ nor $I_4-X_1$ exists in $\TCM(\Gcal_{\mid\bm C},\bm I)$, we know there is no direct causal relation between $X_1, X_4$ and the undirected edge is merely due to selection effects. For the undirected edge $X_3-X_4$, edge $I_4-X_3$ implies the inducing path $I_4\rightarrow X_4\leftarrow X_3$. Therefore, $X_3$ is a direct cause of $X_4$.

    \begin{figure}
        \centering
        \begin{subfigure}{0.32\textwidth}
            \centering
            \includegraphics[width=0.5\linewidth]{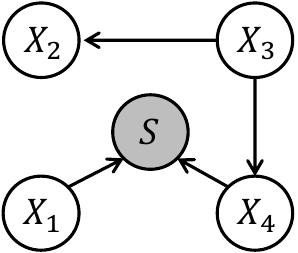}
            \caption{$\Gcal_{\mid\bm C}$}
        \end{subfigure}
        \begin{subfigure}{0.32\textwidth}
            \centering
            \includegraphics[width=0.95\linewidth]{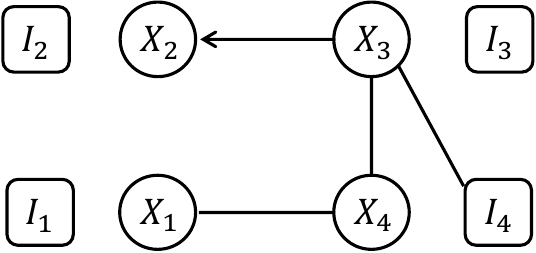}
            \caption{$\TCM(\Gcal_{\mid\bm C},\bm I)$}
        \end{subfigure}
        \begin{subfigure}{0.32\textwidth}
            \centering
            \includegraphics[width=0.5\linewidth]{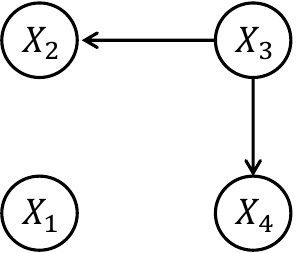}
            \caption{$\Gcal_{\bm X}$}
        \end{subfigure}
        \caption{Causal structures in Example \ref{exa:selection}.}
        \label{fig:example_selection}
    \end{figure}
\end{example}

\subsection{Interpretation in the Absence of Selection}\label{sec:interpret_latent}

In this section, we focus on the case where there is no selection variable. Although the graph $\Gcal_{\mid\bm C}$ is not identifiable, we can extract some information about it from $\TCM(\Gcal_{\mid\bm C},\bm I)$.

\begin{lemma}\label{lem:interpretation_latent}
    If $d_S=0$, then for $j\ne k\in[d_X]$,
    \begin{enumerate}[1)]
        \item if $X_j\rightarrow X_k$ is in $\TCM(\Gcal_{\mid\bm C},\bm I)$, then $X_j\in\ancestor_{\Gcal_{\mid\bm C}}(X_k)$ and all inducing paths in $\Gcal_{\mid\bm C}$ between $X_j,X_k$ are into $X_k$,
        \item if $X_j\leftrightarrow X_k$ is in $\TCM(\Gcal_{\mid\bm C},\bm I)$, then $X_j\not\in\ancestor_{\Gcal_{\mid\bm C}}(X_k),X_k\not\in\ancestor_{\Gcal_{\mid\bm C}}(X_j)$ and all inducing paths in $\Gcal_{\mid\bm C}$ between $X_j,X_k$ have the form $X_j\leftarrow L\ldots L'\rightarrow X_k$ for some $L,L'\in\bm L$,
        \item $I_j\rightarrow X_k$ is in $\TCM(\Gcal_{\mid\bm C},\bm I)$ if and only if $X_j\in\ancestor_{\Gcal_{\mid\bm C}}(X_k)$ and $\Gcal_{\mid\bm C}$ contains an inducing path $X_j\leftarrow L\ldots L'\rightarrow X_k$ for some $L,L'\in\bm L$.
    \end{enumerate}
\end{lemma}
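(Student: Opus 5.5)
The natural route is to specialise Lemma \ref{lem:interpretation_latent_selection} to $\bm S=\emptyset$. In that case $\ancestor_{\Gcal_{\mid\bm C}}(X_k,\bm S)=\ancestor_{\Gcal_{\mid\bm C}}(X_k)$, and every collider on an inducing path must be an ancestor of one of the endpoints. Parts 1) and 2) then follow at once from parts 3) and 2) of Lemma \ref{lem:interpretation_latent_selection}: both the ancestral conclusions and the description of inducing paths carry over verbatim once $\bm S$ is dropped. Only two points need attention. First, the tail marks in $\TCM(\Gcal_{\mid\bm C},\bm I)$ are defined through $\MAG(\Aug(\Gcal_{\mid\bm C},I_k))$, where $I_k$ is a new root. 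Adding a root with a single child $X_k$ changes no ancestral relation among $\bm X$ and creates no new inducing paths between members of $\bm X$, because $I_k$ is never a collider and is not latent. Hence, by Lemma \ref{lem:CM}, edges among $\bm X$ in $\TCM$ coincide with those of $\MAG(\Gcal_{\mid\bm C})$. Second, when $\bm S=\emptyset$ the MAG has no undirected edges, so only the $\rightarrow$ and $\leftrightarrow$ cases occur.

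For part 3), I would again invoke part 5) of Lemma \ref{lem:interpretation_latent_selection} with $\bm S=\emptyset$. Its conditions read: $X_j\in\ancestor_{\Gcal_{\mid\bm C}}(X_k)$, $X_k\notin\ancestor_{\Gcal_{\mid\bm C}}(X_j)$, and there is an inducing path $X_j\leftarrow L\ldots L'\rightarrow X_k$. The second condition is redundant given the first, by acyclicity of $\Gcal_{\mid\bm C}$ and $j\ne k$. Removing it yields exactly the stated equivalence.

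If one prefers a self-contained argument, part 3) can instead be proved directly from Definition \ref{def:MAG} applied to $\Aug(\Gcal_{\mid\bm C},I_j)$. Adjacency of $I_j$ and $X_k$ requires an inducing path from $I_j$ to $X_k$, which necessarily starts $I_j\rightarrow X_j$. For $X_j$ to be a permissible non-endpoint, it must be a collider, so the next edge is $X_j\leftarrow L$ with $L$ latent. The remainder is then an inducing path from $X_j$ to $X_k$ whose colliders are ancestors of $X_j$ or $X_k$; ancestors of $X_j$ are in turn ancestors of $X_k$ when $X_j\in\ancestor(X_k)$. This path cannot end $X_j\leftarrow\ldots\leftarrow X_k$, since $X_k$ is not an ancestor of $X_j$, and must end with an arrow into $X_k$ from a latent node. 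The orientation $I_j\rightarrow X_k$, as opposed to $\leftrightarrow$, is then exactly the condition $I_j\in\ancestor(X_k)$, which is equivalent to $X_j\in\ancestor(X_k)$. Conversely, prepending $I_j\rightarrow$ to an inducing path $X_j\leftarrow L\ldots L'\rightarrow X_k$ gives an inducing path for $I_j$ and $X_k$, since the new collider $X_j$ is an ancestor of $X_k$.

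The main obstacle I expect is the endpoint condition on the ending latent $L'$: the path must end $L'\rightarrow X_k$ rather than $X_k\rightarrow\ldots$ through a collider. I would handle this by appealing to the existing lemma rather than redoing that case analysis.
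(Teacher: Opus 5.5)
Your proposal is correct and matches the paper, which simply states that Lemma \ref{lem:interpretation_latent} is the special case $\bm S=\emptyset$ of Lemma \ref{lem:interpretation_latent_selection}. Its parts 1), 2), 3) come from parts 3), 2), 5) of that lemma respectively. Your observation that, by acyclicity and $j\ne k$, $X_j\in\ancestor_{\Gcal_{\mid\bm C}}(X_k)$ already implies $X_k\notin\ancestor_{\Gcal_{\mid\bm C}}(X_j)$ is the one detail the paper leaves implicit, and it is exactly what makes part 3) follow from part 5).
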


\begin{remark}\label{rem:interpretation_latent}
    As shown in Lemma \ref{lem:confounding_seq}, any inducing path of the form $X_j\leftarrow L\ldots L'\rightarrow X_k$ implies the existence of a sequence of latently confounded nodes. Therefore, if $X_j\leftrightarrow X_k$ or $I_j\rightarrow X_k$ is in $\TCM(\Gcal_{\mid\bm C},\bm I)$, there exist a sequence of nodes $X_{j_0}\overset{\triangle}{=}X_j, X_{j_1},\ldots,X_{j_m},X_{j_{m+1}}\overset{\triangle}{=}X_k\in\bm X$ such that $X_{j_{l-1}}$ is latently confounded with $X_{j_l}$ for $l\in[m+1]$. Conversely, if $X_j$ and $X_k$ are latently confounded, Lemma \ref{lem:interpretation_latent} implies that one of $X_j\leftrightarrow X_k$, $I_j\rightarrow X_k$ or $I_k\rightarrow X_j$ must appear.
\end{remark}

We illustrate this interpretation through the following example.

\begin{example}\label{exa:latent}
    Consider the DAG $\Gcal_{\mid\bm C}$ without selection variables shown in Figure \ref{fig:example_latent}, we have $X_1$ depends on $X_2$ purely through the latent confounder $L_2$, and $X_2$ causes $X_3$ indirectly through the latent mediator $L_2$. $X_3$ is a direct cause of $X_4$ and they share a latent confounder $L_3$. Similarly, $X_4$ is a direct cause of $X_5$, and they share a latent confounder $L_4$.

    In $\TCM(\Gcal_{\mid\bm C},\bm I)$, all the directed edges among $\bm X$ represent causal effects, although they may be indirect through mediators. For example, $X_2\to X_3$ in $\TCM(\Gcal_{\mid\bm C},\bm I)$ is through the latent mediator $L_2$. Likewise, for $X_2\to X_5$ in $\TCM(\Gcal_{\mid\bm C},\bm I)$, the causal relation is through mediators $(L_2,X_3,X_4)$. 
    
    In $\TCM(\Gcal_{\mid\bm C},\bm I)$, the bidirected edge $X_1\leftrightarrow X_2$ indicates that $X_1$ and $X_2$ are not causally related, but are affected by latent variables through an inducing path as in part 2) of Lemma \ref{lem:interpretation_latent}. In $\Gcal_{\mid\bm C}$, this inducing path equals $X_1\leftarrow L_1\rightarrow X_2$. 

    The graph $\TCM(\Gcal_{\mid\bm C},\bm I)$ also contains additional directed edges $I_3\to X_4, I_3\to X_5$ and $I_4\to X_5$. For instance, $I_3\to X_5$ in $\TCM(\Gcal_{\mid\bm C},\bm I)$ is due to the inducing path $I_3\to X_3\leftarrow L_3\to X_4\leftarrow L_4\to X_5$ in $\Aug(\Gcal_{\mid\bm C},I_3)$. Although we cannot recover this path based on the observed variables, part 3) of Lemma \ref{lem:interpretation_latent} ensures the existence of latent variables and some path of the form $X_3\leftarrow L\ldots L'\rightarrow X_5$. 

    \begin{figure}
        \centering
        \begin{subfigure}{0.45\textwidth}
            \centering
            \includegraphics[width=0.55\linewidth]{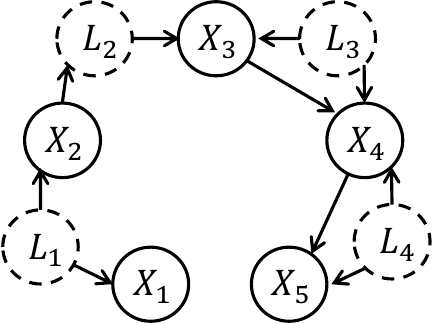}
            \caption{$\Gcal_{\mid\bm C}$}
        \end{subfigure}
        \begin{subfigure}{0.45\textwidth}
            \centering
            \includegraphics[width=0.8\linewidth]{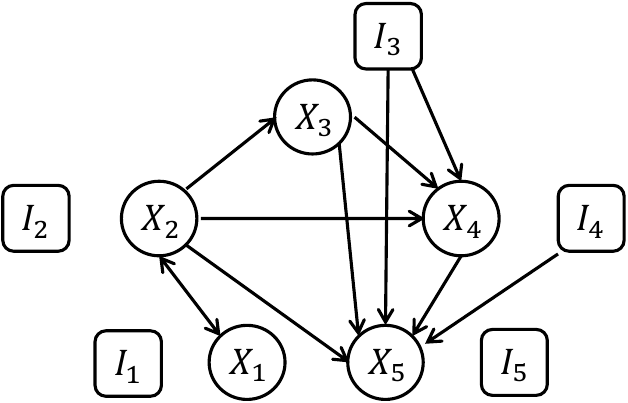}
            \caption{$\TCM(\Gcal_{\mid\bm C},\bm I)$}
        \end{subfigure}
        \caption{Causal structures in Example \ref{exa:latent}.}
        \label{fig:example_latent}
    \end{figure}
\end{example}

\section{Conditional Two-Sample Tests}\label{sec:CTS}

Algorithms \ref{alg:conf_MAG_latent_selection} and \ref{alg:conf_DAG_selection} crucially rely on the CTS tests and CI tests. While CI testing has been extensively studied \citep{shah2020hardness, lundborg2024projected, strobl2019approximate, neykov2021minimax}, CTS testing has gained attention only recently \citep{lee2024general,hu2024two,chen2025biased,yan2022distance,chatterjee2024kernel}. Although these two problems seem closely related, there are subtle differences between them \citep{lee2024general}. \cite{lee2024general} proposed a general sampling strategy to convert any CI test into a CTS test. But this sampling strategy incurs additional randomness and may make the test more conservative. 

In this section, we show that some CI tests are directly applicable to the CTS problem, and the additional sampling step is not always necessary. We focus on the generalized covariance measure (GCM) in \cite{shah2020hardness} for its simplicity and flexibility, but we believe similar results hold for general CI tests based on asymptotically linear smooth functional estimation \citep{lundborg2024projected, scheidegger2022weighted} as well.

Suppose we observe two datasets $\Dcal_0=\{(Y_i,\bm X_i):i\in[n_0]\}\overset{\rm i.i.d.}{\sim}P_{Y,\bm X}^{(0)}$ and $\Dcal_1=\{(Y_i,\bm X_i):i\in[n_1]+n_0\}\overset{\rm i.i.d.}{\sim}P_{Y,\bm X}^{(1)}$ with $Y\in\R$, $\bm X\in\R^{d_X}$, and the goal is testing
\[H_0:\Prob\big(p_{Y\mid\bm X}^{(0)}\ne p_{Y\mid\bm X}^{(1)}\big)=0\quad\longleftrightarrow \quad H_1:\Prob\big(p_{Y\mid\bm X}^{(0)}\ne p_{Y\mid\bm X}^{(1)}\big)>0,\]
where the probability $\Prob$ is over both distributions $\frac{1}{2}P_{\bm X}^{(1)}+\frac{1}{2}P_{\bm X}^{(0)}$. An intuitive way is to introduce a binary variable $I\in\{0,1\}$ for each sample and set $I_i=0$ for $i\in[n_0]$, $I_i=1$ for $i\in[n_1]+n_0$. Since
\[Y\mid(\bm X,I=0)\overset{d}{=}Y\mid(\bm X,I=1)\quad\Longleftrightarrow\quad Y\indep I\mid\bm X,\]
we could then apply GCM to $\Dcal=\{(Y_i,\bm X_i, I_i):i\in[N]\}$ with $N=n_0+n_1$, as described in Algorithm \ref{alg:GCM}. 

\begin{algorithm}
    \caption{GCM for CTS Test}
    \label{alg:GCM}
    \begin{algorithmic}
        \State{\bf Input:} Datasets $\Dcal_0=\{(Y_i,\bm X_i):i\in[n_0]\}$, $\Dcal_1=\{(Y_i,\bm X_i):i\in[n_1]+n_0\}$, significance level $\alpha$.
        \State{\bf Output:} Decision $\psi_{N,\alpha}\in\{0,1\}$ for $H_0\longleftrightarrow H_1$.
        \State{\bf Step 1:} Set $I_i=0$ for $i\in[n_0]$ and $I_i=1$ for $i\in[n_1]+n_0$. Define $\Dcal=\{(Y_i,\bm X_i,I_i):i\in[N]\}$ and $P_{Y,\bm X,I}=\frac{n_0}{N}P_{Y,\bm X}^{(0)}\1(I=0)+\frac{n_1}{N}P_{Y,\bm X}^{(1)}\1(I=1)$.
        \State{\bf Step 2:} Fit $\hat f(\bm X)$ for $f_P(\bm X)=\E_{P_{Y,\bm X,I}}(I\mid\bm X)$ and $\hat g(\bm X)$ for $g_P(\bm X)=\E_{P_{Y,\bm X,I}}(Y\mid\bm X)$.
        \State{\bf Step 3:} Set $R_i=\{I_i-\hat f(\bm X_i)\}\{Y_i-\hat g(\bm X_i)\}$ and compute
        \[T_N=\sqrt{N}\frac{\hat\E_{\Dcal}R}{\sqrt{\widehat{\Var}_{\Dcal}(R)}}.\]
        \State{\bf Step 4:} Set $\psi_{N,\alpha}=\1(|T_N|>z_{\frac{\alpha}{2}})$, where $z_{\frac{\alpha}{2}}$ is the upper $\frac{\alpha}{2}$-quantile of standard normal.
    \end{algorithmic}
\end{algorithm}

Although the variables $I_i$'s are deterministic rather than i.i.d. samples, $f_P$ and $g_P$ satisfy
\[f_P=\argmin_f\E_{P^{(0)},P^{(1)}}\hat\E_{\Dcal}\ell\big(I,f(\bm X)\big),\quad g_P=\argmin_g\E_{P^{(0)},P^{(1)}}\hat\E_{\Dcal}\big(Y-g(\bm X)\big)^2,\]
for some binary classification loss $\ell$, such as $\ell(I,f)=-I\log f-(1-I)\log(1-f)$. Therefore, they can be consistently estimated by general machine learning methods, 
\[\hat f=\argmin_{f\in\mathscr{F}}\hat\E_{\Dcal}\ell\big(I,f(\bm X)\big),\quad \hat g=\argmin_{g\in\mathscr{G}}\hat\E_{\Dcal}\big(Y-g(\bm X)\big)^2.\]
The following proposition shows that Algorithm \ref{alg:GCM} is a valid CTS test when the estimators $\hat f$ and $\hat g$ are sufficiently accurate. A proof is provided in Section \ref{sec:proof:prop:GCM}.
\begin{proposition}[Asymptotic Validity]\label{prop:GCM}
    Let $\Pcal$ be a class of distribution pairs $(P_{Y,\bm X}^{(0)},P_{Y,\bm X}^{(1)})$ such that the null hypothesis $H_0$ holds and
    \[N\hat\E_{\Dcal}\big(f_P-\hat f\big)^2\hat\E_{\Dcal}\big(g_P-\hat g\big)^2=o_{\Pcal}(1),\quad\hat\E_{\Dcal}v_Y\big(f_P-\hat f\big)^2=o_{\Pcal}(1),\]
    \[\sqrt{N}\big(\hat\E_{\Dcal}-\E\hat\E_{\Dcal}\big)\xi_I\big(g_P-\hat g\big)=o_{\Pcal}(1),\quad\hat\E_{\Dcal}\xi_I^2\big(g_P-\hat g\big)^2=o_{\Pcal}(1),\]
    \[\inf_{(P^{(0)},P^{(1)})\in\Pcal}\E_{P^{(0)},P^{(1)}}\hat\E_{\Dcal}\xi_I^2\xi_Y^2\gtrsim 1,\quad\sup_{(P^{(0)},P^{(1)})\in\Pcal}\E_{P^{(0)},P^{(1)}}\hat\E_{\Dcal}\big|\xi_I\xi_Y\big|^{2+\eta}\lesssim 1\quad\text{for some $\eta>0$},\]
    where $v_Y(\bm X)=\Var_{P^{(0)}}(Y\mid\bm X)$, $\xi_I=I-f_P(\bm X)$, and $\xi_Y=Y-g_P(\bm X)$. Then
    \[\lim_{n_0\rightarrow\infty,n_1\rightarrow\infty}\sup_{(P^{(0)},P^{(1)})\in\Pcal}\big|\E_{P^{(0)},P^{(1)}}\psi_{N,\alpha}-\alpha\big|=0.\]
\end{proposition}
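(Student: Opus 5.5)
We follow the standard GCM analysis of \cite{shah2020hardness}, adapted to the fact that $I_i$ is deterministic given sample membership. The observations are independent but not identically distributed: $n_0$ draws come from $P^{(0)}$ and $n_1$ from $P^{(1)}$. All limit theorems will therefore be applied in their triangular-array (Lindeberg/Lyapunov) form under the product measure $\E_{P^{(0)},P^{(1)}}$.

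\textbf{Key identities.} Because $f_P(\bm X)=\Prob(I=1\mid\bm X)$ under the mixture $P_{Y,\bm X,I}$, we have $\E[\xi_I\mid\bm X]=0$ under the mixture. Under $H_0$, $Y\mid\bm X$ has the same law in both groups. Hence $g_P(\bm X)$ equals $\E_{P^{(0)}}(Y\mid\bm X)=\E_{P^{(1)}}(Y\mid\bm X)$, and $\E[\xi_Y\mid\bm X,I]=0$ within each group.

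\textbf{Step 1: decomposition.} Write
\[R_i=\{\xi_{I,i}+(f_P-\hat f)(\bm X_i)\}\{\xi_{Y,i}+(g_P-\hat g)(\bm X_i)\}.\]
This splits $\sqrt N\hat\E_{\Dcal}R$ into four terms:
\begin{itemize}
\item[(i)] $\sqrt N\hat\E\xi_I\xi_Y$;
\item[(ii)] $\sqrt N\hat\E(f_P-\hat f)(g_P-\hat g)$;
\item[(iii)] $\sqrt N\hat\E(f_P-\hat f)\xi_Y$;
\item[(iv)] $\sqrt N\hat\E\xi_I(g_P-\hat g)$.
\end{itemize}

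\textbf{Step 2: bounding the remainders.}
\begin{itemize}
\item[(ii)] By Cauchy--Schwarz this is at most $\{N\hat\E(f_P-\hat f)^2\hat\E(g_P-\hat g)^2\}^{1/2}=o_{\Pcal}(1)$.
\item[(iii)] Here $\xi_Y$ is conditionally mean zero given $(\bm X_i,I_i)$ under $H_0$. Conditioning on the design (and on $\hat f$; if needed, use sample splitting or the same argument as \cite{shah2020hardness}), the conditional second moment is $\hat\E v_Y(f_P-\hat f)^2=o_{\Pcal}(1)$, using that $\Var(Y\mid\bm X)$ coincides across groups. Conditional Chebyshev then gives $o_{\Pcal}(1)$.
\item[(iv)] This term is the delicate one. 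Since $I_i$ is deterministic, $\xi_I$ is not conditionally centered sample by sample; it is centered only under the mixture. We therefore center: (iv) equals $\sqrt N(\hat\E-\E\hat\E)\xi_I(g_P-\hat g)$ plus $\sqrt N\E\hat\E\xi_I(g_P-\hat g)$. The first part is $o_{\Pcal}(1)$ by assumption. For the bias part, $\E\hat\E_{\Dcal}[\xi_I h(\bm X)]=\E_{\rm mix}[\xi_I h(\bm X)]=0$ for any fixed $h$, by the mixture definition of $f_P$ together with the weights $n_k/N$. So the bias is zero when $\hat g$ is treated as fixed, which is what the centered-term assumption effectively encodes.
\end{itemize}

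\textbf{Step 3: CLT for the leading term.} Term (i) is a sum of independent, mean-zero terms $\xi_{I,i}\xi_{Y,i}$; the mean is zero by conditioning on $(\bm X_i,I_i)$ under $H_0$. The variance lower bound together with the $(2+\eta)$-moment bound verifies Lyapunov's condition uniformly over $\Pcal$. This gives $\sqrt N\hat\E\xi_I\xi_Y/\sigma_N\to N(0,1)$ uniformly, with $\sigma_N^2=\E\hat\E\xi_I^2\xi_Y^2$.

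\textbf{Step 4: variance consistency.} We need $\widehat{\Var}_{\Dcal}(R)/\sigma_N^2\to1$. Expand $R^2$ and use a uniform weak law for $\hat\E\xi_I^2\xi_Y^2$, which follows from the $(2+\eta)$ moments. The cross terms are controlled by $\hat\E\xi_I^2(g_P-\hat g)^2=o_{\Pcal}(1)$, by the $v_Y$ condition, and by $(\hat\E R)^2=O_{\Pcal}(1/N)$. Note that $|\xi_I|\le1$ handles the products involving $f_P-\hat f$.

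\textbf{Conclusion.} Slutsky's lemma, in its uniform form (Shah--Peters, Lemma 20-type), gives $T_N\to N(0,1)$ uniformly over $\Pcal$. Hence $\sup|\E\psi_{N,\alpha}-\alpha|\to0$.

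\textbf{Main obstacle.} The main obstacle is term (iv), together with the non-i.i.d. structure. Establishing that the mixture-centering makes the bias vanish requires treating $\hat g$ as fixed or cross-fitted, which is why the assumption is stated with $\E\hat\E$ centering.
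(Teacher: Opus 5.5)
Your proposal is correct and follows essentially the same route as the paper's proof. The shared steps are: the four-term decomposition of $\hat\E_{\Dcal}R$; Cauchy--Schwarz for the product-of-errors term; a conditional second-moment argument for $\hat\E_{\Dcal}\xi_Y(f_P-\hat f)$; the mixture identity $\E\hat\E_{\Dcal}\xi_I h(\bm X)=0$ for fixed $h$, which reduces the $\xi_I(g_P-\hat g)$ term to the assumed centered term; and then a uniform Lyapunov CLT, variance consistency, and a uniform Slutsky step as in \cite{shah2020hardness}. For the $\xi_Y(f_P-\hat f)$ term no sample splitting is needed: $\hat f$ is fitted on $\{(\bm X_i,I_i)\}$ only, so you can condition on $\Dcal_{\bm X,I}$ directly.

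One small fix in Step 4: the term $\hat\E_{\Dcal}(f_P-\hat f)^2(g_P-\hat g)^2$ is not handled by $|\xi_I|\le 1$. It is handled by $\hat\E_{\Dcal}a^2b^2\le N\,\hat\E_{\Dcal}a^2\,\hat\E_{\Dcal}b^2$ together with the product-rate assumption, and the remaining cross terms then follow by Cauchy--Schwarz against $\hat\E_{\Dcal}\xi_I^2\xi_Y^2=O_{\Pcal}(1)$.
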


\section{Algorithm in the Absence of Selection}

To show the necessity of accounting for selection effects, we modify Algorithm \ref{alg:conf_MAG_latent_selection} into the following Algorithm \ref{alg:conf_MAG_latent} by assuming no selection.

\begin{algorithm}
\caption{Lower Confidence Set for $\Ecal(\TCM(\Gcal_{\mid\bm C},\bm I))$ with Latent Variables in the Absence of Selection}
\label{alg:conf_MAG_latent}
\footnotesize
\begin{algorithmic}[1]
\Statex{\bf Input:} Observational data $\Dcal_0$, interventional data $\Dcal_k$, $k\in[d_X]$, confidence level $1-\alpha$.
\Statex{\bf Output:} Lower Confidence set $\Ccal_\alpha$ for $\Ecal(\TCM(\Gcal_{\mid\bm C}, \bm I))$.

\Statex \textbf{Stage I. Ancestral Set Recovery}
\State{\bf Step 1:} For $k\in[d_X]$, $j\in[d_X]\setminus\{k\}$, conduct CTS test $\psi_j^{\anterior(k)}\in\{0,1\}$ based on $\Dcal_0$ and $\Dcal_k$,
\[H_{j,0}^{\anterior(k)}:\Prob\big(p^{(k)}_{X_j\mid\bm C}\ne p_{X_j\mid\bm C}^{(0)}\big)=0\quad\longleftrightarrow \quad H_{j,1}^{\anterior(k)}:\Prob\big(p^{(k)}_{X_j\mid\bm C}\ne p_{X_j\mid\bm C}^{(0)}\big)>0,\]
with $\text{FWER}_1$ controlled below $\frac{\alpha}{2}$. 
\State{\bf Step 2:} For $k\in[d_X]$, $j\in[d_X]\setminus\{k\}$, 
\begin{itemize}
    \item if $\psi_j^{\anterior(k)}=\psi_k^{\anterior(j)}=1$, apply any rule to update $\psi_j^{\anterior(k)}, \psi_k^{\anterior(j)}$ such that $\psi_j^{\anterior(k)}+\psi_k^{\anterior(j)}\le 1$.
\end{itemize}
\State{\bf Step 3:} For $j\in[d_X]$, calculate the ancestral set for $X_j$,
\[\widehat{\bm A}_j=\bigg\{X_k:\exists m\ge 0,j_0=j,j_{m+1}=k, (j_0,\ldots,j_{m+1})\subseteq[d_X]\text{ s.t. }\prod_{l=0}^m\psi_{j_l}^{\anterior(j_{l+1})}=1\bigg\}\setminus\{X_j\}.\]

\Statex \textbf{Stage II. Adjacency and Orientation Recovery}
\State{\bf Step 4:} Initialize the detected edge set $\Ccal_\alpha=\emptyset$, and conduct the following parallel CI and CTS tests with $\text{FWER}_1$ controlled below $\frac{\alpha}{2}$, pretending $\widehat{\bm A}_{[d_X]}=\{\widehat{\bm A}_{j}:j\in[d_X]\}$ are fixed.
\State{\bf Step 5:} For $j,k\in[d_X]$, $j>k$, conduct CI test $\psi_{j,k}^{\rm Adj}\big(\widehat{\bm A}_{[d_X]}\big)\in\{0,1\}$ based on the pooled data $\Dcal_0\cup\big\{\Dcal_l:X_l\not\in\widehat{\bm A}_j\cup\widehat{\bm A}_k\cup\{X_j,X_k\}\big\}$: 
\begin{align*}
    &H_{j,k,0}^{\rm Adj}\big(\widehat{\bm A}_{[d_X]}\big):X_j\indep X_k\mid\bm C\cup\widehat{\bm A}_j\cup\widehat{\bm A}_k\setminus\{X_j,X_k\}\\
    \longleftrightarrow\quad &H_{j,k,1}^{\rm Adj}\big(\widehat{\bm A}_{[d_X]}\big):X_j\not\indep X_k\mid\bm C\cup \widehat{\bm A}_j\cup\widehat{\bm A}_k\setminus\{X_j,X_k\}.
\end{align*}
\begin{itemize}
    \item If $\psi_{j,k}^{\rm Adj}\big(\widehat{\bm A}_{[d_X]}\big)=1$, $X_j\in \widehat{\bm A}_k$, and $X_k\not\in\widehat{\bm A}_j$, append $X_j\rightarrow X_k$ to $\Ccal_\alpha$. 
    \item If $\psi_{j,k}^{\rm Adj}\big(\widehat{\bm A}_{[d_X]}\big)=1$, $X_j\not\in \widehat{\bm A}_k$, and $X_k\in\widehat{\bm A}_j$, append $X_j\leftarrow X_k$ to $\Ccal_\alpha$. 
    \item If $\psi_{j,k}^{\rm Adj}\big(\widehat{\bm A}_{[d_X]}\big)=1$, $X_j\not\in \widehat{\bm A}_k$, and $X_k\not\in\widehat{\bm A}_j$, append $X_j\leftrightarrow X_k$ to $\Ccal_\alpha$. 
\end{itemize}
\State{\bf Step 6: } For $j\in[d_X]$, $X_k\in\widehat{\bm A}_j$, conduct CTS test $\psi_j^{{\rm Adj}(k)}\big(\widehat{\bm A}_{[d_X]}\big)\in\{0,1\}$ based on $\Dcal_k$ and the pooled data $\Dcal_0\cup\{\Dcal_l:X_l\not\in\widehat{\bm A}_j\cup\{X_j,X_k\}\}$
\begin{align*}
    H_{j,0}^{{\rm Adj}(k)}\big(\widehat{\bm A}_{[d_X]}\big): \Prob\big(p_{X_j\mid\bm C\cup\widehat{\bm A}_j}^{(k)}\ne p_{X_j\mid\bm C\cup\widehat{\bm A}_j}^{(0)}\big)=0\quad
    \longleftrightarrow \quad  H_{j,1}^{{\rm Adj}(k)}\big(\widehat{\bm A}_{[d_X]}\big): \Prob\big(p_{X_j\mid\bm C\cup\widehat{\bm A}_j}^{(k)}\ne p_{X_j\mid\bm C\cup\widehat{\bm A}_j}^{(0)}\big)>0.
\end{align*}
\begin{itemize}
    \item If $\psi_j^{{\rm Adj}(k)}\big(\widehat{\bm A}_{[d_X]}\big)=1$, append $I_k\rightarrow X_j$ to $\Ccal_\alpha$.
\end{itemize}

\end{algorithmic}
\end{algorithm}

\section{Additional Numerical Results}

This section provides model (M1), the true $\Gcal_{\mid\bm C}$ and $\TCM(\Gcal_{\mid\bm C},\bm I)$ for (M1) in Figures \ref{fig:true_M1_22}, \ref{fig:true_M1_20}, \ref{fig:true_M1_02} and for (M2) in Figures \ref{fig:true_M2_22}, \ref{fig:true_M2_20}, \ref{fig:true_M2_02}. We also present the $\widehat{\bm A}_j$ sets in the real data analysis in Figure \ref{fig:A549_Aj} and all 200 pairs in $\widehat{\bm E}_{\bm L}$ in Figure \ref{fig:A549_latent_pairs}.

\begin{enumerate}[(M1)]
    \item $\bm Z=(\bm C,\bm X,\bm L,\bm S)$, $\bm C\sim N(\bm 0,I_2)$, $\bm X\in\R^{20}$, $\bm L^{(k)}\sim N(\bm 0,I_{d_L})$, $\bm S\in\{0,1\}^{d_S}$, $\epsilon_j^{(k)}\sim N(0,0.25)$,
    \begin{eqnarray*}X_j^{(k)}=\bm \gamma_j^\top\bm C^{(k)} &+ &\sum_{Z_l^{(k)}\in\parent_{\Gcal_{\mid\bm C}}(X_j^{(k)})}\beta_{j,l}Z_l^{(k)}\\
    &+&\sum_{\{Z_l^{(k)},Z_m^{(k)}\}\subset\parent_{\Gcal_{\mid\bm C}}(X_j^{(k)})}0.2\beta_{j,l}\beta_{j,m}Z_l^{(k)}Z_m^{(k)}+1.5\1(j=k)+\epsilon_j^{(k)},
    \end{eqnarray*}
    \[S_i=\1\bigg(\sum_{Z_l\in\parent_{\Gcal}(S_i)}a_{i,l}Z_l>\tau\bigg),\quad\beta_{j,l},a_{i,l},\gamma_{j,1},\gamma_{j,2}\in\{0.5,-0.5\},\]
   where  $\tau$ is the largest real number such that $\Prob_{P^{(k)}}(\bm S=\bm 1) \ge 0.1$ for all $k=0,\ldots,20$.
\end{enumerate}

\begin{figure}
    \centering
    \begin{subfigure}{0.45\textwidth}
        \centering
        \includegraphics[width=\linewidth]{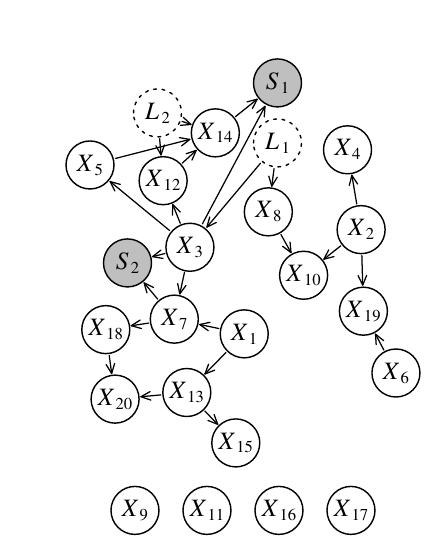}
        \caption{$\Gcal_{\mid\bm C}$.}
    \end{subfigure}
    \begin{subfigure}{0.45\textwidth}
        \centering
        \includegraphics[width=\linewidth]{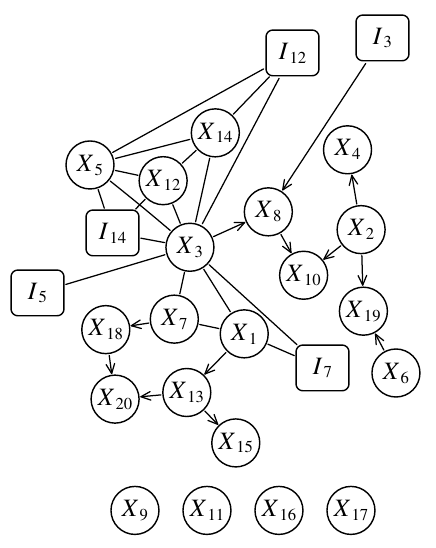}
        \caption{$\TCM(\Gcal_{\mid\bm C},\bm I)$}
    \end{subfigure}
    \caption{True graphs $\Gcal_{\mid\bm C}$ (a) and $\TCM(\Gcal_{\mid\bm C},\bm I)$ (b) for (M1) under $d_S=2,d_L=2$.}
    \label{fig:true_M1_22}
\end{figure}

\begin{figure}
    \centering
    \begin{subfigure}{0.45\textwidth}
        \centering
        \includegraphics[width=\linewidth]{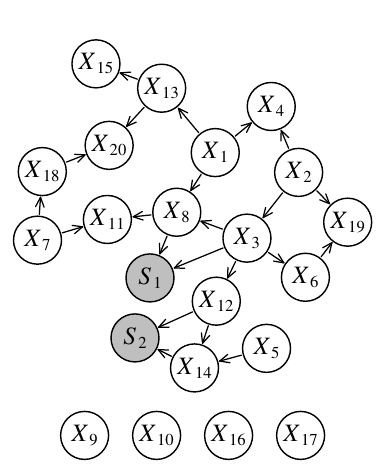}
        \caption{$\Gcal_{\mid\bm C}$.}
    \end{subfigure}
    \begin{subfigure}{0.45\textwidth}
        \centering
        \includegraphics[width=\linewidth]{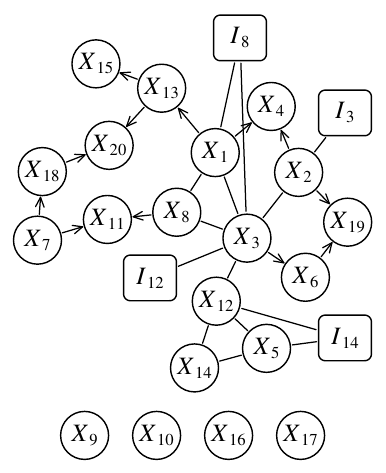}
        \caption{$\TCM(\Gcal_{\mid\bm C},\bm I)$}
    \end{subfigure}
    \caption{True graphs $\Gcal_{\mid\bm C}$ (a) and $\TCM(\Gcal_{\mid\bm C},\bm I)$ (b) for (M1) under $d_S=2,d_L=0$.}
    \label{fig:true_M1_20}
\end{figure}

\begin{figure}
    \centering
    \begin{subfigure}{0.45\textwidth}
        \centering
        \includegraphics[width=\linewidth]{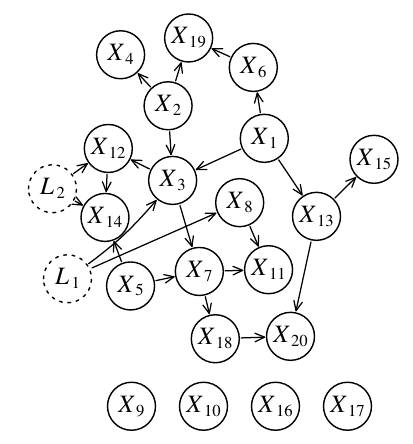}
        \caption{$\Gcal_{\mid\bm C}$.}
    \end{subfigure}
    \begin{subfigure}{0.45\textwidth}
        \centering
        \includegraphics[width=\linewidth]{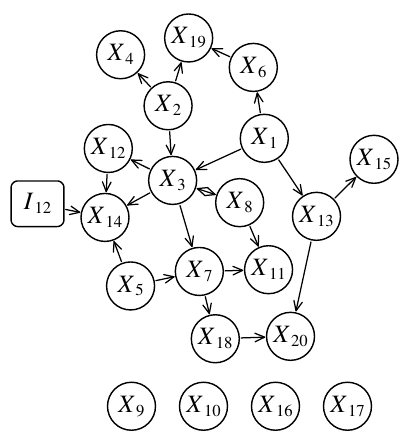}
        \caption{$\TCM(\Gcal_{\mid\bm C},\bm I)$}
    \end{subfigure}
    \caption{True graphs $\Gcal_{\mid\bm C}$ (a) and $\TCM(\Gcal_{\mid\bm C},\bm I)$ (b) for (M1) under $d_S=0,d_L=2$.}
    \label{fig:true_M1_02}
\end{figure}

\begin{figure}
    \centering
    \begin{subfigure}{0.45\textwidth}
        \centering
        \includegraphics[width=\linewidth]{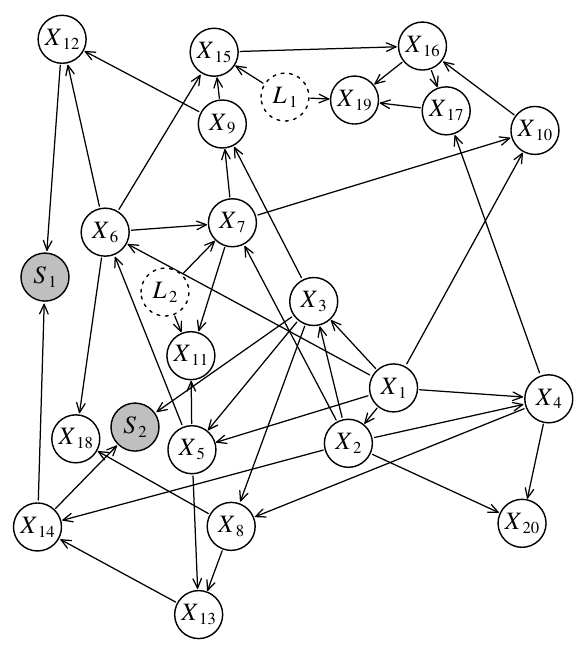}
        \caption{$\Gcal_{\mid\bm C}$.}
    \end{subfigure}
    \begin{subfigure}{0.45\textwidth}
        \centering
        \includegraphics[width=\linewidth]{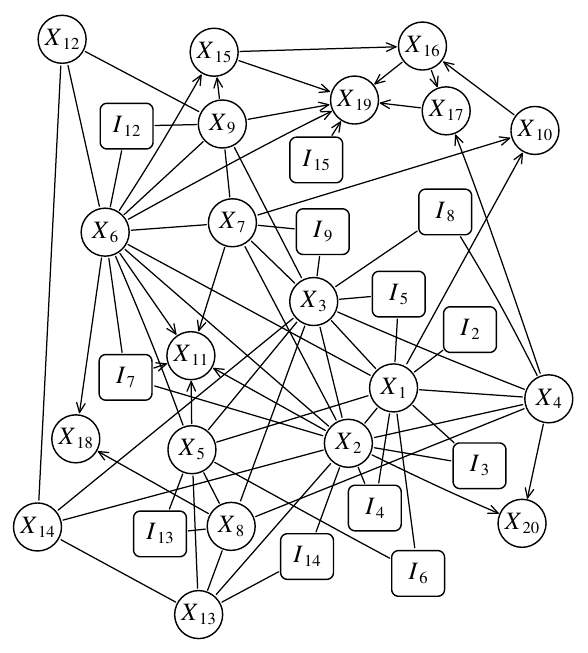}
        \caption{$\TCM(\Gcal_{\mid\bm C},\bm I)$}
    \end{subfigure}
    \caption{True graphs $\Gcal_{\mid\bm C}$ (a) and $\TCM(\Gcal_{\mid\bm C},\bm I)$ (b) for (M2) under $d_S=2,d_L=2$.}
    \label{fig:true_M2_22}
\end{figure}

\begin{figure}
    \centering
    \begin{subfigure}{0.45\textwidth}
        \centering
        \includegraphics[width=\linewidth]{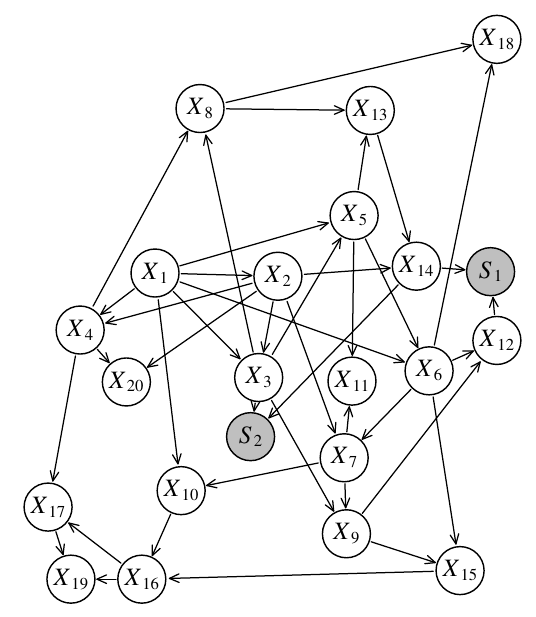}
        \caption{$\Gcal_{\mid\bm C}$.}
    \end{subfigure}
    \begin{subfigure}{0.45\textwidth}
        \centering
        \includegraphics[width=\linewidth]{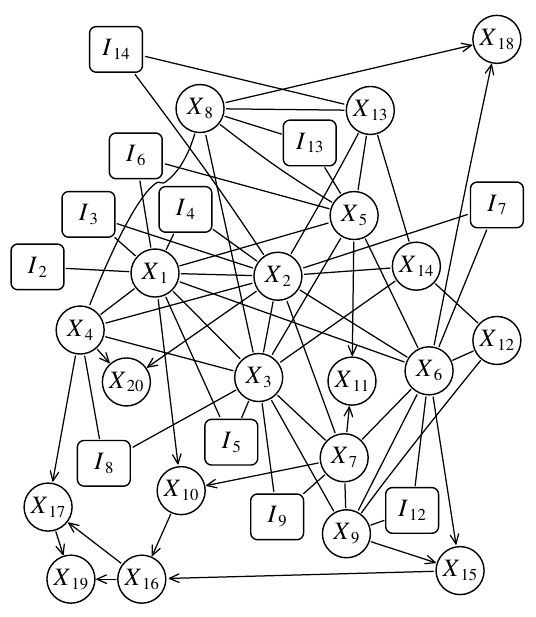}
        \caption{$\TCM(\Gcal_{\mid\bm C},\bm I)$}
    \end{subfigure}
    \caption{True graphs $\Gcal_{\mid\bm C}$ (a) and $\TCM(\Gcal_{\mid\bm C},\bm I)$ (b) for (M2) under $d_S=2,d_L=0$.}
    \label{fig:true_M2_20}
\end{figure}

\begin{figure}
    \centering
    \begin{subfigure}{0.45\textwidth}
        \centering
        \includegraphics[width=\linewidth]{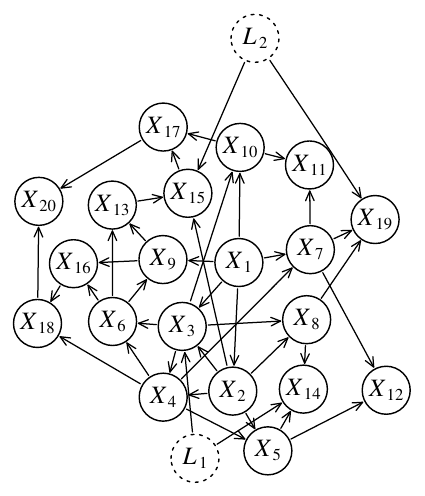}
        \caption{$\Gcal_{\mid\bm C}$.}
    \end{subfigure}
    \begin{subfigure}{0.45\textwidth}
        \centering
        \includegraphics[width=\linewidth]{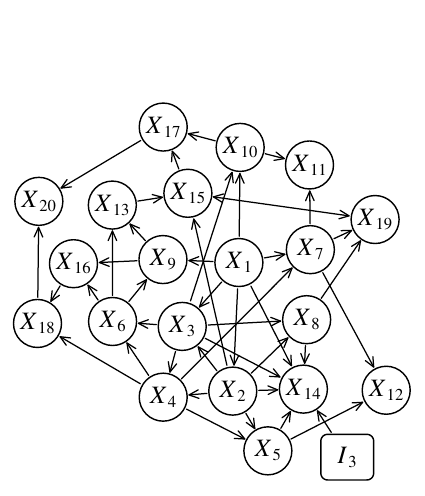}
        \caption{$\TCM(\Gcal_{\mid\bm C},\bm I)$}
    \end{subfigure}
    \caption{True graphs $\Gcal_{\mid\bm C}$ (a) and $\TCM(\Gcal_{\mid\bm C},\bm I)$ (b) for (M2) under $d_S=0,d_L=2$.}
    \label{fig:true_M2_02}
\end{figure}

\begin{figure}[!htbp]
    \centering
    \includegraphics[width=\linewidth]{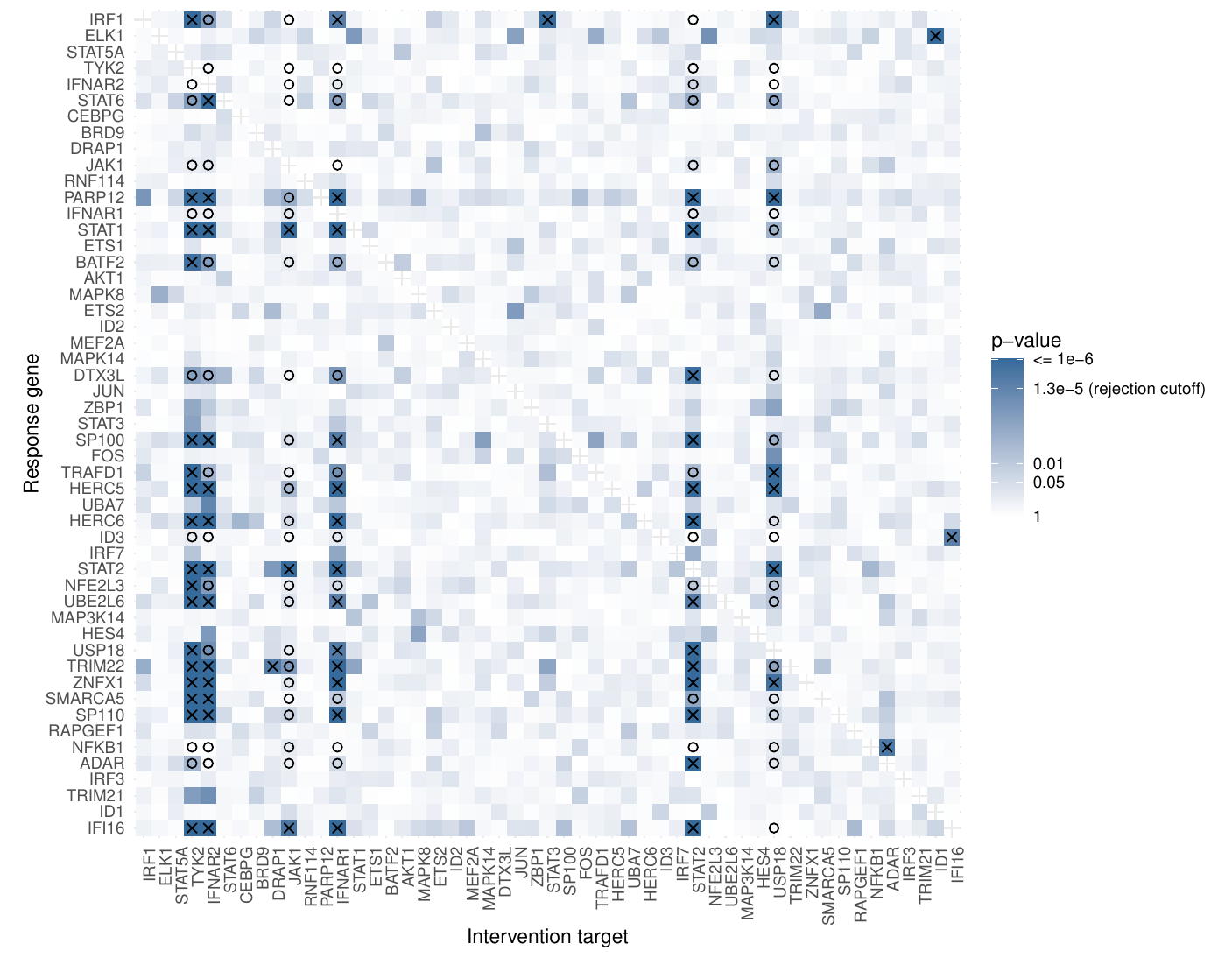}
    \caption{\footnotesize Column $k$ indicates the distribution change of other genes when $X_j$ was intervened. Row $j$ represents the detected $\widehat{\bm A}_j$. $\times$ indicates a significant distribution shift. $\circ$ represents additional anteriority obtained by Step 3 of Algorithm \ref{alg:conf_MAG_latent_selection}. The color reflects the $p$-values of CTS tests.}
    \label{fig:A549_Aj}
\end{figure}

\begin{figure}
    \centering
    \includegraphics[width=\linewidth]{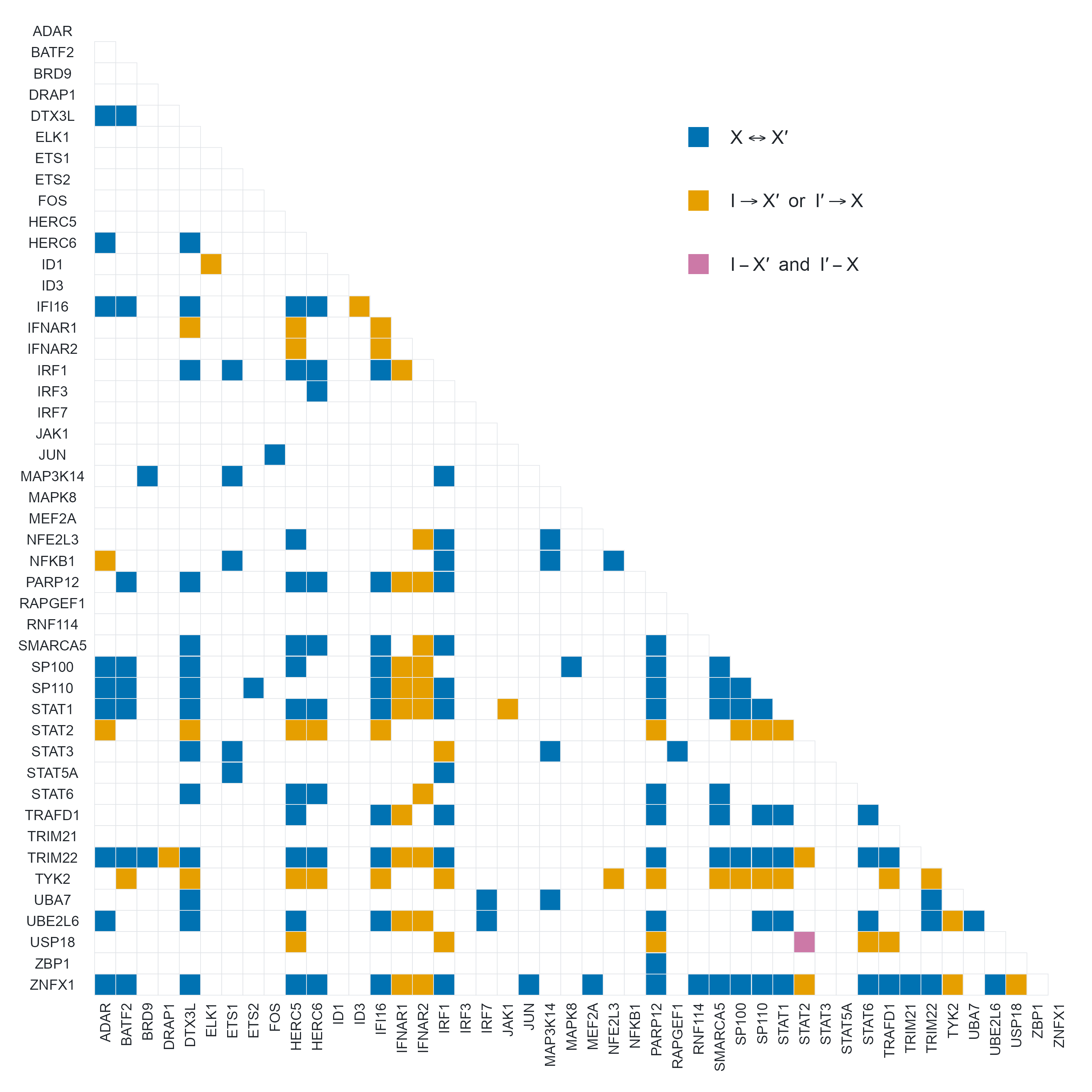}
    \caption{All 200 pairs in $\widehat{\bm E}_{\bm L}$ in real data analysis.}
    \label{fig:A549_latent_pairs}
\end{figure}

\section{Proofs}\label{sec:proof}

\subsection{Proof of Proposition \ref{prop:context_conditional}}\label{sec:proof:prop:context_conditional}

\begin{proof}[Proof of Proposition \ref{prop:context_conditional}]
    Since part 1) is a special case of part 2) with $K=0$, we only consider part 2). Then it suffices to prove the case where $\bm Z_1,\bm Z_2$ are one-dimensional random variables $Z_1,Z_2$. We separate the proof into two directions.
    
    \noindent\textbf{a) $(Z_1\not\indep Z_2\mid\bm Z_3)_{\Aug(\Gcal, I_k)_{\mid\bm C}}\Longrightarrow (Z_1\not\indep Z_2\mid\bm C, \bm Z_3)_{\Aug(\Gcal, I_k)}$.}

    When the left-hand side holds, there must exist an open path connecting $Z_1,Z_2$ in $\Aug(\Gcal, I_k)_{\mid\bm C}$ given $\bm Z_3$. Then non-colliders on the path are not in $\bm Z_3$ and colliders on the path must be ancestors of $\bm Z_3$. Since $\Aug(\Gcal, I_k)_{\mid\bm C}$ is an induced subgraph of $\Aug(\Gcal, I_k)$, the same path also exists in $\Aug(\Gcal, I_k)$, the status of collider or non-collider on the path remains the same, and an ancestor in $\Aug(\Gcal, I_k)_{\mid\bm C}$ is also an ancestor in $\Aug(\Gcal, I_k)$. Therefore, in $\Aug(\Gcal, I_k)$, the non-colliders on the path are also not in $(\bm Z_3,\bm C)$ and colliders on the path are ancestors of $\bm Z_3$ as well. Consequently, the path is open in $\Aug(\Gcal, I_k)$ and $(Z_1\not\indep Z_2\mid\bm C, \bm Z_3)_{\Aug(\Gcal, I_k)}$.
    
    \noindent\textbf{b) $(Z_1\not\indep Z_2\mid\bm C,\bm Z_3)_{\Aug(\Gcal, I_k)}\Longrightarrow (Z_1\not\indep Z_2\mid\bm Z_3)_{\Aug(\Gcal, I_k)_{\mid\bm C}}$.}

    When the left-hand side holds, there must exist an open path $\pi$ connecting $Z_1,Z_2$ in $\Aug(\Gcal, I_k)$ given $\bm C,\bm Z_3$. Then we make the following claim.
    \begin{claim}\label{clm:no_context}
        No middle node of $\pi$ is in $\ancestor_{\Aug(\Gcal, I_k)}(\bm C)$.
    \end{claim}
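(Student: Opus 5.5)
The plan is to prove Claim \ref{clm:no_context} by contradiction, using minimality along $\pi$ together with Assumption \ref{asm:context}. Suppose some middle node of $\pi$ lies in $\ancestor_{\Aug(\Gcal,I_k)}(\bm C)$, and let $V$ be such a node closest to $Z_1$ along $\pi$.

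The first step is to pin down where $V$ can lie. Nodes of $\bm X$ are excluded by $\ancestor_\Gcal(\bm C)\cap\bm X=\emptyset$, and this carries over to the augmented graph because $I_k$ only adds $I_k\to X_k$ out of $I_k$ into $\bm X$. Nodes of $\bm S$ are excluded because they have no children. Hence $V\in\bm C\cup\bm L\cup\{I_k\}$.

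\emph{Case $V=I_k$.} Since $I_k$ is a middle node, $I_k\in\bm Z_3$, so $I_k$ must be a collider on $\pi$. Its only parents are in $\bm C_1$, so both neighbours of $I_k$ on $\pi$ lie in $\bm C_1$. Neither neighbour can be an endpoint, since $Z_1\in\bm X$ and $Z_2\notin\bm C$. So each neighbour is a middle node with an out-edge toward $I_k$, hence a non-collider in the conditioning set $\bm C$, and $\pi$ is blocked.

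\emph{Case $V\in\bm C$.} Because $\pi$ is open given $(\bm C,\bm Z_3)$, $V$ must be a collider.

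\emph{Edge on the $Z_1$ side of $V$.} Let $U$ be the neighbour of $V$ on the $Z_1$ side. If the edge is $U\to V$, then $U\in\ancestor(\bm C)$. Either $U$ is a middle node closer to $Z_1$, contradicting the choice of $V$, or $U=Z_1\in\bm X$, contradicting Assumption \ref{asm:context}. So the edge must be $V\to U$. This already rules out $V\in\bm C$, since a collider cannot have an out-edge on $\pi$. It remains to treat $V\in\bm L$ with $V\to U$.

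\emph{Case $V\in\bm L$.} Follow $\pi$ from $V$ toward $Z_1$ along the maximal directed chain $V\to U_1\to\cdots\to U_r$. This chain ends either at $Z_1$ or at the first collider $W=U_r$. Every node of the chain other than $W$ is a non-collider, so none can be in $\bm C$. None can be $I_k$ either, since $I_k$ has only $\bm C_1$ parents.

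If the chain reaches a collider $W$, then $W$ is a descendant of $V$ and lies in $\ancestor(\bm C\cup\bm Z_3)$. We cannot have $W\in\ancestor(\bm C)$: $W$ would then be a middle node closer to $Z_1$, or $W=Z_1$, which is impossible. So $W\in\ancestor(\bm Z_3)\setminus\ancestor(\bm C)$. Then there is a directed path from $W$ to some node of $\bm X\cup\bm S$. It cannot reach $\bm Z_3$ only through $I_k$, since $I_k$ is not a descendant of $\bm L$.

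In either subcase, $V$ has a directed path to a node of $\bm X\cup\bm S$ that avoids $\bm C$ and $I_k$; for the collider subcase this needs the refinement discussed below. Truncate this path at its first non-latent node $X'\in\bm X\cup\bm S$. On the other side, take the directed path $V\to\cdots\to C$ and truncate it at its first non-latent node. That node cannot be in $\bm X$ (not ancestors of $\bm C$), cannot be in $\bm S$ (childless), and cannot be $I_k$ (no latent parents), so it is some $C'\in\bm C$. The two truncated paths exhibit $C'$ and $X'$ as latently confounded, contradicting Assumption \ref{asm:context}.

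\emph{Main obstacle.} The delicate point is the collider subcase. The directed path from $W$ down to $\bm Z_3$ may pass through context nodes, because $\bm C\to\bm X$ edges are allowed, so it does not immediately give a latent-only path to $\bm X\cup\bm S$. I would handle this by taking the directed chain $V\to\cdots\to W\to\cdots$ to the first non-latent node overall. If that node is in $\bm X\cup\bm S$, we are done. If it is some $C''\in\bm C$, then $C''$ lies off $\pi$ on a descendant chain of the collider $W$. Note that $W$ itself is latent or observed, but it is not in $\ancestor(\bm C)$. Hence no node at or below $W$ is an ancestor of $\bm C$, and this is a contradiction. In other words, every first non-latent node on the way down from $W$ lies in $\bm X\cup\bm S$, which is what is needed to produce the latent confounding.
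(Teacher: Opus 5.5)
Your proof is correct and follows essentially the paper's argument. You take the offending middle node closest to $Z_1$, dispose of $V\in\bm C$ and $V=I_k$ by local arguments at $V$, and in the latent case follow the directed chain toward $Z_1$ (to $Z_1$ itself, or to a collider $W\notin\ancestor_{\Aug(\Gcal,I_k)}(\bm C)$ and then down to $\bm X\cup\bm S$) to exhibit latent confounding of $\bm C$ with $\bm X\cup\bm S$; truncating the whole directed path at its first non-latent node is just a compact version of the paper's node-by-node induction.

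One justification needs repair: $I_k$ can be a descendant of $\bm L$, since latent parents of $\bm C_1$ are not excluded by Assumption \ref{asm:context}. The correct reason $W$ reaches $\bm Z_3\setminus\{I_k\}$ is that $\parent_{\Aug(\Gcal,I_k)}(I_k)\subseteq\bm C_1$ while $W\notin\ancestor_{\Aug(\Gcal,I_k)}(\bm C)$.
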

    By Claim \ref{clm:no_context}, no node in $\bm C$ is on $\pi$. Therefore, removing edges involving $\bm C$ will not affect $\pi$, and $\pi$ is still a path connecting $Z_1,Z_2$ in $\Aug(\Gcal, I_k)_{\mid\bm C}$. Since $\pi$ is open in $\Aug(\Gcal, I_k)$, non-colliders on $\pi$ are not in $(\bm C,\bm Z_3)$, and colliders are ancestors of $(\bm C,\bm Z_3)$. Following Claim \ref{clm:no_context}, colliders are in $\ancestor_{\Aug(\Gcal,I_k)}(\bm Z_3)\setminus\ancestor_{\Aug(\Gcal,I_k)}(\bm C)$. Therefore, for any collider $Z'$ on $\pi$, there is a path $Z'\rightarrow\ldots\rightarrow \tilde Z$ for some $\tilde Z\in\bm Z_3$ with no element of $\bm C$ on the path, therefore the path remains in $\Aug(\Gcal, I_k)_{\mid\bm C}$. Consequently, non-colliders on $\pi$ are not in $\bm Z_3$ and colliders are ancestors of $\bm Z_3$ in $\Aug(\Gcal, I_k)_{\mid\bm C}$. So $\pi$ is an open path connecting $Z_1,Z_2$ in $\Aug(\Gcal, I_k)_{\mid\bm C}$ given $\bm Z_3$, and we conclude the proof.
\end{proof}

\begin{proof}[Proof of Claim \ref{clm:no_context}]
    According to Assumption \ref{asm:context} and the definition of $\bm S$, we know $\bm X$ and $\bm S$ are not ancestors of $\bm C$, and thus $\ancestor_{\Aug(\Gcal, I_k)}(\bm C)\subseteq(\bm C,\bm L, I_k)$. Now we prove by contradiction, suppose $Z$ is the middle node in $\pi\cap\ancestor_{\Aug(\Gcal, I_k)}(\bm C)$ closest to $Z_1$.

    \noindent\textbf{1)} If $Z\in\bm C$, for $\pi$ to be open given $\bm C,\bm Z_3$, $Z$ must be a collider on $\pi$. Then there is no node between $Z_1$ and $Z$ on $\pi$. Otherwise, if there are nodes between them, the node $Z'$ closest to $Z$ must be a parent of $Z$ since $Z$ is a collider. Then $Z'\in \ancestor_{\Aug(\Gcal, I_k)}(\bm C)$ and is closer to $Z_1$ than $Z$, contradicting the definition of $Z$. 
    
    Therefore, $Z_1$ is a parent of $Z$, and thus $Z_1\in \ancestor_{\Aug(\Gcal, I_k)}(\bm C)$. Since $Z_1\in \bm X$, this contradicts Assumption \ref{asm:context}.

    \noindent\textbf{2)} If $Z\in\bm L$, the node $Z'$ between $Z_1,Z$ closest to $Z$ should have edge $Z'\leftarrow Z$. Otherwise, $Z'$ is in $\ancestor_{\Aug(\Gcal, I_k)}(\bm C)$. If $Z'\ne Z_1$, $Z'$ closer to $Z_1$ than $Z$ which contradicts the definition of $Z$, if $Z'=Z_1$, then $Z_1\in\bm X\cap\ancestor_{\Aug(\Gcal, I_k)}(\bm C)$ which contradicts Assumption \ref{asm:context}.

    Then $Z'$ must be in $\bm L$ as well. Otherwise, if $Z'\in\bm C$, it contradicts either the closeness property of $Z$ or Assumption \ref{asm:context}. If $Z'\in(\bm X,\bm S)$, since $Z\in\ancestor_{\Aug(\Gcal,I_k)}(\bm C)$ and $\ancestor_{\Aug(\Gcal,I_k)}(\bm C)\subset(\bm C,\bm L,I_k)$, there must be a directed path $Z\rightarrow\ldots\rightarrow C$ for some $C\in\bm C$ with all middle nodes in $(\bm L,I_k)$. Since $I_k$ is not adjacent to $\bm L$ by definition, all middle nodes on $Z\rightarrow\ldots\rightarrow C$ are in $\bm L$. Then $\bm X,\bm C$ or $\bm S,\bm C$ must share unmeasured confounders, contradicting Assumption \ref{asm:context}. If $Z'=I_k$, $I_k$ has a latent variable $Z$ as a parent, which contradicts the definition.

    If $Z'$ is not a collider on $\pi$, the node $Z''$ between $Z_1,Z'$ closest to $Z'$ must have edge $Z'' \leftarrow Z'$. Using the same argument as above, $Z''$ must be in $\bm L$ as well. By induction, for the collider $\tilde Z$ between $Z_1,Z'$ closest to $Z'$, the subpath between $\tilde Z,Z'$ has the form $\tilde Z\leftarrow\ldots\leftarrow Z'$ and consists of only latent variables in $\bm L$. Note that the collider must exist, since otherwise $Z_1$ and $\bm C$ share unmeasured confounders. 

    For $\pi$ to be open, the collider $\tilde Z$ must be in $\ancestor_{\Aug(\Gcal, I_k)}(\bm Z_3)\setminus\ancestor_{\Aug(\Gcal, I_k)}(\bm C)$ by the closeness property of $Z$. Since $\bm Z_3\subseteq(\bm X,\bm S, I_k)$, there must exist a path $\tilde Z\rightarrow\ldots\rightarrow \hat Z$ for some $\hat Z\in(\bm X,\bm S)$ such that middle nodes on the path belong to $\bm L$. To see this, $\bm C$ can not be the middle node since $\tilde Z\not\in\ancestor_{\Aug(\Gcal, I_k)}(\bm C)$; we can always choose $\hat Z$ to be the first node in $(\bm X,\bm S)$ that appears on the path, so $(\bm X,\bm S)$ are not the middle node; $I_k$ can not be the middle node since $\bm L$ can not be parents of $I_k$ by definition. 

    Then, $\hat Z\in(\bm X,\bm S)$ shares unmeasured confounders with $\bm C$, contradicting Assumption \ref{asm:context}.

    \noindent\textbf{3)} If $Z= I_k$, since $ I_k\in(\bm Z_2,\bm Z_3)$, $I_k$ is either an end point or conditioned, then $Z$ must be a collider. Let $Z'$ be the node between $Z_1,Z$ closest to $Z$, it must happen $Z'\rightarrow Z$, so $Z'\in\ancestor_{\Aug(\Gcal, I_k)}(\bm C)$. If $Z'\ne Z_1$, it contradicts the closeness property of $Z$, if $Z'=Z_1$, it contradicts Assumption \ref{asm:context}.

    Combining pieces concludes the proof.
\end{proof}

\subsection{Proof of Lemma \ref{lem:Markov_Aug}}

\begin{proof}[Proof of Lemma \ref{lem:Markov_Aug}]
    Denote
    \[\widetilde p_{\bm Z}^{(k)}=\prod_{C\in\bm C_2}p_{C\mid\parent_{\Gcal}(C)}^{(k)}p_{X_k\mid\parent_{\Gcal}(X_k)}^{(k)}\prod_{Z\in\bm Z\setminus(\bm C_2, X_k)}p_{Z\mid\parent_{\Gcal}(Z)}^{(0)},\]
    we know 
    \[p_{\bm Z}^{(k)}=\frac{\Prob(\bm I=e_k\mid\bm C_1)}{\Prob(\bm I=e_k)}\widetilde p_{\bm Z}^{(k)},\quad p_{\bm Z\mid\bm C}^{(k)}=\widetilde p_{\bm Z\mid\bm C}^{(k)}.\]
    We introduce an auxiliary random variable $J_k\in\{0,1\}$ with $\Prob(J_k=1)=\frac{1}{2}$ and define a joint distribution $Q$ of $(\bm Z,J_k)$ as 
    \[Q_{\bm Z\mid J_k=1}=\widetilde P_{\bm Z}^{(k)},\quad Q_{\bm Z\mid J_k=0}=P_{\bm Z}^{(0)}.\]
    Let $\Aug(\Gcal,J_k)$ be a graph obtained by relabeling the $I_k$ in $\Aug(\Gcal,I_k)$ as $J_k$. Then it is straightforward that $Q$ is Markov to $\Aug(\Gcal,J_k)$.

    By Proposition \ref{prop:context_conditional}, $(J_k\indep \bm X_1\mid\bm X_2,\bm S)_{\Aug(\Gcal_{\mid\bm C},J_k)}$ if and only if $(J_k\indep \bm X_1\mid\bm C,\bm X_2,\bm S)_{\Aug(\Gcal,J_k)}$. Since $Q$ is Markov to $\Aug(\Gcal,J_k)$, we have
    \begin{align*}
        &(J_k\indep \bm X_1\mid\bm X_2,\bm S)_{\Aug(\Gcal_{\mid\bm C},J_k)}\Leftrightarrow(J_k\indep \bm X_1\mid\bm C,\bm X_2,\bm S)_{\Aug(\Gcal,J_k)}\\
        \Rightarrow&J_k\indep_Q \bm X_1\mid\bm C,\bm X_2,\bm S\Rightarrow\widetilde P_{\bm X_1\mid\bm C,\bm X_2,\bm S}^{(k)}=P_{\bm X_1\mid\bm C,\bm X_2,\bm S}^{(0)}\Rightarrow P_{\bm X_1\mid\bm C,\bm X_2,\bm S}^{(k)}=P_{\bm X_1\mid\bm C,\bm X_2,\bm S}^{(0)}.
    \end{align*}
\end{proof}

\subsection{Proof of Proposition \ref{prop:IMEC}}\label{sec:proof:prop:IMEC}

\begin{lemma}\label{lem:add_I}
    For any disjoint sets $\bm X_1,\bm X_2,\bm X_3\subseteq\bm X$ and $k\in[K]$, $(\bm X_1\indep\bm X_2\mid\bm X_3,\bm S)_{\Gcal_{\mid\bm C}}$ if and only if $(\bm X_1\indep\bm X_2\mid\bm X_3,\bm S)_{\Aug(\Gcal_{\mid\bm C}, I_k)}$ if and only if $(\bm X_1\indep\bm X_2\mid\bm X_3,\bm S, I_k)_{\Aug(\Gcal_{\mid\bm C}, I_k)}$.
\end{lemma}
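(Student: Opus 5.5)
The plan is to exploit the fact that $I_k$ is a source node in $\Aug(\Gcal_{\mid\bm C},I_k)$ with a single child $X_k$, since $\bm C$ has been removed. Hence every path in the augmented graph that passes through $I_k$ as a non-endpoint must enter and leave $I_k$ through the only edge $I_k\rightarrow X_k$. No path in the sense of a sequence of distinct nodes can do this, so no path between nodes of $\bm X$ contains $I_k$ at all. Consequently, the set of paths between $\bm X_1$ and $\bm X_2$ in $\Aug(\Gcal_{\mid\bm C},I_k)$ coincides with the set of such paths in $\Gcal_{\mid\bm C}$. We then compare openness of a fixed path under the three conditioning regimes.

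For the first equivalence, conditioning on $(\bm X_3,\bm S)$ in both graphs, we take any path $\pi$ between $\bm X_1$ and $\bm X_2$. Its non-colliders and colliders are the same in both graphs, since the edges along $\pi$ are identical. Non-colliders avoid $\bm X_3\cup\bm S$ in one graph if and only if in the other. For colliders we need $\ancestor_{\Aug(\Gcal_{\mid\bm C},I_k)}(\bm X_3,\bm S)\cap\Vcal(\Gcal_{\mid\bm C})=\ancestor_{\Gcal_{\mid\bm C}}(\bm X_3,\bm S)$. This holds because the only new node is $I_k$, which has no parents, so no new directed path between original nodes can pass through it. Hence $\pi$ is open in one graph if and only if it is open in the other, and $d$-separation agrees.

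For the second equivalence, we compare conditioning on $(\bm X_3,\bm S)$ and on $(\bm X_3,\bm S,I_k)$ within the augmented graph. Since $I_k$ never lies on a path between $\bm X_1$ and $\bm X_2$, adding it to the conditioning set cannot block a non-collider. The only possible effect is on colliders: a collider $Z$ could become open via $Z\in\ancestor(I_k)$. But $\ancestor(I_k)=\{I_k\}$ because $I_k$ has no parents, and $I_k$ is not on the path. So the set of open paths is unchanged.

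The main obstacle is the edge-case bookkeeping. We must check that $I_k$ truly has no parents and exactly the one child $X_k$ in the SIS version (using that $\Aug(\Gcal_{\mid\bm C},I_k)=\Aug(\Gcal,I_k)_{\mid\bm C}$ removes the $\bm C_1\rightarrow I_k$ edges). We must also state the set-level $d$-separation as separation of every pair $Z_1\in\bm X_1$, $Z_2\in\bm X_2$, so the path argument applies pairwise.
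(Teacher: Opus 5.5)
Your proposal is correct and uses essentially the same argument as the paper. Both proofs rest on $I_k$ having no parents and the single neighbor $X_k$ in $\Aug(\Gcal_{\mid\bm C},I_k)$, so $I_k$ never lies on a path between nodes of $\bm X$ and contributes no ancestors to the colliders; the paper arranges this as a cycle of three $d$-connection implications, while you prove two path-by-path biconditionals, which amounts to the same thing.
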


\begin{proof}[Proof of Lemma \ref{lem:add_I}]
    Since $I_k$ is a source node and has only one adjacent node $X_k$ in $\Aug(\Gcal_{\mid\bm C},I_k)$, we know $I_k$ can not be a middle node on any open path in $\Aug(\Gcal_{\mid\bm C},I_k)$. 

    \noindent\textbf{a)} $(\bm X_1\not\indep\bm X_2\mid\bm X_3,\bm S)_{\Gcal_{\mid\bm C}}\Longrightarrow (\bm X_1\not\indep\bm X_2\mid\bm X_3,\bm S)_{\Aug(\Gcal_{\mid\bm C}, I_k)}$.

    There exists an open path $\pi$ given $(\bm X_3, \bm S)$ connecting $\bm X_1,\bm X_2$ in $\Gcal_{\mid\bm C}$. No non-collider on $\pi$ is in $(\bm X_3,\bm S)$ and all colliders are in $\ancestor_{\Gcal_{\mid\bm C}}(\bm X_3,\bm S)$. Since $\Gcal_{\mid\bm C}$ is a subgraph of $\Aug(\Gcal_{\mid\bm C}, I_k)$, we know $\pi$ also exists in $\Aug(\Gcal_{\mid\bm C}, I_k)$, collider status on $\pi$ remains in $\Aug(\Gcal_{\mid\bm C}, I_k)$ and ancestors in $\Gcal_{\mid\bm C}$ remains in the augmented graph. Therefore, $\pi$ is also open in $\Aug(\Gcal_{\mid\bm C}, I_k)$.

    \noindent\textbf{b)} $(\bm X_1\not\indep\bm X_2\mid\bm X_3,\bm S)_{\Aug(\Gcal_{\mid\bm C}, I_k)}\Longrightarrow (\bm X_1\not\indep\bm X_2\mid\bm X_3,\bm S, I_k)_{\Aug(\Gcal_{\mid\bm C}, I_k)}$.

    There must exist an open path $\pi$ connecting $\bm X_1,\bm X_2$ given $(\bm X_3,\bm S)$ in $\Aug(\Gcal_{\mid\bm C}, I_k)$. Since $I_k$ cannot be a middle node on $\pi$, $\pi$ remains open give $(\bm X_3,\bm S, I_k)$.

    \noindent\textbf{c)} $(\bm X_1\not\indep\bm X_2\mid\bm X_3,\bm S, I_k)_{\Aug(\Gcal_{\mid\bm C}, I_k)}\Longrightarrow (\bm X_1\not\indep\bm X_2\mid\bm X_3,\bm S)_{\Gcal_{\mid\bm C}}$.

    There must exist an open path $\pi$ connecting $\bm X_1,\bm X_2$ given $(\bm X_3,\bm S, I_k)$ in $\Aug(\Gcal_{\mid\bm C}, I_k)$. Non-colliders on $\pi$ are not in $(\bm X_3, \bm S, I_k)$ and all colliders are in $\ancestor_{\Aug(\Gcal_{\mid\bm C}, I_k)}(\bm X_3, \bm S, I_k)$. Since $I_k$ is not on $\pi$ and has no ancestors, we know all colliders are in $\ancestor_{\Aug(\Gcal_{\mid\bm C}, I_k)}(\bm X_3, \bm S)$. For any collider $Z$ on $\pi$, there must exist a path $\pi_Z$ (possibly with length zero) $Z\rightarrow \ldots\rightarrow Z'$ for some $Z'\in(\bm X_3,\bm S)$. Clearly $I_k$ is not on $\pi_Z$ as well. Then the paths $\pi$ and $\pi_Z$ remain in $\Gcal_{\mid\bm C}$ and $\pi$ is still open given $(\bm X_3,\bm S)$ in $\Gcal_{\mid\bm C}$.

\end{proof}

\begin{proof}[Proof of Proposition \ref{prop:IMEC}]

    By Lemma \ref{lem:add_I}, we have part 1) in Definition \ref{def:IMEC} is equivalent to $(\bm X_1\indep \bm X_2\mid \bm X_3,S')_{\Aug(\Gcal_{\mid\bm C}',I_k)}$ if and only if $(\bm X_1\indep \bm X_2\mid \bm X_3,S)_{\Aug(\Gcal_{\mid\bm C},I_k)}$, and it is also equivalent to $(\bm X_1\indep \bm X_2\mid \bm X_3,S',I_k)_{\Aug(\Gcal_{\mid\bm C}',I_k)}$ if and only if $(\bm X_1\indep \bm X_2\mid \bm X_3,S,I_k)_{\Aug(\Gcal_{\mid\bm C},I_k)}$. 
    
    Then, we only need to show that the $d$-separation equivalence between $\Aug(\Gcal_{\mid\bm C}, I_k)$ and $\Aug(\Gcal_{\mid\bm C}', I_k)$ is equivalent to the MAG Markov equivalence between $\MAG(\Aug(\Gcal_{\mid\bm C}, I_k))$ and $\MAG(\Aug(\Gcal_{\mid\bm C}', I_k))$, which follows from Theorem 4.18 in \cite{richardson2002ancestral} and Theorem 1 in \cite{spirtes1997polynomial}. Therefore, we complete the proof.
\end{proof}

\subsection{Proof of Lemma \ref{lem:CM}}\label{sec:proof:lem:CM}

\begin{proof}[Proof of Lemma \ref{lem:CM}]
    By Theorem 4.2 in \cite{richardson2002ancestral}, $X_i, X_j$ are adjacent in $\MAG(\Aug(\Gcal_{\mid\bm C}, I_k))$ if and only if $\big(X_i\not\indep X_j\mid \ancestor_{\Aug(\Gcal_{\mid\bm C}, I_k)}(X_i,X_j,\bm S)\setminus(X_i,X_j,\bm L)\big)_{\Aug(\Gcal_{\mid\bm C},I_k)}$. By Lemma \ref{lem:add_I}, it is also equivalent to $\big(X_i\not\indep X_j\mid \ancestor_{\Gcal_{\mid\bm C}}(X_i,X_j,\bm S)\setminus(X_i,X_j,\bm L)\big)_{\Gcal_{\mid\bm C}}$, which is equivalent to the adjacency of $X_i,X_j$ in $\MAG(\Gcal_{\mid\bm C})$. Therefore, the adjacency among $\bm X$ is the same for all the $K$ MAGs, which equals the adjacency in $\MAG(\Gcal_{\mid\bm C})$. 
    
    Note that $I_k$ is only adjacent to $X_k$ in $\Aug(\Gcal_{\mid\bm C}, I_k)$, therefore it cannot be a middle node on any open path in $\Aug(\Gcal_{\mid\bm C}, I_k)$. Consequently, $\ancestor_{\Aug(\Gcal_{\mid\bm C}, I_k)}(X_j, \bm S)\cap\bm X=\ancestor_{\Gcal_{\mid\bm C}}(X_j, \bm S)\cap\bm X$ and doesn't depend on $k$. By Definition \ref{def:MAG}, the edge types among $\bm X$ are the same for all $K$ MAGs, and thus equal to the edge types in $\MAG(\Gcal_{\mid\bm C})$.
\end{proof}

\subsection{Proof of Theorem \ref{thm:unique}}\label{sec:proof:thm:unique}

\begin{proof}[Proof of Theorem \ref{thm:unique}]
    \noindent\textbf{Sufficiency:} When $K=d_X$, Propositions \ref{prop:adjacency_latent_selection} and \ref{prop:direction_latent_selection} show that the adjacency and edge types can be uniquely determined through CI relations in the augmented graphs. Therefore, $\TCM(\Gcal,\bm I)$ is uniquely identified.

    \noindent\textbf{Necessity:} We only need to find a DAG $\Gcal$, and for each $K<d_X$ intervention targets, construct a DAG $\Gcal'$  such that $\TCM(\Gcal',\bm I)\in\Mcal(\TCM(\Gcal,\bm I))$ and $\TCM(\Gcal',\bm I)\ne \TCM(\Gcal,\bm I)$. To this end, define $\Gcal$ to be the DAG on $(\bm X, S)$ with $d_X$ edges $\{X_i\rightarrow S: i\in[d_X]\}$. Since all the $d_X$ observed nodes $\bm X$ are symmetric in $\Gcal$, without the loss of generality, we assume the $K<d_X$ intervention targets are $X_1,\ldots X_K$. Then $\CM(\Gcal,\bm I)$ has edges $\{I_i-X_i:i\in[K]\}\cup\{X_i-X_j:i<j\in[d_X]\}$ with no unshielded collider and discriminating path. Define another DAG $\Gcal'$ on $(\bm X,S)$ with edges $\{X_i\rightarrow S: i\in[d_X-1]\}\cup\{X_i\rightarrow X_{d_X}:i\in[d_X-1]\}$. Then $\CM(\Gcal',\bm I)$ has edges $\{I_i-X_i:i\in[K]\}\cup\{X_i-X_j:i<j\in[d_X-1]\}\cup\{X_i\rightarrow X_{d_X}:i\in[d_X-1]\}$. See Figure \ref{fig:necessary} for illustration. It is evident that $\MAG(\Gcal,I_j),\MAG(\Gcal',I_j)$ share the same skeleton and have no unshielded collider and discriminating path. Then $\TCM(\Gcal',\bm I)\in\Mcal(\TCM(\Gcal,\bm I))$ but $\TCM(\Gcal',\bm I)\ne \TCM(\Gcal,\bm I)$.
    \begin{figure}
        \centering
        \begin{subfigure}{0.2\textwidth}
            \centering
            \includegraphics[width=0.9\linewidth]{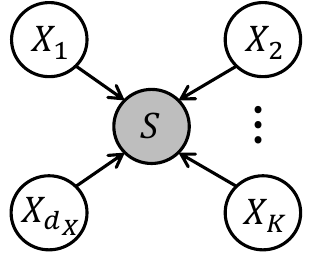}
            \caption{$\Gcal$}
        \end{subfigure}
        \hspace{2pt}
        \begin{subfigure}{0.2\textwidth}
            \centering
            \includegraphics[width=0.9\linewidth]{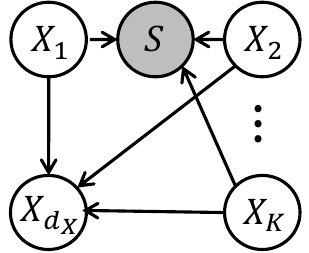}
            \caption{$\Gcal'$}
        \end{subfigure}
        \hspace{3pt}
        \begin{subfigure}{0.25\textwidth}
            \centering
            \includegraphics[width=\linewidth]{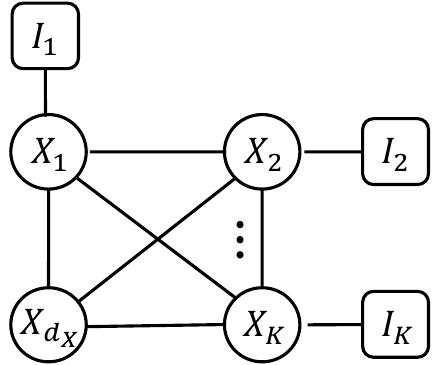}
            \caption{$\CM(\Gcal,\bm I)$}
        \end{subfigure}
        \hspace{14pt}
        \begin{subfigure}{0.25\textwidth}
            \centering
            \includegraphics[width=\linewidth]{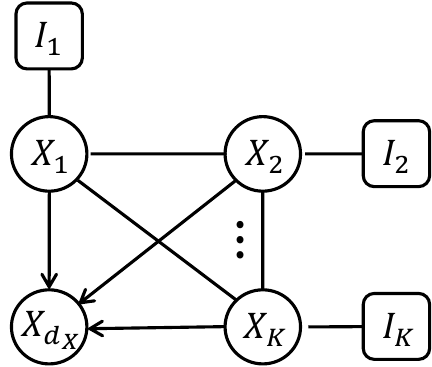}
            \caption{$\CM(\Gcal',\bm I)$}
        \end{subfigure}
        \caption{Illustration for proof of Theorem \ref{thm:unique} with $K=d_X-1$.}
        \label{fig:necessary}
    \end{figure}
\end{proof}

\subsection{Proof of Lemma \ref{lem:identify_GX_selection}}\label{sec:proof:lem:identify_GX_selection}

\begin{lemma}\label{lem:inducing_path_selection}
    If $d_L=0$, for any inducing path $\pi$ in $\Aug(\Gcal_{\mid\bm C},I_k)$, we have 1) $\pi$ contains at most one middle node in $\bm X$, and 2) if $\pi$ has a middle node in $\bm X$, $\pi$ doesn't intersect with $\bm S$ and all the three nodes on $\pi$ are in $\ancestor_{\Aug(\Gcal_{\mid\bm C},I_k)}(\bm S)$.
\end{lemma}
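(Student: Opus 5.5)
We argue directly from Definition \ref{def:inducing_path}, using that $d_L=0$, so every non-endpoint node of an inducing path $\pi$ in $\Aug(\Gcal_{\mid\bm C},I_k)$ is a collider. We also use that colliders are ancestors of an endpoint or of $\bm S$.

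Three structural facts drive the argument.

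\emph{Middle nodes lie in $\bm X$.} The nodes of $\Aug(\Gcal_{\mid\bm C},I_k)$ are $(\bm X,\bm S,I_k)$. The indicator $I_k$ is a source, so it cannot be a collider and hence cannot be a middle node of $\pi$. Selection variables $S\in\bm S$ have no children, so they cannot be middle nodes on a path at all, since a middle node needs two neighbours on the path. Any path through $S$ enters and leaves via parents, which makes $S$ a collider, and that is allowed. So middle nodes may be in $\bm S$, as colliders with two parents, or in $\bm X$. In fact $S$ can be a middle node only as a collider $Z\to S\leftarrow Z'$.

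\emph{At most one middle node in $\bm X$.} Suppose two consecutive middle nodes are both colliders. Then the edge between them would need arrowheads at both ends, which is impossible in a DAG. So two middle nodes cannot be adjacent on $\pi$. The same holds for any two consecutive middle nodes, including ones in $\bm S$, so $\pi$ has at most one middle node overall. This gives part 1): at most one middle node, hence at most one in $\bm X$.

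\emph{If the middle node is in $\bm X$.} Then $\pi=\langle A,X,B\rangle$ with $A\to X\leftarrow B$, and by the single-middle-node fact $\pi$ contains no node of $\bm S$. The endpoints $A,B$ are observed non-selection nodes, namely in $\bm X$ or $I_k$, so $\pi$ does not intersect $\bm S$.

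The remaining claim is that $A,X,B\in\ancestor(\bm S)$, and this is the main obstacle. The collider $X$ lies in $\ancestor(A,B,\bm S)$. If $X\in\ancestor(A)$, then together with $A\to X$ this gives a directed cycle, which is impossible. By symmetry $X\notin\ancestor(B)$. Hence $X\in\ancestor(\bm S)$. Since $A$ and $B$ are parents of $X$, they are also ancestors of $\bm S$.

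One convention must be checked: if $A=I_k$, is $I_k\in\ancestor(\bm S)$? It is, since $I_k\to X_k=X\to\cdots\to S$.

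With this, all three nodes lie in $\ancestor_{\Aug(\Gcal_{\mid\bm C},I_k)}(\bm S)$, completing part 2).
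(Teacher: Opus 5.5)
Your proof is correct and uses the same mechanism as the paper's. With $\bm L=\emptyset$ every middle node of $\pi$ must be a collider, and a collider's neighbour on $\pi$ is joined to it by an edge out of that neighbour, so the neighbour cannot itself be a collider.

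The difference is in packaging. The paper applies this locally, twice. For part 1) it takes the observed middle node nearest one endpoint and shows that its neighbour toward the other endpoint would have to be an observed non-collider. For part 2) it separately rules out an $\bm S$-node adjacent to $X$. You apply the fact once, to any two consecutive middle nodes, and conclude that $\pi$ has at most one middle node of any type. Both ``at most one middle node in $\bm X$'' and ``$\pi$ does not meet $\bm S$'' then follow immediately. Your route is shorter and gives a stronger intermediate fact.

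You also make the acyclicity step explicit: $X\notin\ancestor(A)\cup\ancestor(B)$ because $A$ and $B$ are parents of $X$. The paper's one-line ``since $\pi$ is an inducing path, $X$ \ldots\ must be in $\ancestor(\bm S)$'' leaves this implicit.

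One clean-up: delete the heading ``Middle nodes lie in $\bm X$'' and the sentence claiming selection variables cannot be middle nodes at all. Both are false, since a selection node can appear as a collider $Z\to S\leftarrow Z'$, as your own next sentences acknowledge. Nothing later in the argument depends on them.
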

\begin{proof}[Proof of Lemma \ref{lem:inducing_path_selection}]
    \noindent\textbf{1)} We prove by contradiction. Suppose $\pi$ has end nodes $Z_j\in(\bm X, I_k)$ and $X_l\in\bm X$ and at least two observable middle nodes. Denote $X\in\bm X$ to be the one closest to $Z_j$ on $\pi$. By Definition \ref{def:inducing_path}, $X$ is a collider on $\pi$, and thus the edge on the subpath between $X$ and $X_l$ adjacent to $X$ must be into $X$. Therefore, the node $X'$ closest to $X$ on the subpath between $X$ and $X_l$ can not be a selection variable. Since $I_k$ can not be a middle node, $X'$ must be in $\bm X$. Since the edge between $X$ and $X'$ is $X\leftarrow X'$, $X'$ is not a collider on $\pi$, which contradicts the fact that $\pi$ is an inducing path.

    \noindent\textbf{2)} Suppose $\pi$ has end nodes $Z_j\in(\bm X, I_k)$ and $X_l\in\bm X$, and has middle nodes $X\in\bm X$. 
    
    If $\pi$ has middle nodes in $\bm S$, there must be a middle node $S\in\bm S$ adjacent to $X$ on $\pi$. Then the edge between $X$ and $S$ must be $X\rightarrow S$, and $X$ becomes a non-collider on $\pi$, which contradicts the fact that $\pi$ is an inducing path. 
    
    Therefore, $\pi$ equals $Z_j\rightarrow X\leftarrow X_l$. Since $\pi$ is an inducing path, $X$, and thus $Z_j,X_l$, must be in $\ancestor_{\Aug(\Gcal_{\mid\bm C},I_k)}(\bm S)$.
    
\end{proof}

\begin{proof}[Proof of Lemma \ref{lem:identify_GX_selection}]
    For any $X_j,X_k\in\bm X$, it suffices to determine the edge between them in $\Gcal_{\bm X}$ based on $\TCM(\Gcal_{\mid\bm C},\bm I)$.
    
    \noindent\textbf{1)} First, it is clear that if there is no edge between $X_j, X_k$ in $\TCM(\Gcal_{\mid\bm C},\bm I)$, there is no edge between $X_j, X_k$ in $\Gcal_{\bm X}$ as well.

    \noindent\textbf{2)} Next, we show that if $X_j\rightarrow X_k$ is in $\TCM(\Gcal_{\mid\bm C},\bm I)$, then $X_j\rightarrow X_k$ is also in $\Gcal_{\bm X}$.

    Since $X_j\rightarrow X_k$ is in $\TCM(\Gcal_{\mid\bm C},\bm I)$, there is an inducing path $\pi$ between $X_j$ and $X_k$. Since the edge has an arrowhead on $X_k$, we know $X_k$ is not in $\ancestor_{\Gcal_{\mid\bm C}}(\bm S)$. Then Lemma \ref{lem:inducing_path_selection} implies that $\pi$ doesn't contain middle nodes in $\bm X$.

    Now we argue that $\pi$ doesn't contain middle nodes in $\bm S$ as well. Otherwise, if $\pi$ contains middle nodes in $\bm S$, the node adjacent to $X_k$, denoted as $S$, must be in $\bm S$. Then the edge between $X_k$ and $S$ must be $X_k\rightarrow S$. Then $X_k\in \ancestor_{\Gcal_{\mid\bm C}}(\bm S)$, which contradicts the edge $X_j\rightarrow X_k$ in $\TCM(\Gcal_{\mid\bm C},\bm I)$. Therefore, $\pi$ must be $X_j\rightarrow X_k$, which remains in the induced subgraph $\Gcal_{\bm X}$.

    \noindent\textbf{3)} Finally, we show that if $X_j-X_k$ is in $\TCM(\Gcal_{\mid\bm C},\bm I)$, the edge between $X_j, X_k$ in $\Gcal_{\bm X}$ can also be recovered. 

    \textbf{a)} We prove that if $I_k-X_j$ is in $\TCM(\Gcal_{\mid\bm C},\bm I)$, then $X_j\rightarrow X_k$ is in $\Gcal_{\bm X}$.

    There must exist an inducing path $\pi$ between $I_k$ and $X_j$. Since $I_k$ is only adjacent to $X_k$ in $\Gcal_{\mid\bm C}$, $X_k$ is on $\pi$. Then Lemma \ref{lem:inducing_path_selection} implies that $\pi$ contains only $X_j, X_k, I_k$. Then $X_k$ must be a collider. Therefore, $X_j\rightarrow X_k$ exists in $\Gcal_{\mid\bm C}$, and thus $\Gcal_{\bm X}$.

    \textbf{b)} We prove that if $X_j\rightarrow X_k$ is in $\Gcal_{\bm X}$, then $I_k-X_j$ is in $\TCM(\Gcal_{\mid\bm C},\bm I)$.

    Since $X_j\rightarrow X_k$ is in $\Gcal_{\bm X}$, $X_k$ is not an ancestor of $X_j$. Since $X_j-X_k$ is in $\TCM(\Gcal_{\mid\bm C},\bm I)$, $X_k$ must be an ancestor of $\bm S$. Then the path $X_j\rightarrow X_k\leftarrow I_k$ is an inducing path in $\Aug(\Gcal_{\mid\bm C},I_k)$, and thus $I_k-X_j$ is in $\TCM(\Gcal_{\mid\bm C},\bm I)$.
\end{proof}

\subsection{Proof of Lemmas \ref{lem:interpretation_latent} and \ref{lem:interpretation_latent_selection}}\label{sec:proof:lem:interpretation_latent_selection}

Since Lemma \ref{lem:interpretation_latent} is a special case of Lemma \ref{lem:interpretation_latent_selection}, we only prove the latter.

\begin{lemma}\label{lem:inducing_tail_ancestry_latent_selection}
    For $j\ne k\in[d_X]$, the existence of an inducing path in $\Gcal_{\mid\bm C}$ between $X_j,X_k$ out of $X_k$ implies that $X_k\in\ancestor_{\Gcal_{\mid\bm C}}(X_j,\bm S)$.
\end{lemma}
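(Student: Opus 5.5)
We need to prove: an inducing path $\pi$ in $\Gcal_{\mid\bm C}$ between $X_j$ and $X_k$ whose edge at $X_k$ is out of $X_k$ (i.e., $X_k\rightarrow Z$ with $Z$ the neighbor of $X_k$ on $\pi$) forces $X_k\in\ancestor_{\Gcal_{\mid\bm C}}(X_j,\bm S)$. The plan is to follow directed edges from $X_k$ along $\pi$ until we reach a collider, and then use the inducing-path requirement that colliders are ancestors of an endpoint or of $\bm S$.

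The key steps are as follows. Let $Z$ be the node adjacent to $X_k$ on $\pi$, with $X_k\rightarrow Z$. If $Z=X_j$, then $X_k\in\ancestor_{\Gcal_{\mid\bm C}}(X_j)$ and we are done. Otherwise $Z$ is a middle node, so by Definition \ref{def:inducing_path} it is either in $\bm L$ or a collider. If $Z\in\bm S$, we are done, since $X_k\in\ancestor(\bm S)$. If $Z$ is a collider, it is an ancestor of $X_j$, $X_k$, or $\bm S$. Being an ancestor of $X_k$ is impossible, because $X_k\rightarrow Z$ would create a cycle. So $Z\in\ancestor(X_j,\bm S)$, and hence $X_k\in\ancestor(X_j,\bm S)$.

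If $Z$ is a non-collider, it must lie in $\bm L$. Since the edge $X_k\rightarrow Z$ is into $Z$, the next edge on $\pi$ must be out of $Z$, namely $Z\rightarrow Z'$. We iterate: traversing $\pi$ from $X_k$ toward $X_j$, we obtain a directed path $X_k\rightarrow Z_1\rightarrow Z_2\rightarrow\cdots$ as long as the intermediate nodes are non-colliders. Let $W$ be the first node on $\pi$ after $X_k$ that is either $X_j$ or a collider on $\pi$. Such a $W$ exists because $\pi$ is finite and ends at $X_j$. Every node strictly between $X_k$ and $W$ is a non-collider that received an arrowhead from its predecessor, so it emits a tail toward its successor; by induction the subpath from $X_k$ to $W$ is directed, $X_k\rightarrow\cdots\rightarrow W$.

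Two cases then finish the argument. If $W=X_j$, then $X_k\in\ancestor(X_j)$. If $W$ is a collider, then $W\in\ancestor(X_j,X_k,\bm S)$. If $W\in\ancestor(X_k)$, then $X_k\rightarrow\cdots\rightarrow W\rightarrow\cdots\rightarrow X_k$ is a directed cycle, which is impossible in the DAG $\Gcal_{\mid\bm C}$. Hence $W\in\ancestor(X_j,\bm S)$, and by transitivity $X_k\in\ancestor_{\Gcal_{\mid\bm C}}(X_j,\bm S)$.

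The only delicate point is the induction that the prefix is directed. A non-collider with an incoming arrow from the previous node must have an outgoing edge to the next node, and the nodes $\bm S$ never appear as such non-colliders because they have no children. This makes the argument routine rather than a genuine obstacle.
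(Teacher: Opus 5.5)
Your proof is correct and matches the paper's argument. You follow the directed prefix of $\pi$ from $X_k$ to the first collider, or to $X_j$ if $\pi$ has no collider. You then use acyclicity to rule out that collider lying in $\ancestor_{\Gcal_{\mid\bm C}}(X_k)$, so it lies in $\ancestor_{\Gcal_{\mid\bm C}}(X_j,\bm S)$, and transitivity finishes.
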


\begin{proof}[Proof of Lemma \ref{lem:inducing_tail_ancestry_latent_selection}]
    Suppose $\pi$ is an inducing path in $\Gcal_{\mid\bm C}$ between $X_j,X_k$ and $\pi$ has a tail at $X_k$. 
    
    If $\pi$ has no collider, then every edge in $\pi$ has arrowhead pointing to $X_j$, i.e., $X_j\leftarrow\ldots\leftarrow X_k$. Therefore, $X_k\in\ancestor_{\Gcal_{\mid\bm C}}(X_j)$.

    If $\pi$ has colliders, denote $Z$ as the one closest to $X_k$, i.e., $X_j\ldots \rightarrow Z\leftarrow \ldots\leftarrow X_k$. Then the subpath of $\pi$ between $Z,X_k$ has all arrowheads pointing to $Z$, and thus $X_k\in\ancestor_{\Gcal_{\mid\bm C}}(Z)\setminus\{Z\}$. Since $Z$ is a collider on an inducing path, we have $Z\in\ancestor_{\Gcal_{\mid\bm C}}(X_j,X_k,\bm S)$. Since $X_k\in\ancestor_{\Gcal_{\mid\bm C}}(Z)\setminus\{Z\}$, we know $Z\not\in\ancestor_{\Gcal_{\mid\bm C}}(X_k)$. Then $Z\in\ancestor_{\Gcal_{\mid\bm C}}(X_j,\bm S)$, and thus $X_k\in\ancestor_{\Gcal_{\mid\bm C}}(X_j,\bm S)$.
\end{proof}

\begin{proof}[Proof of Lemma \ref{lem:interpretation_latent_selection}]
    \noindent\textbf{1)} If $X_j-X_k$ is in $\TCM(\Gcal_{\mid\bm C},\bm I)$, it follows from the proof of Lemma \ref{lem:CM} that $X_j-X_k$ is also in $\MAG(\Gcal_{\mid\bm C})$. Then $X_j\in\ancestor_{\Gcal_{\mid\bm C}}(X_k,\bm S)$ and $X_k\in\ancestor_{\Gcal_{\mid\bm C}}(X_j,\bm S)$. 
    
    If $X_j\in\ancestor_{\Gcal_{\mid\bm C}}(X_k)\setminus\{X_k\}$, $X_k\not\in\ancestor_{\Gcal_{\mid\bm C}}(X_j)$. Therefore, $X_k\in\ancestor_{\Gcal_{\mid\bm C}}(\bm S)$ and $X_j\in\ancestor_{\Gcal_{\mid\bm C}}(X_k)\subseteq\ancestor_{\Gcal_{\mid\bm C}}(\bm S)$. 

    If $X_j\in\ancestor_{\Gcal_{\mid\bm C}}(\bm S)$, then $X_k\in\ancestor_{\Gcal_{\mid\bm C}}(X_j,\bm S)=\ancestor_{\Gcal_{\mid\bm C}}(\bm S)$.

    \noindent\textbf{2)} If $X_j\leftrightarrow X_k$ is in $\TCM(\Gcal_{\mid\bm C},\bm I)$, it follows from the proof of Lemma \ref{lem:CM} that $X_j\leftrightarrow X_k$ is in $\MAG(\Gcal_{\mid\bm C})$. By Definition \ref{def:MAG}, we know $X_j\not\in\ancestor_{\Gcal_{\mid\bm C}}(X_k,\bm S),X_k\not\in\ancestor_{\Gcal_{\mid\bm C}}(X_j,\bm S)$ and there are inducing paths in $\Gcal_{\mid\bm C}$ between $X_j,X_k$. Then Lemma \ref{lem:inducing_tail_ancestry_latent_selection} implies that all such inducing paths are into both $X_j$ and $X_k$. For any such inducing path $\pi$, there must exist middle nodes on $\pi$. Let $Z$ be the one closest to $X_j$. We know $Z$ is a non-collider and thus must be in $\bm L$. Similarly, the middle node closest to $X_k$ is also in $\bm L$. Therefore, all inducing paths between $X_j,X_k$ have the form $X_j\leftarrow L\ldots L'\rightarrow X_k$ for some $L,L'\in\bm L$.

    \noindent\textbf{3)} If $X_j\rightarrow X_k$ is in $\TCM(\Gcal_{\mid\bm C},\bm I)$, it follows from the proof of Lemma \ref{lem:CM} that $X_j\rightarrow X_k$ is in $\MAG(\Gcal_{\mid\bm C})$. Then we have $X_j\in\ancestor_{\Gcal_{\mid\bm C}}(X_k,\bm S),X_k\not\in\ancestor_{\Gcal_{\mid\bm C}}(X_j,\bm S)$ and there exists inducing paths in $\Gcal_{\mid\bm C}$ between $X_j,X_k$. Then Lemma \ref{lem:inducing_tail_ancestry_latent_selection} implies that all such inducing paths are into $X_k$.

    \noindent\textbf{4)}

    \textbf{a)} If $I_j-X_k$ is in $\TCM(\Gcal_{\mid\bm C},\bm I)$, it follows from the proof of Lemma \ref{lem:CM} that $I_j-X_k$ is also in $\MAG(\Aug(\Gcal_{\mid\bm C},I_j))$. Then $X_k\in\ancestor_{\Aug(\Gcal_{\mid\bm C},I_j)}(I_j,\bm S)$ and there is an inducing path $\pi$ connecting $I_j,X_k$ in $\Aug(\Gcal_{\mid\bm C},I_j)$. On $\pi$, every non-collider middle node is latent, and every collider is in $\ancestor_{\Aug(\Gcal_{\mid\bm C},I_j)}(I_j,X_k,\bm S)$. Since $I_j$ is a source node in $\Aug(\Gcal_{\mid\bm C},I_j)$, we have $X_k\in\ancestor_{\Aug(\Gcal_{\mid\bm C},I_j)}(\bm S)$ and every collider on $\pi$ is in $\ancestor_{\Aug(\Gcal_{\mid\bm C},I_j)}(X_k,\bm S)$. For every collider $Z$, there must be a directed path from $Z$ to $(X_k,\bm S)$. Since $I_j$ can not be a middle node on the directed path, the directed path is also in $\Gcal_{\mid\bm C}$. Then all colliders on $\pi$ are in $\ancestor_{\Gcal_{\mid\bm C}}(X_k,\bm S)$. Similarly, $X_k\in\ancestor_{\Gcal_{\mid\bm C}}(\bm S)$. Since $I_j$ is only adjacent to $X_j$ in $\Aug(\Gcal_{\mid\bm C},I_j)$, $\pi$ must start with $I_j\rightarrow X_j$. Since $X_j$ is not latent, $X_j$ must be a collider and $X_j\in\ancestor_{\Gcal_{\mid\bm C}}(X_k,\bm S)$. Therefore, $X_j,X_k\in\ancestor_{\Gcal_{\mid\bm C}}(\bm S)$.

    It is easy to verify that the subpath of $\pi$ between $X_j,X_k$ is an inducing path. Since $X_j$ is a collider on $\pi$, the subpath must be into $X_j$. If there is no middle node on the subpath, the subpath equals $X_j\leftarrow X_k$. If there are middle nodes on the subpath, the one closest to $X_j$ must be a noncollider, and thus must be in $\bm L$. Then the subpath equals $X_j\leftarrow L\ldots,X_k$ for some $L\in\bm L$.

    \textbf{b)} Suppose $X_j,X_k\in\ancestor_{\Gcal_{\mid\bm C}}(\bm S)$ and $\Gcal_{\mid\bm C}$ contains an inducing path $\pi$ with either $X_j\leftarrow X_k$ or $X_j\leftarrow L\ldots X_k$ for some $L\in\bm L$. We know $\pi$ is also in $\Aug(\Gcal_{\mid\bm C},I_j)$. Then the concatenation of $I_j\rightarrow X_j$ and $\pi$ must be an inducing path in $\Aug(\Gcal_{\mid\bm C},I_j)$. Since $I_j\in\ancestor_{\Aug(\Gcal_{\mid\bm C},I_j)}(X_j)\subseteq\ancestor_{\Aug(\Gcal_{\mid\bm C},I_j)}(\bm S)$ and $X_k\in\ancestor_{\Aug(\Gcal_{\mid\bm C},I_j)}(\bm S)$, we know $I_j-X_k$ is in $\MAG(\Aug(\Gcal_{\mid\bm C},I_j))$ and thus is in $\TCM(\Gcal_{\mid\bm C},\bm I)$.

    \noindent\textbf{5)}
    
    \textbf{a)} Suppose $I_j\rightarrow X_k$ is in $\TCM(\Gcal_{\mid\bm C},\bm I)$. By Lemma \ref{lem:CM}, we know $I_j\rightarrow X_k$ is in $\MAG(\Aug(\Gcal_{\mid\bm C},I_j))$. Then $X_k\not\in\ancestor_{\Aug(\Gcal_{\mid\bm C},I_j)}(I_j,\bm S)$, and thus $X_k\not\in\ancestor_{\Gcal_{\mid\bm C}}(\bm S)$. 
    
    There is an inducing path $\pi$ connecting $I_j,X_k$ in $\Aug(\Gcal_{\mid\bm C},I_j)$. On $\pi$, every non-collider is latent, and every collider is in $\ancestor_{\Aug(\Gcal_{\mid\bm C},I_j)}(I_j,X_k,\bm S)$. Since $I_j$ is a source node in $\Aug(\Gcal_{\mid\bm C},I_j)$, every collider on $\pi$ is in $\ancestor_{\Aug(\Gcal_{\mid\bm C},I_j)}(X_k,\bm S)$. For any collider $Z$, there is a directed path in $\Aug(\Gcal_{\mid\bm C},I_j)$ from $Z$ to $(X_k,\bm S)$. Then $I_j$ is not on the path, and the path also belongs to $\Gcal_{\mid\bm C}$. Therefore, every collider on $\pi$ is in $\ancestor_{\Gcal_{\mid\bm C}}(X_k,\bm S)$. Since $I_j$ is only adjacent to $X_j$ in $\Aug(\Gcal_{\mid\bm C},I_k)$, $X_j$ must be on $\pi$ and $\pi$ must start with $I_j\rightarrow X_j$. Then $X_j$ must be a collider and $X_j\in\ancestor_{\Gcal_{\mid\bm C}}(X_k,\bm S)$. Since $X_k\not\in\ancestor_{\Gcal_{\mid\bm C}}(\bm S)$, it is easy to see $X_k\not\in\ancestor_{\Gcal_{\mid\bm C}}(X_j,\bm S)$.
    
    Moreover, the subpath $\pi_j$ of $\pi$ between $X_j,X_k$ is an inducing path in $\Gcal_{\mid\bm C}$ that is into $X_j$. Lemma \ref{lem:inducing_tail_ancestry_latent_selection} implies that $\pi_j$ is also into $X_k$. Then $\pi_j$ must contain middle nodes. Let $Z$ be the one closest to $X_j$, $Z$ is a non-collider on $\pi_j$, thus $Z\in\bm L$. Similarly, the node on $\pi_j$ closest to $X_k$ is also in $\bm L$. Therefore, $\pi_j$ has the form $X_j\leftarrow L\ldots L'\rightarrow X_k$ for some $L,L'\in\bm L$.

    \textbf{b)} Suppose $X_j\in\ancestor_{\Gcal_{\mid\bm C}}(X_k,\bm S),X_k\not\in\ancestor_{\Gcal_{\mid\bm C}}(X_j,\bm S)$ and $\Gcal_{\mid\bm C}$ contains an inducing path $\pi$ $X_j\leftarrow L\ldots L'\rightarrow X_k$ for some $L,L'\in\bm L$. Then every collider on $\pi$ is in $\ancestor_{\Gcal_{\mid\bm C}}(X_j,X_k,\bm S)=\ancestor_{\Gcal_{\mid\bm C}}(X_k,\bm S)\subseteq\ancestor_{\Aug(\Gcal_{\mid\bm C},I_j)}(X_k,\bm S)$. Let $\pi'$ be the concatenation of $I_j\rightarrow X_j$ and $\pi$ in $\Aug(\Gcal_{\mid\bm C},I_j)$, we have $X_j$ is a collider on $\pi'$ and is in $\ancestor_{\Gcal_{\mid\bm C}}(X_k,\bm S)\subseteq\ancestor_{\Aug(\Gcal_{\mid\bm C},I_j)}(X_k,\bm S)$. Then $\pi'$ is an inducing path in $\Aug(\Gcal_{\mid\bm C},I_j)$ between $I_j,X_k$ and $I_j\in\ancestor_{\Aug(\Gcal_{\mid\bm C},I_j)}(X_j)\subseteq\ancestor_{\Aug(\Gcal_{\mid\bm C},I_j)}(X_k,\bm S)$. Therefore, $I_j\rightarrow X_k$ is in $\MAG(\Aug(\Gcal_{\mid\bm C},I_j))$ and thus is also in $\TCM(\Gcal_{\mid\bm C},\bm I)$.
\end{proof}

\begin{lemma}\label{lem:confounding}
    If a walk $\pi=\langle Z_{j_0},\ldots,Z_{j_{m+1}}\rangle$ between $Z_{j_0},Z_{j_{m+1}}$ has no collider and is into both $Z_{j_0},Z_{j_{m+1}}$, then $\pi$ has the form $Z_{j_0}\leftarrow \ldots\leftarrow Z_{j_l}\rightarrow\ldots\rightarrow Z_{j_{m+1}}$, where the subwalk between $Z_{j_0},Z_{j_l}$ has all arrowheads pointing to $Z_{j_0}$ and the subwalk between $Z_{j_l},Z_{j_{m+1}}$ has all arrowheads pointing to $Z_{j_{m+1}}$.
\end{lemma}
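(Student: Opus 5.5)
The plan is a direct combinatorial argument on the orientations of consecutive edges of the walk. Write $\pi=\langle Z_{j_0},\ldots,Z_{j_{m+1}}\rangle$ and, for each edge $\langle Z_{j_{i}},Z_{j_{i+1}}\rangle$, record whether it points left ($Z_{j_i}\leftarrow Z_{j_{i+1}}$) or right ($Z_{j_i}\rightarrow Z_{j_{i+1}}$). Since $\pi$ lives in a DAG, every edge has exactly one of these two orientations. The hypothesis that $\pi$ is into $Z_{j_0}$ says the first edge points left, and being into $Z_{j_{m+1}}$ says the last edge points right.

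The key step is to show that, reading from left to right, once an edge points right, all later edges point right. Suppose that edge $i$ (between $Z_{j_{i-1}},Z_{j_i}$) points right and edge $i+1$ points left. Then $Z_{j_{i-1}}\rightarrow Z_{j_i}\leftarrow Z_{j_{i+1}}$, so the middle node $Z_{j_i}$ is a collider on the walk, contradicting the no-collider assumption. Hence the orientation sequence is monotone: a (nonempty) block of left-pointing edges followed by a (nonempty) block of right-pointing edges.

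Let $l$ be the index such that edges $1,\ldots,l$ point left and edges $l+1,\ldots,m+1$ point right. Such an $l$ exists with $1\le l\le m$ because the first edge is left and the last is right. Then the subwalk $Z_{j_0}\leftarrow\cdots\leftarrow Z_{j_l}$ has all arrowheads toward $Z_{j_0}$, and $Z_{j_l}\rightarrow\cdots\rightarrow Z_{j_{m+1}}$ has all arrowheads toward $Z_{j_{m+1}}$, which is exactly the claimed form. There is no real obstacle. The only care needed is that the argument is local (consecutive edge pairs), so it applies verbatim to walks with repeated nodes, and that "collider" is interpreted positionally on the walk, as the statement presumes.
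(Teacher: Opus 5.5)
Your proposal is correct and follows essentially the paper's argument: the paper takes the right-pointing edge closest to $Z_{j_0}$ (so everything before it points left) and then shows every later edge points right, since otherwise the first left-pointing edge after it would make a collider. Your monotonicity observation, that a right-pointing edge followed by a left-pointing one yields a collider, is the same local step, stated slightly more cleanly.
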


\begin{proof}[Proof of Lemma \ref{lem:confounding}]
    Suppose $Z_{j_l}\rightarrow Z_{j_{l+1}}$ is the edge on $\pi$ closest to $Z_{j_0}$, with arrowhead pointing to $Z_{j_{m+1}}$. We know such an edge exists, because $Z_{j_m}\rightarrow Z_{j_{m+1}}$ is pointing to $Z_{j_{m+1}}$. Then the subwalk of $\pi$ between $Z_{j_0},Z_{j_l}$ equals $Z_{j_0}\leftarrow\ldots\leftarrow Z_{j_l}$. Moreover, the subwalk of $\pi$ between $Z_{j_l},Z_{j_{m+1}}$ must be $Z_{j_l}\rightarrow\ldots\rightarrow Z_{j_{m+1}}$. Otherwise, suppose $k$ is the smallest index $l<k<m$ with edge $Z_{j_k}\leftarrow Z_{j_{k+1}}$ pointing to $Z_{j_0}$. Then $Z_{j_k}$ becomes a collider, contradicting the assumption that there are no colliders. Then we conclude the lemma.
\end{proof}

\begin{lemma}\label{lem:confounding_seq}
    If there is an inducing path $\pi$ of the form $X_j\leftarrow L\ldots L'\rightarrow X_k$ for some $L,L'\in\bm L$, then there exist a sequence of nodes $Z_{j_0}\overset{\triangle}{=}X_j,Z_{j_1},\ldots,Z_{j_m},Z_{j_{m+1}}\overset{\triangle}{=}X_k$ in $(\bm X,\bm S)$, such that $Z_{j_{l-1}},Z_{j_l}$ are latently confounded for $l\in[m+1]$. 
\end{lemma}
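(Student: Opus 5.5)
We will decompose $\pi$ at its colliders and use Lemma \ref{lem:confounding} on each collider-free segment. Write $\pi$ as $X_j=V_0\,\ldots\,V_1\,\ldots\,V_r\,\ldots\,V_{r+1}=X_k$, where $V_1,\ldots,V_r$ are the colliders on $\pi$ in order. If $r=0$, then $\pi$ itself has no collider and is into both endpoints. All its middle nodes are non-colliders on an inducing path, hence in $\bm L$. Lemma \ref{lem:confounding} then gives the form $X_j\leftarrow\ldots\leftarrow L^*\rightarrow\ldots\rightarrow X_k$ with $L^*\in\bm L$, so $X_j,X_k$ are latently confounded and the sequence has $m=0$.

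In general, each segment between consecutive colliders $V_{i-1},V_i$ (with $V_0=X_j$, $V_{r+1}=X_k$) has no collider in its interior. It is into both of its endpoints: into $V_i$ for a collider because the collider has arrowheads from both sides on $\pi$, and into $X_j,X_k$ because $\pi$ starts with $X_j\leftarrow L$ and ends with $L'\rightarrow X_k$. Every interior node of a segment is a non-collider on $\pi$, hence latent. By Lemma \ref{lem:confounding}, each segment has the form $V_{i-1}\leftarrow\ldots\leftarrow L_i\rightarrow\ldots\rightarrow V_i$ with all middle nodes in $\bm L$. The segment cannot be a single edge, since a single edge cannot be into both endpoints, so the source $L_i$ is an interior node and therefore latent. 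Thus consecutive colliders are latently confounded, except that some $V_i$ may be latent rather than in $(\bm X,\bm S)$.

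The main obstacle is handling latent colliders. Every collider on an inducing path lies in $\ancestor_{\Gcal_{\mid\bm C}}(X_j,X_k,\bm S)$, so each latent collider $V_i$ has a directed path $V_i\rightarrow\ldots\rightarrow W_i$. Take $W_i$ to be the first node on that path lying in $(\bm X,\bm S)$. The endpoint of the directed path is in $\{X_j,X_k\}\cup\bm S$, so such a first node exists, and all nodes strictly between $V_i$ and $W_i$ are latent. If $V_i$ is observed, set $W_i=V_i$; also set $W_0=X_j$ and $W_{r+1}=X_k$.

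For consecutive indices, concatenate three pieces: the reversed directed path $W_{i-1}\leftarrow\ldots\leftarrow V_{i-1}$, the segment $V_{i-1}\leftarrow\ldots\leftarrow L_i\rightarrow\ldots\rightarrow V_i$, and the directed path $V_i\rightarrow\ldots\rightarrow W_i$. The result is a walk $W_{i-1}\leftarrow\ldots\leftarrow L_i\rightarrow\ldots\rightarrow W_i$ whose middle nodes are all latent. This is exactly the walk form in the definition of latent confounding, which allows walks and so tolerates repeated nodes. Hence $W_{i-1}$ and $W_i$ are latently confounded.

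It remains to tidy the sequence $X_j=W_0,W_1,\ldots,W_{r+1}=X_k$, whose nodes lie in $(\bm X,\bm S)$. If some consecutive $W$'s coincide, we delete the repeats; a node is not latently confounded with itself in any needed sense. This deletion is harmless once the adjacent pairs that remain are confounded. However, deleting $W_i=W_{i+1}$ leaves $W_{i-1}$ and $W_{i+2}$ adjacent in the sequence, and confounding between them does not follow directly. The cleaner route is to keep the repeats, or to argue that a coincidence only happens when the two walks can be spliced at the shared node. We expect this bookkeeping, together with checking that the first-hit construction keeps every middle node in $\bm L$, to be the only delicate part. The result is the desired sequence with $m=r$.
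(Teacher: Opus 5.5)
Your proof is correct and follows the paper's argument: split $\pi$ at its colliders, extend each latent collider along a directed path to the first node in $(\bm X,\bm S)$, and apply Lemma \ref{lem:confounding} to each collider-free piece. You apply the lemma before concatenating and the paper applies it after, which makes no difference. The repeat issue you flag is not a real obstacle. The lemma does not require distinct nodes, and merging a duplicate $W_i=W_{i+1}$ into a single entry keeps the confounded pairs $(W_{i-1},W_i)$ and $(W_{i+1},W_{i+2})$ adjacent, so no splicing is needed.
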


\begin{proof}[Proof of Lemma \ref{lem:confounding_seq}]
    If there is no collider on $\pi$, all middle nodes are in $\bm L$. Then Lemma \ref{lem:confounding} implies that $X_j,X_k$ are latently confounded.

    If there are colliders on $\pi$, denote $\tilde Z_1, \ldots,\tilde Z_m$ as all the colliders in order. We know all other middle nodes are noncolliders and are in $\bm L$. Since $\pi$ is an inducing path, each $\tilde Z_l$ is in $\ancestor_{\Gcal_{\mid\bm C}}(X_j,X_k,\bm S)$. Then there is a node $Z_{j_l}$ (equal to $\tilde Z_l$ if $\tilde Z_l\not\in\bm L$) in $\bm X\cup\bm S$ such that there exists a directed path $\tilde Z_l\rightarrow\ldots\rightarrow Z_{j_l}$ with all middle nodes in $\bm L$. For $\tilde Z_l,\tilde Z_{l+1}$, if we concatenate $Z_{j_l}\leftarrow\ldots\leftarrow\tilde Z_l$, the subpath of $\pi$ between $\tilde Z_l,\tilde Z_{l+1}$, and $\tilde Z_{l+1}\rightarrow\ldots\rightarrow Z_{j_{l+1}}$, we know the concatenated walk has no collider with all middle nodes in $\bm L$ and is into both $Z_{j_l},Z_{j_{l+1}}$. Then Lemma \ref{lem:confounding} implies that $Z_{j_l},Z_{j_{l+1}}$ are latently confounded. Similarly, $X_j,Z_{j_1}$ are latently confounded and $Z_{j_m},X_k$ are latently confounded.
\end{proof}

\subsection{Proof of Proposition \ref{prop:adjacency_latent_selection}}\label{sec:proof:prop:adjacency_latent_selection}

\begin{proof}[Proof of Proposition \ref{prop:adjacency_latent_selection}]
    We prove the three parts separately.
    
    \noindent\textbf{1)} $\ancestor_{\Gcal_{\mid\bm C}}(X_j)\setminus (X_j,\bm L)\subseteq\bm A_j=\anterior_{\MAG(\Gcal_{\mid\bm C})}(X_j)\setminus\{X_j\}\subseteq\ancestor_{\Gcal_{\mid\bm C}}(X_j,\bm S)\setminus(X_j,\bm L,\bm S)$.

    \textbf{a)} $\bm A_j\subseteq\ancestor_{\Gcal_{\mid\bm C}}(X_j,\bm S)\setminus(X_j,\bm L,\bm S)$.
    
    Firstly, we show $(I_i\not\indep X_j\mid\bm S)_{\Aug(\Gcal_{\mid\bm C}, I_i)}\Longrightarrow X_i\in\ancestor_{\Gcal_{\mid\bm C}}(X_j,\bm S)$. To see this, for $i\in[d_X]\setminus\{j\}$, when $(I_i\not\indep X_j\mid\bm S)_{\Aug(\Gcal_{\mid\bm C}, I_i)}$ holds, there exists an open path $\pi$ connecting $I_i, X_j$ given $\bm S$ in $\Aug(\Gcal_{\mid\bm C}, I_i)$. Since $I_i$ only has one adjacent node $X_i$ in $\Aug(\Gcal_{\mid\bm C}, I_i)$, $\pi$ must start with $I_i\rightarrow X_i$. If there is no collider on $\pi$, all the arrowheads on $\pi$ must point to $X_j$, and thus $X_i\in\ancestor_{\Gcal_{\mid\bm C}}(X_j)$. If there exist colliders on $\pi$, denote $Z$ to be the one closest to $I_i$. Since there is no collider on the subpath between $I_i, Z$, and the subpath starts from $I_i\rightarrow X_i$, we know $X_i\in\ancestor_{\Gcal_{\mid\bm C}}(Z)$. Moreover, since $\pi$ is open, $Z\in\ancestor_{\Gcal_{\mid\bm C}}(\bm S)$. Therefore, we have $X_i\in\ancestor_{\Gcal_{\mid\bm C}}(X_j,\bm S)\setminus\{X_j\}$.

    Then, we know $\bm A_j\subseteq\ancestor_{\Gcal_{\mid\bm C}}(X_j,\bm S)\setminus\{X_j\}$. Since $\bm X$, $\bm L$ and $\bm S$ are disjoint, we have $\bm A_j\subseteq\ancestor_{\Gcal_{\mid\bm C}}(X_j,\bm S)\setminus (X_j,\bm L,\bm S)$.

    \textbf{b)} $\ancestor_{\Gcal_{\mid\bm C}}(X_j)\setminus (X_j,\bm L)\subseteq\bm A_j$.

    If $X_i\in\ancestor_{\Gcal_{\mid\bm C}}(X_j)\setminus (X_j,\bm L)$, there exists a path $\pi$ $X_i\rightarrow\ldots\rightarrow X_j$ in $\Gcal_{\mid\bm C}$ on which all arrowheads point to $X_j$. Since $\bm S$ has no descendant, we know $\pi$ is disjoint from $\bm S$. Then the concatenated path $I_i\rightarrow X_i\rightarrow\ldots\rightarrow X_j$ is disjoint from $\bm S$ and has no collider, and thus is open given $\bm S$ in $\Aug(\Gcal_{\mid\bm C}, I_i)$. Consequently, $(I_i\not\indep X_j\mid\bm S)_{\Aug(\Gcal_{\mid\bm C}, I_i)}$ and $X_i\in\bm A_j$.

    \textbf{c)} $\bm A_j\subseteq\anterior_{\MAG(\Gcal_{\mid\bm C})}(X_j)\setminus\{X_j\}$.

    If $X_k\in\bm A_j$, we know $(I_k\not\indep X_j\mid\bm S)_{\Aug(\Gcal_{\mid\bm C},I_k)}$. It follows from Theorem 4.18 in \cite{richardson2002ancestral} that $(I_k\not\indep X_j)_{\MAG(\Aug(\Gcal_{\mid\bm C},I_k))}$. Then there is an $m$-open path $\pi$ between $I_k,X_j$ in $\MAG(\Aug(\Gcal_{\mid\bm C},I_k))$. Denote the node on $\pi$ adjacent to $I_k$ as $X_l$. 
    
    First, we show that the edge in $\MAG(\Aug(\Gcal_{\mid\bm C},I_k))$ between $I_k,X_l$ has a tail at $I_k$. If $l=k$, since $I_k\in\ancestor_{\Aug(\Gcal_{\mid\bm C},I_k)}(X_k)$, it follows from Definition \ref{def:MAG} that the edge in $\MAG(\Aug(\Gcal_{\mid\bm C},I_k))$ between $I_k,X_k$ has a tail at $I_k$. If $l\ne k$, there is an inducing path $\pi_l$ in $\Aug(\Gcal_{\mid\bm C},I_k)$ connecting $I_k,X_l$. $\pi_l$ must start with $I_k\rightarrow X_k$, and $X_k$ must be a collider. Then, $X_k\in\ancestor_{\Aug(\Gcal_{\mid\bm C},I_k)}(X_l,\bm S)$, and thus $I_k\in\ancestor_{\Aug(\Gcal_{\mid\bm C},I_k)}(X_l,\bm S)$. Then it follows from Definition \ref{def:MAG} that the edge in $\MAG(\Aug(\Gcal_{\mid\bm C},I_k))$ between $I_k,X_l$ must have a tail at $I_k$.

    Next, we show that the open path $\pi$ in $\MAG(\Aug(\Gcal_{\mid\bm C},I_k))$ between $I_k,X_j$ has all arrowheads pointing to $X_j$. We prove by induction. Suppose $\pi=\langle I_k,X_l,X_{l_1},\ldots,X_{l_m},X_j\rangle$. We have shown that the edge between $I_k,X_l$ has a tail at $I_k$. Suppose the edge between $X_{l_{i-1}},X_{l_i}$ has a tail at $X_{l_{i-1}}$. If the edge is $X_{l_{i-1}}\rightarrow X_{l_i}$, the edge between $X_{l_i},X_{l_{i+1}}$ must have a tail at $X_{l_i}$. Otherwise, $X_{l_i}$ is a collider which blocks the path $\pi$. If the edge is $X_{l_{i-1}}-X_{l_i}$, it follows from Lemma 3.2(b) in \cite{richardson2002ancestral} that the edge between $X_{l_i},X_{l_{i+1}}$ must also have a tail at $X_{l_i}$. By induction, we conclude that $\pi$ has all arrowheads pointing to $X_j$.

    Then the subpath of $\pi$ between $X_l$ and $X_j$ also has all arrowheads pointing to $X_j$. Since the proof of Lemma \ref{lem:CM} implies that this subpath is also in $\MAG(\Gcal_{\mid\bm C})$, we have $X_l\in\anterior_{\MAG(\Gcal_{\mid\bm C})}(X_j)$.
    
    Next, we show that $X_k\in\anterior_{\MAG(\Gcal_{\mid\bm C})}(X_l)$. We only need to consider the case with $k\ne l$. Recall $\pi_l$ is an inducing path in $\Aug(\Gcal_{\mid\bm C}, I_k)$ between $I_k,X_l$ and $X_k$ is on $\pi_l$. Then, among the middle nodes $\pi_l$, all observed nodes are colliders, non-colliders are latent, and all colliders are in $\ancestor_{\Aug(\Gcal_{\mid\bm C},I_k)}(I_k,X_l,\bm S)$. Since $I_k$ is a source node in $\Aug(\Gcal_{\mid\bm C},I_k)$, all colliders are in $\ancestor_{\Aug(\Gcal_{\mid\bm C},I_k)}(X_l,\bm S)$. Particularly, $X_k\in\ancestor_{\Aug(\Gcal_{\mid\bm C},I_k)}(X_l,\bm S)$. Denote $\pi_l'$ as the subpath between $X_k$ and $X_l$. We know that, among the middle nodes of $\pi_l'$, all observed nodes are colliders, non-colliders are latent, and all colliders are in $\ancestor_{\Aug(\Gcal_{\mid\bm C},I_k)}(X_l,\bm S)$. Therefore, $\pi_l'$ is an inducing path and $X_k,X_l$ are adjacent in $\MAG(\Aug(\Gcal_{\mid\bm C},I_k))$. Since $X_k\in\ancestor_{\Aug(\Gcal_{\mid\bm C},I_k)}(X_l,\bm S)$, the edge between $X_k,X_l$ has a tail at $X_k$. By the proof of Lemma \ref{lem:CM}, $X_k,X_l$ are adjacent in $\MAG(\Gcal_{\mid\bm C})$ and the edge between them has a tail at $X_k$. Then $X_k\in\anterior_{\MAG(\Gcal_{\mid\bm C})}(X_l)$.

    Since $X_l\in\anterior_{\MAG(\Gcal_{\mid\bm C})}(X_j)$ and $X_k\in\anterior_{\MAG(\Gcal_{\mid\bm C})}(X_l)$, Proposition 2.1 in \cite{richardson2002ancestral} implies that $X_k\in\anterior_{\MAG(\Gcal_{\mid\bm C})}(X_j)$. In conclusion, we prove that $\bm A_j\subseteq\anterior_{\MAG(\Gcal_{\mid\bm C})}(X_j)\setminus\{X_j\}$.

    \textbf{d)} $\bm A_j\supseteq\anterior_{\MAG(\Gcal_{\mid\bm C})}(X_j)\setminus\{X_j\}$.

    Suppose $X_k\in\anterior_{\MAG(\Gcal_{\mid\bm C})}(X_j)\setminus\{X_j\}$, there is a path $\pi$ in $\MAG(\Gcal_{\mid\bm C})$ between $X_k,X_j$ with all arrowheads pointing to $X_j$. The proof of Lemma \ref{lem:CM} implies that $\pi$ is also in $\MAG(\Aug(\Gcal_{\mid\bm C},I_k))$. Since $I_k\in\ancestor_{\Aug(\Gcal_{\mid\bm C},I_k)}(X_k)$, we know the edge in $\MAG(\Aug(\Gcal_{\mid\bm C},I_k))$ between $I_k,X_k$ has a tail at $I_k$. Then the concatenated path of $\langle I_k,X_k\rangle$ and $\pi$ in $\MAG(\Aug(\Gcal_{\mid\bm C},I_k))$ has all arrowheads pointing to $X_j$, and thus is open. Then $(I_k\not\indep X_j)_{\MAG(\Aug(\Gcal_{\mid\bm C},I_k))}$. By Theorem 4.18 in \cite{richardson2002ancestral}, we have $(I_k\not\indep X_j\mid\bm S)_{\Aug(\Gcal_{\mid\bm C},I_k)}$ and $X_k\in\bm A_j$.
    
    \noindent\textbf{2)} $\forall\bm A$ s.t. $\bm A_j\cup\bm A_k\subseteq\bm A\subseteq\ancestor_{\Gcal_{\mid\bm C}}(X_j,X_k,\bm S)\setminus\bm L$, $X_j, X_k$ adjacent in $\CM(\Gcal_{\mid\bm C}, \bm I)\Longleftrightarrow\big(X_j\not\indep X_k\mid (\bm A,\bm S)\setminus(X_j,X_k)\big)_{\Gcal_{\mid\bm C}}$.
    
    \textbf{a)} $X_j, X_k$ adjacent in $\CM(\Gcal_{\mid\bm C}, \bm I)\Longrightarrow\big(X_j\not\indep X_k\mid (\bm A,\bm S)\setminus(X_j,X_k)\big)_{\Gcal_{\mid\bm C}}$.

    By the proof of Lemma \ref{lem:CM}, we know there exists an inducing path $\pi$ between $X_j,X_k$ in $\Gcal_{\mid\bm C}$ on which every middle node is either in $\bm L$ or a collider, and all colliders are in $\ancestor_{\Gcal_{\mid\bm C}}(X_j, X_k,\bm S)$. Since $\bm L$ is disjoint from $(\bm A,\bm S)\setminus(X_j,X_k)$ and $\ancestor_{\Gcal_{\mid\bm C}}(X_j,X_k,\bm S)\subseteq\ancestor_{\Gcal_{\mid\bm C}}(\bm A\cup\{X_j,X_k\}\cup\bm S)$, it follows from Lemma 3.14 in \cite{richardson2002ancestral} that $\big(X_j\not\indep X_k\mid (\bm A,\bm S)\setminus(X_j,X_k)\big)_{\Gcal_{\mid\bm C}}$.

    \textbf{b)} $\big(X_j\not\indep X_k\mid (\bm A,\bm S)\setminus(X_j,X_k)\big)_{\Gcal_{\mid\bm C}}\Longrightarrow$ $X_j, X_k$ adjacent in $\CM(\Gcal_{\mid\bm C}, \bm I)$.

    There exists an open path $\pi$ connecting $X_j,X_k$ given $(\bm A, \bm S)\setminus(X_j,X_k)$ in $\Gcal_{\mid\bm C}$. By Lemma 3.13 in \cite{richardson2002ancestral}, every middle node on $\pi$ is in $\ancestor_{\Gcal_{\mid\bm C}}(\bm A,X_j,X_k,\bm S)$, which by definition of $\bm A$, is also in $\ancestor_{\Gcal_{\mid\bm C}}(X_j,X_k,\bm S)$. Since non-colliders middle nodes on $\pi$ are not in $(\bm A, X_j, X_k, \bm S)$, they must be in $\ancestor_{\Gcal_{\mid\bm C}}(X_j,X_k,\bm S)\setminus(\bm A, X_j, X_k, \bm S)$. Since by \textbf{1) b)}, $\ancestor_{\Gcal_{\mid\bm C}}(X_j,X_k)\setminus(X_j,X_k,\bm L)\subseteq\bm A_j\cup\bm A_k\subseteq\bm A$, we know every non-collider is either in $\bm L$ or in $\ancestor_{\Gcal_{\mid\bm C}}(\bm S)\setminus(\bm A_j\cup\bm A_k\cup\bm L\cup\bm S)$. 
    
    Now we argue that non-colliders can not be in $\ancestor_{\Gcal_{\mid\bm C}}(\bm S)\setminus(\bm A_j\cup\bm A_k\cup\bm L\cup\bm S)$. To see this, if there are non-colliders in $\ancestor_{\Gcal_{\mid\bm C}}(\bm S)\setminus(\bm A_j\cup\bm A_k\cup\bm L\cup\bm S)\subseteq\bm X$, denote $X_i$ to be such a middle node. Since $X_i$ is a non-collider on $\pi$, at least one of the two edges adjacent to $X_i$ is out of $X_i$. Without the loss of generality, we assume the edge adjacent to $X_i$ on the subpath $\pi_j$ of $\pi$ between $X_j,X_i$ is out of $X_i$. 
    
    On the subpath $\pi_j$, if there is no collider, then all the arrowheads must point to $X_j$ and the concatenated path $I_i\rightarrow X_i\rightarrow\ldots\rightarrow X_j$ has no collider and every middle point is not in $\bm S$ since $\pi$ is open given $(\bm A,\bm S)\setminus(X_j,X_k)$. Therefore, we have $(I_i\not\indep X_j\mid\bm S)_{\Aug(\Gcal_{\mid\bm C}, I_i)}$ and $X_i\in\bm A_j$, which contradicts the definition of $X_i$.

    If there are colliders on $\pi_j$, denote $Z$ to be the one closest to $X_i$. Then the subpath of $\pi$ between $X_i,Z$ has all arrowheads point to $Z$, i.e., $X_i\rightarrow\ldots\rightarrow Z$, and no middle node is in $\bm S$. Since $\pi$ is open given $(\bm A,\bm S)\setminus(X_j,X_k)$ in $\Gcal_{\mid\bm C}$, we know $Z\in\ancestor_{\Gcal_{\mid\bm C}}\big((\bm A,\bm S)\setminus(X_j,X_k)\big)\subseteq\ancestor_{\Gcal_{\mid\bm C}}(X_j,X_k,\bm S)$. 
    
    If $Z$ is in $\ancestor_{\Gcal_{\mid\bm C}}(X_j)\setminus\ancestor_{\Gcal_{\mid\bm C}}(\bm S)$, then there exists a path $Z\rightarrow\ldots\rightarrow X_j$ between $Z,X_j$ with all arrowheads point to $X_j$ and no middle node in $\bm S$. Therefore, the concatenated path $I_i\rightarrow X_i\rightarrow\ldots\rightarrow Z\rightarrow\ldots\rightarrow X_j$ is open given $\bm S$ in $\Aug(\Gcal_{\mid\bm C}, I_i)$, and $X_i\in\bm A_j$ which contradicts the definition of $X_i$. 
    
    Similarly, $Z$ is not in $\ancestor_{\Gcal_{\mid\bm C}}(X_k)\setminus\ancestor_{\Gcal_{\mid\bm C}}(\bm S)$. Then, $Z$ must be in $\ancestor_{\Gcal_{\mid\bm C}}(\bm S)\subseteq\ancestor_{\Aug(\Gcal_{\mid\bm C},I_i)}(\bm S)$. 
    
    If there is no collider on $\pi$ between $Z,X_j$, the path $\langle I_i, X_i, \ldots, Z,\ldots, X_j\rangle$ only contains one collider $Z\in\ancestor_{\Gcal_{\mid\bm C}}(\bm S)\subseteq\ancestor_{\Aug(\Gcal_{\mid\bm C},I_i)}(\bm S)$ and no non-collider is in $\bm S$. Then the path is open given $\bm S$ in $\Aug(\Gcal_{\mid\bm C}, I_i)$ and $X_i\in\bm A_j$, which contradicts the definition of $X_i$.

    If there are colliders on $\pi$ between $Z,X_j$, denote $Z'$ to be the one closest to $Z$. Then $Z'\in\ancestor_{\Gcal_{\mid\bm C}}\big((\bm A,\bm S)\setminus(X_j,X_k)\big)\subseteq\ancestor_{\Gcal_{\mid\bm C}}(X_j,X_k,\bm S)$. If $Z'\in\ancestor_{\Gcal_{\mid\bm C}}(X_j)\setminus\ancestor_{\Gcal_{\mid\bm C}}(\bm S)$, there is a path $Z'\rightarrow\ldots\rightarrow X_j$ with no middle node in $\bm S$. Then the concatenated walk $\langle I_i,X_i,\ldots,Z,\ldots,Z',\ldots, X_j\rangle$ has only one collider $Z\in\ancestor_{\Aug(\Gcal_{\mid\bm C},I_i)}(\bm S)$ and no non-collider is in $\bm S$. Therefore, $X_i\in\bm A_j$ contradicts the definition of $X_i$. Similarly, $Z'$ is not in $\ancestor_{\Gcal_{\mid\bm C}}(X_k)\setminus\ancestor_{\Gcal_{\mid\bm C}}(\bm S)$ and thus must be in $\ancestor_{\Gcal_{\mid\bm C}}(\bm S)\subseteq\ancestor_{\Aug(\Gcal_{\mid\bm C},I_i)}(\bm S)$. 
    
    Similarly, by induction, all colliders on $\pi_j$ are in $\ancestor_{\Aug(\Gcal_{\mid\bm C},I_i)}(\bm S)$ and no non-colliders are in $\bm S$. So the concatenation of $I_i\rightarrow X_i$ and $\pi_j$ is open given $\bm S$ in $\Aug(\Gcal_{\mid\bm C}, I_i)$ and $X_i\in\bm A_j$ contradicts the definition of $X_i$.

    Therefore, every non-collider on $\pi$ is in $\bm L$, and every collider is in $\ancestor_{\Gcal_{\mid\bm C}}(\bm A,X_j,X_k,\bm S)\subseteq\ancestor_{\Gcal_{\mid\bm C}}(X_j,X_k,\bm S)$. Therefore, $\pi$ is an inducing path in $\Gcal_{\mid\bm C}$. By the proof of Lemma \ref{lem:CM}, $X_j,X_k$ are adjacent in $\CM(\Gcal_{\mid\bm C},\bm I)$.

    \noindent\textbf{3)} $\forall j\ne k$, $\forall\bm A_j'$ s.t. $\bm A_j\subseteq \bm A_j'\subseteq\ancestor_{\Gcal_{\mid\bm C}}(X_j,\bm S)\setminus\bm L$, $X_j,I_k$ adjacent in $\CM(\Gcal_{\mid\bm C},\bm I)\Longleftrightarrow \big(I_k\not\indep X_j\mid(\bm A_j',\bm S)\setminus\{X_j\}\big)_{\Aug(\Gcal_{\mid\bm C}, I_k)}$.

    The proof for part 3) is similar to that of part 2), so we omit the details. Below, we sketch the proof of the reverse direction.

    \textbf{a)} $\big(I_k\not\indep X_j\mid(\bm A_j',\bm S)\setminus\{X_j\}\big)_{\Aug(\Gcal_{\mid\bm C}, I_k)}\Longrightarrow X_j,I_k$ adjacent in $\CM(\Gcal_{\mid\bm C},\bm I)$.

    There must exists an open path $\pi$ connecting $I_k, X_j$ given $(A_j',\bm S)$ in $\Aug(\Gcal_{\mid\bm C}, I_k)$. Similar to \textbf{2) b)}, every non-collider is either in $\bm L$ or in $\ancestor_{\Gcal_{\mid\bm C}}(\bm S)\setminus(\bm A_j\cup\bm L\cup\bm S)$.

    Then we show non-colliders can not be in $\ancestor_{\Gcal_{\mid\bm C}}(\bm S)\setminus(\bm A_j\cup\bm L\cup\bm S)$. To see this, if there is such a non-collider $X_i$ in $\ancestor_{\Gcal_{\mid\bm C}}(\bm S)\setminus(\bm A_j\cup\bm L\cup\bm S)\subseteq\bm X$, we let $\pi_j$ to be the subpath of $\pi$ between $X_i, X_j$. We know $\pi_j$ also exists in $\Aug(\Gcal_{\mid\bm C}, I_i)$. Since $X_i$ is in $\ancestor_{\Aug(\Gcal_{\mid\bm C}, I_i)}(\bm S)$ but is not in $\bm S$, the concatenation of $I_i\rightarrow X_i$ and $\pi_j$ always has the two edges adjacent to $X_i$ being open given $\bm S$, regardless of whether $X_i$ is a collider.

    If there is no collider on $\pi_j$, we have the concatenation of $I_i\rightarrow X_i$ and $\pi_j$ being open given $\bm S$ and $X_i\in\bm A_j$, which contradicts the definition of $X_i$.

    If there are colliders on $\pi_j$, similar to \textbf{2) b)}, all colliders are in $\ancestor_{\Aug(\Gcal_{\mid\bm C}, I_i)}(\bm S)$ and no non-collider is in $\bm S$. Then the concatenation of $I_i\rightarrow X_i$ and $\pi_j$ being open given $\bm S$ and $X_i\in\bm A_j$, which contradicts the definition of $X_i$.

    Therefore, non-colliders on $\pi$ are all in $\bm L$, and every collider is in $\ancestor_{\Aug(\Gcal_{\mid\bm C}, I_k)}(X_j,\bm S)$. Consequently, $\pi$ is an inducing path in $\Aug(\Gcal_{\mid\bm C}, I_k)$ and $I_k, X_j$ are adjacent in $\CM(\Gcal_{\mid\bm C}, \bm I)$.
    
\end{proof}

\subsection{Proof of Proposition \ref{prop:direction_latent_selection}}\label{sec:proof:prop:direction_latent_selection}

\begin{proof}[Proof of Proposition \ref{prop:direction_latent_selection}]

    \noindent\textbf{1)} By Definition \ref{def:MAG}, it suffices to show the equivalence between $X_j\in\ancestor_{\Gcal_{\mid\bm C}}(X_k,\bm S)$ and $X_j\in \bm A_k'$ when $X_j,X_k$ are adjacent in $\CM(\Gcal_{\mid\bm C},\bm I)$.
    
    \textbf{i)} $X_j\in\ancestor_{\Gcal_{\mid\bm C}}(X_k,\bm S)\Longrightarrow X_j\in \bm A_k'$.

    Since $X_j,X_k$ are adjacent in $\CM(\Gcal_{\mid\bm C},\bm I)$, by proof of Lemma \ref{lem:CM}, there exists an inducing path $\pi$ between $X_j,X_k$ in $\Gcal_{\mid\bm C}$. Then, every middle node on $\pi$ is either in $\bm L$ or is a collider, and every collider is in $\ancestor_{\Gcal_{\mid\bm C}}(X_j,X_k,\bm S)$. Since $X_j\in\ancestor_{\Gcal_{\mid\bm C}}(X_k,\bm S)$, we know $\ancestor_{\Gcal_{\mid\bm C}}(X_j,X_k,\bm S)\subseteq\ancestor_{\Gcal_{\mid\bm C}}(X_k,\bm S)$, thus all colliders together with $X_j$ are in $\ancestor_{\Gcal_{\mid\bm C}}(X_k,\bm S)$. 
    
    If among all colliders and $X_j$, there are nodes in $\ancestor_{\Gcal_{\mid\bm C}}(X_k)\setminus\ancestor_{\Gcal_{\mid\bm C}}(\bm S)$, denote $Z$ to be the one closest to $X_j$ (possibly $Z=X_j$). Then, there exists a path $Z\rightarrow \ldots\rightarrow X_k$ with no middle point in $\bm S$. If $Z\ne X_j$, for the subpath of $\pi$ between $X_j,Z$, all colliders and $X_j$ are in $\ancestor_{\Gcal_{\mid\bm C}}(\bm S)$ by the definition of $Z$. Then the concatenated walk $\langle I_j, X_j,\ldots,Z,\ldots,X_k\rangle$ in $\Aug(\Gcal_{\mid\bm C}, I_j)$ has all colliders in $\ancestor_{\Aug(\Gcal_{\mid\bm C},I_j)}(\bm S)$ and no non-collider in $\bm S$, regardless whether $Z=X_j$ or whether $X_j$ is a collider. Therefore, the walk is open given $\bm S$ in $\Aug(\Gcal_{\mid\bm C}, I_j)$, and thus $X_j\in\bm A_k\subseteq\bm A_k'$.

    If all colliders and $X_j$ are in $\ancestor_{\Gcal_{\mid\bm C}}(\bm S)$, the concatenation of $I_j\rightarrow X_j$ and $\pi$ is open given $\bm S$ in $\Aug(\Gcal_{\mid\bm C}, I_j)$, and thus $X_j\in\bm A_k\subseteq\bm A_k'$.

    \textbf{ii)} $X_j\in\bm A_k'\Longrightarrow X_j\in\ancestor_{\Gcal_{\mid\bm C}}(X_k,\bm S)$.

    By Proposition \ref{prop:adjacency_latent_selection} and the definition of $\bm A_k'$, we have $\bm A_k'\subseteq\ancestor_{\Gcal_{\mid\bm C}}(X_k,\bm S)$.

    Combining pieces, part 1) follows from Definition \ref{def:MAG}.

    \noindent\textbf{2)} By Definition \ref{def:MAG}, we only need to show that if $I_k, X_j$ are adjacent in $\CM(\Gcal_{\mid\bm C}, \bm I)$, then $I_k\in\ancestor_{\Aug(\Gcal_{\mid\bm C}, I_k)}(X_j,\bm S)$, and we have $X_j\in\bm A_k'\Longleftrightarrow X_j\in\ancestor_{\Gcal_{\mid\bm C}}(\bm S)$.

    \textbf{i)} $I_k\in\ancestor_{\Aug(\Gcal_{\mid\bm C}, I_k)}(X_j,\bm S)$.

    There must exist an inducing path $\pi$ between $I_k, X_j$ in $\Aug(\Gcal_{\mid\bm C}, I_k)$. Then every collider on $\pi$ is in $\ancestor_{\Aug(\Gcal_{\mid\bm C}, I_k)}(I_k,X_j,\bm S)=\ancestor_{\Aug(\Gcal_{\mid\bm C}, I_k)}(X_j,\bm S)\cup\{I_k\}$ and every non-collider on $\pi$ is in $\bm L$. Since $\pi$ must start with $I_k\rightarrow X_k$ and $X_k$ is not in $\bm L$, $X_k$ must be a collider on $\pi$. Since $X_k\in\ancestor_{\Aug(\Gcal_{\mid\bm C}, I_k)}(X_j,\bm S)$, we know $I_k\in\ancestor_{\Aug(\Gcal_{\mid\bm C}, I_k)}(X_j,\bm S)$.

    \textbf{ii)} $X_j\in\bm A_k'\Longrightarrow X_j\in\ancestor_{\Gcal_{\mid\bm C}}(\bm S)$.

    By Proposition \ref{prop:adjacency_latent_selection} and the definition of $\bm A_k'$, we have $\bm A_k'\subseteq\ancestor_{\Gcal_{\mid\bm C}}(X_k,\bm S)$. Then $X_j\in\ancestor_{\Gcal_{\mid\bm C}}(X_k,\bm S)$. By part \textbf{2) i)}, $X_k\in\ancestor_{\Gcal_{\mid\bm C}}(X_j,\bm S)\setminus\{X_j\}$. If $X_k\in\ancestor_{\Gcal_{\mid\bm C}}(X_j)\setminus\{X_j\}$, then $X_j\not\in\ancestor_{\Gcal_{\mid\bm C}}(X_k)$ and thus $X_j\in\ancestor_{\Gcal_{\mid\bm C}}(\bm S)$. If $X_k\in\ancestor_{\Gcal_{\mid\bm C}}(\bm S)$, then $X_j\in\ancestor_{\Gcal_{\mid\bm C}}(X_k,\bm S)\subseteq\ancestor_{\Gcal_{\mid\bm C}}(\bm S)$.

    \textbf{iii)} $X_j\in \ancestor_{\Gcal_{\mid\bm C}}(\bm S)\Longrightarrow X_j\in\bm A_k'$.

    There must be an inducing path $\pi$ between $I_k,X_j$ in $\Aug(\Gcal_{\mid\bm C}, I_k)$ and $\pi$ starts with $I_k\rightarrow X_k$. On the subpath $\pi_k$ of $\pi$ between $X_k,X_j$, all non-colliders are in $\bm L$ and all colliders are in $\ancestor_{\Gcal_{\mid\bm C}}(X_j,\bm S)$. 
    
    If there is no collider on $\pi_k$, we know the concatenated path of $\pi_k$ and $I_j\rightarrow X_j$ in $\Aug(\Gcal_{\mid\bm C}, I_j)$ has at most one collider $X_j$. Since $X_j$ is in $\ancestor_{\Aug(\Gcal_{\mid\bm C},I_j)}(\bm S)$ but not in $\bm S$, we know the concatenated path of $\pi_k$ and $I_j\rightarrow X_j$ in $\Aug(\Gcal_{\mid\bm C}, I_j)$ is open given $\bm S$ regardless of whether $X_j$ is a collider. Then $X_j\in\bm A_k\subseteq\bm A_k'$.

    If there are colliders on $\pi_k$, denote $Z$ to be the one closest to $X_k$, then there is no collider between $X_k$ and $Z$. Since $\pi$ is an inducing path, $Z\in\ancestor_{\Gcal_{\mid\bm C}}(X_j,\bm S)$.

    If $Z\in\ancestor_{\Gcal_{\mid\bm C}}(X_j)\setminus\ancestor_{\Gcal_{\mid\bm C}}(\bm S)$, there exists a path $Z\rightarrow\ldots\rightarrow X_j$ with all arrowheads pointing to $X_j$ and no middle point in $\bm S$. Then the concatenated walk of $\langle X_k,\ldots,Z\rangle$ and $Z\rightarrow\ldots\rightarrow X_j\leftarrow I_j$ has a single collider $X_j\in\ancestor_{\Aug(\Gcal_{\mid\bm C}, I_j)}(\bm S)$, and no non-collider in $\bm S$. Therefore the walk is open given $\bm S$ in $\Aug(\Gcal_{\mid\bm C},I_j)$ and $X_j\in\bm A_k\subseteq\bm A_k'$.

    If $Z\in\ancestor_{\Gcal_{\mid\bm C}}(\bm S)$, we apply induction on the subpath of $\pi$ between $Z,X_j$. By the same argument as above, if there is no collider on the subpath, or the first collider is in $\ancestor_{\Gcal_{\mid\bm C}}(X_j)\setminus\ancestor_{\Gcal_{\mid\bm C}}(\bm S)$, we have $X_j\in\bm A_k'$. Finally, if all colliders on $\pi_k$ are in $\ancestor_{\Gcal_{\mid\bm C}}(\bm S)$, the concatenated path of $\pi_k$ and $I_j\rightarrow X_j$ is open given $\bm S$ in $\Aug(\Gcal_{\mid\bm C}, I_j)$. Therefore, $X_j\in\bm A_k\subseteq\bm A_k'$.
\end{proof}

\subsection{Proof of Lemma \ref{lem:S}}\label{sec:proof:lem:S}

\begin{proof}[Proof of Lemma \ref{lem:S}]
    \noindent\textbf{1)} $X_j\in\ancestor_{\Gcal_{\mid\bm C}}(\bm S\setminus\bm S_{\rm is})\Longrightarrow \exists X_k\in\bm X\setminus\{X_j\}$ s.t. $(I_k\not\indep X_j\mid\bm S)_{\Aug(\Gcal_{\mid\bm C}, I_k)}$ and $(I_j\not\indep X_k\mid\bm S)_{\Aug(\Gcal_{\mid\bm C}, I_j)}$.

    By Definition \ref{def:isolated_selection}, there exists $X_k\in\ancestor_{\Gcal_{\mid\bm C}}(\bm S)\setminus\{X_j\}$ such that $(X_j\not\indep X_k\mid\bm S)_{\Gcal_{\mid\bm C}}$. Therefore, there is an open path $\pi$ between $X_j,X_k$ given $\bm S$ in $\Gcal_{\mid\bm C}$, on which all colliders are in $\ancestor_{\Gcal_{\mid\bm C}}(\bm S)$ and no non-collider is in $\bm S$. Consider the concatenation of $I_j\rightarrow X_j$ and $\pi$ in $\Aug(\Gcal_{\mid\bm C}, I_j)$, no matter whether $X_j$ is a collider, we have all colliders on the path are in $\ancestor_{\Aug(\Gcal_{\mid\bm C}, I_j)}(\bm S)$ and no non-collider is in $\bm S$. Therefore, the path is open in $\Aug(\Gcal_{\mid\bm C}, I_j)$ given $\bm S$ and $(I_j\not\indep X_k\mid\bm S)_{\Aug(\Gcal_{\mid\bm C}, I_j)}$. Similarly, we have $(I_k\not\indep X_j\mid\bm S)_{\Aug(\Gcal_{\mid\bm C}, I_k)}$.

    \noindent\textbf{2)} $\exists X_k\in\bm X\setminus\{X_j\}$ s.t. $(I_k\not\indep X_j\mid\bm S)_{\Aug(\Gcal_{\mid\bm C}, I_k)}$ and $(I_j\not\indep X_k\mid\bm S)_{\Aug(\Gcal_{\mid\bm C}, I_j)}\Longrightarrow X_j\in\ancestor_{\Gcal_{\mid\bm C}}(\bm S\setminus\bm S_{\rm is})$.

    Suppose $X_j\not\in\ancestor_{\Gcal_{\mid\bm C}}(\bm S)$. Since $X_j\in \bm A_k\subseteq\ancestor_{\Gcal_{\mid\bm C}}(X_k,\bm S)$, we know $X_j\in\ancestor_{\Gcal_{\mid\bm C}}(X_k)\setminus\ancestor_{\Gcal_{\mid\bm C}}(\bm S)$. Since $(I_k\not\indep X_j\mid\bm S)_{\Aug(\Gcal_{\mid\bm C}, I_k)}$, there exists an open path $\pi$ between $I_k,X_j$ given $\bm S$.

    If there is no collider on $\pi$, since $\pi$ starts with $I_k\rightarrow X_k$, we have $\pi$ equals $I_k\rightarrow X_k\rightarrow \ldots\rightarrow X_j$ with all arrowheads point to $X_j$. Then, $X_k\in\ancestor_{\Gcal_{\mid\bm C}}(X_j)\setminus\{X_j\}$ contradicts $X_j\in\ancestor_{\Gcal_{\mid\bm C}}(X_k)$. Consequently, there must be colliders on $\pi$. Denote $Z$ to be the one closest to $I_k$, we know $Z\in\ancestor_{\Gcal_{\mid\bm C}}(\bm S)$. Since there is no collider on the subpath of $\pi$ between $I_k,Z$, we know $X_k\in\ancestor_{\Gcal_{\mid\bm C}}(Z)\subseteq\ancestor_{\Gcal_{\mid\bm C}}(\bm S)$. Then, $X_j\in\ancestor_{\Gcal_{\mid\bm C}}(X_k)\subseteq\ancestor_{\Gcal_{\mid\bm C}}(\bm S)$, which contradicts $X_j\not\in \ancestor_{\Gcal_{\mid\bm C}}(\bm S)$. Therefore, we know $X_j\in \ancestor_{\Gcal_{\mid\bm C}}(\bm S)$.

    By Proposition \ref{prop:adjacency_latent_selection}, $X_k\in\bm A_j\subseteq\ancestor_{\Gcal_{\mid\bm C}}(X_j,\bm S)=\ancestor_{\Gcal_{\mid\bm C}}(\bm S)$. Since the subpath of $\pi$ between $X_j,X_k$ is open given $\bm S$ in $\Gcal_{\mid\bm C}$, it follows from Definition \ref{def:isolated_selection} that $X_j\in\ancestor_{\Gcal_{\mid\bm C}}(\bm S\setminus\bm S_{\rm is})$.
\end{proof}

\subsection{Proof of Lemma \ref{lem:pool}}\label{sec:proof:lem:pool}

\begin{proof}[Proof of Lemma \ref{lem:pool}]
    We prove by contradiction. To this end, we assume $\big(I_l\not\indep (X_j,X_k)\mid\bm A\cup\bm S\setminus\{X_j,X_k\}\big)_{\Aug(\Gcal_{\mid\bm C},I_l)}$.

    There exists an open path $\pi$ between $I_l,X_j$ or $I_l,X_k$ given $\bm A\cup\bm S\setminus\{X_j,X_k\}$ in $\Aug(\Gcal_{\mid\bm C},I_l)$. Without the loss of generality, we assume $\pi$ connects $I_l,X_j$. Then no non-collider on $\pi$ is in $\bm S$.
    
    Since $\pi$ must start with $I_l\rightarrow X_l$, if there is no collider on $\pi$, $\pi$ must be a directed path $I_l\rightarrow X_l\rightarrow\ldots\rightarrow X_j$ with all arrowheads pointing to $X_j$. Then $X_l\in\ancestor_{\Gcal_{\mid\bm C}}(X_j)\setminus(X_j,\bm L)$. It follows from Proposition \ref{prop:adjacency_latent_selection} that $X_l\in\bm A_j$ which contradicts the assumption $X_l\not\in\bm A_j\cup\bm A_k\cup\{X_j,X_k\}$.

    If there are colliders on $\pi$, since $\pi$ is open given $\bm A\cup\bm S\setminus\{X_j,X_k\}$, the colliders must be in $\ancestor_{\Gcal_{\mid\bm C}}(\bm A\cup\bm S\setminus\{X_j,X_k\})\subseteq\ancestor_{\Gcal_{\mid\bm C}}(X_j,X_k,\bm S)$. 
    
    If there are colliders in $\ancestor_{\Gcal_{\mid\bm C}}(X_j,X_k)\setminus\ancestor_{\Gcal_{\mid\bm C}}(\bm S)$, denote $Z$ to be the one on $\pi$ closest to $I_l$. Denote $\pi_Z$ to be the subpath of $\pi$ between $I_l,Z$. Then all colliders on $\pi_Z$ must be in $\ancestor_{\Gcal_{\mid\bm C}}(\bm S)$. Since $Z\in\ancestor_{\Gcal_{\mid\bm C}}(X_j,X_k)\setminus\ancestor_{\Gcal_{\mid\bm C}}(\bm S)$, there exists a directed path $\pi'$ of the form $Z\rightarrow\ldots\rightarrow X_j$ or $Z\rightarrow\ldots\rightarrow X_k$, and $\bm S$ doesn't intersect $\pi'$. Then the concatenation of $\pi_Z$ and $\pi'$ has no non-collider in $\bm S$ and all colliders in $\ancestor_{\Gcal_{\mid\bm C}}(\bm S)$. Thus, the concatenated walk is open given $\bm S$ in $\Aug(\Gcal_{\mid\bm C},I_l)$, and then $X_l\in\bm A_j\cup\bm A_k$ which contradicts the assumption.

    Then we know all colliders on $\pi$ are in $\ancestor_{\Gcal_{\mid\bm C}}(\bm S)$. Thus, $\pi$ is open given $\bm S$ in $\Aug(\Gcal_{\mid\bm C},I_l)$, and then $X_l\in\bm A_j$ which contradicts the assumption.

    Combining pieces concludes the proof.
\end{proof}

\subsection{Proof of Proposition \ref{prop:consistent_anterior_latent_selection}}\label{sec:proof:prop:consistent_anterior_latent_selection}

\begin{lemma}\label{lem:direct_ancestor}
    For any $X_j\in\bm X$ and $X_k\in\ancestor_{\Gcal_{\mid\bm C}}(X_j)\setminus\{X_j\}\cap\bm X$, there exists a sequence $(X_{j_1},\ldots,X_{j_m})\subseteq\bm X\setminus\{X_j,X_k\}$ with $X_{j_0}\overset{\triangle}{=}X_k$ and $X_{j_{m+1}}\overset{\triangle}{=}X_j$, such that for any $l=0,\ldots, m$, we have $X_{j_l}\in\ancestor_{\Gcal_{\mid\bm C}}(X_{j_{l+1}})\setminus\{X_{j_{l+1}}\}$ and there is no directed path $X_{j_l}\rightarrow\ldots\rightarrow X_{j_{l+1}}$ intersecting with $\bm X\setminus\{X_{j_l},X_{j_{l+1}}\}$.
\end{lemma}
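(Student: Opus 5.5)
The plan is to take a directed path from $X_k$ to $X_j$ in $\Gcal_{\mid\bm C}$ and cut it at the observed variables it visits. Since $X_k\in\ancestor_{\Gcal_{\mid\bm C}}(X_j)\setminus\{X_j\}$, there is a directed path $\rho: X_k\rightarrow\ldots\rightarrow X_j$. Because $\Gcal_{\mid\bm C}$ is acyclic, $\rho$ visits each node at most once.

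The first natural attempt lists the observed nodes of $\rho$ in order, $X_{j_0}=X_k, X_{j_1},\ldots,X_{j_m},X_{j_{m+1}}=X_j$. Each $X_{j_l}$ is then a proper ancestor of $X_{j_{l+1}}$ via the segment of $\rho$. However, this only shows that the particular segment of $\rho$ has no observed interior nodes. The lemma asks for more: there must be no directed path at all from $X_{j_l}$ to $X_{j_{l+1}}$ that meets $\bm X\setminus\{X_{j_l},X_{j_{l+1}}\}$. So a fixed path does not suffice, and I will instead argue by refinement, choosing a sequence that is maximal.

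Consider all sequences $X_k=Y_0,Y_1,\ldots,Y_{r+1}=X_j$ of distinct observed variables in which each $Y_l$ is a proper ancestor of $Y_{l+1}$. Such sequences exist; the trivial one $(X_k,X_j)$ is an example. Their length is bounded by $d_X$, and by acyclicity the entries are automatically distinct. Choose one of maximal length.

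Suppose, for contradiction, that some consecutive pair $(Y_l,Y_{l+1})$ admits a directed path $Y_l\rightarrow\ldots\rightarrow Y_{l+1}$ passing through some $X\in\bm X\setminus\{Y_l,Y_{l+1}\}$. Then $Y_l$ is a proper ancestor of $X$, and $X$ is a proper ancestor of $Y_{l+1}$. We can therefore insert $X$ between $Y_l$ and $Y_{l+1}$.

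The inserted $X$ differs from every other $Y_i$. If $X=Y_i$ with $i<l$, then $X$ would be both an ancestor of $Y_l$ (through the chain) and a proper descendant of $Y_l$, giving a directed cycle. The case $i>l+1$ is symmetric. The longer sequence then contradicts maximality, so in the maximal sequence no consecutive pair admits such a path.

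The interior entries of the sequence lie in $\bm X\setminus\{X_j,X_k\}$ by distinctness, and they give the required $(X_{j_1},\ldots,X_{j_m})$. The case $m=0$ is allowed; it occurs when $(X_k,X_j)$ itself already qualifies.

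The main step needing care is the acyclicity argument that guarantees distinctness after insertion, since it is what keeps the length bounded and makes maximality meaningful. An equivalent way to phrase the argument is as an iterative refinement that terminates because the length is bounded by $d_X$.
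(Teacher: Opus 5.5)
Your proof is correct and is essentially the paper's own extremal argument. The paper takes a directed path from $X_k$ to $X_j$ containing the largest number of observed nodes, and uses acyclicity to show that splicing in an offending path would yield a path with more observed nodes. You instead maximize directly over chains of observed proper ancestors. This avoids the path-surgery step (checking that the spliced walk is still a path), but the underlying idea is the same.
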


\begin{proof}[Proof of Lemma \ref{lem:direct_ancestor}]
    Since $X_k\in\ancestor_{\Gcal_{\mid\bm C}}(X_j)\setminus\{X_j\}$, there exist directed paths between $X_j,X_k$ with all arrowheads pointing to $X_j$. Let $\pi$ denote such a path containing the largest number of nodes in $\bm X$. We show that all the observed variables $(X_{j_1},\ldots,X_{j_m})\subseteq\bm X\setminus\{X_j,X_k\}$ on $\pi:X_{j_0}\overset{\triangle}{=}X_k\rightarrow\ldots\rightarrow X_{j_1}\rightarrow\ldots\rightarrow X_{j_2}\rightarrow\ldots\rightarrow X_{j_{m-1}}\rightarrow\ldots\rightarrow X_{j_m}\rightarrow\ldots\rightarrow X_{j_{m+1}}\overset{\triangle}{=}X_j$ form such a sequence.

    It is evident that $X_{j_l}\in\ancestor_{\Gcal_{\mid\bm C}}(X_{j_{l+1}})\setminus\{X_{j_{l+1}}\}$ for $l=0,\ldots,m$. If there exists $l\in\{0,\ldots,m\}$ and directed path $\pi_l$ $X_{j_l}\rightarrow \ldots\rightarrow X'\rightarrow \ldots\rightarrow X_{j_{l+1}}$ such that $X'\in\bm X\setminus\{X_{j_l},X_{j_{l+1}}\}$. We first show that $X'\in\bm X\setminus\{X_{j_l}:l=0,\ldots,m+1\}$. To see this, if $X'=X_{j_i}$ for $i<l$, we have $X'=X_{j_i}\in\ancestor_{\Gcal_{\mid\bm C}}(X_{j_l})\setminus\{X_{j_l}\}$ and $X_{j_l}\in\ancestor_{\Gcal_{\mid\bm C}}(X')\setminus\{X'\}$, which contradicts the acyclic assumption of $\Gcal_{\mid\bm C}$. If $X'=X_{j_i}$ for $i>l+1$, we have $X'\in\ancestor_{\Gcal_{\mid\bm C}}(X_{j_{l+1}})\setminus\{X_{j_{l+1}}\}$ and $X_{j_{l+1}}\in\ancestor_{\Gcal_{\mid\bm C}}(X_{j_i})\setminus\{X_{j_i}\}=\ancestor_{\Gcal_{\mid\bm C}}(X')\setminus\{X'\}$, which again contradicts the acyclic assumption of $\Gcal_{\mid\bm C}$.

    Since $X'\in\bm X\setminus\{X_{j_l}:l=0,\ldots,m+1\}$, we can replace the subpath of $\pi$ between $X_{j_l},X_{j_{l+1}}$ by $\pi_l$ to get another directed path between $X_j,X_k$ with all arrowheads pointing to $X_j$. The new path contains at least $m+1$ middle nodes in $\bm X$, which contradicts the definition of $\pi$. Therefore, for any $l=0,\ldots, m$, there is no directed path $X_{j_l}\rightarrow\ldots\rightarrow X_{j_{l+1}}$ intersecting with $\bm X\setminus\{X_{j_l},X_{j_{l+1}}\}$.
\end{proof}

\begin{lemma}\label{lem:direct_ancestor_S}
    For any $X_j\in\ancestor_{\Gcal_{\mid\bm C}}(\bm S\setminus\bm S_{\rm is})\cap\bm X$, there exists a sequence $(X_{j_1},\ldots,X_{j_m})\subseteq\bm X\setminus\{X_j\}$ with $X_{j_0}\overset{\triangle}{=}X_j$ such that 1) for any $l=0,\ldots, m-1$, we have $X_{j_l}\in\ancestor_{\Gcal_{\mid\bm C}}(X_{j_{l+1}})\setminus\{X_{j_{l+1}}\}$ and there is no directed path $X_{j_l}\rightarrow\ldots\rightarrow X_{j_{l+1}}$ intersecting with $\bm X\setminus\{X_{j_l},X_{j_{l+1}}\}$, and 2) $X_{j_m}\in\ancestor_{\Gcal_{\mid\bm C}}(\bm S\setminus\bm S_{\rm is})$, but there is no $S\in\bm S\setminus\bm S_{\rm is}$ and directed path $X_{j_m}\rightarrow\ldots\rightarrow S$ intersecting with $\bm X\setminus\{X_{j_m}\}$.
\end{lemma}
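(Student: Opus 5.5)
The plan is to mirror the extremal-path argument of Lemma \ref{lem:direct_ancestor}, with $\bm S\setminus\bm S_{\rm is}$ playing the role of the terminal node. Since $X_j\in\ancestor_{\Gcal_{\mid\bm C}}(\bm S\setminus\bm S_{\rm is})$, the set $\Pi$ of directed paths $X_j\rightarrow\ldots\rightarrow S$ with $S\in\bm S\setminus\bm S_{\rm is}$ is nonempty. Since $\Gcal_{\mid\bm C}$ is acyclic, every path in $\Pi$ has at most $d_X$ observed nodes.

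Choose $\pi\in\Pi$ containing the largest number of nodes in $\bm X$, and list its observed nodes in order as $X_{j_0}=X_j,X_{j_1},\ldots,X_{j_m}$. The proposed sequence is $(X_{j_1},\ldots,X_{j_m})$. These nodes are distinct and differ from $X_j$ because $\pi$ is a path. If $m=0$, part 1) is vacuous and only part 2) needs to be checked. Each $X_{j_l}$ precedes $X_{j_{l+1}}$ on the directed path $\pi$, so $X_{j_l}\in\ancestor_{\Gcal_{\mid\bm C}}(X_{j_{l+1}})\setminus\{X_{j_{l+1}}\}$. The tail of $\pi$ from $X_{j_m}$ to $S$ shows $X_{j_m}\in\ancestor_{\Gcal_{\mid\bm C}}(\bm S\setminus\bm S_{\rm is})$.

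For the minimality in part 1), suppose some directed path $\pi_l$ from $X_{j_l}$ to $X_{j_{l+1}}$ passes through an observed $X'\notin\{X_{j_l},X_{j_{l+1}}\}$. The acyclicity argument of Lemma \ref{lem:direct_ancestor} applies verbatim. If $X'=X_{j_i}$ with $i<l$ or $i>l+1$, a directed cycle results, so $X'$ is not on $\pi$. More generally, any node of $\pi_l$ lying on $\pi$ outside the segment between $X_{j_l}$ and $X_{j_{l+1}}$ would create a cycle. Replacing that segment of $\pi$ by $\pi_l$ therefore yields a path in $\Pi$ with strictly more observed nodes, which contradicts maximality.

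For part 2), suppose there are $S'\in\bm S\setminus\bm S_{\rm is}$ and a directed path $\pi'$ from $X_{j_m}$ to $S'$ through some $X'\in\bm X\setminus\{X_{j_m}\}$. Splice $\pi'$ onto the prefix of $\pi$ ending at $X_{j_m}$. Nodes of $\pi'$ cannot coincide with nodes of that prefix other than $X_{j_m}$, since that would again give a directed cycle. The result is therefore a path in $\Pi$ with strictly more observed nodes, a contradiction.

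The only point that needs care is checking that each splice produces a genuine path rather than a walk, which is what allows it to be compared with $\pi$. Acyclicity makes this routine.
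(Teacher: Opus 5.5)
Your proposal is correct and takes the same approach as the paper: choose a directed path from $X_j$ into $\bm S\setminus\bm S_{\rm is}$ with the most observed nodes, then exclude the shortcuts in parts 1) and 2) by splicing, using acyclicity to show the spliced path is genuine and has strictly more observed nodes. Your check that the splice cannot revisit latent nodes of $\pi$ fills in a point the paper leaves implicit.
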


\begin{proof}[Proof of Lemma \ref{lem:direct_ancestor_S}]
    Since $X_j\in\ancestor_{\Gcal_{\mid\bm C}}(\bm S\setminus\bm S_{\rm is})$, there exist directed paths between $X_j$ and some $S\in\bm S\setminus\bm S_{\rm is}$ with all arrowheads pointing to $S$. Let $\pi$ denote such a path containing the largest number of nodes in $\bm X$. We show that all the observed variables $(X_{j_1},\ldots,X_{j_m})\subseteq\bm X\setminus\{X_j\}$ on $\pi:X_{j_0}\overset{\triangle}{=}X_j\rightarrow\ldots\rightarrow X_{j_1}\rightarrow\ldots\rightarrow X_{j_2}\rightarrow\ldots\rightarrow X_{j_{m-1}}\rightarrow\ldots\rightarrow X_{j_m}\rightarrow\ldots\rightarrow S$ form such a sequence.

    Similar to the proof of Lemma \ref{lem:direct_ancestor}, for any $l=0,\ldots,m-1$, we have there is no directed path $X_{j_l}\rightarrow \ldots\rightarrow X_{j_{l+1}}$ intersecting with $\bm X\setminus\{X_{j_l},X_{j_{l+1}}\}$. Then we show there is no $S'\in\bm S\setminus\bm S_{\rm is}$ and directed path $X_{j_m}\rightarrow\ldots\rightarrow S'$ intersecting with $\bm X\setminus\{X_{j_m}\}$. If there exists such a path $\pi_m$ $X_{j_m}\rightarrow\ldots\rightarrow X'\rightarrow\ldots\rightarrow S'$ such that $X'\in\bm X\setminus\{X_{j_m}\}$, similar to the proof of Lemma \ref{lem:direct_ancestor}, we know $X'\in\bm X\setminus\{X_{j_l}:l=0,\ldots,m\}$. Then we can replace the subpath of $\pi$ between $X_{j_m},S$ by $\pi_m$ to get a new path between $X_j,S'$ with all arrowheads pointing to $S'$. The new path contains at least $m+1$ middle nodes in $\bm X$, which contradicts the definition of $\pi$. Therefore, $(X_{j_1},\ldots,X_{j_m})$ satisfies the statement of Lemma \ref{lem:direct_ancestor_S}.

\end{proof}

\begin{lemma}\label{lem:hat_A_S}
    Under Assumption \ref{asm:faithful_descendant_latent_selection}, if all anterior tests are correct, we have $\bm A_{\bm S}\subseteq\widehat{\bm A}_{\bm S}$.
\end{lemma}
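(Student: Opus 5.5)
The plan is to fix an arbitrary $X_j\in\bm A_{\bm S}$ and show that $X_j$ enters $\widehat{\bm A}_{\bm S}$ at some point during Steps 2--3 of Algorithm \ref{alg:conf_MAG_latent_selection}. Throughout, ``all anterior tests are correct'' means that $\psi_i^{\anterior(k)}=1$ exactly when $P^{(k)}_{X_i\mid\bm C,\bm S=\bm 1}\ne P^{(0)}_{X_i\mid\bm C,\bm S=\bm 1}$. Assumption \ref{asm:faithful_descendant_latent_selection} then turns each graphical condition in parts 1)--3) into a rejection $\psi=1$.

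The first step is to apply Lemma \ref{lem:direct_ancestor_S} to $X_j$. This gives a chain $X_{j_0}=X_j,X_{j_1},\ldots,X_{j_m}$ with two properties. First, each $X_{j_l}$ is a ``direct observed ancestor'' of $X_{j_{l+1}}$. Second, $X_{j_m}$ reaches some $S\in\bm S\setminus\bm S_{\rm is}$ by a directed path that avoids $\bm X\setminus\{X_{j_m}\}$. The argument then proceeds backwards along the chain.

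\textbf{Base step: $X_{j_m}\in\widehat{\bm A}_{\bm S}$ already after Step 2.} Since $S\notin\bm S_{\rm is}$, Definition \ref{def:isolated_selection} gives some $X_k\in\ancestor_{\Gcal_{\mid\bm C}}(\bm S)\cap\bm X\setminus\{X_{j_m}\}$ with $(X_{j_m}\not\indep X_k\mid\bm S)_{\Gcal_{\mid\bm C}}$. I would then shorten the open path. Take an open path given $\bm S$ from $X_{j_m}$ to such an $X_k$, and let $X_{k'}$ be the first observed middle node on it, if any. I would check that the truncated subpath is still open given $\bm S$. I would also check that $X_{k'}\in\ancestor(\bm S)$, using the fact that middle nodes of an open path are ancestors of the endpoints or of $\bm S$. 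After replacing $X_k$ by $X_{k'}$, we may assume the open path meets no other observed nodes.

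With this path in hand, I would split into cases.
\begin{itemize}
\item If $X_k\notin\ancestor(X_{j_m})$, part 2) of Assumption \ref{asm:faithful_descendant_latent_selection}, applied with $X_{j_m}$ as the intervened ancestor of $\bm S$, gives $\psi_k^{\anterior(j_m)}=1$.
\item Otherwise, part 3) applies. Its hypothesis is the ``no observed intermediate to $\bm S$'' property of $X_{j_m}$, and it again gives $\psi_k^{\anterior(j_m)}=1$.
\item For the reverse rejection $\psi_{j_m}^{\anterior(k)}=1$, I would argue symmetrically with the roles of the two nodes exchanged. If $X_k$ is a direct observed ancestor of $X_{j_m}$, part 1) gives the rejection. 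Otherwise $X_k\notin\ancestor(X_{j_m})$ fails to hold only through such an ancestral chain, so part 2) with $X_k$ intervened covers the remaining situation.
\end{itemize}
Step 2 then places $X_{j_m}$ (and $X_k$) in $\widehat{\bm A}_{\bm S}$.

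\textbf{Inductive step.} For $l<m$, part 1) of Assumption \ref{asm:faithful_descendant_latent_selection} gives $\psi_{j_{l+1}}^{\anterior(j_l)}=1$, so $X_{j_l}\in\widehat{\bm A}_{j_{l+1}}$ already from the Step 2 initialization. Suppose $X_{j_{l+1}}\in\widehat{\bm A}_{\bm S}$. The first update rule in Step 3, applied with the roles $k\leftarrow j_{l+1}$ and $j\leftarrow j_l$, then adds $X_{j_l}$ to $\widehat{\bm A}_{\bm S}$. The updates only enlarge the sets, so these inclusions persist until convergence. Induction down to $l=0$ gives $X_j\in\widehat{\bm A}_{\bm S}$.

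\textbf{Main obstacle.} I expect the base step to be the hard part. Two points need care there. The first is the mismatch between part 3) of Assumption \ref{asm:faithful_descendant_latent_selection}, whose condition ranges over all of $\bm S$, and Lemma \ref{lem:direct_ancestor_S}, which only controls $\bm S\setminus\bm S_{\rm is}$. If needed, I would redo the longest-path construction of Lemma \ref{lem:direct_ancestor_S} over all selection nodes and confirm that the endpoint is still non-isolated. The second is the case analysis guaranteeing both directions $\psi_k^{\anterior(j_m)}=\psi_{j_m}^{\anterior(k)}=1$ for a single partner $X_k$. If this analysis fails for some configuration, the fallback is to let $X_k$ itself enter $\widehat{\bm A}_{\bm S}$ through its own chain and then use the Step 3 closure to recover $X_{j_m}$.
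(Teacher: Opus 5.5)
\textbf{The gap: the reverse rejection $\psi_{j_m}^{\anterior(k)}=1$.} Your route can be made to work, but this step in your base case is not justified as written. Part 2) of Assumption~\ref{asm:faithful_descendant_latent_selection}, with $X_k$ as the intervened node, requires $X_{j_m}\notin\ancestor(X_k)$ (ancestors in $\Gcal_{\mid\bm C}$). Your case split is instead on whether $X_k$ is an ancestor of $X_{j_m}$.
\begin{itemize}
\item If $X_k$ is a direct observed ancestor of $X_{j_m}$, part 1) works.
\item If $X_k$ is an indirect ancestor, acyclicity gives $X_{j_m}\notin\ancestor(X_k)$, and part 2) applies.
\item In the remaining sub-case $X_k\notin\ancestor(X_{j_m})$, you never exclude $X_{j_m}\in\ancestor(X_k)$. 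In that configuration neither part 1) nor part 2) would apply.
\end{itemize}

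\textbf{The fix.} The missing observation is one line. Since $X_k\in\ancestor(S')$ for some $S'\in\bm S$, the relation $X_{j_m}\in\ancestor(X_k)$ would give a directed path $X_{j_m}\rightarrow\cdots\rightarrow X_k\rightarrow\cdots\rightarrow S'$ through the observed node $X_k$. Then $S'$ has two observed ancestors, so $S'\notin\bm S_{\rm is}$, contradicting property 2) of Lemma~\ref{lem:direct_ancestor_S}. Hence part 2) applies to every partner $X_k$, and the case split is unnecessary. Your fallback would not have rescued the step as stated, because the base step for $X_k$'s own chain would again need a partner.

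\textbf{Your two side concerns.} The same two-observed-ancestors remark removes your first worry. Any $S$ reachable from $X_{j_m}$ through another observed node is automatically non-isolated. So the lemma's property over $\bm S\setminus\bm S_{\rm is}$ already gives the hypothesis of part 3) over all of $\bm S$, and you need not redo the longest-path construction. The paper uses this fact silently as well. In the forward direction, part 3) applies whether or not $X_k\in\ancestor(X_{j_m})$, so that split is also superfluous.

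\textbf{Comparison with the paper.} With this fix your argument is complete, and it differs from the paper's.
\begin{itemize}
\item For $m>0$ the paper does not look for a non-isolation partner. The last link $(X_{j_{m-1}},X_{j_m})$ is itself a mutual pair. Part 1) gives $\psi_{j_m}^{\anterior(j_{m-1})}=1$. Part 3) gives $\psi_{j_{m-1}}^{\anterior(j_m)}=1$, because the directed path $X_{j_{m-1}}\rightarrow\cdots\rightarrow X_{j_m}$ is open given $\bm S$ and contains no other observed node.
\item Only for $m=0$ does the paper use non-isolation. It takes the first observed node $X_l$ on an open path and gets $\psi_l^{\anterior(j)}=1$ from part 3). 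It then recurses on $X_l$'s own chain: if $m'>0$ it uses the previous case, and if $m'=0$ it applies part 3) to $X_l$.
\end{itemize}
So the paper never invokes part 2) and needs no ancestry argument about the partner, at the cost of a nested case analysis. Your version handles all $m$ uniformly with one partner at the end of the chain, using part 3) for one direction and part 2) for the other. Your inductive step, via repeated Step 3 updates, is equivalent to the paper's single update after the Step 2 transitive closure gives $X_j\in\widehat{\bm A}_{j_m}$.
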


\begin{proof}[Proof of Lemma \ref{lem:hat_A_S}]
    For any $X_j\in\ancestor_{\Gcal_{\mid\bm C}}(\bm S\setminus\bm S_{\rm is})\cap\bm X$, it follows from Lemma \ref{lem:direct_ancestor_S} that there exists a sequence $(X_{j_1},\ldots,X_{j_m})$ satisfying the statement in Lemma \ref{lem:direct_ancestor_S}. 
    
    If $m>0$, it follows from Assumption \ref{asm:faithful_descendant_latent_selection} that $\cap_{l=0}^{m-1}H_{j_{l+1},1}^{\anterior(j_l)}\cap H_{j_{m-1},1}^{\anterior(j_m)}$ holds. Therefore, $X_{j_m}\in\widehat{\bm A}_{\bm S}$, $X_j\in\widehat{\bm A}_{j_m}$, and then $X_j\in\widehat{\bm A}_{\bm S}$.

    If $m=0$, since $X_j\in\ancestor_{\Gcal_{\mid\bm C}}(\bm S\setminus\bm S_{\rm is})$, there exists $X_k\in\ancestor_{\Gcal_{\mid\bm C}}(\bm S)\cap\bm X\setminus\{X_j\}$ such that $(X_j\not\indep X_k\mid\bm S)_{\Gcal_{\mid\bm S}}$. Denote $\pi$ to be an open path between $X_j,X_k$ given $\bm S$. It follows from Lemma 3.13 in \cite{richardson2002ancestral} that all middle notes are in $\ancestor_{\Gcal_{\mid\bm C}}(\bm S)$. Denote $X_l$ to be the observed node on $\pi\setminus\{X_j\}$ closest to $X_j$, it follows from Assumption \ref{asm:faithful_descendant_latent_selection} that $H_{l,1}^{\anterior(j)}$ holds, and thus $X_j\in\widehat{\bm A}_l$. Since $X_l\in\ancestor_{\Gcal_{\mid\bm C}}(\bm S)$, it follows from Lemma \ref{lem:direct_ancestor_S} that there exists a sequence $(X_{j_1'},\ldots,X_{j_{m'}'})$ with $X_{j_0'}\overset{\triangle}{=}X_l$ satisfying the statement in Lemma \ref{lem:direct_ancestor_S}. 
    
    If $m'>0$, it follows from Assumption \ref{asm:faithful_descendant_latent_selection} that $\cap_{i=0}^{m'-1}H_{j_{i+1}',1}^{\anterior(j_i')}\cap H_{j_{m'-1}',1}^{\anterior(j_{m'}')}$ holds. Therefore, $X_{j_{m'}'}\in\widehat{\bm A}_{\bm S}$, $X_l\in\widehat{\bm A}_{j_{m'}'}$, and then $X_l\in\widehat{\bm A}_{\bm S}$. Since $X_j\in\widehat{\bm A}_l$, we know $X_j\in\widehat{\bm A}_{\bm S}$.

    If $m'=0$, it follows from Assumption \ref{asm:faithful_descendant_latent_selection} that $H_{j,1}^{\anterior(l)}$ holds. Therefore, $X_l\in\widehat{\bm A}_j$, $X_j\in\widehat{\bm A}_l$, and thus $X_j\in\widehat{\bm A}_{\bm S}$.
    
\end{proof}

\begin{lemma}\label{lem:hat_A_j}
    Under Assumption \ref{asm:faithful_descendant_latent_selection}, if all anterior tests are correct, then $\bm A_j\subseteq\widehat{\bm A}_j$ for all $j\in[d_X]$.
\end{lemma}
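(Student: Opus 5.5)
The plan is to fix $j$ and $X_k\in\bm A_j$, and exhibit a chain $j=j_0,j_1,\ldots,j_{m+1}=k$ of indices with $X_{j_{l+1}}\in\widehat{\bm A}_{j_l}$ for every $l$. Here ``all anterior tests correct'' means $\psi_a^{\anterior(b)}=1$ exactly when $P^{(b)}_{X_a\mid\bm C,\bm S=\bm 1}\ne P^{(0)}_{X_a\mid\bm C,\bm S=\bm 1}$. Once such a chain exists, the transitive closure in Step 2, and again in the second bullet of Step 3, places $X_k$ in $\widehat{\bm A}_j$. Since Step 3 only enlarges the sets, it also suffices to produce the chain from the Step-2 sets, possibly combined with the Step-3 rule ``$X_b\in\widehat{\bm A}_{\bm S}$, $X_a\in\widehat{\bm A}_b\Rightarrow X_b\in\widehat{\bm A}_a$''.

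By Proposition \ref{prop:adjacency_latent_selection} part 1), $X_k\in\ancestor_{\Gcal_{\mid\bm C}}(X_j,\bm S)$, and I split into two cases.

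\emph{Case 1: $X_k\in\ancestor_{\Gcal_{\mid\bm C}}(X_j)$.} Apply Lemma \ref{lem:direct_ancestor} to obtain $X_k=X_{i_0},X_{i_1},\ldots,X_{i_{m+1}}=X_j$, where each $X_{i_l}$ is an ancestor of $X_{i_{l+1}}$ with no directed path between them through other observed nodes. Assumption \ref{asm:faithful_descendant_latent_selection} 1) then gives $\psi_{i_{l+1}}^{\anterior(i_l)}=1$ for every $l$. Read backwards, this is exactly the product condition in Step 2, so $X_k\in\widehat{\bm A}_j$ already after Step 2.

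\emph{Case 2: $X_k\in\ancestor_{\Gcal_{\mid\bm C}}(\bm S)\setminus\ancestor_{\Gcal_{\mid\bm C}}(X_j)$.} Take an open path $\pi$ from $I_k$ to $X_j$ given $\bm S$ in $\Aug(\Gcal_{\mid\bm C},I_k)$. It starts with $I_k\rightarrow X_k$ and, as in the proof of Lemma \ref{lem:S}, must contain a collider, and every collider lies in $\ancestor_{\Gcal_{\mid\bm C}}(\bm S)$. List the observed nodes of $\pi$ in order, $X_k=X_{a_0},X_{a_1},\ldots,X_{a_r}=X_j$. Consecutive nodes $X_{a_i},X_{a_{i+1}}$ are joined by a subpath that is open given $\bm S$ and disjoint from $\bm X\setminus\{X_{a_i},X_{a_{i+1}}\}$. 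I would then show, by induction along $\pi$, that each step gives $X_{a_i}\in\widehat{\bm A}_{a_{i+1}}$.
\begin{itemize}
\item If $X_{a_i}\in\ancestor_{\Gcal_{\mid\bm C}}(\bm S)$, parts 2) or 3) of Assumption \ref{asm:faithful_descendant_latent_selection} give $\psi_{a_{i+1}}^{\anterior(a_i)}=1$. Part 2) applies when $X_{a_{i+1}}$ is not a descendant of $X_{a_i}$; otherwise I fall back to Case 1, or use part 3) after first passing to the ``last'' observed ancestor of selection supplied by Lemma \ref{lem:direct_ancestor_S}.
\item If $X_{a_i}\notin\ancestor_{\Gcal_{\mid\bm C}}(\bm S)$, the segment leaving it toward $X_{a_{i+1}}$ must be directed into $X_{a_{i+1}}$, so Case 1 applies to that segment.
\end{itemize}
Where the segment structure forces a shift in the ``wrong'' direction, i.e.\ only $X_{a_{i+1}}\in\widehat{\bm A}_{a_i}$ is obtained, I would route through $\widehat{\bm A}_{\bm S}$. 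Lemma \ref{lem:hat_A_S} gives $\bm A_{\bm S}\subseteq\widehat{\bm A}_{\bm S}$, and the first bullet of Step 3 (``$X_b\in\widehat{\bm A}_{\bm S}$, $X_a\in\widehat{\bm A}_b\Rightarrow X_b\in\widehat{\bm A}_a$'') reverses such links at nodes in $\widehat{\bm A}_{\bm S}$. Concatenating the links then yields the chain from $j$ back to $k$.

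The main obstacle is Case 2: ensuring that every link along $\pi$ points in the usable direction. In particular, I must handle nodes that are ancestors only of isolated selection variables, which lie outside $\bm A_{\bm S}$ and so cannot be reversed via Step 3. My first attempt is to argue that if $X_k$ (or an intermediate observed collider) reaches $\bm S$ only through $\bm S_{\rm is}$, then the open path from $X_k$ onward cannot continue through a selection collider to another observed node. This follows from condition 2) of Definition \ref{def:isolated_selection}, i.e.\ the $d$-separation requirement. Consequently the path must leave $X_k$ along a directed route to $X_j$ or to a collider that is an ancestor of non-isolated selection. That reduces the problem to the two subcases above.
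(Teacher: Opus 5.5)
\textbf{There is a gap, exactly at the step you flag as the main obstacle.} Your ingredients match the paper's: a walk along an open path $\pi$ from $I_k$ to $X_j$, directed links from Lemma \ref{lem:direct_ancestor} with part 1) of Assumption \ref{asm:faithful_descendant_latent_selection}, non-ancestral links from part 2), and reversal through $\widehat{\bm A}_{\bm S}$ via Lemma \ref{lem:hat_A_S} and the first bullet of Step 3. The paper argues by contradiction at the last observed node $X_{l'}$ before $\pi$ first enters $\widehat{\bm A}_j\cup\{X_j\}$.

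First, you misread part 2). It requires the responding node $X_{a_{i+1}}$ not to be an \emph{ancestor} of the intervened node $X_{a_i}$; ``not a descendant'' is not enough. Your split ``descendant $\Rightarrow$ Case 1, otherwise part 2)'' therefore leaves open the case where $X_{a_{i+1}}$ is a proper ancestor of $X_{a_i}$. The detour through part 3) does not repair this in general. Part 3) helps only if $X_{a_i}$ itself has no directed path to $\bm S$ through other observed nodes. Otherwise, passing to the last selection ancestor from Lemma \ref{lem:direct_ancestor_S} changes the intervened node, and you have no observed-node-free open path from that node to $X_{a_{i+1}}$.

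Second, the proper-ancestor case is the only one that needs reversal, and there the isolated-selection worry disappears by a one-line observation you are missing. If $X_{a_i}\in\ancestor_{\Gcal_{\mid\bm C}}(S)$ and $X_{a_{i+1}}$ is a proper ancestor of $X_{a_i}$, then $S$ has two observed ancestors. So $S\notin\bm S_{\rm is}$ by Definition \ref{def:isolated_selection}, and $X_{a_i}\in\bm A_{\bm S}\subseteq\widehat{\bm A}_{\bm S}$. Combined with $X_{a_{i+1}}\in\widehat{\bm A}_{a_i}$ from Case 1, Step 3 gives $X_{a_i}\in\widehat{\bm A}_{a_{i+1}}$. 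This is exactly the paper's subcase $X_l\in\ancestor_{\Gcal_{\mid\bm C}}(X_{l'})$. A node all of whose selection descendants are isolated has no observed proper ancestor, so it never needs reversing.

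Your proposed fix via condition 2) of Definition \ref{def:isolated_selection} is therefore unnecessary, and its conclusion is false as stated. Take $X_k\rightarrow S$ with $S$ isolated, $X_k\leftarrow L\rightarrow X_j$, and $X_j\notin\ancestor_{\Gcal_{\mid\bm C}}(\bm S)$. Then $\pi=I_k\rightarrow X_k\leftarrow L\rightarrow X_j$ leaves $X_k$ through an arrowhead into $X_k$, neither along a directed route nor toward a collider. That link is simply handled by part 2).

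To repair the argument, use the corrected trichotomy for $X_{a_i}\in\ancestor_{\Gcal_{\mid\bm C}}(\bm S)$:
\begin{itemize}
\item $X_{a_{i+1}}$ a descendant of $X_{a_i}$: Case 1;
\item neither ancestor nor descendant: part 2);
\item $X_{a_{i+1}}$ a proper ancestor of $X_{a_i}$: reversal as above.
\end{itemize}
You also need a short proof of your second bullet. A middle node outside $\ancestor_{\Gcal_{\mid\bm C}}(\bm S)$ is a non-collider. Its edge toward $I_k$ cannot point away from it, because the resulting directed segment would end at a collider (at the latest $X_k$) lying in $\ancestor_{\Gcal_{\mid\bm C}}(\bm S)$. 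With these fixes your link-by-link chain is valid.

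The paper's boundary argument is shorter. Part 1) immediately places $X_{l'}$ in $\ancestor_{\Gcal_{\mid\bm C}}(\bm S)\setminus\ancestor_{\Gcal_{\mid\bm C}}(X_j)$, so it needs neither your Case 1/Case 2 split nor any analysis of nodes outside $\ancestor_{\Gcal_{\mid\bm C}}(\bm S)$.
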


\begin{proof}[Proof of Lemma \ref{lem:hat_A_j}]

    \noindent\textbf{1)} First, we show $(\ancestor_{\Gcal_{\mid\bm C}}(X_j)\cap\bm X)\setminus\{X_j\}\subseteq\widehat{\bm A}_j$ for all $j\in[d_X]$.

    For any $X_k\in(\ancestor_{\Gcal_{\mid\bm C}}(X_j)\cap\bm X)\setminus\{X_j\}$, Lemma \ref{lem:direct_ancestor} provides a sequence $X_{j_0}=X_k,\ldots,X_{j_m}=X_j\in\bm X$ such that for $l\in[m]$, $X_{j_{l-1}}\in\ancestor_{\Gcal_{\mid\bm C}}(X_{j_l})\setminus\{X_{j_l}\}$ and no directed path from $X_{j_{l-1}}$ to $X_{j_l}$ contains another observed node. Assumption \ref{asm:faithful_descendant_latent_selection} part 1) and correctness of anterior tests imply that $\psi_{j_l}^{\anterior(j_{l-1})}=1$ for $l\in[m]$. Step 2 in Algorithm \ref{alg:conf_MAG_latent_selection} then gives that $X_k\in\widehat{\bm A}_j$. Therefore, $(\ancestor_{\Gcal_{\mid\bm C}}(X_j)\cap\bm X)\setminus\{X_j\}\subseteq\widehat{\bm A}_j$.

    \noindent\textbf{2)} Then we prove $\bm A_j\subseteq\widehat{\bm A}_j$ by contradiction.

    Suppose $X_k\in\bm A_j\setminus\widehat{\bm A}_j$. By the definition of $\bm A_j$, there exists a path $\pi$ between $I_k,X_j$ that is open given $\bm S$ in $\Aug(\Gcal_{\mid\bm C},\bm I_k)$. This path must start with $I_k\rightarrow X_k$. By Lemma 3.13 in \cite{richardson2002ancestral}, all middle nodes on $\pi$ are in $\ancestor_{\Aug(\Gcal_{\mid\bm C},I_k)}(X_j,\bm S)$. Since $I_k$ cannot be a middle node of any directed path in $\Aug(\Gcal_{\mid\bm C},I_k)$, all middle nodes on $\pi$ are in $\ancestor_{\Gcal_{\mid\bm C}}(X_j,\bm S)$. 

    Denote $X_l$ on $\pi$ to be the observed node in $\widehat{\bm A}_j\cup\{X_j\}$ closest to $X_k$ and $X_{l'}$ to be the observed node on the subpath of $\pi$ between $I_k,X_l$ closest to $X_l$. Then we know $X_{l'}\not\in\widehat{\bm A}_j\cup\{X_j\}$ and $X_l,X_{l'}$ are well defined since $X_j\in\widehat{\bm A}_j\cup\{X_j\},X_k\not\in \widehat{\bm A}_j\cup\{X_j\}$. The subpath of $\pi$ between $X_l,X_{l'}$ is open given $\bm S$, and contains no other observed nodes.

    Since $X_{l'}\not\in\widehat{\bm A}_j$, we know from \textbf{1)} that $X_{l'}\not\in\ancestor_{\Gcal_{\mid\bm C}}(X_j)$. Then $X_{l'}\in\ancestor_{\Gcal_{\mid\bm C}}(\bm S)$.

    If $X_l\not\in\ancestor_{\Gcal_{\mid\bm C}}(X_{l'})$, Assumption \ref{asm:faithful_descendant_latent_selection} part 2) and the correctness of anterior tests imply $\psi_l^{\anterior(l')}=1$. Then $X_{l'}\in\widehat{\bm A}_l$. Since $X_l\in\widehat{\bm A}_j\cup\{X_j\}$, Step 3 in Algorithm \ref{alg:conf_MAG_latent_selection} implies $X_{l'}\in\widehat{\bm A}_j$, which contradicts the definition of $X_{l'}$.

    If $X_l\in\ancestor_{\Gcal_{\mid\bm C}}(X_{l'})$, part \textbf{1)} gives $X_l\in\widehat{\bm A}_{l'}$. Since $X_{l'}\in\ancestor_{\Gcal_{\mid\bm C}}(\bm S)$, choose $S\in\bm S$ such that $X_{l'}\in\ancestor_{\Gcal_{\mid\bm C}}(S)$, then $X_l,X_{l'}\in\ancestor_{\Gcal_{\mid\bm C}}(S)$ and $X_{l'}\in\bm A_{\bm S}$. By Lemma \ref{lem:hat_A_S}, $X_{l'}\in\widehat{\bm A}_{\bm S}$. $X_l\in\widehat{\bm A}_{l'}$ and Step 3 of Algorithm \ref{alg:conf_MAG_latent_selection} imply that $X_{l'}\in\widehat{\bm A}_l$. Since $X_l\in\widehat{\bm A}_j\cup\{X_j\}$, Step 3 in Algorithm \ref{alg:conf_MAG_latent_selection} implies $X_{l'}\in\widehat{\bm A}_j$, which contradicts the definition of $X_{l'}$.

    In conclusion, we have $\bm A_j\subseteq\widehat{\bm A}_j$.
\end{proof}

\begin{lemma}\label{lem:A_closure}
    \begin{enumerate}[1)]
        \item For any $X_k\in\bm A_{\bm S}$ and $X_j\in\bm A_k$, we have $X_j\in\bm A_{\bm S}$ and $X_k\in\bm A_j$.
        \item For $j\ne l$, if $X_j\in\bm A_k$ and $X_k\in\bm A_l$, then $X_j\in\bm A_l$.
    \end{enumerate}
\end{lemma}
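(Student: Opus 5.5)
Both parts reduce to Proposition \ref{prop:adjacency_latent_selection} 1), which identifies $\bm A_j\cup\{X_j\}$ with $\anterior_{\MAG(\Gcal_{\mid\bm C})}(X_j)$, combined with transitivity of anteriority in a MAG (Proposition 2.1 in \cite{richardson2002ancestral}).

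\textbf{Part 2).} Suppose $X_j\in\bm A_k$ and $X_k\in\bm A_l$. Then $X_j\in\anterior_{\MAG(\Gcal_{\mid\bm C})}(X_k)$ and $X_k\in\anterior_{\MAG(\Gcal_{\mid\bm C})}(X_l)$. Concatenating the anterior paths and invoking transitivity gives $X_j\in\anterior_{\MAG(\Gcal_{\mid\bm C})}(X_l)$. Since $j\ne l$, this means $X_j\in\bm A_l$.

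\textbf{Part 1).} Take $X_k\in\bm A_{\bm S}=\ancestor_{\Gcal_{\mid\bm C}}(\bm S\setminus\bm S_{\rm is})\cap\bm X$ and $X_j\in\bm A_k$.

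First, $X_j\in\bm A_{\bm S}$. By Lemma \ref{lem:S} there is some $X_m\ne X_k$ with $X_m\in\bm A_k$ and $X_k\in\bm A_m$. I would aim to produce an $X_{m'}\ne X_j$ with $X_{m'}\in\bm A_j$ and $X_j\in\bm A_{m'}$, and then apply Lemma \ref{lem:S} again. The candidate is $X_{m'}=X_k$.
\begin{itemize}
\item $X_j\in\bm A_k$ holds by assumption.
\item For $X_k\in\bm A_j$: the edge structure of $\MAG(\Gcal_{\mid\bm C})$ makes the anterior path from $X_j$ to $X_k$ end in either an undirected or a directed edge into $X_k$.
\item Because $X_k\in\ancestor(\bm S)$, no arrowhead can point at $X_k$ in the MAG (Definition \ref{def:MAG}: arrowheads at $X_k$ require $X_k\notin\ancestor(\cdot,\bm S)$). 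Hence the last edge into $X_k$ is undirected.
\item By Lemma 3.2 of \cite{richardson2002ancestral}, undirected edges have no arrowheads pointing into their endpoints. Propagating backward, every edge on the anterior path from $X_j$ to $X_k$ is undirected. Indeed, if $X_{a}\to X_{b}$ were followed by $X_b - X_c$, $X_b$ would have an incoming arrowhead while being incident to an undirected edge, which is forbidden.
\item A path of undirected edges is anterior in both directions, so $X_k\in\anterior(X_j)$, i.e. $X_k\in\bm A_j$.
\end{itemize}

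With $X_k\in\bm A_j$ and $X_j\in\bm A_k$, Lemma \ref{lem:S} gives $X_j\in\bm A_{\bm S}$ directly. This also proves the second claim of part 1), namely $X_k\in\bm A_j$.

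\textbf{Main obstacle.} The delicate step is the backward propagation of undirected edges. It needs the MAG property that a vertex incident to an undirected edge has no incoming arrowheads, which is Lemma 3.2(b)/(c) in \cite{richardson2002ancestral}. It also needs $X_k\in\ancestor_{\Gcal_{\mid\bm C}}(\bm S)$ so that every edge at $X_k$ has a tail there. If citing that lemma is inconvenient, I would instead argue directly in $\Gcal_{\mid\bm C}$. Since $X_k\in\ancestor(\bm S)$ and $\bm A_k\subseteq\ancestor(X_k,\bm S)$ by Proposition \ref{prop:adjacency_latent_selection}, we get $X_j\in\ancestor(\bm S)$. Then $\ancestor(X_j,\bm S)=\ancestor(X_k,\bm S)=\ancestor(\bm S)$, so every node on the MAG path lies in $\ancestor(\bm S)$ and every edge on it is undirected by Definition \ref{def:MAG}.
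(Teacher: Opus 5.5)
Your proof is correct. Part 2) is exactly the paper's argument; for part 1) you take a genuinely different route.

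\textbf{The paper's route for part 1).} The paper stays in the DAG and proves the two claims in the opposite order.
\begin{enumerate}
\item It first gets $X_j\in\bm A_{\bm S}$ directly. It uses $\bm A_k\subseteq\ancestor_{\Gcal_{\mid\bm C}}(X_k,\bm S)=\ancestor_{\Gcal_{\mid\bm C}}(\bm S)$. It also uses the fact that the $X_j$--$X_k$ subpath of the open $I_j$--$X_k$ path $d$-connects $X_j$ and $X_k$ given $\bm S$. Definition \ref{def:isolated_selection} then gives the conclusion.
\item It then gets $X_k\in\bm A_j$ by prepending $I_k\rightarrow X_k$ to that same subpath. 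The result stays open given $\bm S$ whether or not $X_k$ is a collider, because $X_k\in\ancestor_{\Gcal_{\mid\bm C}}(\bm S)\setminus\bm S$.
\end{enumerate}

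\textbf{Your route for part 1).} You prove $X_k\in\bm A_j$ first, working in $\MAG(\Gcal_{\mid\bm C})$. No arrowhead can sit at $X_k$, so the anterior path from $X_j$ is entirely undirected and can be reversed. You then obtain $X_j\in\bm A_{\bm S}$ from the ``if'' direction of Lemma \ref{lem:S}. There is no circularity, since Lemma \ref{lem:S} is proved independently of this lemma. Your opening remark that Lemma \ref{lem:S} supplies some $X_m$ is never used and can be dropped.

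\textbf{What each route buys.} The paper's argument is more elementary. It needs only the definition of $\bm A_k$ and one containment from Proposition \ref{prop:adjacency_latent_selection}. Yours relies on the full identification $\bm A_j=\anterior_{\MAG(\Gcal_{\mid\bm C})}(X_j)\setminus\{X_j\}$, the ancestral-graph property, and Lemma \ref{lem:S}. In exchange, your version treats both parts uniformly through MAG anteriority. It also shows directly why anteriority is symmetric among observed ancestors of selection, which is the structural fact behind Lemma \ref{lem:S}.

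Your fallback argument, which avoids Lemma 3.2 of Richardson--Spirtes, is also valid. State explicitly that each intermediate node of the anterior path is itself anterior to $X_k$. Hence it lies in $\bm A_k\subseteq\ancestor_{\Gcal_{\mid\bm C}}(X_k,\bm S)=\ancestor_{\Gcal_{\mid\bm C}}(\bm S)$, and every edge on the path is undirected by Definition \ref{def:MAG}.
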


\begin{proof}[Proof of Lemma \ref{lem:A_closure}]
    \noindent\textbf{1)} Since $X_k\in\bm A_{\bm S}=\ancestor_{\Gcal_{\mid\bm C}}(\bm S\setminus\bm S_{\rm is})$, we have $X_j\in\bm A_k\subseteq\ancestor_{\Gcal_{\mid\bm C}}(X_k,\bm S)\subseteq\ancestor_{\Gcal_{\mid\bm C}}(\bm S)$. Since $X_j$ is $d$-connected to $X_k\in\ancestor_{\Gcal_{\mid\bm C}}(\bm S\setminus\bm S_{\rm is})$, by Definition \ref{def:isolated_selection}, we have $X_j\in\ancestor_{\Gcal_{\mid\bm C}}(\bm S\setminus\bm S_{\rm is})=\bm A_{\bm S}$. 
    
    Since $X_j\in\bm A_k$, there exists an open path $\pi$ between $I_j,X_k$ given $\bm S$ in $\Aug(\Gcal_{\mid\bm C},I_j)$. Since $I_j$ is only adjacent to $X_j$ in $\Aug(\Gcal_{\mid\bm C},I_j)$, we know $X_j$ is on $\pi$. Denote $\pi_j$ to be the subpath of $\pi$ between $X_j,X_k$. Since $X_k\in\bm A_{\bm S}$, we know the concatenated path of $I_k\rightarrow X_k$ and $\pi_j$ is open given $\bm S$ in $\Aug(\Gcal_{\mid\bm C},I_k)$, regardless of whether $X_k$ is a collider. Therefore, $X_k\in\bm A_j$.

    \noindent\textbf{2)} By Proposition \ref{prop:adjacency_latent_selection}, we know $\bm A_j=\anterior_{\MAG(\Gcal_{\mid\bm C})}(X_j)\setminus\{X_j\}$. Then Proposition 2.1 in \cite{richardson2002ancestral} implies the desired result.
\end{proof}

\begin{proof}[Proof of Proposition \ref{prop:consistent_anterior_latent_selection}]
    Under Assumptions \ref{asm:context}, suppose there is no false positive among $\big\{\psi_j^{\anterior(k)}:k\in[d_X],j\in[d_X]\setminus\{k\}\big\}$. Then we have $\psi_j^{\anterior(k)}=1$ only if $H_{j,1}^{\anterior(k)}$ holds. Since the data-generating distributions are Markov to $\Gcal$, $H_{j,1}^{\anterior(k)}$ holds only if $(I_k\not\indep X_j\mid\bm S)_{\Aug(\Gcal_{\mid\bm C},I_k)}$. Then $\psi_j^{\anterior(k)}=1$ implies $X_k\in\bm A_j$. By Lemma \ref{lem:A_closure}, we have that $\psi_j^{\anterior(j_1)}\prod_{l\in[m-1]}\psi_{j_l}^{\anterior(j_{l+1})}\psi_{j_m}^{\anterior(k)}=1$ implies $X_k\in\bm A_j$. Moreover, $\psi_j^{\anterior(k)}\psi_k^{\anterior(j)}=1$ implies $X_j\in\bm A_k$, $X_k\in\bm A_j$, which then follows from Lemma \ref{lem:S} that $X_j,X_k\in\bm A_{\bm S}$. Consequently, the initial sets in Step 2 of Algorithm \ref{alg:conf_MAG_latent_selection} satisfy $\widehat{\bm A}_j\subseteq\bm A_j$ and $\widehat{\bm A}_{\bm S}\subseteq\bm A_{\bm S}$. 
    
    By Lemma \ref{lem:A_closure}, the updated $\widehat{\bm A}_j$ and $\widehat{\bm A}_{\bm S}$ in Step 3 of Algorithm \ref{alg:conf_MAG_latent_selection} still satisfy $\widehat{\bm A}_j\subseteq\bm A_j$ and $\widehat{\bm A}_{\bm S}\subseteq\bm A_{\bm S}$. Therefore, we have
    \begin{align*}
        \Prob\big(\exists j\in[d_X]{\rm ~s.t.~}\widehat{\bm A}_j\not\subseteq\bm A_j{\rm ~or~}\widehat{\bm A}_{\bm S}\not\subseteq\bm A_{\bm S}\big)\le\Prob\big(\text{exists false positive anterior test}\big)\le\frac{\alpha}{2}.
    \end{align*}

    Under Assumptions \ref{asm:context} and \ref{asm:faithful_descendant_latent_selection}, if all anterior tests are correct, Lemmas \ref{lem:hat_A_j} and \ref{lem:hat_A_S} imply that $\bm A_j=\widehat{\bm A}_j$ and $\bm A_{\bm S}=\widehat{\bm A}_{\bm S}$. Then,
    \begin{align*}
        &\Prob\big(\exists j\in[d_X]{\rm ~s.t.~}\widehat{\bm A}_j\ne\bm A_j{\rm ~or~}\widehat{\bm A}_{\bm S}\ne\bm A_{\bm S}\big)\\
        \le&\Prob\big(\text{exists false positive anterior test}\big)+\Prob\big(\text{exists false negative anterior test}\big)\\
        \le&\frac{\alpha}{2}+{\rm FWER}_2^\anterior.
    \end{align*}
\end{proof}

\subsection{Proof of Theorem \ref{thm:coverage_latent_selection}}\label{sec:proof:thm:coverage_latent_selection}

\begin{proof}[Proof of Theorem \ref{thm:coverage_latent_selection}]
    If all the anterior tests are correct, it follows from Proposition \ref{prop:consistent_anterior_latent_selection} that $\bm A_j=\widehat{\bm A}_j$ and $\bm A_{\bm S}=\widehat{\bm A}_{\bm S}$. If there is no false positive among the adjacency tests, $\psi_{j,k}^{\rm Adj}(\bm A_{\bm S},\bm A_{[d_X]})=1$ implies $H_{j,k,1}^{\rm Adj}(\bm A_{\bm S},\bm A_{[d_X]})$ holds and $\psi_j^{{\rm Adj}(k)}(\bm A_{\bm S},\bm A_{[d_X]})=1$ implies $H_{j,1}^{{\rm Adj}(k)}(\bm A_{\bm S},\bm A_{[d_X]})$ holds. Since the data-generating distributions are Markov to $\Gcal$, $H_{j,k,1}^{\rm Adj}(\bm A_{\bm S},\bm A_{[d_X]})$ implies $\big(X_j\not\indep X_k\mid\bm A_j\cup\bm A_k\cup\bm A_{\bm S}\cup\bm S\setminus\{X_j,X_k\}\big)_{\Gcal_{\mid\bm C}}$ and $H_{j,1}^{{\rm Adj}(k)}(\bm A_{\bm S},\bm A_{[d_X]})$ implies $\big(I_k\not\indep X_j\mid\bm A_j\cup\bm A_{\bm S}\cup\bm S\setminus\{X_j\}\big)_{\Aug(\Gcal_{\mid\bm C},I_k)}$. It then follows from Proposition \ref{prop:adjacency_latent_selection} and \ref{prop:direction_latent_selection} that all the detected edges in $\Ccal_\alpha$ exist in $\Ecal\big(\TCM(\Gcal_{\mid\bm C},\bm I)\big)$.
    Consequently, we have
    \begin{align*}
        &\Prob\big(\Ccal_\alpha\not\subseteq \Ecal(\TCM(\Gcal_{\mid\bm C},\bm I))\big)\\
        \le&\Prob\big(\Ccal_\alpha\not\subseteq \Ecal(\TCM(\Gcal_{\mid\bm C},\bm I)), \text{ all anterior tests are correct, and no false positive adjacency test}\big)\\
        &+\Prob\big(\text{all anterior tests are correct, but exist false positive adjacency tests}\big)\\
        &+\Prob\big(\text{exist false positive anterior test}\big)+\Prob\big(\text{exist false negative anterior tests}\big)\\
        \le&0+\frac{\alpha}{2}+\frac{\alpha}{2}+\text{FWER}_2^\anterior\\
        =&\alpha+\text{FWER}_2^\anterior.
    \end{align*}
    For the second inequality, we use the fact that tests $\psi_{j,k}^{\rm Adj}(\widehat{\bm A}_{\bm S},\widehat{\bm A}_{[d_X]})$ and $\psi_{j}^{{\rm Adj}(k)}(\widehat{\bm A}_{\bm S},\widehat{\bm A}_{[d_X]})$ are conducted pretending $\widehat{\bm A}_{\bm S},\widehat{\bm A}_{[d_X]}$ are fixed, and thus, 
    \[\psi_{j,k}^{\rm Adj}(\widehat{\bm A}_{\bm S},\widehat{\bm A}_{[d_X]})=\psi_{j,k}^{\rm Adj}(\bm A_{\bm S},\bm A_{[d_X]}), \quad \psi_{j}^{{\rm Adj}(k)}(\widehat{\bm A}_{\bm S},\widehat{\bm A}_{[d_X]})=\psi_{j}^{{\rm Adj}(k)}(\bm A_{\bm S},\bm A_{[d_X]}),\]
    if $\widehat{\bm A}_{\bm S}=\bm A_{\bm S}$ and $\widehat{\bm A}_{[d_X]}=\bm A_{[d_X]}$. Then
    \begin{align*}
        &\Prob\big(\text{all anterior tests are correct, but exist false positive adjacency tests}\big)\\
        \le&\Prob\big(\widehat{\bm A}_{\bm S}=\bm A_{\bm S},\widehat{\bm A}_{[d_X]}=\bm A_{[d_X]}, \text{ exist false positive in }\\
        &\quad\big\{\psi_{j,k}^{\rm Adj}\big(\widehat{\bm A}_{\bm S},\widehat{\bm A}_{[d_X]}\big):j,k\in[d_X],j>k\big\}\cup\big\{\psi_j^{{\rm Adj}(k)}\big(\widehat{\bm A}_{\bm S},\widehat{\bm A}_{[d_X]}\big):j\in[d_X],k\in\widehat{\bm A}_j\big\}\big)\\
        =&\Prob\big(\widehat{\bm A}_{\bm S}=\bm A_{\bm S},\widehat{\bm A}_{[d_X]}=\bm A_{[d_X]}, \text{ exist false positive in }\\
        &\quad\big\{\psi_{j,k}^{\rm Adj}\big(\bm A_{\bm S},\bm A_{[d_X]}\big):j,k\in[d_X],j>k\big\}\cup\big\{\psi_j^{{\rm Adj}(k)}\big(\bm A_{\bm S},\bm A_{[d_X]}\big):j\in[d_X],k\in\bm A_j\big\}\big)\\
        \le&\Prob\big(\text{exist false positive in }\\
        &\quad\big\{\psi_{j,k}^{\rm Adj}\big(\bm A_{\bm S},\bm A_{[d_X]}\big):j,k\in[d_X],j>k\big\}\cup\big\{\psi_j^{{\rm Adj}(k)}\big(\bm A_{\bm S},\bm A_{[d_X]}\big):j\in[d_X],k\in\bm A_j\big\}\big)\\
        \le&\frac{\alpha}{2}.
    \end{align*}
\end{proof}

\subsection{Proof of Theorem \ref{thm:power_latent_selection}}\label{sec:proof:thm:power_latent_selection}

\begin{proof}[Proof of Theorem \ref{thm:power_latent_selection}]
    Similar to the proof of Theorem \ref{thm:coverage_latent_selection}, under Assumptions \ref{asm:context} and \ref{asm:faithful_descendant_latent_selection}, if all anterior tests are correct and there is no false positive in the adjacency tests, $\Ccal_\alpha\subseteq\Ecal(\TCM(\Gcal_{\mid\bm C},\bm I))$. If Assumption \ref{asm:faithful_adjacency_latent_selection} also holds, $d$-connection implies conditional dependence. Then each edge in $\TCM(\Gcal_{\mid\bm C},\bm I)$ corresponds to an alternative adjacency hypothesis. If there is no false negative in the adjacency tests, we detect all the correct alternative hypotheses and thus all the edges in $\TCM(\Gcal_{\mid\bm C},\bm I)$. Consequently,
    \begin{align*}
        &\Prob\big(\Ccal_\alpha\ne \Ecal(\TCM(\Gcal_{\mid\bm C},\bm I))\big)\\
        \le&\Prob\big(\Ccal_\alpha\ne \Ecal(\TCM(\Gcal_{\mid\bm C},\bm I)), \text{ all anterior tests are correct, and all adjacency tests are correct}\big)\\
        &+\Prob\big(\text{all anterior tests are correct, but exist false positive adjacency tests}\big)\\
        &+\Prob\big(\text{all anterior tests are correct, but exist false negative adjacency tests}\big)\\
        &+\Prob\big(\text{exist false positive anterior tests}\big)+\Prob\big(\text{exist false negative anterior tests}\big)\\
        \le&0+\frac{\alpha}{2}+\text{FWER}_2^{\rm Adj}+\frac{\alpha}{2}+\text{FWER}_2^{\anterior}.
    \end{align*}
    For the last inequality, we use the fact that tests $\psi_{j,k}^{\rm Adj}(\widehat{\bm A}_{\bm S},\widehat{\bm A}_{[d_X]})$ and $\psi_{j}^{{\rm Adj}(k)}(\widehat{\bm A}_{\bm S},\widehat{\bm A}_{[d_X]})$ are conducted pretending $\widehat{\bm A}_{\bm S},\widehat{\bm A}_{[d_X]}$ are fixed, and thus, 
    \[\psi_{j,k}^{\rm Adj}(\widehat{\bm A}_{\bm S},\widehat{\bm A}_{[d_X]})=\psi_{j,k}^{\rm Adj}(\bm A_{\bm S},\bm A_{[d_X]}), \quad \psi_{j}^{{\rm Adj}(k)}(\widehat{\bm A}_{\bm S},\widehat{\bm A}_{[d_X]})=\psi_{j}^{{\rm Adj}(k)}(\bm A_{\bm S},\bm A_{[d_X]}),\]
    if $\widehat{\bm A}_{\bm S}=\bm A_{\bm S}$ and $\widehat{\bm A}_{[d_X]}=\bm A_{[d_X]}$. Then
    \begin{align*}
        &\Prob\big(\text{all anterior tests are correct, but exist false positive adjacency tests}\big)\\
        \le&\Prob\big(\widehat{\bm A}_{\bm S}=\bm A_{\bm S},\widehat{\bm A}_{[d_X]}=\bm A_{[d_X]}, \text{ exist false positive in }\\
        &\quad\big\{\psi_{j,k}^{\rm Adj}\big(\widehat{\bm A}_{\bm S},\widehat{\bm A}_{[d_X]}\big):j,k\in[d_X],j>k\big\}\cup\big\{\psi_j^{{\rm Adj}(k)}\big(\widehat{\bm A}_{\bm S},\widehat{\bm A}_{[d_X]}\big):j\in[d_X],k\in\widehat{\bm A}_j\big\}\big)\\
        =&\Prob\big(\widehat{\bm A}_{\bm S}=\bm A_{\bm S},\widehat{\bm A}_{[d_X]}=\bm A_{[d_X]}, \text{ exist false positive in }\\
        &\quad\big\{\psi_{j,k}^{\rm Adj}\big(\bm A_{\bm S},\bm A_{[d_X]}\big):j,k\in[d_X],j>k\big\}\cup\big\{\psi_j^{{\rm Adj}(k)}\big(\bm A_{\bm S},\bm A_{[d_X]}\big):j\in[d_X],k\in\bm A_j\big\}\big)\\
        \le&\Prob\big(\text{exist false positive in }\\
        &\quad\big\{\psi_{j,k}^{\rm Adj}\big(\bm A_{\bm S},\bm A_{[d_X]}\big):j,k\in[d_X],j>k\big\}\cup\big\{\psi_j^{{\rm Adj}(k)}\big(\bm A_{\bm S},\bm A_{[d_X]}\big):j\in[d_X],k\in\bm A_j\big\}\big)\\
        \le&\frac{\alpha}{2}.
    \end{align*}
    Likewise, we have
    \begin{align*}
        &\Prob\big(\text{all anterior tests are correct, but exist false negative adjacency tests}\big)\\
        \le&\Prob\big(\widehat{\bm A}_{\bm S}=\bm A_{\bm S},\widehat{\bm A}_{[d_X]}=\bm A_{[d_X]}, \text{ exist false negative in }\\
        &\quad\big\{\psi_{j,k}^{\rm Adj}\big(\widehat{\bm A}_{\bm S},\widehat{\bm A}_{[d_X]}\big):j,k\in[d_X],j>k\big\}\cup\big\{\psi_j^{{\rm Adj}(k)}\big(\widehat{\bm A}_{\bm S},\widehat{\bm A}_{[d_X]}\big):j\in[d_X],k\in\widehat{\bm A}_j\big\}\big)\\
        =&\Prob\big(\widehat{\bm A}_{\bm S}=\bm A_{\bm S},\widehat{\bm A}_{[d_X]}=\bm A_{[d_X]}, \text{ exist false negative in }\\
        &\quad\big\{\psi_{j,k}^{\rm Adj}\big(\bm A_{\bm S},\bm A_{[d_X]}\big):j,k\in[d_X],j>k\big\}\cup\big\{\psi_j^{{\rm Adj}(k)}\big(\bm A_{\bm S},\bm A_{[d_X]}\big):j\in[d_X],k\in\bm A_j\big\}\big)\\
        \le&\Prob\big(\text{exist false negative in }\\
        &\quad\big\{\psi_{j,k}^{\rm Adj}\big(\bm A_{\bm S},\bm A_{[d_X]}\big):j,k\in[d_X],j>k\big\}\cup\big\{\psi_j^{{\rm Adj}(k)}\big(\bm A_{\bm S},\bm A_{[d_X]}\big):j\in[d_X],k\in\bm A_j\big\}\big)\\
        \le&{\rm FWER}_2^{\rm Adj}.
    \end{align*}
\end{proof}

\subsection{Proof of Theorem \ref{thm:lower_bound}}\label{sec:proof:thm:lower_bound}

\begin{proof}[Proof of Theorem \ref{thm:lower_bound}]
    Consider a class of DAGs on $\bm X$ with causal ordering $(X_1,\ldots,X_{d_X})$. For any $j<k\in[d_X]$, the edge $X_j\rightarrow X_k$ may be either present or absent in a DAG from the class. Then, the class contains $2^{\frac{1}{2}d_X(d_X-1)}$ DAGs.

    For any $m$-constraint-based algorithm $\Acal$ and any distributions class $\big\{\Pcal_\Gcal:\text{DAG } \Gcal\big\}$, we look at the response vectors $\Psi_{\Gcal}=(\psi_{j,\Gcal}:j\in[m])\in\{0,1\}^m$ of $\Acal(\Pcal_\Gcal)$. By the pigeonhole principle, if $m<\frac{1}{2}d_X(d_X-1)$, there exists $\Gcal\ne\Gcal'$ such that $\Psi_{\Gcal}=\Psi_{\Gcal'}$. Since $\Acal$ is deterministic, it follows from Definition \ref{def:algorithm} that equal response vectors $\Psi_{\Gcal}=\Psi_{\Gcal'}$ implies that the constraint vectors are also equal $(c_{j,\Gcal}:j\in[m])=(c_{j,\Gcal'}:j\in[m])$, and thus the output graphs are identical, $\Acal(\Pcal_\Gcal)=\Acal(\Pcal_{\Gcal'})$. Then we have either $\Acal(\Pcal_\Gcal)\ne\Gcal$ or $\Acal(\Pcal_{\Gcal'})\ne\Gcal'$.
\end{proof}

\subsection{Proof of Lemma \ref{lem:parent_intervention}}\label{sec:proof:lem:parent_intervention}

\begin{proof}[Proof of Lemma \ref{lem:parent_intervention}]
    \noindent\textbf{1)} $X_j\in\parent_{\Gcal_{\mid\bm C}}(X_k)\Longrightarrow \big(I_k\not\indep X_j\mid\bm X_{-j},\bm S\big)_{\Aug(\Gcal_{\mid\bm C},I_k)}$.

    In $\Aug(\Gcal_{\mid\bm C},I_k)$, we know $I_k\rightarrow X_k \leftarrow X_j$ forms a collider, which is open given $(\bm X_{-j},\bm S)$. Therefore, $\big(I_k\not\indep X_j\mid\bm X_{-j},\bm S\big)_{\Aug(\Gcal_{\mid\bm C},I_k)}$.

    \noindent\textbf{2)} $\big(I_k\not\indep X_j\mid\bm X_{-j},\bm S\big)_{\Aug(\Gcal_{\mid\bm C},I_k)}\Longrightarrow X_j\in\parent_{\Gcal_{\mid\bm C}}(X_k)$.

    There exists an open path $\pi$ between $I_k,X_j$ given $(\bm X_{-j},\bm S)$ in $\Aug(\Gcal_{\mid\bm C},I_k)$. Since the only adjacent node to $I_k$ is $X_k$, we know $\pi$ starts with $I_k\rightarrow X_k$. Since $X_k$ is in $(\bm X_{-j},\bm S)$, $X_k$ must be a collider on $\pi$. Then $\pi$ must start with $I_k\rightarrow X_k\leftarrow X_l$. Since $X_l$ is not a collider on $\pi$, $X_l$ is not in $(\bm X_{-j},\bm S)$. Consequently, $X_l=X_j$ and $X_j\in\parent_{\Gcal_{\mid\bm C}}(X_k)$.
\end{proof}

\subsection{Proof of Theorem \ref{thm:coverage_selection}}\label{sec:proof:thm:coverage_selection}

\begin{proof}[Proof of Theorem \ref{thm:coverage_selection}]
    Since $P_{\bm Z}$ is Markov to $\Gcal$, under Assumption \ref{asm:context}, conditional dependence implies $d$-connection in $\Gcal_{\mid\bm C}$. If there is no type I error in the parent tests, the rejected null hypotheses in these tests imply $d$-connections in $\Gcal_{\mid\bm C}$. Then it follows from Lemma \ref{lem:parent_intervention} that the detected edges in $\Ccal_\alpha$ exist in $\Gcal_{\mid\bm C}$. Since the edges are among $\bm X$, they are also in $\Gcal_{\bm X}$. Therefore,
    \begin{align*}
        &\Prob\big(\Ccal_\alpha\not\subseteq\Ecal(\Gcal_{\bm X})\big)\\
        \le&\Prob\big(\Ccal_\alpha\not\subseteq\Ecal(\Gcal_{\bm X}), \text{no type I error in the parent tests}\big)+\Prob\big(\text{exist type I error in the parent tests}\big)\\
        \le&\alpha.
    \end{align*}
\end{proof}

\subsection{Proof of Theorem \ref{thm:power_selection}}\label{sec:proof:thm:power_selection}

\begin{proof}[Proof of Theorem \ref{thm:power_selection}]
    Similar to the proof of Theorem \ref{thm:coverage_selection}, under Assumption \ref{asm:context}, if there is no type I error in the parent tests, $\Ccal_\alpha\subseteq\{X_j\rightarrow X_k:k\in[K], X_j\in\parent_{\Gcal_{\mid\bm C}}(X_k)\}$. If Assumption \ref{asm:faithful_parent_selection} also holds, $d$-connection implies conditional distribution shift. If there is no type II error in the parent tests, we detect all the parent information of the intervention targets. Therefore,
    \begin{align*}
        &\Prob\big(\Ccal_\alpha\ne\{X_j\rightarrow X_k:k\in[K], X_j\in\parent_{\Gcal_{\mid\bm C}}(X_k)\}\big)\\
        \le&\Prob\big(\Ccal_\alpha\ne\{X_j\rightarrow X_k:k\in[K], X_j\in\parent_{\Gcal_{\mid\bm C}}(X_k)\}, \text{ all parent tests are correct}\big)\\
        &+\Prob\big(\text{exist type I error in the parent tests}\big)+\Prob\big(\text{exist type II error in the parent tests}\big)\\
        \le&\alpha+{\rm FWER}_2^P.
    \end{align*}
    If $K=d_X$, it is evident that
    \[\Ecal(\Gcal_{\bm X})=\{X_j\rightarrow X_k:k\in[K], X_j\in\parent_{\Gcal_{\mid\bm C}}(X_k)\},\quad\Prob\big(\Ccal_\alpha\ne\Ecal(\Gcal_{\bm X})\big)\le\alpha+{\rm FWER}_2^P.\]
\end{proof}

\subsection{Proof of Proposition \ref{prop:GCM}}\label{sec:proof:prop:GCM}

The proof is adapted from that of Theorem 6 in \cite{shah2020hardness}.

\begin{proof}[Proof of Proposition \ref{prop:GCM}]
    \noindent\textbf{1)} First, we show $\hat\E_{\Dcal}R=\hat\E_{\Dcal}\big\{I-f_P(\bm X)\big\}\big\{Y-g_P(\bm X)\big\}+o_{\Pcal}\big(\frac{1}{\sqrt{N}}\big)$.
    
    Denote $\xi_I=I-f_P(\bm X)$ and $\xi_Y=Y-g_P(\bm X)$, we can decompose $\hat\E_{\Dcal}R$ into
    \begin{align*}
        \hat\E_{\Dcal}R=&\hat\E_{\Dcal}\big\{\xi_I+f_P(\bm X)-\hat f(\bm X)\big\}\big\{\xi_Y+g_P(\bm X)-\hat g(\bm X)\big\}\\
        =&\underbrace{\hat\E_{\Dcal}\xi_I\xi_Y}_{T_1}+\underbrace{\hat\E_{\Dcal}\xi_I\big\{g_P(\bm X)-\hat g(\bm X)\big\}}_{T_2}+\underbrace{\hat\E_{\Dcal}\xi_Y\big\{f_P(\bm X)-\hat f(\bm X)\big\}}_{T_3}\\
        &+\underbrace{\hat\E_{\Dcal}\big\{f_P(\bm X)-\hat f(\bm X)\big\}\big\{g_P(\bm X)-\hat g(\bm X)\big\}}_{T_4}.
    \end{align*}
    
    Under $H_0$, we have 
    \begin{align*}
        \E_{P_{Y\mid\bm X}^{(0)}}(Y\mid\bm X)=\E_{P_{Y\mid\bm X}^{(1)}}(Y\mid\bm X)=g_P(\bm X)\qquad\text{$\frac{1}{2}P_{\bm X}^{(0)}+\frac{1}{2}P_{\bm X}^{(1)}$-almost surely}.
    \end{align*}
    Denote $N=n_0+n_1$, then for $T_1$, we have 
    \begin{align*}
        \E T_1
        =&\frac{n_0}{N}\E_{P_{Y,\bm X}^{(0)}}-f_P(\bm X)\bigg\{\E_{P_{Y\mid\bm X}^{(0)}}(Y\mid\bm X)-g_P(\bm X)\bigg\}\\
        &+\frac{n_1}{N}\E_{P_{Y,\bm X}^{(1)}}\{1-f_P(\bm X)\}\bigg\{\E_{P_{Y\mid\bm X}^{(1)}}(Y\mid\bm X)-g_P(\bm X)\}\bigg\}\\
        =&0.
    \end{align*}

    For $T_2$, note that 
    \[f_P(\bm X)=\E_{P_{Y,\bm X,I}}(I\mid\bm X)=\frac{n_1p_{\bm X}^{(1)}(\bm X)}{n_0p_{\bm X}^{(0)}(\bm X)+n_1p_{\bm X}^{(1)}(\bm X)}.\]
    Given $\hat g$, we have
    \begin{align*}
        &\frac{n_0}{N}\E_{P_{\bm X}^{(0)}}\big\{-f_P(\bm X)\big\}\big\{g_P(\bm X)-\hat g(\bm X)\big\}+\frac{n_1}{N}\E_{P_{\bm X}^{(1)}}\big\{1-f_P(\bm X)\big\}\big\{g_P(\bm X)-\hat g(\bm X)\big\}\\
        =&\frac{n_0}{N}\E_{P_{\bm X}^{(0)}}\frac{-n_1p_{\bm X}^{(1)}(\bm X)}{n_0p_{\bm X}^{(0)}(\bm X)+n_1p_{\bm X}^{(1)}(\bm X)}\big\{g_P(\bm X)-\hat g(\bm X)\big\}+\frac{n_1}{N}\E_{P_{\bm X}^{(1)}}\frac{n_0p_{\bm X}^{(0)}(\bm X)}{n_0p_{\bm X}^{(0)}(\bm X)+n_1p_{\bm X}^{(1)}(\bm X)}\big\{g_P(\bm X)-\hat g(\bm X)\big\}\\
        =&0.
    \end{align*}
    Therefore, $T_2$ satisfies
    \[T_2=\frac{n_0}{N}(\hat\E_{\Dcal_0}-\E_{P_{\bm X}^{(0)}})f_P(\bm X)\big\{\hat g(\bm X)-g_P(\bm X)\big\}+\frac{n_1}{N}(\hat\E_{\Dcal_1}-\E_{P_{\bm X}^{(1)}})\big\{1-f_P(\bm X)\big\}\big\{g_P(\bm X)-\hat g(\bm X)\big\}=o_{\Pcal}\bigg(\frac{1}{\sqrt{N}}\bigg).\]

    Since $\hat f$ is trained on $\Dcal_{\bm X,I}\overset{\triangle}{=}\{(\bm X_i,I_i):i\in[N]\}$, $\big\{Y_i-g_P(\bm X_i)\big\}\big\{f_P(\bm X_i)-\hat f(\bm X_i)\big\}$'s are independent for different $i$ given $\Dcal_{\bm X,I}$, and
    \[\E_{P_{Y\mid\bm X}^{(0)}}\big[\big\{Y_i-g_P(\bm X_i)\big\}\big\{f_P(\bm X_i)-\hat f(\bm X_i)\big\}\mid\Dcal_{\bm X,I}\big]=0.\]
    Therefore, given $\Dcal_{\bm X,I}$, $T_3$ is a summation of independent mean-zero terms.
    \begin{align*}
        \E [T_3^2\mid\Dcal_{\bm X,I}]
        =&\E\big[\big(\hat\E_{\Dcal}\xi_Y\{f_P(\bm X)-\hat f(\bm X)\}\big)^2\mid\Dcal_{\bm X,I}\big]
        =\frac{1}{N}\hat\E_{\Dcal}v_Y(\bm X)\big\{f_P(\bm X)-\hat f(\bm X)\big\}^2=o_{\Pcal}\bigg(\frac{1}{N}\bigg),
    \end{align*}
    where $v_Y(\bm X)=\Var_{P_{Y\mid\bm X}^{(0)}}(Y\mid\bm X)$. Then given $\epsilon>0$,
    \begin{equation}\label{eq:T_3}
        \begin{aligned}
        \sup_{(P^{(0)},P^{(1)})\in\Pcal}\Prob(NT_3^2\ge\epsilon)=&\sup_{(P_{Y,\bm X}^{(0)},P_{Y,\bm X}^{(1)})\in\Pcal}\Prob(NT_3^2\wedge\epsilon\ge\epsilon)\\
        \le&\sup_{(P^{(0)},P^{(1)})\in\Pcal}\epsilon^{-1}\E\E[NT_3^2\wedge\epsilon\mid\Dcal_{\bm X,I}]\\
        \le&\sup_{(P^{(0)},P^{(1)})\in\Pcal}\epsilon^{-1}\E\big[\hat\E_{\Dcal}v_Y(\bm X)\big\{f_P(\bm X)-\hat f(\bm X)\big\}^2\wedge\epsilon\big]\\
        =&o(1),
        \end{aligned}
    \end{equation}
    where the last control follows from Lemma 25 in \cite{shah2020hardness}. Therefore $T_3=o_{\Pcal}\big(\frac{1}{\sqrt{N}}\big)$.

    By Cauchy-Schwarz inequality, we have 
    \[|T_4|\le\bigg(\hat\E_{\Dcal}\big\{f_P(\bm X)-\hat f(\bm X)\big\}^2\hat\E_{\Dcal}\big\{g_P(\bm X)-\hat g(\bm X)\big\}^2\bigg)^{\frac{1}{2}}=o_{\Pcal}\bigg(\frac{1}{\sqrt{N}}\bigg).\]
    Therefore, $\hat\E_{\Dcal}R=\hat\E_{\Dcal}\xi_I\xi_Y+o_{\Pcal}\big(\frac{1}{\sqrt{N}}\big)$.

    \noindent\textbf{2)} Then we show $\widehat{\Var}_{\Dcal}(R)=\frac{n_0}{N}\Var_{P^{(0)}}\big(f_P(\bm X)\xi_Y\big)+\frac{n_1}{N}\Var_{P^{(1)}}\big(\{1-f_P(\bm X)\}\xi_Y\big)+o_{\Pcal}(1)$.

    Similar to Lemma 19 in \cite{shah2020hardness}, we have 
    \[\hat\E_{\Dcal}R=o_{\Pcal}(1),\quad\hat\E_{\Dcal}\xi_I^2\xi_Y^2=\E\hat\E_{\Dcal}\xi_I^2\xi_Y^2+o_{\Pcal}(1).\]
    Then it suffices to show $\hat\E_{\Dcal}R^2=\hat\E_{\Dcal}\xi_I^2\xi_Y^2+o_{\Pcal}(1)$.
    \begin{align*}
        &\hat\E_{\Dcal}R^2-\hat\E_{\Dcal}\xi_I^2\xi_Y^2\\
        =&\underbrace{\hat\E_{\Dcal}\big\{\big(f_P(\bm X)-\hat f(\bm X)\big)^2+2\xi_I\big(f_P(\bm X)-\hat f(\bm X)\big)\big\}\big\{\big(g_P(\bm X)-\hat g(\bm X)\big)^2+2\xi_Y\big(g_P(\bm X)-\hat g(\bm X)\big)\big\}}_{T_5}\\
        &+\underbrace{\hat\E_{\Dcal}\xi_I^2\big\{\big(g_P(\bm X)-\hat g(\bm X)\big)^2+2\xi_Y\big(g_P(\bm X)-\hat g(\bm X)\big)\big\}}_{T_6}\\
        &+\underbrace{\hat\E_{\Dcal}\xi_Y^2\big\{\big(f_P(\bm X)-\hat f(\bm X)\big)^2+2\xi_I\big(f_P(\bm X)-\hat f(\bm X)\big)\big\}}_{T_7}\\
    \end{align*}
    
    For term $T_5$,
    \begin{align*}
        |T_5|\le&3\hat\E_{\Dcal}\big\{f_P(\bm X)-\hat f(\bm X)\big\}^2\big\{g_P(\bm X)-\hat g(\bm X)\big\}^2+\hat\E_{\Dcal}\xi_I^2\big\{g_P(\bm X)-\hat g(\bm X)\big\}^2\\
        &+\hat\E_{\Dcal}\xi_Y^2\big\{f_P(\bm X)-\hat f(\bm X)\big\}^2+4\big|\hat\E_{\Dcal}\xi_I\xi_Y\big\{f_P(\bm X)-\hat f(\bm X)\big\}\big\{g_P(\bm X)-\hat g(\bm X)\big\}\big|.
    \end{align*}
    Then we control each terms respectively.
    \[\hat\E_{\Dcal}\big\{f_P(\bm X)-\hat f(\bm X)\big\}^2\big\{g_P(\bm X)-\hat g(\bm X)\big\}^2\le N\hat\E_{\Dcal}\big\{f_P(\bm X)-\hat f(\bm X)\big\}^2\hat\E_{\Dcal}\big\{g_P(\bm X)-\hat g(\bm X)\big\}^2=o_{\Pcal}(1),\]
    \[\hat\E_{\Dcal}\xi_I^2\big\{g_P(\bm X)-\hat g(\bm X)\big\}^2=o_{\Pcal}(1),\]
    \[\E\big[\hat\E_{\Dcal}\xi_Y^2\big\{f_P(\bm X)-\hat f(\bm X)\big\}^2\mid\Dcal_{\bm X,I}\big]=\hat\E_{\Dcal}v_Y(\bm X)\big\{f_P(\bm X)-\hat f(\bm X)\big\}^2=o_{\Pcal}(1).\]
    Similar to \eqref{eq:T_3}, we have
    \[\hat\E_{\Dcal}\xi_Y^2\big\{f_P(\bm X)-\hat f(\bm X)\big\}^2=o_{\Pcal}(1).\]
    For the last term,
    \begin{align*}
        &\big|\hat\E_{\Dcal}\xi_I\xi_Y\big\{f_P(\bm X)-\hat f(\bm X)\big\}\big\{g_P(\bm X)-\hat g(\bm X)\big\}\big|\\
        \le&\big[\hat\E_{\Dcal}\xi_I^2\xi_Y^2\big]^{\frac{1}{2}}\big[\hat\E_{\Dcal}\big\{f_P(\bm X)-\hat f(\bm X)\big\}^2\big\{g_P(\bm X)-\hat g(\bm X)\big\}^2\big]^{\frac{1}{2}}\\
        =&o_{\Pcal}(1).
    \end{align*}
    Therefore $T_5=o_{\Pcal}(1)$.

    For term $T_6$,
    \[\hat\E_{\Dcal}\xi_I^2\xi_Y\big\{g_P(\bm X)-\hat g(\bm X)\big\}\le\big[\hat\E_{\Dcal}\xi_I^2\xi_Y^2\big]^{\frac{1}{2}}\big[\hat\E\xi_I^2\big\{g_P(\bm X)-\hat g(\bm X)\big\}^2\big]^{\frac{1}{2}}=o_{\Pcal}(1),\]
    and thus $T_6=o_{\Pcal}(1)$. Similarly, $T_7=o_{\Pcal}(1)$, and we complete the proof of part \textbf{2)}.
    
    \noindent\textbf{3)} Finally, we prove Proposition \ref{prop:GCM}.
    
    Denote 
    \[\tilde T_N=\frac{\sqrt{N}\hat\E_{\Dcal}\xi_I\xi_Y}{\{\frac{n_0}{N}\Var_{P^{(0)}}(f_P(\bm X)\xi_Y)+\frac{n_1}{N}\Var_{P^{(1)}}(\{1-f_P(\bm X)\}\xi_Y)\}^{\frac{1}{2}}}.\] 
    Similar to Lemma 18 in \cite{shah2020hardness}, we have 
    \[\lim_{n_0\rightarrow\infty,n_1\rightarrow\infty}\sup_{(P^{(0)},P^{(1)})\in\Pcal}\sup_{t\in\R}\big|\Prob_{P^{(0)},P^{(1)}}(\tilde T_N\le t)-\Phi(t)\big|= 0.\]
    Given part \textbf{1)} and part \textbf{2)}, Lemma 20 in \cite{shah2020hardness} implies
    \[\lim_{n_0\rightarrow\infty,n_1\rightarrow\infty}\sup_{(P^{(0)},P^{(1)})\in\Pcal}\sup_{t\in\R}\big|\Prob_{P^{(0)},P^{(1)}}(T_N\le t)-\Phi(t)\big|= 0.\]
    Therefore
    \[\lim_{n_0\rightarrow\infty,n_1\rightarrow\infty}\sup_{(P^{(0)},P^{(1)})\in\Pcal}\big|\E_{P^{(0)},P^{(1)}}\psi_{N,\alpha}-\alpha\big|=0.\]
\end{proof}

\end{document}